\definecolor{mygreen}{RGB}{10,110,230}
\definecolor{myred}{RGB}{10,110,230}
\renewcommand{\epsilon}{\varepsilon}
\DeclareMathOperator{\E}{\ensuremath{\normalfont \textbf{E}}}
\newcommand{\hiddencomment}[1]{}
\newcommand{\cev}[1]{\reflectbox{\ensuremath{\vec{\reflectbox{\ensuremath{#1}}}}}}
\newcommand{\greedyMM}[1]{\ensuremath{\mathsf{GreedyMM}(#1)}}
\newcommand{\start}[0]{\ensuremath{\textsc{Start}}}
\newcommand{\extend}[0]{\ensuremath{\textsc{Extend}}}
\newcommand{\rankstart}[1]{\ensuremath{\pi_{\start}(#1)}}
\newcommand{\rankextend}[1]{\ensuremath{\pi_{\extend}(#1)}}
\newcommand{\true}[0]{\ensuremath{\textsc{True}}}
\newcommand{\false}[0]{\ensuremath{\textsc{False}}}
\newcommand{\VO}[0]{\ensuremath{\textsc{VO}}}
\newcommand{\EOS}[0]{\ensuremath{\textsc{EOS}}}
\newcommand{\EOE}[0]{\ensuremath{\textsc{EOE}}}
\newcommand{\MS}[0]{\ensuremath{\textup{MS}}}
\newcommand{\lowest}[0]{\ensuremath{\textsc{Lowest}}}
\newcommand{\blue}[0]{\ensuremath{\textsc{Blue}}}
\newcommand{\red}[0]{\ensuremath{\textsc{Red}}}
\DeclareMathOperator{\poly}{poly}
\crefname{lemma}{Lemma}{Lemmas}
\crefname{theorem}{Theorem}{Theorems}
\crefname{property}{Property}{Properties}
\crefname{claim}{Claim}{Claims}
\crefname{definition}{Definition}{Definitions}
\crefname{observation}{Observation}{Observations}
\crefname{proposition}{Proposition}{Propositions}
\crefname{assumption}{Assumption}{Assumptions}
\crefname{line}{Line}{Lines}
\crefname{figure}{Figure}{Figures}
\crefname{equation}{}{}
\crefname{section}{Section}{Sections}
\crefname{appendix}{Appendix}{Appendices}
\crefname{algCounter}{Algorithm}{Algorithms}
\Crefname{algCounter}{Algorithm}{Algorithms}
\newtheorem{theorem}{Theorem}[section]
\newtheorem{question}{Question}
\newtheorem{lemma}[theorem]{Lemma}
\newtheorem{proposition}[theorem]{Proposition}
\newtheorem{corollary}[theorem]{Corollary}
\newtheorem{definition}[theorem]{Definition}
\newtheorem{claim}[theorem]{Claim}
\newtheorem{observation}[theorem]{Observation}
\newtheorem{remark}[theorem]{Remark}
\newtheorem{assumption}[theorem]{Assumption}
\definecolor{mylightgray}{RGB}{230,230,230}
\algnewcommand{\IIf}[2]{\textbf{if} #1 \textbf{then} #2}
\algnewcommand{\EndIIf}{\unskip\ \algorithmicend\ \algorithmicif}
\newenvironment{graytbox}{
\par\addvspace{0.1cm}
\begin{tcolorbox}[width=\textwidth,
                  boxsep=5pt,
                  left=1pt,
                  right=1pt,
                  top=2pt,
                  bottom=2pt,
                  boxrule=0pt,
                  arc=0pt,
                  colback=mylightgray,
                  colframe=black,
                  ]
}{
\end{tcolorbox}
}
\newenvironment{whitetbox}{
\par\addvspace{0.1cm}
\begin{tcolorbox}[width=\textwidth,
                  boxsep=5pt,
                  left=1pt,
                  right=1pt,
                  top=2pt,
                  bottom=2pt,
                  boxrule=1pt,
                  arc=0pt,
                  colframe=black,
                  colback=white
                  ]
}{
\end{tcolorbox}
}
\newenvironment{myproof}{
\vspace{-0.5cm}
\begin{proof}
}{
\end{proof}
}
\newcounter{algCounter}
\newenvironment{algenv}[2]{
    \begin{whitetbox}
        \refstepcounter{algCounter}
        \textbf{Algorithm~\thealgCounter:} #1
        \label{#2}
        
        \vspace{-0.2cm}
        \noindent\rule{\linewidth}{1pt}
        \begin{algorithm}[H]
}{
        \end{algorithm}
    \end{whitetbox}
}
\renewcommand{\paragraph}{%
  \@startsection{paragraph}{4}%
  {\z@}{10pt}{-1em}%
  {\normalfont\normalsize\bfseries}%
}
\title{Beating Greedy Matching in Sublinear Time}
\author{
Soheil Behnezhad\footnote{Supported by NSF CAREER Award 1942123, NSF Award 1812919, NSF CCF-1954927, a Simons Investigator Award, and a David and Lucile Packard Fellowship.}\\
\and 
Mohammad Roghani\\
\and
Aviad Rubinstein\footnote{Supported by NSF CCF-1954927, and a David and Lucile Packard Fellowship.}
\and
Amin Saberi
}
\date{}
\begin{document}

\maketitle

\thispagestyle{empty}
\begin{abstract}
    We study sublinear time algorithms for estimating the size of maximum matching in  graphs. Our main result is a $(\frac{1}{2}+\Omega(1))$-approximation algorithm which can be implemented in $O(n^{1+\epsilon})$ time, where $n$ is the number of vertices and the constant $\epsilon > 0$ can be made arbitrarily small. The best known lower bound for the problem is $\Omega(n)$, which holds for any constant approximation.
    
    \smallskip\smallskip
    Existing algorithms either obtain the greedy bound of $\frac{1}{2}$-approximation [Behnezhad FOCS'21], or require some assumption on the maximum degree to run in $o(n^2)$-time [Yoshida, Yamamoto, and Ito STOC'09]. We improve over these by designing a less ``adaptive'' augmentation algorithm for maximum matching that might be of independent interest.
\end{abstract}

\clearpage

{
\hypersetup{hidelinks}
\vspace{1cm}
\renewcommand{\baselinestretch}{0.1}
\setcounter{tocdepth}{2}
\tableofcontents{}
\thispagestyle{empty}
\clearpage
}

\setcounter{page}{1}
\section{Introduction}

Linear-time algorithms have long been considered the gold standard in algorithm design. With the rapid increase in the size of data, however, even linear-time algorithms may be slow in some settings. A natural question is whether it is possible to solve a problem of interest in {\em sublinear time} in the input size. That is, without even reading the whole input. In this work, we focus on the problem of estimating the size of {\em maximum matching} in sublinear time. Recall that a {\em matching} is a set of edges no two of which share an endpoint, and a {\em maximum matching} is a matching of the largest size. This is a central problem in the study of sublinear time algorithms and several general techniques of the area have emerged from the study of matchings \cite{ParnasRon07,NguyenOnakFOCS08,YoshidaYISTOC09,OnakSODA12,KapralovSODA20,ChenICALP20,behnezhad2021}.

There is a simple greedy algorithm for constructing a {\em maximal} (but not necessarily {\em maximum}) matching: Iterate over the edges and greedily add  each edge whose endpoints are yet unmatched. Every maximal matching is at least half the size of a maximum matching. Thus, this simple algorithm is a $1/2$-approximation. Note, however, that the algorithm's running time is not sublinear in the input size as we have to go over all the possibly $\Omega(n^2)$ edges one by one where $n$ is the number of vertices. In fact, there are lower bounds showing that $\Omega(n^2)$ time is necessary to {\em find} any $O(1)$-approximate matching. Nonetheless, approximating the {\em size} of the maximum matching can be done much faster. 
Indeed, a beautiful line of work in the literature \cite{NguyenOnakFOCS08,YoshidaYISTOC09,OnakSODA12,behnezhad2021} led to an $\widetilde{O}(n)$ time algorithm for estimating the size of a (random) greedy maximal matching \cite{behnezhad2021}.


Unfortunately, the main shortcoming of the greedy maximal matching algorithm is that it only provides a $1/2$-approximation, even when the edges are processed in a random order \cite{DyerF91}. There are techniques to improve the approximation by ``augmenting'' the greedy matching \cite{YoshidaYISTOC09}, but such techniques only work well when the degrees in the graph are rather small. In particular, when we go even slightly above $1/2$-approximation, then all known algorithms take a (large) polynomial time in the maximum degree. Unfortunately, this can be as large as $\Omega(n^2)$ for general $n$-vertex graphs which is no longer sublinear time in the input size. This raises a natural question:
\begin{question}\label{question}
    Is it possible to $(\frac{1}{2}+\Omega(1))$-approximate maximum matching size in $n^{2-\Omega(1)}$ time?
\end{question}

\noindent We remark that this question has been open even for bipartite graphs.


In this work, we answer \Cref{question} in the affirmative. The running time of our algorithm can, in fact, be made abritrarily close to linear in $n$. Our algorithm can be adapted to both the {\em adjacency list} and {\em adjacency matrix} query models, which are the two standard graph representations studied in the literature. We also consider both multiplicative approximations as well as multiplicative-additive approximations. See \cref{sec:prelim} for the formal definitions of these query models and approximations.

Using $n$, $m$, $\Delta$, and $\bar{d}$ to respectively denote the number of vertices, the number of edges, the maximum degree, and the average degree in the graph, our results can be summarized as follows:

\begin{graytbox}
\begin{theorem}\label{thm: main-theorem}
    For any constant $\epsilon > 0$, there is a constant $\delta > 2^{-O(1/\epsilon)}$ along with an algorithm that w.h.p. estimates the size of maximum matching up to a:
    \begin{enumerate}[leftmargin=20pt,label=$(\arabic*)$,itemsep=0pt,topsep=5pt]
        \item multiplicative factor of $(\frac{1}{2} + \delta)$ in the \underline{adjacency list} model in $\widetilde{O}(n+\Delta^{1+\epsilon})$ time,\label{mainthm:adjlist-mult}
        \item multiplicative-additive factor of $(\frac{1}{2} + \delta, o(n))$ in the \underline{adjacency list} model in $\widetilde{O}(\bar{d} \cdot \Delta^\epsilon)$ time,\label{mainthm:adjlist-mult-add}
        \item multiplicative-additive factor of $(\frac{1}{2} + \delta, o(n))$ in the \underline{adjacency matrix} model in $O(n^{1+\epsilon})$ time.\label{mainthm:adjmatrix}
    \end{enumerate}
\end{theorem}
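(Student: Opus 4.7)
The plan is to reduce all three parts of \Cref{thm: main-theorem} to a single local-access oracle: given query access to a vertex $v$, decide with high probability whether $v$ is matched in a fixed (random) matching $M^+$ that satisfies $|M^+| \geq (\tfrac12 + \delta)\mu(G)$ in expectation. Once such an oracle exists with per-vertex query complexity $Q = \Delta^{\epsilon}$ (or $n^{\epsilon}$ in the matrix model), each of the three items follows by a tailored sampling wrapper together with a concentration argument.

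The matching $M^+$ itself is built in two layers. The first layer is a random-order greedy maximal matching $M$ — this already gives $\tfrac12$-approximation and, crucially, has a known local oracle from \cite{behnezhad2021} of complexity roughly $\Delta^\epsilon$. The second layer applies a new, \emph{less adaptive} augmentation: for each unmatched vertex $v$ of the greedy matching, we try to find a length-$3$ augmenting path $v\text{--}u\text{--}u'\text{--}v'$ (with $uu' \in M$) to flip into $M$, but we restrict the search so that the decision of whether a given $(v,u)$ pair initiates an augmentation depends only on a shallow, bounded-size local neighborhood. The formal guarantee to aim for is: $|M^+| - |M| = \Omega(1)\cdot\mu(G) - o(n)$, obtained by showing that a constant fraction of augmenting structure around $M$ is preserved by the non-adaptive rule. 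For this step I would follow the Yoshida–Yamamoto–Ito template \cite{YoshidaYISTOC09} but cap the recursion at depth $O(1/\epsilon)$ using independent random priorities per vertex, so that the tree of dependencies explored around $v$ has size only $\Delta^{\epsilon}$ instead of $\Delta^{O(1)}$ or $n^{o(1)}$ in expectation.

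Given the oracle, each of the three regimes is straightforward. For \ref{mainthm:adjlist-mult}, we split vertices into high- and low-degree classes at a threshold $\tau$; vertices of degree $>\tau$ can be handled by reading the adjacency list directly, while the $n$ low-degree vertices are probed by the oracle at cost $\tau^{\epsilon}$ each. Setting $\tau = \Delta$ yields $\widetilde O(n+\Delta^{1+\epsilon})$. For \ref{mainthm:adjlist-mult-add}, the additive error $o(n)$ lets us sample only $\widetilde O(n/\bar d)$ vertices and use the standard unbiased estimator, producing $\widetilde O(\bar d\cdot\Delta^{\epsilon})$ queries in total. For \ref{mainthm:adjmatrix}, we replace every neighbor enumeration used inside the oracle by $n^\epsilon$ random pair queries and absorb the resulting bias into the $o(n)$ additive slack; the oracle's cost becomes $n^{\epsilon}$ and sampling $n$ vertices gives $O(n^{1+\epsilon})$.

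The main obstacle is the augmentation analysis, not the sampling layer. Classical analyses of matching augmentation propagate through an \emph{adaptive} DFS/BFS whose size scales polynomially in $\Delta$, which is precisely what keeps previous approaches from beating $\tfrac12$ in truly sublinear time. The challenge is therefore to prove that, after capping the exploration at depth $O(1/\epsilon)$ and breaking ties by random priorities, a constant fraction of the length-$3$ augmenting paths still contribute to $M^+$ — i.e., that most augmentations succeed or fail based only on a constant-radius view of the graph. I expect this to hinge on a potential-function / charging argument showing that deep recursion is needed only for an $o(1)$ fraction of vertices, so that truncation loses only a $2^{-\Omega(1/\epsilon)}$ fraction of the improvement, matching the claimed $\delta > 2^{-O(1/\epsilon)}$.
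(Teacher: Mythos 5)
Your high-level plan — a local-access oracle for a better-than-half matching $M^+$ plus a sampling wrapper — is the right shape, and your wrapper ideas for items (1)--(3) are roughly right in spirit (though the arithmetic for item (2) is off: the per-vertex oracle cost is actually $\widetilde O(\bar d\cdot K)$, not $\Delta^\epsilon$, and one needs only $\widetilde O(1)$ samples, not $\widetilde O(n/\bar d)$, to land at $\widetilde O(\bar d\cdot\Delta^\epsilon)$; likewise the threshold $\tau=\Delta$ makes your high/low split trivial). But the core of the theorem is the construction and analysis of $M^+$, and there your plan has a genuine gap.

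You propose to first build the random greedy maximal matching $M$, and then, as a second layer, run an augmentation pass for the unmatched vertices, truncating the YYI-style exploration at depth $O(1/\epsilon)$ with random vertex priorities. This is exactly the two-stage (``$M$ first, then augment adaptively'') framework that the paper identifies, and discards, as the source of the quadratic barrier. The difficulty is not recursion depth — length-$3$ augmenting paths already give depth $3$ — it is that determining membership of a candidate augmenting edge $(u,v)$ in the second-layer graph requires knowing whether $u,v$ are matched in $M$, and each such membership call is itself a potentially $\widetilde\Theta(\bar d)$-cost local computation on vertices that are no longer uniformly random. Multiplying that against the $\widetilde\Theta(\bar d)$ exploration of the second layer yields $\widetilde\Theta(\bar d^2)$, and no depth cap on the second-layer search removes that inner factor. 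Capping depth at $O(1/\epsilon)$ also does not reduce the branching: a single node in the second-layer exploration can have $\Theta(\Delta)$ children (and an edge of $M$ sits on $\Theta(\Delta^2)$ length-$3$ augmenting paths), so a depth-$O(1/\epsilon)$ tree still has size $\Delta^{\Theta(1/\epsilon)}$, not $\Delta^\epsilon$.

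The missing idea is to \emph{interleave} the construction of $M$ and of the augmenting matching $S$ inside a single random permutation: assign each edge $K\approx\Delta^\epsilon$ ``start'' tokens (which build $M$) and one ``extend'' token (which builds $S$), and process all tokens in a single random order; the many start tokens per edge bias $M$'s decisions to occur earlier, so that most extend tokens are processed after the relevant local part of $M$ is already fixed. This replaces the nested ``query $M$ inside a query about $S$'' structure with a single recursive local process whose expected cost is $\widetilde O(K\bar d)$. The approximation analysis then does not come from a potential/charging argument about deep recursion being rare; it comes from two different ingredients you have not anticipated: (i) a lemma that only $o(|M|)$ edges of $M$ are ``unusual'' (meaning an extend token next to them fires before $M$ has locally stabilized) — this is where the $K$ copies of the start tokens do the work; and (ii) a rank-based partition of $M$ into $O(1/\epsilon)$ classes so that within one class, no surviving extend edge of $S$ can connect two edges of $M$, which lets one apply a vertex-sampling greedy-matching lemma to show a constant fraction of that class is genuinely augmented. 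Without some analogue of this interleaving and this rank-partition argument, the two-stage plan you sketch does not escape the $\bar d^2$ barrier, and the claimed $\Delta^\epsilon$-size exploration is not achieved.
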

\end{graytbox}

\smallskip

\noindent A few remarks about the three results of \cref{thm: main-theorem}:
\begin{itemize}[itemsep=2pt,topsep=3pt]
    \item The constant $\delta > 0$ in \cref{thm: main-theorem} is miniscule. We did not attempt to optimize it, but do not expect our techniques to lead to a better than, say, .51-approximation in $n^{2-\Omega(1)}$ time.
    \item \cref{thm: main-theorem}--\ref{mainthm:adjlist-mult} comes close to an $\Omega(n)$ lower bound that holds for any $O(1)$-approximation in the adjacency list model. Any such algorithm must distinguish an empty graph from one with a single edge. This clearly requires $\Omega(n)$ queries in the adjacency list model.
    \item \cref{thm: main-theorem}--\ref{mainthm:adjlist-mult-add} comes close to an $\Omega(\bar{d})$ lower bound for any $(O(1), o(n))$-approximation in the adjacency list model due to \cite{ParnasRon07}. Observe that $\bar{d} \Delta^\epsilon \ll m$ for any $\epsilon < 1$. Therefore, this algorithm {\em always} runs in sublinear time in the number of edges in the graph.
    \item \cref{thm: main-theorem}--\ref{mainthm:adjmatrix} comes close to an $\Omega(n)$ lower bound for any $(O(1), o(n))$-approximation in the adjacency matrix model. Any such algorithm must distinguish an empty graph from one that includes a random perfect matching. This requires $\Omega(n)$ adjacency matrix queries.
    \item It takes $\Omega(n^2)$ queries to the adjacency matrix to distinguish an empty graph from one with only a single edge. Since this must be done for any multiplicative $O(1)$-approximation, no non-trivial such algorithm (i.e., one with $o(n^2)$ queries) exists for this model.
\end{itemize}

\vspace{-0.3cm}
\subsubsection*{On Beating Greedy Matching in Various Settings} 
\vspace{-0.2cm}

The greedy 1/2-approximation is a prevalent barrier for maximum matching across various settings. As a result, numerous works in the literature study the possibility of beating it --- both on the upper bound side as well as the lower bound side. 
The answer is not always the same. For instance, for the online model under {\em edge arrivals}, \cite{GamlathKMSW19} showed that 1/2 is provably the best achievable approximation, which can be trivially matched by the greedy algorithm. 
There are also settings where the answer remains unknown, despite a significant research effort. For instance, in the single-pass streaming setting, beating the greedy 1/2-approximation in $\widetilde{O}(n)$ (or even subquadratic) space has been open for nearly two decades \cite{FeigenbaumKMSZ05}, and is often considered as one of the most fundamental open problems of the area.  Finally, there are settings for which the greedy 1/2-approximation has been broken. Various models of the online setting \cite{Fahrbach0TZ20,GamlathKMSW19}, the random-order streaming setting \cite{KonradMM12}, and the stochastic matching setting \cite{AssadiKL17} are examples of this. In many of these settings, the approximation has been  improved well beyond 1/2 after the greedy bound was first broken. For instance, in the random-order streaming the current best known bound is slightly above 2/3 \cite{AssadiB21}, and in the stochastic matching setting a $(1-\epsilon)$-approximation has been achieved \cite{BehnezhadDH20}. We hope that our work in this paper also inspires future work on going tangibly above 1/2-approximation in the sublinear time model.

Discovering short {\em augmenting paths} has been a central technique in many of the works discussed above in beating the greedy algorithm. What varies significantly is whether it is possible to find these augmenting paths effeciently in the particular model at hand. In particular, a common approach is to first construct a maximal matching in full, and then augment it via the vertices left unmatched. This ``adaptivity'' complicates things in our model, making it hard to estimate the size of the solution in subquadratic time. One of our main contributions in this work is to give a less ``adaptive'' algorithm that interleaves the construction of a maximal matching and the augmentation phase. We believe this technique, which is overviewed in \cref{sec:highlevel}, might be of independent interest. 




\section{Technical Overview}\label{sec:highlevel}
%

 In \cref{sec:overview-prior}, we first give a brief overview of existing approaches and discuss why the take quadratic time to implement. Then, in \cref{sec:overview-ours}, we overview our main tool in breaking this quadratic time barrier through a less ``adaptive'' augmentation algorithm.

\subsection{The Quadratic Barrier: A Brief Discussion of Earlier Techniques}\label{sec:overview-prior}

 Discovering (short) {\em augmenting paths}\footnote{See \cref{sec:prelim} for the formal definition of augmenting paths.} is a natural way of improving the approximation for the maximum matching problem. For example, the Hopcroft-Karp \cite{HopcroftK73} algorithm starts with an empty matching and iteratively applies a maximal set of vertex disjoint (short) augmenting paths. Each step of Hopcroft-Karp can essentially be viewed as a {\em maximal independent set} (MIS) instance. Put one vertex for each (short) augmenting path and connect two vertices if their corresponding augmenting paths share a vertex. An MIS in this graph corresponds to a maximal set of vertex disjoint augmenting paths. Building on this idea and by giving a size estimator for MIS, Yoshida, Yamamoto, and Ito~\cite{YoshidaYISTOC09} showed that for any integer $k \geq 1$, a $(\frac{k}{k+1}, o(n))$-approximation can be obtained in $O_{k}(\Delta^{6k(k+1)})$ time\footnote{$O_k$ ignores dependencies on $k$.}. This is sublinear when $\Delta$ is sufficiently small.

Although the MIS-based approach is powerful enough to go well beyond 1/2-approximation in $\poly(\Delta)$ time, it does not seem to help with general graphs where $\Delta$ can be large. This holds even if we limit ourselves to length-3 augmenting paths, which is needed for beating 1/2. The MIS size estimator of \cite{YoshidaYISTOC09} crucially requires time at least linear in the average degree. Since every edge of a maximal matching can belong to $\Omega(\Delta^2)$ length-3 augmenting paths (with $\Omega(\Delta)$ choices from each endpoint of the edge), this average degree in the MIS graph can be $\Omega(\Delta^2)$ where $\Delta$ is the original graph's maximum degree. This makes it unlikely for this approach to yield an $o(\Delta^2)$ time algorithm. Additionally, not being able to construct the MIS graph explicitly in whole, and not having the edges of the maximal matching we are trying to augment also impose other $\Delta$ factors in the running time, arriving at the rather large $\poly(\Delta)$ bound of \cite{YoshidaYISTOC09}.

There is an alternative way of discovering length-3 augmenting path which works directly with matchings instead of independent sets. The idea is to first find a maximal matching $M$ of $G$, then find another maximal matching $S$ on a subgraph $H$ of $G$ which includes a subset of the edges that have exactly one endpoint matched by $M$. Note that if both endpoints of an edge $e \in M$ are matched in $S$, then we get a length-3 augmenting path. This framework was first used by \cite{KonradMM12} in the context of random-order streaming algorithms, but has since been applied to various other settings \cite{BhattacharyaHN16,BehnezhadLM20,GuruganeshS17}. Because the second graph $H$ is defined {\em adaptively}  based on  $M$, we cannot simply run two independent instances of existing maximal matching estimators as black-box. In fact, this framework also hits a quadratic-in-degree time barrier as we describe next. To describe this barrier, we first briefly overview the key ideas behind the maximal matching size estimator of \cite{behnezhad2021}.

Consider a maximal matching $S$ that is constructed greedily by iterating over the edges in some ordering $\pi$. It is not hard to see that $e \in S$ iff there is no edge $e'$ incident to $e$ such that $\pi(e') < \pi(e)$ and $e' \in S$. Therefore to determine whether $e \in S$, it suffices to go over the lower rank neighboring edges of $e$ (in the increasing order of their ranks) and recursively query them to either find one that belongs to $S$, or conclude that $e$ must be in $S$. This approach was first suggested by Nguyen and Onak \cite{NguyenOnakFOCS08}. The main question is the total number of recursive calls needed for the process to finish. The result of Behnezhad \cite{behnezhad2021} is that for a {\em random} vertex $v$ and a {\em random} permutation $\pi$, the process terminates in $O(\bar{d} \cdot \log n)$ expected total time, coming close to an $\Omega(\bar{d})$ lower bound. 

Now, let us revisit the above two-step algorithm  which first constructs a maximal matching $M$ of $G$ and then another maximal matching $S$ of a subgraph $H$ of $G$. Consider the task of determining whether a vertex $v$ is matched by $S$. From \cite{behnezhad2021}, we get that for a random vertex $v$, this should be doable by exploring $\widetilde{O}(\bar{d}_H)$ edges of $H$ in expectation where here we use $\bar{d}_X$ to denote the average degree of graph $X$. The challenge, however, is that since $H$ is defined adaptively based on $M$, we do not a priori know the neighbors of $v$ in $H$. In particular, to know whether an edge $(u, v)$ exists in $H$, we have to ensure that exactly one of $u$ and $v$ is matched in $M$. Therefore, for exploring $\widetilde{O}(\bar{d}_H)$ edges in $H$, the naive approach would make $\widetilde{O}(\bar{d}_H)$ vertex queries to $M$. Note that these calls are not necessarily to random vertices anymore, which is crucial for the bound of \cite{behnezhad2021} to work. But even if we manage to get an $\widetilde{O}(\bar{d}_G)$ time bound on each one of these calls we arrive at a total running time of $\widetilde{\Theta}(\bar{d}_H \cdot \bar{d}_G)$ which again can be as large as $\Omega(n^2)$.

\subsection{Our Contribution: A Less ``Adaptive'' Augmentation Algorithm} \label{sec:overview-ours}

The two-stage algorithm we discussed earlier, constructs $M$ fully and then adaptively picks the edges of $S$ based on $M$. Our first step towards proving \cref{thm: main-theorem} is introducing a less ``adaptive'' algorithm that interleaves the construction of the two matchings $M$ and $S$.

Our algorithm starts by constructing a sequence $T$, containing a single element $(e, \extend)$  and $K \geq 1$ distinct elements $(e, \start)$,  corresponding to every edge $e$ in the graph,  where $K$ is a parameter of the algorithm. We will process $T$ in a {\em random order}. The role of having multiple copies of the \start{} elements is to bias them to appear earlier in the random permutation.  We start by initializing two empty matchings $M$ and $S$ and then iterate over these randomly sorted $m(K+1)$ elements of $T$. Whenever we see an $(e, \start)$ element, we add $e$ to $M$ iff both endpoints of $e$ are unmatched in $M$. Therefore,  $M$ will be a random greedy maximal matching of $G$. Whenever we see an $(e, \extend)$ element, we  add $e$ to $S$ under a few conditions. The first condition is that both endpoints of $e$ must be yet unmatched by $S$; this is to ensure that $S$ continues to be a matching. The second condition is that at most one endpoint of $e$ can be already matched by $M$. Our final algorithm, formalized as \cref{alg: algorithm}, also checks two more technical conditions before adding $e$ to $S$ which are needed for the approximation ratio analysis. Once all of $T$ is processed, the algorithm returns a maximum matching of $M \cup S$.

Observe that even though at the time of adding an edge $(u, v)$ to $S$, at most one of its endpoints is matched in $M$, the other endpoint may get matched in $M$ later in the process. Such edges of $S$ cannot be used in length-3 augmenting paths for $M$. One way to avoid these bad events is to set $K$ large enough so that  all the \start{} elements  appear before all \extend{} elements. However, this will negatively impact the running time. Specifically, the local query process to determine whether a random vertex $v$ is matched in either of $S$ or $M$ takes $\widetilde{O}(\bar{d} \cdot K)$ time. This means that we need to set $K$ to be much smaller than $\Delta$ to beat the quadratic-time-barrier. Indeed we set $K \approx \Delta^\epsilon$ to get the bounds of \cref{thm: main-theorem}.

\begin{wrapfigure}[7]{r}{0.27\textwidth}
    \centering
    \vspace{-0.5cm}
    \includegraphics[scale=0.9]{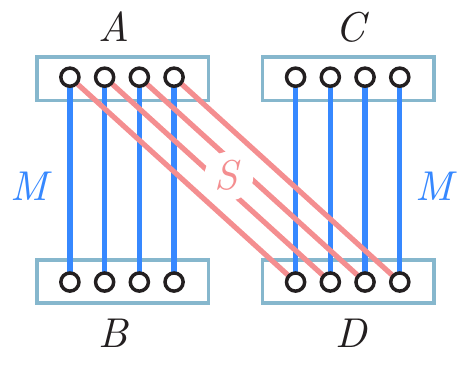}
\end{wrapfigure}
When $K \ll \Delta$, we will inevitably have many edges of $S$ for which both endpoints are matched in $M$. For example, consider the construction illustrated on the right with four vertex parts $A$, $B$, $C$, $D$.
There is a regular bipartite graph of degree $\Theta(\Delta)$ between $A$ and $B$, a regular bipartite graph of degree $\Theta(\Delta^{0.99})$ between $A$ and $D$, and a regular bipartite graph of degree $\Theta(\Delta^{0.98}/K)$ between $C$ and $D$. In this construction, the edges of $M$ match $A$ to $B$ and $C$ to $D$ nearly completely. 
The edges of $S$ match $A$ to $D$ nearly completely. 
This happens because there are many more \extend{} elements in $(A, D)$ than there are $\start{}$ elements in $(C, D)$. Therefore, at the time of adding $S$, the part of $M$ from $C$ to $D$ is not yet constructed. 

Despite this bad event, we show that the edges of $S$ are still useful in augmenting $M$. One key insight apparent in the above example  is that the edges of $(C, D)$ in $M$ tend to have rank (in the permutation of $T$) roughly $K$ times larger than those edges of $(A, B)$ in $M$. We generalize this to all graphs and show that if an edge of $S$ connects two edges of $M$, then the ranks of these edges of $M$ must differ by a factor of roughly $K$.\footnote{To be more precise, we only prove this for most edges of $S$, but not all.} Therefore, if instead of considering all edges of $M$ for augmentation, we consider a subset $M_{j^\star}$ of $M$ that is a constant fraction of $M$ and at the same time any two edges in $M_{j^\star}$ have ranks within $K$ factor of each other, then $S$ cannot connect two edges of $M_{j^\star}$ and so we can focus on augmenting just $M_{j^\star}$ instead of the whole matching $M$. See \cref{sec:algorithm,sec:approx} for the full details of the algorithm and its approximation analysis.

So far we have only described an algorithm that finds a $(\frac{1}{2}+\Omega(1))$-approximate matching and have not yet described how to estimate its output size in sublinear time. To do this, we first define a query process akin to the one described above for maximal matching. That is, for a given edge $e$ we define two query processes that respectively return whether $e \in M$ and $e \in S$. We then analyze the expected number of the recursive calls for a random start vertex by building on the techniques of \cite{behnezhad2021,YoshidaYISTOC09}. Several challenges arise along the way that are unique to our algorithm and require new ideas. For instance, an \extend{} element remains relevant (i.e., can still be added to $S$) until seeing two \start{} elements in $M$, one from each endpoint. This is unlike the greedy approach, say for MIS or maximal matching, where an element (respectively a vertex and an edge) becomes irrelevant right after seeing one neighbor in the solution. Also note that we should not simply count the number of edges in $M$ and $S$. Rather, we have to count the number of edges of $M$ plus the number of length-3 augmenting paths that we find. To do this, when we find an edge $(u, v) \in S$ and an edge $(v, w) \in M$, we also query whether $w$ is matched in $S$ to a vertex left unmatched by $M$ or not. This complicates the analysis because this vertex $w$ is not picked uniformly at random anymore. The details of the query process and its analysis are provided in \cref{sec: query-process}.

\section{Preliminaries}\label{sec:prelim}

\paragraph{Notation:} Throughout the paper we use $G=(V, E)$ to denote the input graph. We use $n$ to denote the number of vertices in $G$, $m$ to denote the number of edges in $G$, $\Delta$ to denote the maximum degree of $G$, and $\bar{d}$ to denote the average degree of $G$. We write $\mu(G)$ to denote the size of the maximum matching in $G$. 

We use $A \oplus B := (A \cup B) \setminus (A \cap B)$ to denote the symmetric difference of two sets $A$ and $B$. Also for any positive integer $k$, we use $[k]$ to denote the set $\{1, \ldots, k\}$. Throughout the paper, we use the $\widetilde{O}(\cdot)$ to suppress $\poly\log n$ factors, that is $\widetilde{O}(f) = O(f \cdot \poly\log(n))$.

\paragraph{Problem Definition:} Given a graph $G$, represented in one of the following two ways, we study the problem of estimating the size of maximum matching:
\begin{itemize}
    \item {\em Adjacency List:} In this model, the neighbors of each vertex are stored in a list sorted in an arbitrary order. Each query of the algorithm specifies a vertex $v$ and an index $i$. The answer is the ID of the $i$-th vertex in the list of $v$'s neighbors, or {\em empty} if $v$ has less than $i$ neighbors.
    \item {\em Adjacency Matrix:} In this model, each query of the algorithm specifies a pair of vertices $u$ and $v$. The answer is 1 if $u$ and $v$ are adjacent, and 0 otherwise.
\end{itemize}
For $\alpha \in (0, 1]$ and $\gamma \in [0, 1]$, we say $\widetilde{\mu}(G)$ is a multiplicative-additive $(\alpha, \gamma n)$-approximation of the size of maxmimum matching of $G$ if 
$
\alpha\mu(G) - \gamma n \leq \widetilde{\mu}(G) \leq \mu(G)
$. Additionally, it is a multiplicative $\alpha$-approximation if $\alpha\mu(G) \leq \widetilde{\mu}(G) \leq \mu(G)$.

\paragraph{Augmenting/Alternating Paths:} Given a matching $M$ of $G$, a path in $G$ is an {\em alternating path} for $M$ if its edges alternatively belong to $M$. An alternating path is an {\em augmenting path} for $M$ if the first and the last edges of the path do not belong to $M$.

It is well-known that if a maximal matching is nearly half the size of a maximum matching, then almost all of its edges belong to length-three augmenting paths. The following statement is folklore. For the sake of completeness, we provide a simple proof in \cref{apx:missing-proofs}.

\begin{claim}[Folklore]\label{cl:many-length-three}
    Let $M$ be a maximal matching and $M^\star$ a maximum matching. Suppose $|M| < (\frac{1}{2} + \delta) |M^\star|$. In $M \oplus M^\star$, there are at least $|M| - 4\delta|M^\star|$ length-3 augmenting paths for $M$.
\end{claim}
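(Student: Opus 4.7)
The plan is to analyze the symmetric difference $H := M \oplus M^\star$, which decomposes into alternating paths and even alternating cycles. First I would record two structural facts about $H$: since $M^\star$ is a maximum matching, $H$ contains no augmenting paths for $M^\star$ (odd-length components with one more $M$-edge than $M^\star$-edge); and since $M$ is maximal, $H$ contains no length-$1$ augmenting path for $M$, because such a component would be a single $M^\star$-edge whose two endpoints are both unmatched in $M$, contradicting maximality. Hence every augmenting path for $M$ inside $H$ has odd length $2k+1$ with $k \geq 1$, consisting of $k$ edges of $M$ and $k+1$ edges of $M^\star$.

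Next, letting $p_3$ be the number of length-$3$ augmenting paths and $p_{\geq 5}$ the number of longer ones, I would set up two accounting relations. Non-augmenting components (even paths and even cycles) and edges of $M \cap M^\star$ contribute equally to the $M$- and $M^\star$-edge counts; the imbalance comes only from augmenting paths for $M$. Thus
\[
|M^\star| - |M| \;=\; p_3 + p_{\geq 5}.
\]
Counting only the $M$-edges that lie inside augmenting paths, and using that a length-$(2k+1)$ path consumes $k$ such edges (so $\geq 2$ when $k \geq 2$), gives
\[
p_3 + 2\,p_{\geq 5} \;\leq\; |M|.
\]

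Eliminating $p_{\geq 5}$ from these two relations yields $p_3 \geq 2|M^\star| - 3|M|$. Finally I would plug in the hypothesis $|M| < (\tfrac{1}{2}+\delta)|M^\star|$, which gives $2|M| - |M^\star| < 2\delta|M^\star|$, and rewrite
\[
p_3 \;\geq\; 2|M^\star| - 3|M| \;=\; |M| - 2\bigl(2|M| - |M^\star|\bigr) \;>\; |M| - 4\delta|M^\star|,
\]
as required. I do not anticipate any real obstacle: the only thing to be careful about is the decomposition of $H$ (handling alternating cycles and $M \cap M^\star$ correctly in the edge-counting), and the fact that the weakest augmenting paths for a near-tight maximal matching are the long ones, which is exactly what the inequality $p_3 + 2 p_{\geq 5} \leq |M|$ captures.
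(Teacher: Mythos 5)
Your proposal is correct and essentially the same as the paper's: both argue via the decomposition of $M \oplus M^\star$, use maximality of $M$ to rule out length-$1$ augmenting paths, note that the number of augmenting paths for $M$ equals $|M^\star| - |M|$, and then bound the number of long augmenting paths by counting $M$-edges they consume, arriving at $p_3 \geq 2|M^\star| - 3|M|$ and then plugging in the hypothesis. The only cosmetic difference is that the paper tracks the exact $M$-edge count $i$ per length-$(2i+1)$ path (writing $\sum_i i\,t_{2i+1} \leq |M|$) whereas you immediately relax to the lower bound $2$ for every longer path; both yield the identical final inequality, and your version is a clean minimal form of the same argument.
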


\paragraph{Greedy Matching:} Given a graph $G=(V, E)$ and a permutation $\pi$ of its edge-set $E$, we use $\greedyMM{G, \pi}$ to denote the greedy maximal matching obtained by iterating over the edges of $E$ in the order of $\pi$ and greedily adding each encountered edge that does not violate matching constraints to the matching.

We use the following proposition about the size of greedy matchings in vertex-subsampled subgraphs. It was first proved in \cite{BehnezhadLM20} using the techniques developed in \cite{KonradMM12}. 

\begin{proposition}[{\cite[Lemma~5.2]{BehnezhadLM20}}]\label{prop:greedy-vertex-sample}
    Let $G(V, U, E)$ be a bipartite graph, let $\pi$ be an arbitrary permutation over $E$, let $p \in (0, 1)$, and let $M$ be an arbitrary matching in $G$. Let $W$ be a subsample of $V$ including each vertex independently with probability $p$. Define $X$ to be the number of edges in $M$ whose endpoint in $V$ is matched in $\greedyMM{G[W \cup U], \pi}$; then $$\E_W[X] \geq p(|M| - 2p|V|).$$
\end{proposition}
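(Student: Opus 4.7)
The plan is to bound the ``loss'' $Y := |V_M \cap W| - X$, where $V_M$ denotes the set of $V$-endpoints of $M$ (so $|V_M| = |M|$); since $\E[|V_M \cap W|] = p|M|$ by independence of the inclusion indicators, it suffices to show $\E[Y] \le 2p^2|V|$.

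First I would observe that whenever $v \in V_M \cap W$ is unmatched in $N := \greedyMM{G[W \cup U], \pi}$, two things hold. (a) The $M$-partner $u(v)$ of $v$ must be matched in $N$, because otherwise the greedy would add $(v, u(v))$ at step $\pi((v, u(v)))$, contradicting $v$'s being unmatched. (b) Since no $v$-edge appears in $N$, the greedy executions on $G[W \cup U]$ and on $G[(W \setminus \{v\}) \cup U]$ coincide step-by-step, so $u(v)$ is also matched in $N^{-v} := \greedyMM{G[(W \setminus \{v\}) \cup U], \pi}$. Using independence of $\{v \in W\}$ from $W \setminus \{v\}$, this gives
\[
\E[Y] \;\le\; p \sum_{v \in V_M} \Pr\!\bigl[u(v) \text{ is matched in } N^{-v}\bigr].
\]

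The core technical step is a monotonicity lemma: for any $v^* \notin W'$, if $u \in U$ is matched in $N_{W'} := \greedyMM{G[W' \cup U], \pi}$, then $u$ is also matched in $N_{W' \cup \{v^*\}}$. I would prove this by tracking both greedy executions in parallel and showing by induction on $\pi$-order that the symmetric difference $N_{W'} \oplus N_{W' \cup \{v^*\}}$ is always a single alternating path starting at $v^*$, with its first edge in $N_{W' \cup \{v^*\}}$ (as $v^*$ has no edges in the smaller execution). A short case analysis on the position of $u$ along this path rules out the bad direction: if $u$ is off the path or in its interior, its matched status agrees in both matchings; and if $u$ is the other endpoint, then $u \in U$ forces the path to have odd length, the alternation pattern places the last edge (incident to $u$) in $N_{W' \cup \{v^*\}}$, and hence $u$ has no matching edge in $N_{W'}$ at all.

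Applying this monotonicity with $v^* = v(u)$ for each $M$-endpoint $u \in U$ gives $\Pr[u \text{ matched in } N^{-v(u)}] \le \Pr[u \text{ matched in } N_W]$. Summing over these $u$'s and using $\sum_{u \in U} \Pr[u \text{ matched in } N_W] = \E[|N_W|] \le \E[|W|] = p|V|$ yields
\[
\E[Y] \;\le\; p \cdot \E[|N_W|] \;\le\; p \cdot p|V| \;=\; p^2|V| \;\le\; 2p^2|V|,
\]
as required. The main obstacle I anticipate is the inductive claim that the symmetric difference remains a single alternating path throughout the parallel greedy execution (rather than splitting into multiple components or closing into a cycle); this requires carefully handling each step at which one process adds an edge and the other skips, and verifying that the induced modification to the path always preserves its single-path, alternating structure rooted at $v^*$.
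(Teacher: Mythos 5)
The paper does not supply a proof of \cref{prop:greedy-vertex-sample}; it is cited verbatim from \cite{BehnezhadLM20} and used as a black box, so there is no in-paper argument to compare yours against.

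Evaluated on its own merits, your proof is correct. The loss decomposition $Y = |V_M \cap W| - X$, the step ``$v \in W$ unmatched in $N_W$ implies $u(v)$ matched in $N^{-v}$'' (maximality of greedy, together with the observation that removing a sampled vertex none of whose edges ever enter the greedy matching leaves the execution entirely unchanged), the independence of the indicator $\{v \in W\}$ from $W \setminus \{v\}$ to extract a factor of $p$, and the final counting $\sum_{v\in V_M}\Pr[u(v)\text{ matched in }N_W] \le \E|N_W| \le \E|W| = p|V|$ are all sound. The vertex-addition monotonicity lemma is, as you say, the technical heart, and your single-alternating-path sketch is the right argument; the checks it hides are (i) the two executions coincide until the unique first $v^*$-edge that gets added to the larger run --- earlier $v^*$-edges are skipped there because their $U$-endpoint is already matched and hence have no effect, and later ones are skipped because $v^*$ is already matched, so exactly one $v^*$-edge ever enters the symmetric difference; (ii) each subsequently processed non-$v^*$ edge can be incident to at most one of the path's two current endpoints (the other endpoint is always $v^*$ itself), so the path can only grow at its free end and cannot split; and (iii) the path cannot close into a cycle, since returning to $v^*$ would require a second symmetric-difference edge at $v^*$. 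Given these, the $V$/$U$-alternation along the path forces the parity fact you invoke: a $U$-vertex can only be the non-$v^*$ endpoint on the ``matched in $N_{W' \cup \{v^*\}}$, free in $N_{W'}$'' side, never the reverse. Incidentally, your argument actually yields the sharper error term $p^2|V|$ rather than the stated $2p^2|V|$; this is harmless as only the weaker inequality is used downstream.
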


\paragraph{Probabilistic tools:} We use the following version of Chernoff bound.
\begin{proposition}[Chernoff Bound]\label{prop:chernoff}
    Suppose $X_1, X_2, \ldots, X_n$ are independent Bernoulli random variables and $X = \sum_{i=1}^{n} X_i$. For any $t > 0$, we have
    $$\Pr[|X - \E[X]| \geq t] \leq 2\exp\left(-\frac{t^2}{3\E[X]}\right).$$
\end{proposition}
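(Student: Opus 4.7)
The plan is to use the standard moment generating function (MGF) / Chernoff method and then reduce the resulting bound into the clean form stated. Let $p_i = \E[X_i]$ and $\mu = \E[X] = \sum_i p_i$. For each Bernoulli $X_i$ and every real $s$, a direct computation gives $\E[e^{sX_i}] = 1 + p_i(e^s - 1) \leq \exp(p_i(e^s-1))$ using $1+x \leq e^x$. By independence, $\E[e^{sX}] \leq \exp(\mu(e^s-1))$.

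For the upper tail, I would apply Markov's inequality to $e^{sX}$ with $s > 0$: $\Pr[X \geq \mu + t] \leq e^{-s(\mu+t)}\E[e^{sX}] \leq \exp\bigl(\mu(e^s-1) - s(\mu+t)\bigr)$. Optimizing with $s = \ln(1 + t/\mu)$ yields the textbook multiplicative Chernoff bound
\[
\Pr[X \geq \mu + t] \leq \exp\bigl(-\mu \cdot h(t/\mu)\bigr), \qquad h(x) := (1+x)\ln(1+x) - x.
\]
For the lower tail, the analogous argument with $s < 0$ (Markov on $e^{-sX}$, optimizing with $s = -\ln(1-t/\mu)$, valid for $t < \mu$) gives $\Pr[X \leq \mu - t] \leq \exp(-\mu \cdot h(-t/\mu))$ where the corresponding function satisfies $h(-x) \geq x^2/2$ for $x \in [0,1]$.

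To finish, I would invoke the elementary inequality $h(x) \geq \frac{x^2}{2 + x/3} \geq \frac{x^2}{3}$ for $x \in [0, 1]$ (which can be verified by comparing Taylor expansions or by showing that the derivative of $h(x) - x^2/3$ is nonnegative on $[0,1]$), together with the lower-tail inequality $h(-x) \geq x^2/2 \geq x^2/3$. Substituting $x = t/\mu$ in both tails gives
\[
\Pr[X \geq \mu + t] \leq \exp\!\left(-\frac{t^2}{3\mu}\right), \qquad \Pr[X \leq \mu - t] \leq \exp\!\left(-\frac{t^2}{3\mu}\right),
\]
for the relevant range of $t$; for $t \geq \mu$ the lower tail is vacuous and the upper tail follows from the stronger bound $\Pr[X \geq \mu + t] \leq \exp(-t/3)$ which dominates $\exp(-t^2/(3\mu))$ trivially when $t \geq \mu$. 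A union bound on the two tails then yields the factor of $2$ and completes the proof.

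The main obstacle is the algebraic step bounding $h(x)$ by $x^2/3$: one has to be careful to handle both tails and the regime $t > \mu$ cleanly, since the naive form $\exp(-\delta^2\mu/3)$ from the multiplicative bound only applies for $\delta \leq 1$. I would handle this by splitting cases on whether $t \leq \mu$ or $t > \mu$ and observing that in the latter case the stronger bound $\exp(-t/3)$ from the Chernoff analysis subsumes the stated $\exp(-t^2/(3\mu))$ bound for the upper tail (while the lower tail vanishes).
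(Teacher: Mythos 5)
The paper offers no proof of this proposition---it is quoted as a standard probabilistic tool---so your argument has to stand on its own. Writing $\mu := \E[X]$, the core of your proof is fine: the MGF bound $\E[e^{sX}]\le \exp(\mu(e^s-1))$, the optimization giving $\Pr[X\ge\mu+t]\le\exp(-\mu\, h(t/\mu))$ with $h(x)=(1+x)\ln(1+x)-x$, the lower-tail analogue with $h(-x)\ge x^2/2$, and the inequality $h(x)\ge x^2/3$ on $[0,1]$ (which follows since $\frac{d}{dx}\bigl(h(x)-x^2/3\bigr)=\ln(1+x)-2x/3\ge 0$ there) together establish the stated bound for all $0<t\le\mu$. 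One minor slip: your intermediate inequality $h(x)\ge \frac{x^2}{2+x/3}$ is false at $x=1$, where $h(1)=2\ln 2-1\approx 0.386<3/7$; the standard form is $h(x)\ge \frac{x^2}{2+2x/3}$, which still yields $x^2/3$ on $[0,1]$, so this does not affect the conclusion.

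The genuine gap is your treatment of $t>\mu$. You assert that the bound $\Pr[X\ge\mu+t]\le\exp(-t/3)$ ``dominates'' $\exp(-t^2/(3\mu))$ in that regime, but the comparison runs the wrong way: for $t\ge\mu$ we have $t^2/(3\mu)\ge t/3$, hence $\exp(-t^2/(3\mu))\le\exp(-t/3)$, so proving $\Pr\le\exp(-t/3)$ does \emph{not} prove $\Pr\le\exp(-t^2/(3\mu))$. Moreover this gap cannot be repaired, because the proposition as literally stated (for all $t>0$) is false: take $X_1,\dots,X_n$ independent Bernoulli$(1/n)$, so $\E[X]=1$, and $t=n-1$; then $\Pr[X\ge n]=n^{-n}=e^{-n\ln n}$, whereas $2e^{-t^2/(3\E[X])}=2e^{-(n-1)^2/3}$ is far smaller for large $n$. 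The correct version of this folklore bound carries the restriction $t\le\E[X]$ (equivalently $\delta\le1$ in multiplicative form), which is exactly the regime your argument does handle and the regime in which the paper applies the proposition. In short: your proof is essentially correct wherever the statement is correct, but the case split you propose for $t>\mu$ does not work and no argument could make it work.
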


\vspace{-0.4cm}
\section{A Meta Algorithm for Beating the $\frac{1}{2}$-Approximation}

\subsection{The Algorithm}\label{sec:algorithm}

In this section, we formalize our new ``less adaptive'' meta algorithm that we informally overviewed in \cref{sec:highlevel}. We show in \cref{sec:approx} that its approximation ratio is strictly better-than-half. We later show in \cref{sec: query-process} that the size of its output matching can be estimated in sublinear time.

Before formalizing the algorithm, let us give a few useful definitions. Given an $n$-vertex  graph of maximum degree $\Delta$, and a parameter $\epsilon \in (0, .25)$, define:
\begin{whitetbox}
\begin{itemize}[itemsep=0pt,topsep=0pt,leftmargin=12pt]
    \item $p := 0.007$.
    \item $D := (c \cdot \Delta \cdot \log n)^\epsilon$ where we later fix $c \geq 1$ to be sufficiently large function of $\epsilon$.
    \item $K := 10D \log^2 n$.
    \item $\alpha_i := 1/D^{i-1}$ for $i \in [2/\epsilon]$ and $\alpha_{2/\epsilon + 1} = 0$. Note that
    $
        0 = \alpha_{2/\epsilon+1} < \alpha_{2/\epsilon} < \ldots < \alpha_2 < \alpha_1 = 1.
    $
\end{itemize}
\end{whitetbox}

We are now ready to state the algorithm, which is formalized below as \cref{alg: algorithm}.

\begin{figure}[h]
\begin{algenv}{An algorithm for beating half-approximate matching.}{alg: algorithm}
\setstretch{1.1}
\SetKwInOut{Input}{Input}
\Input{An $n$-vertex $m$-edge graph $G=(V, E)$ of max degree $\Delta$.  \textbf{Parameter:} $\epsilon \in (0, .25)$. \medskip}

Define $K$, $p$, and $\alpha_{2/\epsilon}, \ldots, \alpha_1$ as above.

Construct a sequence $T$, which for any edge $e \in E$ includes $K$ copies of $(e, \start{})$ and one copy of $(e, \extend{})$. Then random shuffle the elements in $T$.

For any vertex $v \in V$ pick a color $c_v \in \{\blue, \red\}$ uniformly and independently.\label{line:color}

Initialize $M \gets \emptyset, S \gets \emptyset$.\tcp*{Both $M$ and $S$ will be matchings of $G$.}

Initialize $M_1 \gets \emptyset, \ldots, M_{2/\epsilon} \gets \emptyset$.\tcp*{These will partition the edges in $M$.}

Draw $j^\star$ from $[2/\epsilon]$ uniformly at random.\label{line:jstar}

\For{$i =1$ to $|T|$}{
    Let $(e=\{u, v\}, X)$ be the $i$-th element in $T$.
    
    \If{$X = \start{}$ and $\deg_M(u) = \deg_M(v) = 0$}{
            Add $e$ to $M$.
            
            Add $e$ to the unique $M_i$, $i \in [2/\epsilon]$ where $\alpha_{i+1}|T| < i \leq \alpha_{i}|T|$.
            
            \If{$c_v = c_u$ or $e \not\in M_{j^\star}$}{
                Mark both $u$ and $v$ as {\em frozen}.\label{line:det-freeze}
            }
            \Else{
                With probability $1-p$ mark both $u$ and $v$ as {\em frozen}.\label{line:freeze}
            }
    }
    \If{$X = \extend{}$, $\deg_S(u) = \deg_S(v) = 0$, $\deg_M(v)+\deg_M(u) \leq 1$, $c_u \not= c_v$, and neither endpoint of $e$ is frozen\label{line:condition-add-to-S}}{
            Add $e$ to $S$.
    }
}
\Return the maximum matching in $M \cup S$.
\end{algenv}
\end{figure}

\subsection{The Approximation Guarantee}\label{sec:approx}

In this section, we prove the following approximation guarantee for \cref{alg: algorithm}. (We note that we have not attempted to optimize the constants in the statement.)

\begin{theorem}\label{thm:apx}
   Let $G$ be any $n$-vertex graph. Let $M$ and $S$ be the matchings produced by \cref{alg: algorithm} run on $G$ for parameter $\epsilon \in (0, .25)$. Then for some $\delta > 2^{-O(1/\epsilon)}$,
   $$
   \E[\mu(M \cup S)] \geq \left(\frac{1}{2} + \delta \right) \mu(G).
   $$
\end{theorem}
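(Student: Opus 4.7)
The plan is to split into two cases based on the size of the random greedy maximal matching $M$ produced by \cref{alg: algorithm}. If $\E[|M|] \geq (\tfrac12 + \delta)\mu(G)$, then $\mu(M\cup S) \geq |M|$ alone gives the bound and we are done. So assume from here on $\E[|M|] < (\tfrac12 + \delta)\mu(G)$. Fix any maximum matching $M^\star$ of $G$. Applying \cref{cl:many-length-three} pointwise in $M$, I would conclude that $M \oplus M^\star$ contains at least $|M| - 4\delta\mu(G)$ length-three augmenting paths for $M$; call this collection $\mathcal{P}(M)$. Each $\pi \in \mathcal{P}(M)$ has the form $a - u - v - b$ with $(u,v) \in M$, $(a,u), (v,b) \in M^\star$, and $a,b$ unmatched in $M$. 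The goal is to realize a constant (depending on $\epsilon$) fraction of these paths as length-three augmenting paths inside $M \cup S$, i.e.\ to put both outer edges into $S$.

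Next, I would restrict attention to paths in $\mathcal{P}(M)$ whose middle edge lies in $M_{j^\star}$. Since $j^\star$ is uniform over $[2/\epsilon]$ and the edges of $M$ partition into $M_1, \ldots, M_{2/\epsilon}$, each fixed edge of $M$ is in $M_{j^\star}$ with probability $\epsilon/2$, so the expected number of surviving paths is $(\epsilon/2)(|M| - 4\delta\mu(G))$. The crucial lemma to prove is: for any such path $\pi = a-u-v-b$ with $(u,v)\in M_{j^\star}$,
\[
    \Pr\bigl[(a,u) \in S \text{ and } (v,b) \in S \,\big|\, M,\ (u,v)\in M_{j^\star}\bigr] \;\geq\; c(\epsilon),
\]
where $c(\epsilon) = \Omega(p^2) > 0$. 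I would decompose this into three favorable subevents and handle them largely independently: (i) the colors alternate, $c_a \neq c_u$, $c_u \neq c_v$, $c_v \neq c_b$, happening with probability $1/8$; (ii) $u$ and $v$ are both left unfrozen at \cref{line:freeze}, which has probability $p$ given the color condition and $(u,v) \in M_{j^\star}$ (note $a, b$ are automatically unfrozen since they are unmatched in $M$); (iii) the \extend{} elements of $(a,u)$ and $(v,b)$ fire successfully, i.e.\ when each is processed, the $S$- and $M$-degree conditions in \cref{line:condition-add-to-S} all hold.

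The main obstacle is subevent (iii), which requires a careful timing analysis. One must rule out that $a$ or $b$ gets matched in $M$ before the relevant \extend{} fires (otherwise the $M$-degree constraint fails) and that $a,u,v,b$ are not claimed earlier by a different $S$-edge. The first half is controlled using the structural property highlighted in \cref{sec:overview-ours}: any $S$-edge connecting two $M$-edges must have those $M$-edges in strata whose ranks (in the random permutation of $T$) differ by a factor of order $K/D \gg 1$. Because $M_{j^\star}$ lives inside a single stratum by construction, an $S$-edge incident to $u$ cannot, except with negligible probability, have its other endpoint end up matched in $M_{j^\star}$ later. The second half is handled by a union bound over competing \extend{} elements incident to $a, u, v, b$, where the expected number of conflicting $S$-neighbors contributes a further constant factor. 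Combining (i)–(iii) yields $c(\epsilon) = \Omega(p^2)$; summing over $\mathcal{P}(M)$ and taking expectation gives
\[
    \E[\mu(M \cup S)] - \E[|M|] \;\geq\; \Omega(\epsilon \cdot p^2)\cdot\bigl(\E[|M|] - 4\delta\mu(G)\bigr),
\]
and choosing $\delta = 2^{-\Theta(1/\epsilon)}$ sufficiently small relative to $\Omega(\epsilon p^2)$ completes the proof.
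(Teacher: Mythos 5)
The core of your proposal is a per-path probability bound: for a fixed length-three augmenting path $a$-$u$-$v$-$b$ in $M\oplus M^\star$ with middle edge in $M_{j^\star}$, you claim
$\Pr[(a,u)\in S \text{ and } (v,b)\in S \mid T,\, (u,v)\in M_{j^\star}] \ge c(\epsilon)=\Omega(p^2)$.
This is where the argument breaks: such a bound is false. Once $u$ is unfrozen and eligible, the edge $(a,u)$ competes with all other eligible \extend{} elements incident to $u$, of which there can be $\Theta(\Delta)$. The random greedy process that builds $S$ will match $u$ to \emph{some} such neighbor, but the probability it matches $u$ to the specific vertex $a$ can be as small as $\Theta(1/\Delta)$. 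Your sentence about handling this by ``a union bound over competing \extend{} elements \ldots the expected number of conflicting $S$-neighbors contributes a further constant factor'' does not address this, because the issue is not that $u$ fails to get matched (that does happen with constant probability), but that $u$ gets matched to the \emph{wrong} neighbor. No constant-factor slack survives summing over $\Theta(\Delta)$ competitors.

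The paper circumvents this by never requiring $(a,u)$ and $(v,b)$ specifically. Instead it works at the aggregate level: it builds two vertex-disjoint auxiliary bipartite graphs $H_B$ (between blue $M_j$-endpoints $\hat B$ and red $M_{\ge j}$-unmatched vertices $U_R$) and $H_R$, and applies \cref{prop:greedy-vertex-sample} with the ``vertex subsample'' being the unfrozen side. This gives $\E|X_B|\ge p(|M_B|-2p|\hat B|)$ where $X_B$ is the set of $M_B$-endpoints in $\hat B$ that get matched \emph{to anything} in $U_R$. One then combines $X_B$ and $X_R$ by inclusion--exclusion and observes that an $M_j$-edge $(u,v)$ with $u\in X_B$ and $v\in X_R$ yields a length-three augmenting path $(a',u,v,b')$ for \emph{some} $a',b'$ --- not necessarily the $M^\star$-neighbors --- provided $a',b'$ are unmatched in all of $M$. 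That last proviso is not automatic (they are only known unmatched by $M_j,\ldots,M_{2/\epsilon}$), which is why the paper proves \cref{cl:first-significant-Mi}: it selects a special stratum $j$ where $|M_1|+\cdots+|M_{j-1}|\le 2^{-19}|M_j|$, so the loss from $a',b'$ being matched by earlier strata is negligible, and then conditions on $j^\star=j$. Your proposal averages over all $j^\star$ uniformly and does not have this mechanism, so even if the per-path bound were repaired, you would still need this stratum-selection step. In short: the key missing ideas are (1) replacing the per-path event with a global vertex-sampling bound (\cref{prop:greedy-vertex-sample}), and (2) the careful choice of stratum $j$ in \cref{cl:first-significant-Mi} to control contamination by $M_{<j}$.
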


\subsubsection{Basic Notation and Definitions} 

For any element $\ell$ we use $\pi(\ell)$ to denote the location of $\ell$ in $T$. For any edge $e \in E$ we use \rankstart{e} (resp. \rankextend{e}) to denote the minimum $i \in [|T|]$ such that the $i$-th index of $T$ includes element $(e, \start)$ (resp. $(e, \extend)$). For any element $\ell \in T$ we say ``{\em at the time of processing $\ell$}'' to refer to the iteration of the for loop in \cref{alg: algorithm} when $i$ equals $\pi(\ell)$.

Next, we define unusual edges as follows:

\begin{definition}[unusual edges]\label{def:unusual}
We say an edge $e \in M$ is {\em unusual}, if there is some edge $e'=(u, v)$ incident to $e$ such that one of the following holds:
\begin{itemize}
    \item $\rankstart{e} < \rankextend{e'} < D \cdot \rankstart{e} $, and at the time of processing $(e', \extend{})$ the only edge in $M$ that is incident to $e'$ is $e$.
    \item $\rankextend{e'} < \rankstart{e}$, and at the time of processing $(e', \extend{})$ no edge in $M$ is incident to $e'$ in $M$.
\end{itemize}
We use $A$ to denote the subset of unusual edges of $M$.
\end{definition}

In \cref{sec:few-unusuals} we prove the following \cref{lem:few-unusuals} which shows only a small fraction of the edges in $M$ will be unusual. This is one of our key insights towards our proof of \cref{thm:apx}, and is the main place where having $K$ copies of $(e, \start{})$ compared to one copy of $(e, \extend{})$ in $T$ is used crucially.

\begin{lemma}\label{lem:few-unusuals}
    $\E|A| = o(|M|)$.
\end{lemma}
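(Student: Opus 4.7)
The plan is to split $A \subseteq A_1 \cup A_2$ according to the two cases of \cref{def:unusual}, and bound each $\E|A_i|$ by a union bound over the witness pair $(e, e')$. The crucial ingredient---and the place where the $K$-fold replication of $(e,\start)$ is essential---is that $\rankstart{e}$ is typically of order $|T|/K$, so the intervals $(\rankstart{e}, D\rankstart{e})$ arising in case~1 and $[0,\rankstart{e})$ arising in case~2 occupy only a $\widetilde{O}(D/K)$ fraction of $T$, making it unlikely that an independent position $\rankextend{e'}$ lands inside.

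To make this formal, I would pass to the standard continuous coupling in which each element of $T$ carries an independent $\unifzeroone$ label and positions are determined by sorting. Then $\rankstart{e}$ is the minimum of $K$ i.i.d.\ uniforms with $\E[\rankstart{e}] \approx 1/K$, $\rankextend{e'}$ is an independent uniform, and---crucially---$\mathbf{1}[e\in M]$ is a function of the \start{} labels alone, hence conditionally independent of $\rankextend{e'}$ given $\rankstart{e}$. Dropping the auxiliary ``$M$-incidence at time $\rankextend{e'}$'' conditions from \cref{def:unusual} (which only enlarges the events), I obtain for each $e'$ incident to $e$:
\[
\Pr\bigl[e\in M,\, \rankstart{e} < \rankextend{e'} < D\,\rankstart{e}\bigr] \;\leq\; (D-1)\,\E\bigl[\rankstart{e}\,\mathbf{1}[e\in M]\bigr],
\]
and the analogous $\E\bigl[\rankstart{e}\,\mathbf{1}[e\in M]\bigr]$ bound for the case~2 event. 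Union-bounding over the at most $\deg(u)+\deg(w)$ choices of $e'$ incident to $e=(u,w)$ and summing over $e\in E$, the sum telescopes via $\sum_{e=(u,w)\in M}(\deg u + \deg w)\,\rankstart{e} = \sum_{v\in V(M)} \deg(v)\,r_v$, where $r_v := \rankstart{e_v}$ is the time at which $v$ was matched (with $e_v\in M$ the matching edge at $v$), to yield
\[
\E|A| \;\leq\; D\cdot \E\sum_{v\in V(M)}\deg(v)\,r_v.
\]

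The main obstacle is then to prove the structural bound $\E\sum_{v\in V(M)}\deg(v)\,r_v \leq O(\E|M|\log n /K)$; combined with $K = 10 D\log^2 n$ this yields $\E|A| = O(D\,\E|M|\log n /K) = O(|M|/\log n) = o(|M|)$ as desired. The intuition is that vertex $v$ receives $K\deg(v)$ independent \start{} marks spread uniformly over $[0,1]$, so its very first mark occurs at time $\Theta(1/(K\deg(v)))$, and conditional on $v\in V(M)$ one expects $r_v = \widetilde{O}(1/(K\deg(v)))$, giving $\deg(v)\,r_v = \widetilde{O}(1/K)$ per matched vertex. Turning this intuition into a rigorous per-vertex bound $\E[\,r_v\,\mathbf{1}[v\in V(M)]\,] = \widetilde{O}(\Pr[v\in V(M)]/(K\deg(v)))$ is the delicate step: I would run a Chernoff argument to show that within a window of length $c\log n/(K\deg(v))$ at least $\Omega(\log n)$ distinct edges of $v$ get ``tried'' (their first \start{} mark falls in the window), and then couple with the first-start-mark processes at $v$'s neighbors to argue that with probability $1 - n^{-\Omega(1)}$ at least one of these attempts finds its other endpoint still unmatched, so that $v$ is matched by the end of the window. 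The pathological scenario one must rule out is that essentially every one of $v$'s attempted edges has its other endpoint matched just in time to waste the attempt, which requires careful bookkeeping of the correlations between neighboring matching times---this is the hard part of the argument.
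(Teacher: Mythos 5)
Your approach---pass to the continuous coupling, \emph{discard the $M$-incidence clauses from \cref{def:unusual}}, union-bound over witness pairs $(e,e')$, and reduce the lemma to a bound on the degree-weighted sum of matching times---is a genuinely different route from the paper's. Steps 1--5 of the reduction are sound, but they land you on a target that is \emph{false}: neither the per-vertex claim $\E\bigl[r_v\mathbf{1}[v\in V(M)]\bigr]=\widetilde O\bigl(\Pr[v\in V(M)]/(K\deg v)\bigr)$ nor the aggregate $\E\sum_{v\in V(M)}\deg(v)\,r_v = O(\E|M|\log n/K)$ holds, so the ``delicate step'' you flag cannot be carried out.

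Concretely, take $\sqrt\Delta$ hub vertices $w_1,\dots,w_{\sqrt\Delta}$, each adjacent to all of $\Delta$ middle vertices $v_1,\dots,v_\Delta$; give each $v_i$ a private pendant leaf $u_i$. Then $\deg w_j=\Delta$, $\deg v_i=\sqrt\Delta+1$, $n=\Theta(\Delta)$, and $|M|=\Theta(\Delta)$ (every $v_i$ gets matched, at most $\sqrt\Delta$ of them to hubs). A hub carries $K\Delta$ \start{} marks and is saturated by time $\Theta(1/(K\Delta))$, whereas the first \start{} mark at any $v_i$ arrives only around $\Theta(1/(K\sqrt\Delta))$. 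So with probability $1-\Theta(\Delta^{-1/2})$ every hub edge of $v_i$ is already dead before it is tried, and $v_i$ has to wait for the lone edge $(v_i,u_i)$, whose first \start{} mark is the minimum of only $K$ uniforms: $\E[r_{v_i}]=\Theta(1/K)$, so $\deg(v_i)\,\E[r_{v_i}]=\Theta(\sqrt\Delta/K)$, not $\widetilde O(1/K)$. This is precisely the ``pathological scenario'' you mention, and it is not pathological; it happens. Summing,
\[
\E\left[\sum_{v\in V(M)}\deg(v)\,r_v\right]=\Theta\bigl(\Delta^{3/2}/K\bigr),
\]
a factor $\Theta(\sqrt\Delta/\log n)$ above your target $O(\Delta\log n/K)$. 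Feeding this into Step~5 gives $\E|A|\le (D+1)\cdot\Theta(\Delta^{3/2}/K)=\Theta(\Delta^{3/2}/\log^2 n)$, which exceeds $|M|=\Theta(\Delta)$ as soon as $\Delta\gg\log^4 n$. The bound is vacuous.

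The clauses you dropped are exactly what makes the lemma true. In the example, with $e=(v_i,u_i)$ and $e'=(v_i,w_j)$, the hub $w_j$ acquires an $M$-edge by time $\Theta(1/(K\Delta))$; the case-2 clause ``no $M$-edge incident to $e'$ when $(e',\extend)$ is processed'' therefore forces $\rankextend{e'}\lesssim 1/(K\Delta)$, a window a factor $\Theta(\Delta)$ smaller than your relaxed $\rankextend{e'}<\rankstart{e}\approx 1/K$, and the true $\E|A|$ is indeed $o(|M|)$. The paper exploits this globally rather than through any per-vertex concentration: it reveals $T$ element by element, deleting from $T$ everything incident to a newly added $M$-edge. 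Any still-present \extend{} element has both endpoints unmatched, hence all $K$ of its \start{} copies are also still present, so the next reveal is an \extend{} element with probability at most $1/(K+1)$; there are only $|M|\le\mu(G)$ reveals that add to $M$, giving $O(\mu(G)/K)$ expected ``bad'' \extend{} elements for case~2, and case~1 is handled analogously after rescaling the \extend{} ranks by $D$ and invoking \cref{clm: start-soon}. No bound on $r_v$ in terms of $\deg(v)$ is used---and, as the example shows, none is available.
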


\subsubsection{The Main Argument}

In this section, we present the main building blocks of the proof of \cref{thm:apx}, deferring the proof of one key lemma (\cref{lem:num-aug-paths-conditioned}) to a later section.

Our proof of \cref{thm:apx} relies on four independent sources of randomization, which with a slight abuse of notation we denote by $j^\star$, $T$, $C$, and $F$:
\begin{itemize}
    \item $T$: The order in which \cref{alg: algorithm} processes $T$.
    \item $C$: The colors assigned to the vertices in \cref{line:color} of \cref{alg: algorithm}.
    \item $j^\star$: The index chosen in \cref{line:jstar} of \cref{alg: algorithm}.
    \item $F$: The set of edges in $M$ that get frozen in \cref{line:freeze} of \cref{alg: algorithm}.
\end{itemize}

All four sources of randomization are needed for the proof of \cref{thm:apx}. But it would be convenient to first condition on $T$ because:

\begin{observation}
    Conditioning on $T$ fully reveals the maximal matching $M$, the set $A$ of unusual edges in $M$, and all of $M_1, \ldots, M_{2/\epsilon}$.
\end{observation}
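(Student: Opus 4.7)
The plan is to verify, directly from the pseudocode of \cref{alg: algorithm}, that each of the three objects---$M$, the partition $M_1,\ldots,M_{2/\epsilon}$, and the set $A$ of unusual edges---is a deterministic function of the ordering $T$ alone. Once this is established, the observation follows since the remaining sources of randomness ($C$, $j^\star$, and $F$) play no role in their definitions.

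First I would trace through \cref{alg: algorithm} to show that $M$ is a deterministic function of $T$. The only modifications to $M$ occur inside the $X=\start{}$ branch, and the sole guard is $\deg_M(u)=\deg_M(v)=0$, which depends only on the current state of $M$ and on the identity of the edge being processed---both of which are determined by $T$ and earlier iterations. Since $M$ starts empty, a straightforward induction on the loop index shows that after processing the first $i$ elements of $T$, the matching $M$ is determined entirely by the prefix of $T$ of length $i$. Crucially, the color assignments in \cref{line:color}, the choice of $j^\star$ in \cref{line:jstar}, and the freezing coin flips in \cref{line:freeze} are only consulted in order to decide whether an edge enters $S$ or whether endpoints are marked frozen; none of these values is ever read by the guard or the update of $M$.

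For the partition, each $e \in M$ is placed into the unique bucket $M_i$ satisfying $\alpha_{i+1}|T| < \rankstart{e} \leq \alpha_i |T|$, and both $\rankstart{e}$ and the thresholds $\alpha_i |T|$ are fixed functions of $T$ (given the parameter $\epsilon$). Hence the partition is determined by $T$. For $A$, the defining conditions in \cref{def:unusual} involve $\rankstart{e}$, $\rankextend{e'}$, and the set of edges in $M$ incident to $e'$ at the moment $(e',\extend{})$ is processed; each of these quantities has already been shown to be a function of $T$. Thus $A$ is determined by $T$ as well.

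I expect no genuine obstacle in this argument: the observation is essentially syntactic. The only thing to watch out for is to verify carefully, line by line in \cref{alg: algorithm}, that no branch affecting $M$ ever consults $C$, $j^\star$, or the outcome of the $\freeze{}$ coin flip, and that the quantities $\rankstart{\cdot}$ and $\rankextend{\cdot}$ are defined purely in terms of positions in $T$. Both checks are immediate from the pseudocode.
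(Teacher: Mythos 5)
Your argument is correct, and it fills in precisely the reasoning that the paper leaves implicit (the paper states this observation without proof, treating it as self-evident from the pseudocode). The key point you identify---that the $\start{}$ branch of \cref{alg: algorithm} never consults $C$, $j^\star$, or the coin flips, and that $\rankstart{\cdot}$, $\rankextend{\cdot}$, and the bucket thresholds $\alpha_i|T|$ are all functions of $T$ alone---is exactly what makes the claim go through.
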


Note, however, that $S$ remains random even after conditioning on $T$ as it depends on the other three sources of randomization too.

Because $M$ is a maximal matching of $G$, we immediately get $|M| \geq \mu(G)/2$. Our plan is to show that if $M$ is only half the size of $\mu(G)$, then in expectation $S$ augments it well enough that $M$ and $S$ together include a larger matching. More formally, recall that our goal in \cref{thm:apx} is to prove that $\E[\mu(M \cup S)] \geq (\frac{1}{2} + \delta) \mu(G)$. This clearly holds if $|M| \geq (\frac{1}{2} + \delta) \mu(G)$. So let us for the rest of the proof assume that $M$ is smaller.

\begin{assumption}\label{ass:M-small}
    $|M| < (\frac{1}{2} + \delta) \mu(G)$.
\end{assumption}

Plugging this assumption into \cref{cl:many-length-three} gives:

\begin{observation}\label{obs:many-length-three}
    At least $|M| - 4\delta|M^\star|$ edges of $M$ belong to length-3 augmenting paths in $M \oplus M^\star$.
\end{observation}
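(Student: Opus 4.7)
The plan is to invoke the folklore Claim~\ref{cl:many-length-three} directly, since the observation is just its specialization under Assumption~\ref{ass:M-small}. Taking $M^\star$ to be any fixed maximum matching of $G$, so that $|M^\star| = \mu(G)$, Assumption~\ref{ass:M-small} supplies exactly the hypothesis $|M| < (\tfrac{1}{2} + \delta)|M^\star|$ required by the claim. The claim then produces at least $|M| - 4\delta|M^\star|$ length-$3$ augmenting paths for $M$ inside $M \oplus M^\star$.

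The only remaining step is a cosmetic rewording: the observation counts $M$-edges that lie on a length-$3$ augmenting path, whereas the claim counts the paths themselves. But each length-$3$ augmenting path $u$--$v$--$w$--$x$ contains exactly one $M$-edge (its middle edge $(v,w)$), and since distinct components of $M \oplus M^\star$ are vertex-disjoint, these middle edges are pairwise distinct. Hence the two counts coincide and the bound $|M| - 4\delta|M^\star|$ transfers verbatim, which is exactly what the observation asserts.

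There is no real obstacle here; the entire statement is a one-line corollary of Claim~\ref{cl:many-length-three}, and the genuine mathematical content lives inside that folklore claim. If I had to re-derive the claim from scratch, I would decompose $M \oplus M^\star$ into alternating paths and even cycles, use the maximality of $M$ to rule out length-$1$ augmenting paths, and let $a_k$ denote the number of augmenting paths of length $2k+1$. Double-counting $M^\star$-edges and $M$-edges over the decomposition, and then combining with $|M^\star|/2 \leq |M| < (\tfrac{1}{2}+\delta)|M^\star|$, forces both $|M \cap M^\star|$ and $\sum_{k \geq 2}(k-1)\,a_k$ to be $O(\delta|M^\star|)$, which after rearrangement gives $a_1 \geq |M| - 4\delta|M^\star|$. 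The only fiddly point in that derivation is tracking the leading constant carefully so that the bound comes out as $4\delta|M^\star|$ rather than, say, $(4+O(\delta))\delta|M^\star|$; this requires no new idea, only clean bookkeeping when translating between inequalities stated in terms of $|M|$ and those stated in terms of $|M^\star|$.
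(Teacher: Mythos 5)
Your proposal is correct and takes exactly the paper's approach: the paper obtains the observation by plugging Assumption~\ref{ass:M-small} directly into Claim~\ref{cl:many-length-three}. Your added note that the path count and the $M$-edge count coincide (each length-$3$ augmenting path contributes its single, distinct middle $M$-edge) is the right, if minor, bookkeeping that the paper leaves implicit.
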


Our next claim shows that there is one subset $M_j$ of $M$ that has several nice properties. We will later argue that $M_j$ will be well augmented by $S$ in expectation.

\begin{claim}\label{cl:first-significant-Mi}
    Define $q(x) := 2^{20(x-3/\epsilon)}$. There is $M_j$ such that all the following hold:
    \begin{enumerate}
        \item $|M_j| \geq q(j) |M|$,
        \item $|M_1| + \ldots + |M_{j-1}| \leq 2^{-19} |M_j|$,
        \item For any two edges $e, e' \in M_j$, $\rankstart{e'} / D \leq \rankstart{e} \leq \rankstart{e'} \cdot D$.
    \end{enumerate}
\end{claim}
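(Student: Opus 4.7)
The plan is to take $j$ to be the smallest index in $[2/\epsilon]$ for which the first property holds, and then get the other two properties essentially for free. Property~3 is actually immediate from how the partition $M_1,\ldots,M_{2/\epsilon}$ is defined: an edge $e$ that enters $M$ at iteration $p=\rankstart{e}$ is placed in $M_i$ where $\alpha_{i+1}|T| < p \leq \alpha_i|T|$, so every edge in $M_j$ has its \start-rank in the window $(\alpha_{j+1}|T|, \alpha_j|T|]$, whose endpoints differ by a factor of $\alpha_j/\alpha_{j+1}=D$. Thus for any $e,e'\in M_j$ the ratio $\rankstart{e}/\rankstart{e'}$ lies in $[1/D, D]$, giving property~3 without any averaging argument.

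For properties~1 and~2, I would set $j := \min\{\,i \in [2/\epsilon] : |M_i| \geq q(i)|M|\,\}$ and first check that this set is nonempty. The $M_i$'s partition $M$ into $2/\epsilon$ blocks, so by pigeonhole some block $M_{i_0}$ satisfies $|M_{i_0}| \geq \tfrac{\epsilon}{2}|M|$. Since $q$ is increasing in $i$, $q(i_0) \leq q(2/\epsilon) = 2^{-20/\epsilon}$, and for $\epsilon \in (0,0.25)$ one checks $2^{-20/\epsilon} \leq \epsilon/2$, so $|M_{i_0}| \geq q(i_0)|M|$ and the set is nonempty. By definition of $j$, property~1 holds.

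For property~2, minimality of $j$ gives $|M_i| < q(i)|M|$ for every $i<j$. Since $q(i)/q(i-1) = 2^{20}$, the partial sum is dominated by its last term:
\[
\sum_{i<j}|M_i| < |M|\sum_{i<j} q(i) \leq |M|\cdot q(j-1)\cdot \frac{1}{1-2^{-20}} \leq 2\,q(j-1)|M| = \frac{q(j)|M|}{2^{19}} \leq \frac{|M_j|}{2^{19}},
\]
using property~1 in the last step. This is exactly property~2.

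I don't anticipate a real obstacle here; the whole claim is essentially a pigeonhole followed by a geometric-series estimate, and the only care needed is to verify that the constants in the definitions of $q$, $D$, and $\alpha_i$ line up so that (i) some block is large enough to pass the threshold $q(j)|M|$ and (ii) the geometric growth rate $2^{20}$ of $q$ is fast enough to make the "tail below $j$" negligible compared to $|M_j|$. Both checks are parameter-bookkeeping rather than conceptual.
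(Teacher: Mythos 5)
There is a genuine gap in your Property~3 argument, and it is the one nontrivial step in the claim. You assert that $\alpha_j/\alpha_{j+1} = D$ for every $j$, so that the window $(\alpha_{j+1}|T|, \alpha_j|T|]$ always has endpoints differing by a factor $D$. But the paper defines $\alpha_i := 1/D^{i-1}$ only for $i \in [2/\epsilon]$, and sets $\alpha_{2/\epsilon+1} = 0$. So if it happens that $j = 2/\epsilon$, the window for $M_j$ is $(0, \alpha_{2/\epsilon}|T|]$, and the ratio of a largest to a smallest $\rankstart$ inside it is not bounded by $D$ at all. Your claim that Property~3 is ``immediate from how the partition is defined'' is thus false at the boundary index, which is exactly the case the paper must rule out.

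The paper's proof devotes a nontrivial argument to showing $j \neq 2/\epsilon$: it bounds $\rankstart{e}$ for every $e \in M_{2/\epsilon}$ by $\alpha_{2/\epsilon}|T| \leq 20m/(c\Delta)$, chooses the constant $c$ in the definition of $D = (c\Delta\log n)^\epsilon$ large enough (as a function of $\epsilon$) to make this less than $q(1)m/(4\Delta) \leq q(1)|M|$, and concludes $|M_{2/\epsilon}| < q(1)|M| \leq q(j)|M|$ for every $j$, so $M_{2/\epsilon}$ never passes the Property~1 threshold and cannot be the chosen block. This is the real content of the claim and the place where the specific parameter choices ($c$ large, $D = \Theta((\Delta\log n)^\epsilon)$, $\mu(G) \geq m/2\Delta$) are actually used; without it your argument is incomplete.

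Your proofs of Properties~1 and~2 are fine and essentially match the paper's. For Property~1 you use pigeonhole to exhibit a block of size $\geq (\epsilon/2)|M| \geq q(2/\epsilon)|M|$, whereas the paper argues by contradiction that $\sum_i q(i) < 1$, but these are interchangeable. Your geometric-series bound for Property~2 is the same as the paper's up to cosmetic rearrangement.
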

\begin{proof}
    First, there should exist $j \in [2/\epsilon]$ such that $|M_j| \geq q(j) |M|$ as otherwise
    $$
        |M_1|+\ldots + |M_{2/\epsilon}| < \sum_{i=1}^{2/\epsilon} q(i) |M| = \frac{2^{20} + \ldots + 2^{40/\epsilon}}{2^{60/\epsilon}} |M| < |M|,
    $$
    which contradicts the fact that $M_i$'s partition $M$. Take the smallest $j$ with $|M_j| \geq q(j)|M|$, noting that the first property of the lemma is satisfied for $M_j$. We have
    $$
        |M_1| + \ldots + |M_{j-1}| < \sum_{i=1}^{j-1} q(i) |M| < \frac{2^{20} + \ldots + 2^{20(j-1)}}{2^{60/\epsilon}}|M| < 2^{20j-19-60/\epsilon} |M| < 2^{-19} |M_j|,
    $$
    so the second property also holds for $M_j$.
    
    For the third property, note from the definition of $\alpha_1, \ldots, \alpha_{2/\epsilon+1}$ that if $j \not= \frac{2}{\epsilon}$, then all edges $e$ in $M_j$ have the same $\rankstart{e}$ up to a factor of $D$. So it suffices to show $j \not = \frac{2}{\epsilon}$. Observe that for every $e \in M_{2/\epsilon}$, by definition we have
    $$
    \rankstart{e} \leq
    \alpha_{\frac{2}{\epsilon}}|T| =
    \frac{|T|}{D^{2/\epsilon - 1}} \leq
    \frac{2mK}{D^{2/\epsilon - 1}} =
    \frac{20m \log n}{D^{2/\epsilon - 2}} =
    \frac{20m \log n}{(c \Delta \log n)^{\epsilon(2/\epsilon - 2)}} <
    \frac{20m \log n}{c \Delta \log n} = \frac{20 m }{c \Delta}.
    $$
    Now by choosing $c$ to be a sufficiently large function of $\epsilon$, we can further guarantee that
    $$
        \rankstart{e} < \frac{q(1)m}{4\Delta} \leq q(1) |M|,
    $$
    where the last inequality follows because\footnote{Edge color the graph greedily using $2\Delta$ colors and pick the largest color class which will be a matching.} $\mu(G) \geq \frac{m}{2\Delta}$ and $|M| \geq \mu(G)/2$. Since there are less than $q(1)|M|$ edges $e$ for which $\rankstart{e} < q(1)|M|$, we get $|M_{2/\epsilon}| < q(1)|M|$. Combined with the first property of the claim that $|M_j| \geq q(j) |M|$ and given that $q(x) \geq q(1)$ for every $x \geq 1$, we get that $j \not= 2/\epsilon$, implying the third property and completing the proof.
\end{proof}

Now consider the set $P_j$ all length three augmenting paths in $M^\star \oplus M$ where the middle edge belongs to $M_j$ and that the vertices along these paths are alternatively \blue{} and \red{} as follows:
\begin{flalign*}
    &P_j := \left\{ (x, u, v, y) \,\,\Bigg\vert \,\, 
    \parbox{11cm}{
    $(x, u, v, y)$ is an augmenting path for $M$, $(x, u) \in M^\star, (v, y) \in M^\star,\\
    (u, v) \in M_j$, $c_x = \red{}$, $c_u = \blue$, $c_v = \red{}$, $c_y = \blue{}$
    }
    \right\}.
\end{flalign*}

\smallskip\smallskip

\begin{observation}
    Conditioning on $T$ and $C$ fully reveals $P_j$.
\end{observation}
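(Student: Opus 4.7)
The plan is to trace each ingredient in the definition of $P_j$ and confirm it is a deterministic function of $(T, C)$ together with the fixed inputs ($G$ and a fixed maximum matching $M^\star$). The content is essentially a measurability/definition-unwinding check, so I do not anticipate a substantive obstacle.

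First I would argue that the greedy matching $M$ and the partition $M_1,\dots,M_{2/\epsilon}$ are measurable with respect to $T$ alone. Indeed, in \cref{alg: algorithm} each \start{} event is processed by a test that depends only on the current contents of $M$, and $M$ itself is built exclusively from earlier \start{} events in the order given by $T$; the freezing step in \cref{line:det-freeze,line:freeze} does not alter $M$ or the partition. The assignment of a matched edge $e$ to its class $M_i$ uses only $\pi(e,\start)$ together with the deterministic thresholds $\alpha_{i+1}|T| < \pi(e,\start) \leq \alpha_i|T|$, so it is also a function of $T$. Consequently the specific index $j$ singled out by \cref{cl:first-significant-Mi} (which depends only on the sizes $|M_i|$ and the $\rankstart{\cdot}$ values inside each $M_i$) is determined by $T$, and therefore so is the set $M_j$.

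Next, since $M^\star$ is a fixed maximum matching of $G$, the symmetric difference $M \oplus M^\star$ decomposes into alternating paths and cycles as a function of $T$ alone. Extracting those length-three augmenting paths whose middle edge lies in $M_j$ is then purely combinatorial, producing the ``uncolored'' precursor of $P_j$ from $T$. Finally, the color constraints $c_x = c_v = \red$ and $c_u = c_y = \blue$ are evaluated pointwise against the assignment $v \mapsto c_v$, which is exactly the randomness $C$. Conditioning on $(T, C)$ therefore pins down $P_j$ completely.

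To close, I would record what is \emph{not} used: the freezing randomness $F$ and the index $j^\star$ never appear in the definition of $P_j$, since the definition refers to augmenting paths in $M^\star \oplus M$ rather than to the matching $S$ ultimately produced by the algorithm. This confirms that conditioning on $(T, C)$ alone suffices, which yields the observation.
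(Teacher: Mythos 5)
Your proposal is correct and takes essentially the same approach as the paper's one-line proof (``Definition $P_j$ only depends on the vertex colors which are determined by $C$, and matchings $M_j$ and $M$ which are fully determined by $T$.''); you simply unwind the same chain of dependencies in more detail, including the (correct and useful) remark that the index $j$ selected by \Cref{cl:first-significant-Mi} is itself determined by $T$.
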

\begin{myproof}
    Definition $P_j$ only depends on the vertex colors which are determined by $C$, and matchings $M_j$ and $M$ which are fully determined by $T$.
\end{myproof}

The following lemma, which conditions on everything except $F$, is the key to \cref{thm:apx}. We defer its proof to \cref{sec:proof-aug-path}.

\begin{lemma}\label{lem:num-aug-paths-conditioned}
    Let us condition on $T$, $C$, $j^\star = j$, and let $P_j$ and $M_j$ be as above. Let $Y$ denote the number of length three augmenting paths for $M$ in $M \oplus S$. Then
    $$
        \E_F\Big[Y \mid T, C, j^\star = j \Big] \geq p|P_j| - \left(4p^2 + 2^{-18}\right)|M_j| - 12|A|.
    $$
\end{lemma}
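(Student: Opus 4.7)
The plan is to apply linearity of expectation over the paths in $P_j$:
\[
\E_F[Y \mid T, C, j^\star = j] \;\geq\; \sum_{(x,u,v,y) \in P_j} \Pr_F\bigl[(x,u) \in S \wedge (v,y) \in S \bigm| T, C, j^\star = j\bigr],
\]
and to lower bound each per-path probability by $p$ minus a small blocking loss.

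A key structural preparation is the following consequence of \Cref{def:unusual}: if $(u,v) \in M_j \setminus A$ and $(x,u), (v,y) \in M^\star$ with $x, y$ unmatched in $M$, then $\rankextend{(x,u)} > D \cdot \rankstart{(u,v)}$ and $\rankextend{(v,y)} > D \cdot \rankstart{(u,v)}$. Indeed, if $\rankextend{(x,u)}$ were smaller, then since $x$ is unmatched in $M$ the only possible incident $M$-edge to $(x,u)$ at time $\rankextend{(x,u)}$ is $(u,v)$, which would either be absent (if $\rankextend{(x,u)} < \rankstart{(u,v)}$) or the sole incident $M$-edge (if $\rankstart{(u,v)} < \rankextend{(x,u)} < D \cdot \rankstart{(u,v)}$) -- triggering one of the two clauses of \Cref{def:unusual}. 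This ensures that at times $\rankextend{(x,u)}, \rankextend{(v,y)}$ the edge $(u,v)$ is already in $M$, and hence the freeze statuses of $u, v$ are entirely controlled by the single coin $\xi_{uv}$ flipped at \cref{line:freeze}, which equals \emph{unfrozen} with probability exactly $p$ because $(u,v) \in M_{j^\star}$ and $c_u \neq c_v$ by membership in $P_j$.

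With this in hand, split $P_j = P_j^A \sqcup P_j^{\bar A}$ by whether the middle edge is unusual; paths in $P_j^A$ are lower-bounded by $0$, losing at most $|P_j^A| \leq |A|$. For each $(x,u,v,y) \in P_j^{\bar A}$, condition on $\xi_{uv} = 1$ (probability $p$); then the only obstruction to $(x,u), (v,y) \in S$ is that $S$ already contains an edge incident to $x, u, v$, or $y$ at the times the algorithm reaches $\rankextend{(x,u)}$ or $\rankextend{(v,y)}$. To bound $\Pr[\text{blocked at } u \mid \xi_{uv} = 1]$, I would union-bound over rivals $(u, w)$ with $\rankextend{(u,w)} < \rankextend{(x,u)}$: applying the not-unusual property to $(u,v)$ with $e' = (u,w)$, any rival with $\rankextend{(u,w)} < D \cdot \rankstart{(u,v)}$ must have $w$ matched in $M$ at the time it is processed, so $(u,w) \in S$ requires $w$'s own freeze coin to be up -- an event of probability $\leq p$, independent of $\xi_{uv}$ whenever the two partner-edges differ. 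For rivals with $\rankextend{(u,w)} > D \cdot \rankstart{(u,v)}$, those whose $M$-partner still lies in $M_j$ contribute the same $p$ factor, while those whose $M$-partner lies in $M_1 \cup \dots \cup M_{j-1}$ are absorbed via \Cref{cl:first-significant-Mi}(2), contributing the $2^{-18}|M_j|$ term.

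Summing per-path losses over $P_j^{\bar A}$ and using independence of freeze coins on distinct $M$-edges yields a $2p^2|M_j|$ contribution from each of the $u$-side and $v$-side (totalling $4p^2|M_j|$), plus $2^{-18}|M_j|$ from the earlier $M_i$ classes, plus an $O(|A|)$ term absorbing both the dropped $P_j^A$ paths and the bounded number of boundary interactions each unusual edge can create with nearby paths in $P_j^{\bar A}$; the latter fits under the $12|A|$ budget. The main obstacle is the careful accounting of rival blocking probabilities: a naive union bound over all $(u,w)$ rivals would scale with $\deg(u)$ (potentially as large as $\Delta$), so the refined analysis crucially exploits the not-unusual rank restriction to confine most rivals into the $p$-suppressed regime, and invokes \Cref{cl:first-significant-Mi}(2) to absorb the residual rare rivals whose $M$-partners lie outside $M_j$.
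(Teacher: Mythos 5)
Your proposal diverges substantially from the paper's argument, and the key step --- the per-path union bound over rivals --- does not go through.

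The paper's proof is \emph{not} a per-path, linearity-of-expectation argument. It discards the \extend{} elements that violate conditions \ref{prop-papx:2}--\ref{prop-papx:4} or touch unusual edges, shows that the surviving \extend{} elements live entirely inside two vertex-disjoint bipartite graphs $H_B$ (between $\hat B$ and $U_R$) and $H_R$ (between $\hat R$ and $U_B$), and observes that the matchings $M_B$ and $M_R$ (the $P_j$-paths' outer $M^\star$-edges) sit inside them. It then invokes \Cref{prop:greedy-vertex-sample} (the vertex-subsampling greedy matching lemma of \cite{BehnezhadLM20}) as a black box, with the subsample being the set of \emph{unfrozen} vertices of $\hat B$ (resp.\ $\hat R$). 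That lemma globally controls the number of $V(M_B)$-vertices the greedy matching $S$ matches into $U_R$; it is precisely what replaces the ``sum over rivals'' that you are attempting. Finally the paper adds back the discarded unusual-edge elements via a symmetric-difference argument (\Cref{cl:changes-two}) to obtain the $12|A|$ loss, and then combines $X'_B$ and $X'_R$ by inclusion--exclusion, subtracting $2\sum_{i<j}|M_i| \le 2^{-18}|M_j|$ for $S$-partners that happen to be matched by $M_1,\dots,M_{j-1}$.

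The concrete gap in your argument is the union bound over rivals $(u,w)$. You correctly observe that a rival $(u,w)$ entering $S$ forces $w$'s $M$-partner edge to be in $M_{j^\star}$ with opposite-colored endpoints and \emph{unfrozen}, an event of probability $p$. But the number of such candidate rivals can be $\Theta(\Delta)$: nothing stops $u$ from having $\Theta(\Delta)$ neighbors $w$ whose $M$-partner edges lie in $M_j$. Summing $p$ over those rivals gives $\Theta(p\Delta)$, not $O(p)$. The ``not-unusual rank restriction'' does not reduce the \emph{count} of rivals --- it only forces rivals of smaller rank to have $w$ matched by the time they are processed, which is exactly the condition that makes them $p$-probability events in the first place, not a condition that prunes most of them. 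And the greedy nature of $S$ means a single unfrozen $e_w$ can activate many rivals $(u,w)$ across many paths simultaneously, so these events are far from independent; linearity plus a per-path union bound cannot recover $4p^2|M_j|$ total. This difficulty is precisely why the paper resorts to \Cref{prop:greedy-vertex-sample}, which bounds the global behavior of the greedy matching under independent vertex subsampling without ever enumerating rivals.

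Two further mismatches are worth flagging. First, you try to lower-bound the probability that the \emph{specific} $M^\star$-edges $(x,u)$ and $(v,y)$ enter $S$; the paper only needs $u$ (resp.\ $v$) to be matched by $S$ to \emph{some} vertex in $U_R$ (resp.\ $U_B$), a weaker event, and that is what \Cref{prop:greedy-vertex-sample} controls. Chasing the specific $M^\star$-partners is an unnecessary and harder target. Second, your use of \Cref{cl:first-significant-Mi}(2) to absorb rivals whose $M$-partner lies in $M_1\cup\dots\cup M_{j-1}$ is off: such rivals are simply discarded by condition \ref{prop-papx:4} (their $M$-partner edge is not in $M_{j^\star}$), so they cost nothing. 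The paper's $2^{-18}|M_j|$ term instead accounts for the possibility that $u$'s and $v$'s $S$-partners, though unmatched by $M_j,\dots,M_{2/\epsilon}$, may still be matched by some $M_i$ with $i<j$, invalidating the length-3 augmenting path at the very last step.
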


Let us first see how \cref{lem:num-aug-paths-conditioned} implies \cref{thm:apx}.

\begin{proof}[Proof of \cref{thm:apx}]
    We assume \cref{ass:M-small} holds, or otherwise the theorem is trivial. Recall from  \cref{obs:many-length-three} that at most $4\delta\mu(G)$ edges of $M$ (and thus $M_j$) are not in length-three augmenting paths in $M \oplus M^\star$. Since the colors are random and independent, each of these length-three augmenting paths is colored in the way specified in the definition of $P_j$ with probability exactly $1/2^4$. Hence,
    \begin{equation}\label{eq:xguh-9810237}
        \E_{C}[|P_j|] \geq \frac{1}{2^4}(|M_j| - 4\delta\mu(G)).
    \end{equation}
    
    \noindent Additionally, recall from \cref{cl:first-significant-Mi} that 
    \begin{equation}\label{eq:uheh-27139733}
    |M_j| \geq q(j)|M| \geq q(1)|M| \geq \frac{q(1)}{2} \mu(G).
    \end{equation}
    
    \noindent Also recall from \cref{lem:few-unusuals} that 
    \begin{equation}\label{eq:rpgts-123897}
        \E_T[|A|] = o(\mu(G)).
    \end{equation}
    
    \noindent Taking expectation over $C$ and $T$ from both sides of the inequality of \cref{lem:num-aug-paths-conditioned}, we get
    \begin{flalign*}
        \E_{F, C, T}[Y \mid j^\star = j] &\geq \E_{C, T}\left[ p|P_j| -   \left(4p^2 + 2^{-18}\right)|M_j| - 12|A| \right]\\
        &\geq \frac{p}{16} (|M_j| - 4\delta \mu(G)) - \left(4p^2 + 2^{-18}\right)|M_j| - o(\mu(G)) \tag{By \Cref{eq:xguh-9810237} and \Cref{eq:rpgts-123897}}\\
        &= \left( \frac{p}{16} - 4p^2 - 2^{-18} \right) |M_j| - \frac{p\delta}{4} \mu(G) - o(\mu(G))\\
        &\geq \left( \left(\frac{p}{16} - 4p^2 - 2^{-18} \right) \frac{q(1)}{2} - \frac{p \delta}{4} - o(1) \right) \mu(G)\tag{By \Cref{eq:uheh-27139733}.}\\
        &> \left( 10^{-4} q(1) -  0.002 \cdot \delta - o(1) \right) \mu(G) \tag{Since $p = 0.007$.}\\
        &> \frac{2\delta}{\epsilon} \mu(G).\tag{Since $q(1) = 2^{20-60/\epsilon}$, $\delta = 2^{-70/\epsilon}$.}
    \end{flalign*}
    
    \noindent This, in turn, implies that
    $$
        \E_{F, C, T, j^\star}[Y] \geq \Pr[j^\star = j] \cdot \E_{F, C, T}[Y \mid j^\star = j] \geq \frac{\epsilon}{2} \cdot \frac{2\delta}{\epsilon} \mu(G) = \delta \mu(G).
    $$
    
    \noindent Since $Y$ is a lower bound on the number of length three augmenting paths for $M$ in $M \oplus S$, we get 
    $$
        \E_{F, C, T, j^\star}[\mu(M \cup S)] \geq |M| + \E_{F, C, T, j^\star}[Y] \geq \left(\frac{1}{2} + \delta \right) \mu(G),
    $$
    which is the desired bound.
\end{proof}

\subsubsection{Proof of \cref{lem:num-aug-paths-conditioned}}\label{sec:proof-aug-path}

\begin{proof}
    For brevity, we use $\E'[X]$ as a shorthand for $\E_F[X \mid T, C, j^\star = j]$ throughout the proof.
    
    Recall that in \cref{alg: algorithm}, when visiting an element $((u, v), \extend)$ in $T$, we add it to $S$ if all the following conditions hold, where we have deliberately broken the last condition in \cref{line:condition-add-to-S} of \cref{alg: algorithm} into two sub-conditions \ref{prop-papx:4} and \ref{prop-papx:5}: 
    \begin{enumerate}[label=(C$\arabic*$)]
        \item $\deg_S(u) = \deg_S(v) = 0$,\label{prop-papx:1}
        \item $\deg_M(v) + \deg_M(u) \leq 1$,\label{prop-papx:2}
        \item $c_u \not= c_v$,\label{prop-papx:3}
        \item neither of $u$ or $v$ is frozen in \cref{line:det-freeze} of \cref{alg: algorithm}.\label{prop-papx:4}
        \item neither of $u$ or $v$ is frozen in \cref{line:freeze} of \cref{alg: algorithm}.\label{prop-papx:5}
    \end{enumerate}
    
    Since we have conditioned on $T$ and $C$, both the matching $M$ and the vertex colors are fully revealed. Thus, every element $((u, v), \extend) \in T$ which upon being visited violates one of the conditions \ref{prop-papx:2}, \ref{prop-papx:3}, or \ref{prop-papx:4} is fully revealed a priori and can be discarded. On the flip side, condition \ref{prop-papx:5} depends on $F$ and so remains random.
    
    Now suppose for the sake of the analysis that we also discard all elements $(e, \extend)$ where $e$ is incident to an unusual edge in $M$. We emphasize that discarding these elements might change matching $S$, but we will show later that this effect is not significant.
    
    \paragraph{Useful definitions:} Next, we give a few useful definitions that we use in the proof. First, define 
    \begin{flalign*}
    \hat{B} &:= \{ v \mid \exists (u, v) \in M_j, c_v = \blue, c_u = \red \}\\
    \hat{R} &:= \{ u \mid \exists (u, v) \in M_j, c_v = \blue, c_u = \red \}.
    \end{flalign*}
    Also let $U$ be the set of vertices that are left unmatched by $M_j, \ldots, M_{2/\epsilon}$, and define
    \begin{flalign*}
        U_R := \{ u \mid u \in U, c_u = \red{} \}, \qquad U_B := \{ v \mid v \in U, c_v = \blue\}.
    \end{flalign*}
    Moreover, define matchings $M_B$ and $M_R$ as
    \begin{flalign*}
        M_B := \{ (x, u) \mid \exists (x, u, \cdot, \cdot) \in P_j \}, \qquad M_R := \{ (y, v) \mid \exists (\cdot, \cdot, v, y) \in P_j \}.
    \end{flalign*}
    Finally, define $H_B$ (resp. $H_R$) to be the bipartite graph with vertex parts $\hat{B}$ and $U_R$ (resp. $\hat{R}$ and $U_B$), including an edge $e$ of $G$ between its vertex parts iff $(e, \extend{})$ is not discarded.
    
    It can be confirmed from the definitions above that the sets $\hat{B}$, $\hat{R}$, $U_R$, $U_B$ are all disjoint. This implies that $H_B$ and $H_R$ are vertex disjoint. The next observation also follows immediately from the definitions above.
    
    \begin{observation}\label{obs:MBRinHBR}
        All edges of $M_B$ (resp. $M_R$) belong to $H_B$ (resp. $H_R$).
    \end{observation}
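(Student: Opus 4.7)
The plan is a direct unpacking of definitions: for every $(x,u) \in M_B$ I will check one-by-one that it meets each structural condition for membership in $H_B$; the $M_R \subseteq H_R$ direction is symmetric, so I will only write the blue side. I will fix an arbitrary $(x,u) \in M_B$ and, by definition of $M_B$, a witnessing augmenting path $(x,u,v,y) \in P_j$. The definition of $P_j$ then hands me $(u,v) \in M_j$, $(x,u),(v,y) \in M^\star$, and the colors $c_x = \red$, $c_u = \blue$, $c_v = \red$, $c_y = \blue$.

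The endpoint memberships come out immediately. The edge $(u,v) \in M_j$ together with $c_u = \blue$, $c_v = \red$ is exactly the defining condition of $\hat{B}$, so $u \in \hat{B}$. For $x$: being the free end of a length-three augmenting path for $M$ forces $x$ to be $M$-unmatched, hence unmatched in every $M_i$, hence $x \in U$; combined with $c_x = \red$ this gives $x \in U_R$. It remains to check that the element $(\{x,u\}, \extend)$ is not discarded. Condition \ref{prop-papx:2} holds because $\deg_M(x) = 0$ and $\deg_M(u) = 1$. Condition \ref{prop-papx:3} is immediate from $c_x \neq c_u$. For condition \ref{prop-papx:4}, $x$ is $M$-unmatched and so was never a candidate for freezing, while $u$ is matched only by $(u,v) \in M_j$ and when $(u,v)$ was added the deterministic freeze at \cref{line:det-freeze} did \emph{not} trigger, because $c_u \neq c_v$ and $(u,v) \in M_{j^\star}$ under the conditioning $j^\star = j$. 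Finally, the only $M$-edge incident to $(x,u)$ is $(u,v)$ itself (since $x$ is $M$-unmatched), so the rule that discards $(\{x,u\}, \extend)$ for incidence with an unusual $M$-edge activates exactly when $(u,v) \in A$; these exceptional edges are absorbed into the $|A|$ error term elsewhere in the lemma, so structurally every non-exceptional $(x,u) \in M_B$ lies in $H_B$.

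I do not foresee any real difficulty: the observation is essentially a bookkeeping check against \cref{alg: algorithm}'s filtering conditions. The only subtle ingredient is noticing that the deterministic freeze of $u$ is ruled out \emph{precisely} by the combination of conditioning on $j^\star = j$ and the bichromatic color pattern baked into $P_j$; this is why the authors arranged the definitions of $P_j$ and $j^\star$ in exactly that way.
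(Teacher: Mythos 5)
Your verification is essentially the right check-by-check unpacking, and the paper itself gives no argument here -- it declares the observation to ``follow immediately from the definitions above'' -- so there is no written proof to compare against. Your deductions are all correct: $u \in \hat{B}$ via $(u,v) \in M_j$ with $c_u = \blue$, $c_v = \red$; $x \in U$ because $x$ is $M$-free and $U$ only requires being unmatched by $M_j, \ldots, M_{2/\epsilon}$, hence $x \in U_R$; condition~\ref{prop-papx:2} holds at all times since $\deg_M(x) = 0$; \ref{prop-papx:3} holds since $c_x = \red \neq \blue = c_u$; and \ref{prop-papx:4} holds because $u$'s only chance to be deterministically frozen is when $(u,v)$ is added, and the trigger at \cref{line:det-freeze} is off precisely because $c_u \neq c_v$ and $(u,v) \in M_{j^\star}$ under the conditioning $j^\star = j$, while $x$, being $M$-free, is never a candidate.

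The divergence you flag at the end is, as far as I can tell, a genuine imprecision in the observation's wording rather than a defect in your argument. $H_B$ is defined \emph{after} the analysis-only step that discards every $(e, \extend)$ with $e$ incident to an unusual $M$-edge, and nothing in the definition of $P_j$ prevents the middle edge $(u,v)$ from lying in $A$; when it does, $(\{x,u\}, \extend)$ is discarded and $\{x,u\}$ is not an edge of $H_B$, so ``all edges of $M_B$ belong to $H_B$'' can fail. What you actually prove -- every $(x,u) \in M_B$ whose middle edge is not unusual lies in $H_B$ -- is the correct statement; since $M^\star$ fixes the unique path through each middle edge, this costs at most $|A \cap M_j| \le |A|$ edges of $M_B$, which is comfortably absorbed by the $12|A|$ slack in \cref{lem:num-aug-paths-conditioned} (one would apply \cref{prop:greedy-vertex-sample} with $M = M_B \cap H_B$). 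So your proof is honest about delivering a marginally weaker conclusion; the cleanest repair on the paper's side would be to restate the observation as $|M_B \setminus H_B| \le |A|$ (and symmetrically for $M_R$), or to exclude from $P_j$ those paths whose middle edge is in $A$.
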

    
    See \cref{fig:apx} for an illustration of some of these definitions.
    
    We show that $H_B$ and $H_R$ include all the remaining \extend{} elements.
    
    \begin{figure}
        \centering
        \includegraphics[scale=0.6]{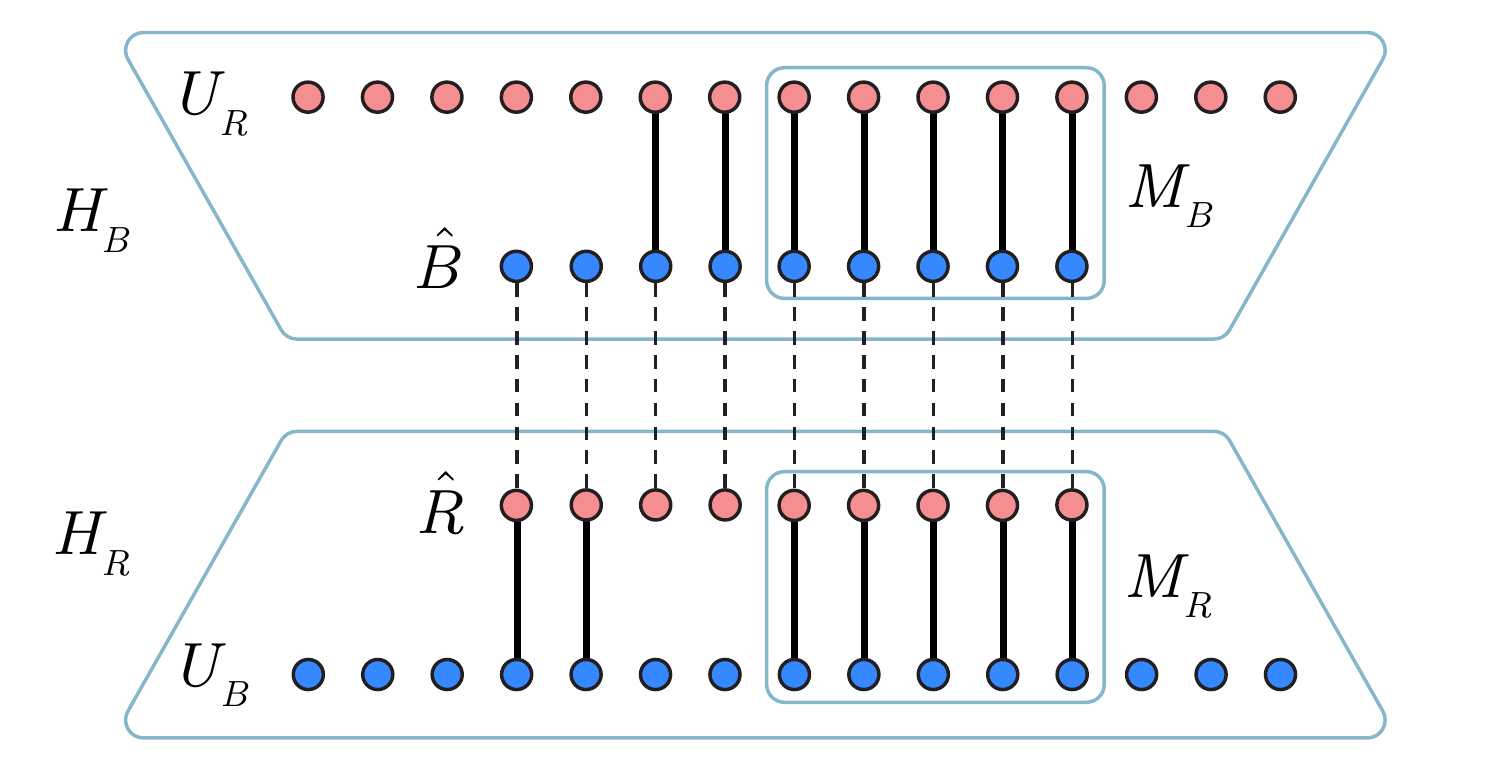}
        \caption{Illustration of $\hat{B}$, $\hat{R}$, $U_B$, $U_R$, $H_B$, $H_R$, and matchings $M_B$, $M_R$. The solid edges are the edges of $M^\star$, and the dashed edges are the edges of $M_j$ whose endpoints have different colors.
        }
        \label{fig:apx}
    \end{figure}
    
    \begin{claim}\label{cl:bhgr-28980231}
        For every element $(e, \extend)$ that is not discarded, $e$ belongs to $H_B$ or $H_R$.
    \end{claim}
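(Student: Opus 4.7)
The plan is to show that for any non-discarded $(e,\extend)$ with $e=(u,v)$, exactly one endpoint of $e$ is matched in $M$ at processing time, its partner in $M$ lies in $M_j$ with the opposite color, and the other endpoint lies in $U$; together with \ref{prop-papx:3} this places $e$ in $H_B$ or $H_R$ by the construction of those graphs.

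First I will use the unusual-edge hypothesis to exclude the possibility that both endpoints of $e$ are unmatched at time $\rankextend{e}$. Condition \ref{prop-papx:2} leaves two cases: zero or one matched endpoint. If zero, then by maximality of the final $M$ at least one endpoint --- say $u$ --- is eventually matched by some $e''\in M$ with $\rankstart{e''} > \rankextend{e}$. This is exactly the second bullet of \cref{def:unusual} applied to $e''$ with incident edge $e$, making $e''$ unusual and contradicting that $e$ is not incident to any unusual edge. Hence, WLOG $u$ is matched by some $e'=(u,w)\in M$ with $\rankstart{e'} < \rankextend{e}$, while $v$ is unmatched at that moment. Condition \ref{prop-papx:4} then says \cref{line:det-freeze} was not triggered when $e'$ was added to $M$, which unpacks to $c_u \neq c_w$ and $e'\in M_{j^\star}=M_j$. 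This immediately places $u\in \hat B \cup \hat R$, depending on its color.

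The core step is to establish $v \in U$. Suppose instead that $v$ is matched in the final $M$ by some $e'''\in M_i$ with $i\geq j$. The partition rule gives $\rankstart{e'''} \leq \alpha_j |T|$, and since $v$ was unmatched at time $\rankextend{e}$ we also have $\rankstart{e'''} > \rankextend{e} > \rankstart{e'}$. Combined with $e'\in M_j \Rightarrow \rankstart{e'} > \alpha_{j+1}|T| = \alpha_j|T|/D$, this yields
$$\rankstart{e'} < \rankextend{e} < \alpha_j|T| < D\cdot \rankstart{e'}.$$
At time $\rankextend{e}$, the only edge of $M$ incident to $e$ is $e'$, so the first bullet of \cref{def:unusual} applies to $e'$ with incident edge $e$, making $e'$ unusual --- again contradicting that $e$ is not incident to any unusual edge. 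Hence $v\in U$, and since $c_v\neq c_u$ by \ref{prop-papx:3}, $v$ lies in $U_R$ or $U_B$ opposite to the color of $u$, so $e=(u,v)$ is exactly one of the edges included in $H_B$ or $H_R$.

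The main obstacle is this third step: the geometric partition $M_1,\ldots,M_{2/\epsilon}$ with a factor-$D$ gap between consecutive $\alpha_i$'s is precisely what is needed to squeeze $\rankextend{e}$ into the window $(\rankstart{e'}, D\cdot\rankstart{e'})$ demanded by the first bullet of the unusual-edge definition; without this structural property the argument would collapse. Everything else is a routine case analysis driven by \ref{prop-papx:2}--\ref{prop-papx:4} and \cref{def:unusual}.
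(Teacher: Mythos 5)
Your proof is correct and follows essentially the same approach as the paper's: use the unusual-edge definition to rule out the zero-matched-endpoint case, unpack condition \ref{prop-papx:4} to force the partner edge $e'$ into $M_j$ with opposite colors, and then squeeze $\rankextend{e}$ into the window $(\rankstart{e'}, D\cdot\rankstart{e'})$ to show $v\in U$. The only cosmetic difference is that the paper fixes $e'$ up front as the minimum-rank incident edge in $M$ and uses \cref{cl:first-significant-Mi} part 3 to get the factor-$D$ window, whereas you case-split on $\deg_M$ at time $\rankextend{e}$ and compute the window directly from $\alpha_{j+1}=\alpha_j/D$ (which implicitly relies on $j\neq 2/\epsilon$, a fact established inside the proof of \cref{cl:first-significant-Mi}); both are equivalent.
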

    \begin{myproof}
        Take an undiscarded element $(e=(u, v), \extend)$. Take the edge $e'=(u, w) \in M$ with the smallest $\rankstart{e'}$ that is incident to $e$. Note that such $e'$ should exist because $M$ is a maximal matching of $G$. There are three possible cases, and only the last one does not lead to a contradiction:
        \begin{itemize}
            \item \textbf{Case 1 --} $\rankextend{e} < \rankstart{e'}$: In this case, $e'$ is unusual by the second condition of \cref{def:unusual} and so $(e, \extend)$ must be discarded, a contradiction.
            \item \textbf{Case 2 --} $\rankextend{e} \geq \rankstart{e'}$ and ($e' \not\in M_{j^\star}$ or $c_{u} = c_{w}$): In this case, the endpoints of $e'$ must be frozen in \cref{line:det-freeze} of \cref{alg: algorithm}. So condition \ref{prop-papx:4} does not hold for $e$ when processing $(e, \extend)$ and we should discard it, a contradiction.
            \item \textbf{Case 3 --} $\rankextend{e} \geq \rankstart{e'}$ and $e' \in M_{j^\star}$ and $c_{u} \not= c_{w}$: This is the only case that does not lead to a contradiction.
        \end{itemize}
        The condition of Case 3 immediately implies that if $c_{u} = \blue$ then $u \in \hat{B}$ otherwise $u \in \hat{R}$. 
        
        Next, we show that $v$ must be unmatched by $M_{j^\star}, \ldots, M_{2/\epsilon}$, and so $v \in U$. First, note that if $v$ is also matched by $M$ through an edge $e''$, then by definition of $e'$ it must satisfy $\rankstart{e''} > \rankstart{e'}$. Since $e' \in M_{j^\star}$ and the ranks of the edges of $M_{j^\star+1}, \ldots, M_{2/\epsilon}$ are all smaller than those in $M_{j^\star}$, this implies $e'' \not\in M_{j^\star+1}, \ldots, M_{2/\epsilon}$. 
        So it remains to show $e'' \not\in M_{j^\star}$. To see this, note that if $e'' \in M_{j^\star} = M_j$, then from \cref{cl:first-significant-Mi} part 3, we get $\rankstart{e''} < \rankstart{e'} \cdot D$. Now if $\rankextend{e} > \rankstart{e''}$, then at the time of processing $(e, \extend)$ both $e'$ and $e''$ are in $M$ and we have $\deg_M(u) + \deg_M(v) = 2$, which means $(e, \extend)$ violates \ref{prop-papx:2} and must be discarded, a contradiction. So we should have $\rankextend{e} < \rankstart{e''} < \rankstart{e'} \cdot D$. This is again a contradiction because by the first condition of \cref{def:unusual}, $e'$ must be unusual, and so $e$ must be discarded.
        
        Finally, note that since $(e, \extend)$ is not discarded, it should satisfy \ref{prop-papx:3} and so $c_u \not= c_v$. Combined with the discussion above this means that if $u \in \hat{B}$ then $v \in U_R$ and if $u \in \hat{R}$ then $v \in U_B$. Hence $e$ must belong to one of $H_B$ and $H_R$, completing the proof.
    \end{myproof}
    
    Now consider the construction of $S$ from the remaining undiscarded elements. We iterate over these elements in the order specified by $T$, and whenever we see an element $(e, \extend)$ that satisfies all of \ref{prop-papx:1}--\ref{prop-papx:5} we add it to $S$. As discussed, conditions \ref{prop-papx:2}--\ref{prop-papx:4} are automatically satisfied by all undiscarded  elements, which only leaves \ref{prop-papx:1} and \ref{prop-papx:5}. Condition \ref{prop-papx:1} is simply the greedy matching constraint. Condition \ref{prop-papx:5} depends on the randomization in $F$. An edge in $H_B$ (resp. $H_R$) satifies \ref{prop-papx:5} if its endpoint in $\hat{B}$ (resp. $\hat{R}$) is not frozen. An important observation is that each vertex in $\hat{B}$ (resp. $\hat{R}$) is frozen independently from the other vertices of $\hat{B}$ (resp. $\hat{R}$) with probability $1-p$. (We emphasize though that these decisions are not mutually independent when we consider the vertices of both $\hat{B}$ and $\hat{R}$ together.)

    Let $S_B$ and $S_R$ be the subset of edges of $S$ that respectively belong to $H_B$ and $H_R$. Our goal is to apply \cref{prop:greedy-vertex-sample} on both $H_B$ and $H_R$.

    First, we apply \cref{prop:greedy-vertex-sample} by letting $G = H_B$, $V = \hat{B}$, $U = U_R$, the subsample $W$ being the subset of vertices in $\hat{B}$ that are not frozen, and $M = M_B$ (recalling from \cref{obs:MBRinHBR} that $M_B$ is completely inside $H_B$). Using $X_B$ to denote the set of vertices in $V(M_B) \cap \hat{B}$ that get matched in $S$ to $U_R$, we get from \cref{prop:greedy-vertex-sample} that:
    $$
        \E'[|X_B|] \geq p(|M_B|-2p|\hat{B}|) \geq p|P_j| - 2p^2|M_j|.
    $$
    
    Next, we apply \cref{prop:greedy-vertex-sample} by letting $G = H_R$, $V = \hat{R}$, $U = U_B$, the subsample $W$ being the subset of vertices in $\hat{R}$ that are not frozen, and $M = M_R$ (recalling from \cref{obs:MBRinHBR} that $M_R$ is completely inside $H_R$). Using $X_R$ to denote the set of vertices in $V(M_R) \cap \hat{R}$ that get matched in $S$ to $U_B$, we get from \cref{prop:greedy-vertex-sample} that:
    $$
        \E'[|X_R|] \geq p(|M_R|-2p|\hat{R}|) \geq p|P_j| - 2p^2|M_j|.
    $$
    
    \paragraph{Adding back the discarded {\normalfont \extend{}} elements:} We now add back the \extend{} elements incident to unusual edges that we discarded earlier. To do so, we iteratively take an arbitrary vertex of an arbitrary unusual edge in $M$, and add back all the \extend{} elements incident to it that we discarded, and re-compute matching $S$. 
    
    \begin{claim}\label{cl:changes-two}
        Let $S_1$ and $S_2$ be the edges in matching $S$ before and after adding back the discarded edges of a vertex $v$. At most two vertices can be matched in one of $S_1, S_2$ bot not the other.
    \end{claim}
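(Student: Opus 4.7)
The plan is to track the symmetric difference $V(S_1) \oplus V(S_2)$ of the sets of matched vertices as the two matchings are built from the same randomly shuffled sequence. The only difference between the two runs is that the sequence producing $S_2$ contains all $(v,\cdot)$ \extend{} elements that were discarded when producing $S_1$; every other \extend{} element appears in both runs or neither, and since conditions \ref{prop-papx:2}--\ref{prop-papx:5} depend only on $T$, $C$, $F$, and $M$ and are therefore identical between the two runs, only condition \ref{prop-papx:1} (the matching constraint on $S$) can cause a common element to behave differently in $S_1$ versus $S_2$. Moreover, since every \extend{} element incident to $v$ is discarded when building $S_1$, the vertex $v$ is never in $V(S_1)$.

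I would then prove by induction on the processing step that $|V(S_1) \oplus V(S_2)| \leq 2$ at every moment. The base case is trivial: both matchings start empty. For the inductive step, there are two kinds of events. An ``added-back'' element $(v,w)$ appears only in the sequence producing $S_2$; it can actually be added to $S_2$ at most once during the entire run, because once $v \in V(S_2)$ the matching constraint on $S$ blocks every further $(v,\cdot)$ insertion. When this one addition occurs, $v$ moves from ``matched in neither'' to $V(S_2) \setminus V(S_1)$, and $w$ either moves from ``matched in neither'' to $V(S_2) \setminus V(S_1)$ or from $V(S_1) \setminus V(S_2)$ to $V(S_1) \cap V(S_2)$; in either subcase $|V(S_1) \oplus V(S_2)|$ grows by at most $2$. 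A common element $(u,w)$ that is added to exactly one of $S_1, S_2$ must violate condition \ref{prop-papx:1} on the other side, so at least one of $u,w$ already lies in the symmetric difference on that side; a short case analysis on the membership of $u$ and $w$ in the four regions $V(S_1) \setminus V(S_2)$, $V(S_2) \setminus V(S_1)$, $V(S_1) \cap V(S_2)$, and ``matched in neither'' shows that after the addition $|V(S_1) \oplus V(S_2)|$ either decreases by $2$ or stays the same. Since at most one event in the whole process can grow the symmetric difference (by at most $2$) and every other event is non-increasing, we conclude $|V(S_1) \oplus V(S_2)| \leq 2$ throughout, which is exactly the claim. The only non-trivial work is executing the four-region case analysis cleanly; no deeper combinatorial idea appears to be needed.
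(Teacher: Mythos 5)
Your proof is correct, and it takes a genuinely different route from the paper's. The paper argues globally: since $S_1$ and $S_2$ are both matchings, $S_1 \oplus S_2$ decomposes into vertex-disjoint paths and cycles, only path endpoints can change matching status, and if there were more than two such vertices there would be two or more paths; the paper then considers the lowest-rank edge of a path not containing $v$ and derives a contradiction because, for that edge, conditions \ref{prop-papx:2}--\ref{prop-papx:5} are fixed and \ref{prop-papx:1} cannot differ without a lower-rank discrepancy on the same path. You instead track the potential $|V(S_1) \oplus V(S_2)|$ step by step through the greedy processing of $T$, observing (a) that at most one added-back element can enter $S_2$ (the first one that enters matches $v$, blocking all later $(v,\cdot)$ elements via \ref{prop-papx:1}), which grows the potential by exactly 2, and (b) that every common element either leaves the potential unchanged or decreases it by 2, via the four-region case analysis driven by the fact that only \ref{prop-papx:1} can distinguish the two runs. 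Both proofs hinge on the same key observation that conditions \ref{prop-papx:2}--\ref{prop-papx:5} are determined by $T$, $C$, $F$, $j^\star$ alone; the paper's structural argument is more compact, while your inductive potential argument is more elementary and makes the ``at most one divergence-creating event'' mechanism explicit, which some readers may find easier to verify than the paper's terse ``it can be confirmed'' step.
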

    \begin{myproof}
        Since $S_1$ and $S_2$ are both matchings, $S_1 \Delta S_2$ is a collection of paths and cycles. Thus any vertex whose matching-status differs in $S_1$ and $S_2$ must be an endpoint of a path in $S_1 \Delta S_2$.  Suppose for contradiction that there are more than two such vertices. Then we have more than one path in $S_1 \Delta S_2$. Take the lowest rank edge $e$ in the path that does not include $v$. It can be confirmed that whether the conditions \ref{prop-papx:1}--\ref{prop-papx:5} are satisfied for $e$ remains the same in both $S_1$ and $S_2$, so either $e$ belongs to both or neither, contradicting that $e \in S_1 \Delta S_2$.
    \end{myproof}
    
    Let us now define $X'_B$ and $X'_R$ to be the analogs of $X_B$ and $X_R$ after we add back the discarded edges incident to unusual edges of $M$. More precisely, let $X'_B$ (resp. $X'_R$)  denote the set of vertices in $V(M_B) \cap \hat{B}$ (resp. $V(M_R) \cap \hat{R}$) that are matched in $S$ to $U_R$ (resp. $U_B$).
    
    Note that a vertex $v$ may belong to $X_B \setminus X'_B$ for two reasons: either $v$ was matched in $S$ before adding back the discarded edges but then got unmatched after doing so, or $v$ remains matched in $S$, but to a vertex not in $U_R$. \cref{cl:changes-two} bounds the total number of vertices of the former type by $2 \times 2|A|=4|A|$. For the latter type,  note from \cref{cl:bhgr-28980231} that any edge of $v$ that is not discarded goes to $U_R$. So the new match of $v$ after adding the discarded edges must be a discarded edge. But each discarded edge that belongs to $S$ must match one endpoint of one of the $|A|$ unusual edges in $M$, so the total number of such edges is no more than $2|A|$. Thus, overall
    \begin{equation}\label{eq:hbxg-928}
        \E'[|X'_B|] \geq \E'[|X_B|] - 6|A| \geq p|P_j| - 2p^2|M_j| - 6|A|.
    \end{equation}
    Applying the same argument on $X'_R$ gives
    \begin{equation}\label{eq:hbyg-213}
        \E'[|X'_R|] \geq \E'[|X_R|] - 6|A| \geq p|P_j| - 2p^2|M_j| - 6|A|.
    \end{equation}
    
    Now, let $UF_j$ denote the set of $M_j$ edges of $P_j$ that are unfrozen. Each edge $e \in UF_j$ has one endpoint colored \blue{} and one that is colored \red{} by definition of $P_j$. The set of \blue{} (resp. \red{}) endpoints of $UF_j$ that are matched in $S$ to a vertex in $U_R$ (resp. $U_B$) is exactly $X'_B$ (resp. $X'_R$). We have:
    \begin{flalign}
        \nonumber \E'[\# \text{ of } (u, v) \in UF_j \text{ s.t. } u \in X'_B, v \in X'_R] &\geq \E'[|UF_j| - (|UF_j| - |X'_B|) - (|UF_j| - |X'_R|)]\\
        \nonumber &= \E'[|X'_B|] + \E'[|X'_R|]-\E'[|UF_j|]\\
        &\geq p|P_j| - 4p^2|M_j| - 12|A|.\label{eq:hclx-9138}
    \end{flalign}
    The last inequality follows from \cref{eq:hbxg-928} and \cref{eq:hbyg-213}, and the fact that $\E[|UF_j|] = p|P_j|$ because $P_j$ has $|P_j|$ edges in $M_j$ and each one is unfrozen with probability $p$.
    
    To finish the proof, note that if for an edge $(u, v) \in UF_j$ we have $u \in X'_B$ and $v \in X'_R$, then $S$ matches both $u$ and $v$ to vertices that are unmatched by $M_j, \ldots, M_{2/\epsilon}$. Therefore, only if these vertices are also unmatched by $M_1, \ldots, M_{j-1}$ we have a length three augmenting path. Therefore,
    \begin{flalign*}
        \E'[Y] &\geq \E'[\# \text{ of } (u, v) \in UF_j \text{ s.t. } u \in X'_B, v \in X'_R] - 2\sum_{i=1}^{j-1} |M_i|\\
        &\geq p|P_j| - 4p^2|M_j| - 12|A| - 2 \times \frac{1}{1000}|M_j| \tag{By \cref{eq:hclx-9138} and \cref{cl:first-significant-Mi}.}\\
        &=p|P_j| - \left(4p^2 + \frac{1}{500}\right)|M_j| - 12|A|.
    \end{flalign*}
    This finishes the proof of \cref{lem:num-aug-paths-conditioned}
\end{proof}

\subsubsection{Bounding Unusual Edges}\label{sec:few-unusuals}

In this section we prove \cref{lem:few-unusuals} that $\E|A| = o(|M|)$.

We start by proving the following auxiliary claim.

\begin{claim}\label{clm: start-soon}
    With probability $1 - 1/n^2$, every edge $e \in E$ satisfies $\rankstart{e} \leq 8m\log n$. 
\end{claim}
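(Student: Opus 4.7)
The plan is to fix a single edge $e \in E$, bound the failure probability $\Pr[\rankstart{e} > 8m \log n]$ by a small polynomial in $1/n$, and then union bound over all $m \leq n^2$ edges. Recall that $T$ consists of $|T| = m(K+1)$ elements uniformly randomly permuted, and that $e$ contributes $K$ distinct copies of $(e,\start)$ to $T$. So $\rankstart{e} > 8m\log n$ is exactly the event that all $K$ copies of $(e,\start)$ land in the last $|T| - 8m\log n$ positions of the random permutation.

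Viewing the random permutation as placing the $K$ copies uniformly among the $m(K+1)$ positions (after fixing the other elements), the probability of that event equals $\binom{m(K+1) - 8m\log n}{K} / \binom{m(K+1)}{K}$. I would expand this as the telescoping product $\prod_{i=0}^{K-1} \frac{m(K+1) - 8m\log n - i}{m(K+1) - i}$ and upper bound each factor by the $i=0$ factor, giving
$$\Pr[\rankstart{e} > 8m\log n] \leq \left(1 - \frac{8\log n}{K+1}\right)^{K} \leq \exp\!\left(-\frac{8K\log n}{K+1}\right) \leq \exp(-4\log n) = 1/n^4,$$
where the last step uses $K \geq 1$ (in fact $K$ is much larger, so we can afford to be very loose here).

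Finally, a union bound over all $m \leq \binom{n}{2} \leq n^2$ edges gives that the probability \emph{some} edge $e$ violates $\rankstart{e} \leq 8m\log n$ is at most $n^2 \cdot 1/n^4 = 1/n^2$, which is the desired bound.

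I do not expect any serious obstacle here; the only small subtlety is that the $K$ copies of $(e,\start)$ are placed into the shuffled array \emph{without replacement}, so their positions are negatively correlated rather than independent. The hypergeometric-style bound above handles this cleanly, and negative correlation in fact only helps in any case. Note that we could alternatively perform a simple union bound over the $K$ copies to show each has rank $\leq 8m\log n$ with constant probability and chain independence heuristics, but the hypergeometric calculation is both clean and tight enough for our purposes.
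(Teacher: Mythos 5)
Your proof is correct and takes essentially the same approach as the paper: bound the probability that all $K$ copies of $(e,\start)$ miss the first $8m\log n$ positions by $(1 - 8\log n/(K+1))^K \leq n^{-4}$, then union bound over edges. You are slightly more explicit in deriving the bound via the hypergeometric product, whereas the paper simply asserts the inequality $\Pr[\rankstart{e} > z] \leq (1-z/|T|)^K$; the paper also explicitly dispenses with the degenerate case $|T| \leq z$, which your hypergeometric formula handles automatically by convention.
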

\begin{proof}
    Let $z := 8m\log n$. We show that with probability $1-1/n^2$ every edge $e \in E$ satisfies $\rankstart{e} \leq z$.  This statement is trivial if $|T| \leq z$ so assume $|T| > z$. Fix an arbitrary edge $e \in E$. For the event $\rankstart{e} > z$ to happen, all $K$ copies of $(e, \start)$ should appear after the first $z$ elements in $T$. Thus, noting that $|T| = m(K+1)$, $z = 8m\log n$, and $K \geq 1$, we have
    $$
        \Pr[\rankstart{e} \leq z] \geq 1 - \left(1-\frac{z}{|T|}\right)^{K} = 1 - \left(1-\frac{8m\log n}{m (K+1)}\right)^{K} \geq 1 - e^{-\frac{8K \log n}{K+1}} \geq 1 - n^{-4}.
    $$
    A union bound over all choices of $e$, which there are less than $n^2$ many, proves our first claim.
\end{proof}

We are now ready to prove \cref{lem:few-unusuals}.

\begin{proof}[Proof of \cref{lem:few-unusuals}]
We count the number of unusual edges separately based on the two cases in \Cref{def:unusual}. Consider the second case. Let $A'$ be the set of $(e', \extend{})$ elements in $T$ such that at the time of processing $(e', \extend{})$, both endpoints of $e'$ are unmatched in $M$. If we show that $\E_\pi|A'| = o(|M|)$, since each edge in $A'$ has at most two incident edges in $M$, the number of unusual edges in the second case of \Cref{def:unusual} will be $o(|M|)$. Consider the following equivalent construction of $A'$. We iterate over $T$, processing its elements one by one. If we see an element $(e, \start{})$ we add $e$ to $M$ and remove all the unprocessed elements $(e', X)$, $X \in \{\extend, \start\}$ from $T$ such that $e'$ shares an endpoint with $e$. If we see an element $(e, \extend)$, we add it to $A'$.

At any time during this process, the remaining \start{} elements are at least $K$ times more than the \extend{} element since for each \extend{} element $(e, \extend)$ that remains in $T$, all $K$ copies of $(e, \start)$ must also remain in $T$. This means that every time we reveal the next remaining element of $T$, it is an \extend{} element with probability at most $1/(K+1)$. Furthermore, there are at most $\mu(G)$ steps where we process a \start{} element since each time we add an edge to $M$ and clearly $|M| \leq \mu(G)$. This implies that $\E|A'| \leq \frac{\mu(G)}{K}$. Combining $\mu(G) \leq 2|M|$ and $K = \Omega(\log n)$, we have $\E|A'| = o(|M|)$.

Now consider the first case in \Cref{def:unusual}. Define $\sigma: T \rightarrow \mathbb{R}$ as $\sigma((e, \start{})) := \pi((e, \start{}))$ and $\sigma((e, \extend{})) := \pi((e, \extend{})) / D$. Similar to the proof of the second case, we process the elements in $T$ one by one. However, instead of $\pi$, we process them in the increasing order of their $\sigma$ values. Let $A''$ be the set of \extend{} elements in $T$ such that at the time of processing $(e', \extend{})$ both endpoints of $e'$ are unmatched in $M$. Note that if an edge $e$ is unusual because of the first case of \Cref{def:unusual}, there exists an element $(e', \extend{})$ such that $\pi_{\extend{}}(e')/D < \pi_{\start{}}(e)$ and at the time of processing $(e',\extend{})$, the only incident edge to $e'$ in $M$ is $e$. Hence, we must process $(e', \extend{})$ before $(e, \start{})$ according to new ordering of elements and at the time of processing $(e', \extend{})$, there is no incident edge to $e'$ in $M$. Therefore, $|A''|$ is an upperbound for number of unusual edges in first case of \Cref{def:unusual}. Similar to the previous case, it suffices to show $\E_\pi|A''| = o(|M|)$ since each edge in $A''$ has at most two incident edges in $M$. Consider again the following equivalent construction of $A''$, where we iterate over the elements based on $\sigma$ one by one. If we see an element $(e, \start{})$ we add $e$ to $M$ and remove all the unprocessed elements $(e', X)$, $X \in \{\extend, \start\}$ such that $e'$ shares an endpoint with $e$. If we see an element $(e, \extend)$, we add it to $A''$.

Let $E_S$ be the event that $\rankstart{e} \leq 8 m\log n$ for all edges $e \in E$, and let $\bar{E_S}$ be the complement event. Noting from \cref{clm: start-soon} that $\Pr[\bar{E_S}] \leq 1/n^2$. We get
\begin{flalign*}
\Pr[\sigma((e, \extend{})) \leq \rankstart{e}] &= \Pr\left[\rankextend{e}\leq D \cdot \rankstart{e}\right]\\
&\leq \Pr[\bar{E_S}] + \Pr\left[\rankextend{e} \leq D \cdot \rankstart{e} \mid E_S \right]\\
&\leq \frac{1}{n^2} + \frac{8mD\log n}{m(K+1)} \\
&\leq \frac{8D\log n}{K}.
\end{flalign*}

Hence, the probability of the element that we are processing to be an \extend{} element is at most $\frac{8D\log n}{K}$. Similar to the former case, since there are at most $\mu(G)$ steps in the process that an \start{} element is added to $M$, we have $\E|A''| \leq \frac{\mu(G)\cdot 16D\log n}{K}$. Combining with $\mu(G) \leq 2|M|$, $D=(c \cdot \Delta \cdot \log n)^\epsilon$, and $K = 10D \log^2 n$, implies $\E|A''| = o(|M|)$ which completes the proof.
\end{proof}


\section{A Local Query Algorithm and its Complexity}\label{sec: query-process}

In this section, we present a local query process that determines whether a given vertex has any edge in matchings $M$ and $S$ of \cref{alg: algorithm} without constructing the whole output of \cref{alg: algorithm}.

While all the randomizations in \cref{alg: algorithm} were crucial for the approximation guarantee of \cref{sec:approx} to hold, the bounds of this section only rely on the random shuffling of the sequence $T$. In particular, the result of this section continues to hold even if all the other coin flips of \cref{alg: algorithm} besides the order of $T$ are picked adversarially.

To help the discussion above simplify the presentation of this section, we regard the index $j^\star$ and the color $c_v$ of any vertex $v$ as given. Alternatively, one could pick the color $c_v$ of any vertex uniformly at random whenever we access $c_v$. 

 Note from \cref{alg: algorithm} that whether a vertex is frozen depends on its edge in $M$, if any. In particular, a vertex $v$ is frozen iff it is matched in $M$ and its match is also frozen. This can essentially be seen as freezing the edges of $M$ instead of the vertices, which will be the more convenient view for our purpose in this section. More generally, instead of just the \start{} elements that end up in $M$, we define (\cref{def: freeze}) whether a \start{} element of $T$ is frozen or not. This way, we say a vertex $v$ is frozen iff there is an edge $e$ incident to $v$ such that $e \in M$ and the corresponding $(e, \start)$ element in $T$ that adds $e$ to $M$ is frozen.

\begin{definition}\label{def: freeze}
We say an element $\ell = ((u, v), \start{}) \in T$ is {\em frozen} if either of the following conditions holds:
\begin{itemize}[itemsep=0pt,topsep=0pt]
    \item $c_u = c_v$,
    \item $\pi(\ell) \leq \alpha_{j^\star+1}|T|$ or $\pi(\ell) > \alpha_{j^\star}|T|$,
    \item The corresponding $\textup{Bernoulli}(1 - p)$ variable in \cref{line:freeze} of \cref{alg: algorithm} is one. 
\end{itemize}
\end{definition}

We emphasize again that the randomization in the last bullet of \cref{def: freeze} is only needed for the approximation guarantee, and can be regarded as (adversarially) fixed for our purpose in this section.

Now let $\pi$ be a permutation of the sequence $T$ consisting of edge copies in graph $G$. We write $\MS(G, \pi)$ to denote the subgraph $M \cup S$ constructed by \cref{alg: algorithm} when processing $T$ in the order of $\pi$. The argument $\pi$ in $\MS(G, \pi)$ is meant to emphasize that once we feed $\pi$ into \cref{alg: algorithm}, its output is uniquely determined as we have fixed the other sources of randomization.

Having discussed our basic analysis setup, our local query process for determining whether a given vertex $v$ is matched in $M$ or $S$ is formalized below as \cref{alg: vertex_oracle}. The algorithm calls two other edge subroutines formalized as \cref{alg: start_oracle,alg: extend_oracle}. Subroutine \cref{alg: start_oracle} determines if a \start{} element is matched in $M$ by recursively calling the subroutine for incident \start{} elements with a lower rank. Similarly, subroutine \cref{alg: extend_oracle} determines if an \extend{} element is matched in $S$ by recursively calling the subroutines for incident \start{} and \extend{} elements with lower ranks.

Let $F(v, \pi)$ denote the total number of recursive calls to the edge oracles of \cref{alg: start_oracle,alg: extend_oracle} during the execution of $\VO(v, \pi)$. The following theorem is the main result of this section.

\begin{theorem}\label{thm: query-complexity}
For a randomly chosen vertex $v$ and a random permutation $\pi$ of $T$,
\begin{align*}
    \E_{v, \pi}[F(v,\pi)] = O(K\bar{d}\cdot \log^4 n),
\end{align*}
where $\bar{d}$ is the average degree of $G$.
\end{theorem}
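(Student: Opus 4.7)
The plan is to adapt the recursion-tree analysis of \cite{behnezhad2021} for the random greedy maximal matching oracle to the interleaved \start{}/\extend{} recursion of \cref{alg: vertex_oracle,alg: start_oracle,alg: extend_oracle}. Let $R(v, \pi)$ denote the full tree of oracle calls triggered by $\VO(v, \pi)$, so $F(v, \pi) = |R(v, \pi)|$. I would bound the expected number of visited \start{} elements and the expected number of visited \extend{} elements separately, then sum them.

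First I would handle the \start{} oracle in isolation. When the \start{} oracle is invoked on $(e, \start{})$, it walks incident edges $e'$ in increasing order of $\rankstart{e'}$ and terminates upon finding the first one in $M$. Because we have $K$ copies of $(e, \start{})$ in $T$, the minimum rank $\rankstart{e}$ is concentrated near $|T|/K$, and by the symmetrization argument of \cite{behnezhad2021} the expected number of recursive \start{} calls triggered at a random edge is $O(\bar{d} \log^2 n)$: one $\log n$ factor controls the recursion depth, and the other is a union bound over high-probability events. Since a $\VO$ query at a random vertex directly spawns at most $\deg(v)$ top-level edge oracle calls, averaging over $v$ and applying linearity gives $O(\bar{d} \log^3 n)$ as the total \start{} contribution.

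Next I would handle the \extend{} oracle, which is the main technical step. An invocation of \extend{} oracle on $(e, \extend{})$ must call the \start{} oracle on every incident $(e', \start{})$ with $\rankstart{e'} < \rankextend{e}$ (to verify the $\deg_M$ and freeze conditions) and the \extend{} oracle on every incident $(e', \extend{})$ with $\rankextend{e'} < \rankextend{e}$ (to verify the $\deg_S$ condition). The critical asymmetry is that $\rankextend{e}$ is uniform over $[|T|]$ whereas $\rankstart{e'}$ is the minimum of $K$ uniform positions, so in expectation $\rankextend{e}$ is a factor $\Theta(K)$ larger than a random $\rankstart{e'}$. Thus an \extend{} oracle call inspects roughly $K$ times as many incident \start{} copies as a pure \start{} call would. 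However, following the Nguyen--Onak-style stopping argument, the \extend{} recursion terminates as soon as two incident \start{} oracles return $\true$ or an incident \extend{} oracle returns $\true$, so the expected per-call work is $O(K \bar{d})$. Combining this branching factor with the $O(\log n)$ depth yields an $O(K \bar{d} \log^3 n)$ bound for the \extend{} contribution.

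Adding the two contributions and absorbing one extra $\log n$ factor for the high-probability events underlying the concentration bounds gives $\E_{v, \pi}[F(v, \pi)] = O(K \bar{d} \log^4 n)$. The main obstacle will be controlling the coupling between the two oracle types: each \extend{} call itself spawns a full \start{} subtree, and naively these subtrees could be double-counted or inflate across recursion levels. I would address this by assigning each recursive call a unique witness element in $T$ and charging the work against the random rank of that witness, which is the technique underlying the YYI/Behnezhad symmetrization; this converts the tree-size bound into a sum over elements of the probability that each is visited, which is then controlled by the rank-distribution arguments sketched above.
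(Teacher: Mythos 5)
Your closing instinct — charge each recursive call to a witness element in $T$ and bound the expected number of visits per element — is indeed the right engine, but the body of your sketch is a different plan with real holes, and the witness-charging idea you defer to is exactly where the hard work is, not a finishing touch. On the per-call estimates: your ``$O(\bar d\log^3 n)$ total \start{} contribution'' is obtained by multiplying a per-call bound (which you implicitly assume for a \emph{random} edge) by the $\approx\bar d$ top-level edge-oracle calls spawned by a random $\VO(v)$. This is circular: the top-level calls are not at random edges, and the same issue recurs one level deeper. The paper avoids it entirely by working element-centrically, showing $\E_\pi[Q(\ell,\pi)]=O(\log^4 n)$ for \emph{every} element $\ell$ (where $Q(\ell,\pi)$ already sums over all start vertices $v$) and then getting the $K\bar d$ factor from $|T|/n$, not from any per-call work product. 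Similarly, your $O(K\bar d)$ per-call bound for the \extend{} oracle invokes a Nguyen--Onak ``stop at the first earlier solution element'' argument, but \cref{alg: extend_oracle} does not terminate upon finding a matched \start{} neighbor: it keeps going unless that element is frozen or $ST_w=\true$. This looser, non-greedy stopping rule is precisely what distinguishes the \extend{} recursion from greedy MIS/MM, and it breaks the estimate.

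The unaddressed core is the injectivity and length control needed to make the symmetrization close. Concretely you need (a) a structural fact that in any query path all \extend{} elements precede all \start{} elements (\cref{obs: extend-then-start}); (b) a branching analysis showing that two query paths mapped to the same permutation by the cyclic-shift $\phi$ can branch at most once and only in one valid start/extend configuration (\cref{def: branch,def: valid-branch,lem:few-branch}), which is what caps the number of likely-permutation neighbors at $O(\beta^2)$ (\cref{lem: few-likely-neighbors}); and (c) a high-probability $O(\log^2 n)$ bound on query-path length (\cref{lem: few-long-path}), which for the \extend{} portion cannot be cited from \cite{behnezhad2021} because that oracle is not a greedy process — the paper develops a separate degree-halving partition argument adapted from Blelloch et al. and only reuses \cite[Lemma~3.13]{behnezhad2021} for the trailing \start{} sub-path (\cref{lem: longest-start-path}). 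None of (a)--(c) is sketched in your proposal, and without them the witness-charging step does not go through.
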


\begin{figure}[p]

\begin{algenv}{The vertex oracle $\VO(u,\pi)$ which determines the matching-status of $u$ in the outputs $M$ and $S$ according to permutation $\pi$.}{alg: vertex_oracle}
    Let $\ell_1 = (e_1, X_1), \ldots, \ell_r = (e_r, X_r)$ be the elements in $T$ such that $e_i = (u, v_i)$, and $\pi(\ell_1) < \ldots < \pi(\ell_r)$. 

    $ST \gets \false$, $EX \gets \false$.
    
    \For{$i$ in $1 \ldots r$}{
    
    \IIf{$X_i = \extend{}$ and $c_u = c_{v_i}$}{\Continue.}
    
    \If{$ST = \false$ and $X_i = \start$ and $\EOS(\ell_i, v_i, \pi) = \true$}{
            $ST \gets \true$
    }
    \If{$EX = \false$ and $X_i = \extend$ and $\EOE(\ell_i, v_i, \pi, ST) = \true$}{
            $EX \gets \true$
    }
    }
    
    \Return $ST$, $EX$.
\end{algenv}

\begin{algenv}{
The edge oracle $\EOS(\ell=(e, \start{}),u,\pi)$ for \start{} elements. Here $u$ in the input must be an endpoint of $e$.
}{alg: start_oracle}
    \IIf{\textup{EOS}$((e, \start{}), u, \pi)$ is already computed}{\Return the computed answer.} \label{line: caching}

    Let $\ell_1 = (e_1, X_1), \ldots, \ell_r = (e_r, X_r)$ be the elements in $T$ such that $e_i = (u, v_i)$, $X_i= \start{}$ for all $i$, and $\pi(\ell_1) < \ldots < \pi(\ell_r) < \pi(\ell)$.
    
        \For{$i$ in $1 \ldots r$}{
            \IIf{$\EOS(\ell_i, v_i, \pi) = \true$}{\Return \false.}
        }
    \Return \true.
\end{algenv}

\begin{algenv}{
The edge oracle $\EOE(\ell=(e, \extend{}),u,\pi, ST_w)$ for \extend{} elements. Here $e=(u,w)$ and $ST_w$ indicates whether vertex $w$ is matched by $M$ in $\MS(G, \pi)$ through an element of rank smaller than $\pi(\ell)$.
}{alg: extend_oracle}
    \IIf{\textup{EOE}$((e, \extend{}), u, \pi, ST_w)$ is already computed}{\Return the computed answer.}
    
Let $\ell_1 = (e_1, X_1), \ldots, \ell_r = (e_r, X_r)$ be the elements in $T$ such that $e_i = (u, v_i)$, and $\pi(\ell_1) < \ldots < \pi(\ell_r) < \pi(\ell)$.

    $ST_u \gets \false$
    
    \For{$i$ in $1 \ldots r$}{
        \IIf{$X_i = \extend{}$ and $c_u = c_{v_i}$}{\Continue.}

        \If{$ST_u = \false$, $X_i = \start$, and $\EOS(\ell_i, v_i, \pi) = \true$}{
            $ST_u \gets \true$
            
            \IIf{$\ell_i$ is frozen}{\Return \false.}
            
            \IIf{$ST_w = \true$}{\Return \false.} 
        }
        \IIf{$X_i = \extend{}$ and $\EOE(l_i, v_i, \pi, ST_u) = \true$}{\Return \false.}
    }

\Return \true.
\end{algenv}
\end{figure}

\subsection{Correctness of the Oracles}

In this section, we prove the correctness of the vertex oracle. Namely, we prove that:

\begin{claim}\label{cl: vertex-oracle-correctness}
Let $v \in V$ and $ST, EX$ be the outputs of $\VO(v, \pi)$. It holds:
\begin{itemize}
    \item $ST = \true$ iff $v$ has an incident \start{} element in $\MS(G, \pi)$.
    \item $EX = \true$ iff $v$ has an incident \extend{} element in $\MS(G, \pi)$.
\end{itemize}
\end{claim}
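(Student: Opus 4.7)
The plan is to reduce correctness of $\VO$ to correctness of the two edge subroutines $\EOS$ and $\EOE$, each established by strong induction on the rank $\pi(\ell)$ of the queried element. First I would show: $\EOS((e, \start), u, \pi) = \true$ iff $e \in M$ in $\MS(G, \pi)$. This is the cleanest case, since $M$ is the greedy maximal matching under the order induced by the \start{} elements in $\pi$: an edge $e = (u,v)$ enters $M$ exactly when at time $\pi((e,\start))$ neither endpoint is yet matched, which in turn holds iff no lower-ranked incident \start{} element produced an edge in $M$. \cref{alg: start_oracle} recursively tests precisely this, and the induction hypothesis on lower ranks closes the case.

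Next I would show: $\EOE((e=(u,w),\extend), u, \pi, ST_w) = \true$ iff $e \in S$, under the invariant that the input $ST_w$ correctly reports whether $w$ has a lower-rank incident \start{} element that ends up in $M$. The proof matches each of the five conditions \ref{prop-papx:1}--\ref{prop-papx:5} of \cref{alg: algorithm} to a check in \cref{alg: extend_oracle}: condition \ref{prop-papx:1} is enforced by the recursive $\EOE$ calls on lower-ranked incident \extend{} elements (using the inductive hypothesis); condition \ref{prop-papx:2} becomes the combined test that the locally computed $ST_u$ and the given $ST_w$ are not both \true; condition \ref{prop-papx:3} is pre-filtered by the same-color skip; and conditions \ref{prop-papx:4}, \ref{prop-papx:5} together coincide with the freezing predicate of \cref{def: freeze}, so testing ``$\ell_i$ is frozen'' at the instant $ST_u$ first becomes \true{} correctly decides whether $u$ is frozen. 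The key subtlety is the time-indexing of $ST_u$: because \cref{alg: extend_oracle} processes the candidates in strictly increasing order of $\pi$, before iteration $i$ the variable $ST_u$ equals the indicator of ``$u$ has a \start{} element in $M$ of rank strictly less than $\pi(\ell_i)$'', which is exactly the invariant required to invoke the induction hypothesis on the recursive call $\EOE(\ell_i, v_i, \pi, ST_u)$, since the role of $u$ in the outer call is the role of $w$ in the inner call.

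Given both intermediate claims, the conclusion is immediate. The variable $ST$ returned by $\VO(v, \pi)$ is \true{} iff some \start{} element incident to $v$ has $\EOS = \true$, which by the first claim is equivalent to $v$ having an incident edge in $M$. Similarly, $EX$ is \true{} iff some incident \extend{} element with distinct colors has $\EOE = \true$ when invoked with the appropriately maintained $ST$, which by the second claim is equivalent to $v$ having an incident edge in $S$. The main obstacle is not conceptual but bookkeeping: one must verify that inside \cref{alg: extend_oracle} the precise ordering of actions within each iteration (update $ST_u$, then freezing test, then $ST_w$ test, then recursion into lower-ranked \extend{} elements) faithfully mirrors the state of \cref{alg: algorithm} at time $\pi((e,\extend))$, and that the same discipline is preserved when \VO{} invokes \EOE. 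Once this bookkeeping is pinned down, the induction is routine.
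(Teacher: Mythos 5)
Your decomposition is the same one the paper uses: prove correctness of $\EOS$ by strong induction on $\pi(\ell)$ (this is \cref{cl: start-oracle-correctness}), prove correctness of $\EOE$ by induction under the invariant that the passed-in $ST_w$ correctly reports whether $w$ is matched by a lower-rank \start{} element (this is \cref{cl: extend-oracle-correctness}), and then conclude \cref{cl: vertex-oracle-correctness} by noting that $\VO$ maintains $ST$ correctly because it iterates in increasing rank. Your identification of the key subtlety — that the role of $u$ in the outer $\EOE$ call becomes the role of $w$ in the recursive call, so the time-indexed state of $ST_u$ is exactly the invariant needed to invoke the inductive hypothesis — is precisely the point the paper makes when it observes that $\EOE$ queries incident elements in increasing order of rank before recursing, and that a frozen \start{} match of $w$ causes the caller to return \false{} rather than recurse, so the precondition on $ST_w$ is maintained.
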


Let us first prove two auxiliary claims about the correctness of the two edge oracles.

\begin{claim}\label{cl: start-oracle-correctness}
For any $\ell = ((u, w), \start{})$, if $\EOS(\ell, u, \pi)$ is called during computing $\VO(v, \pi)$, then $\EOS(\ell, u, \pi) = \true$ iff $\ell \in \MS(G, \pi)$.
\end{claim}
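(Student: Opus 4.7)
The plan is to prove the claim by strong induction on $\pi(\ell)$, strengthening the inductive hypothesis by a companion condition on the ``other endpoint'' of $\ell$. Writing $\ell = ((u, w), \start{})$, I would show simultaneously that whenever $\EOS(\ell, u, \pi)$ is invoked during the execution of $\VO(v, \pi)$: (a) vertex $w$ is unmatched in $M$ at time $\pi(\ell)$, and (b) $\EOS(\ell, u, \pi) = \true$ iff $\ell \in \MS(G, \pi)$. Part (a) is the key auxiliary condition: without it, the one-sided nature of $\EOS$ (which inspects \start{} elements incident to $u$ only, not $w$) cannot possibly decide whether the edge $(u,w)$ enters $M$, as a small two-edge example shows.

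To establish (a), I would inspect the immediate caller of $\EOS(\ell, u, \pi)$. Because each of $\VO$, $\EOS$, and $\EOE$ passes the ``other endpoint'' of a queried element as the second argument of its recursive call, the caller must be either $\VO(w, \pi)$ itself, an invocation of the form $\EOS(\cdot, w, \pi)$, or an invocation of the form $\EOE(\cdot, w, \pi, \cdot)$. All three routines iterate over elements incident to $w$ in increasing rank order, and they only reach the step at which $\EOS(\ell, u, \pi)$ is invoked after their $ST$ (respectively $ST_u$) gate has confirmed that every earlier $\EOS(\ell_j, v_j, \pi)$ for a \start{} element $\ell_j$ incident to $w$ with $\pi(\ell_j) < \pi(\ell)$ returned \false{}. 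Applying the inductive hypothesis (b) to each of those lower-rank calls forces $\ell_j \notin \MS(G, \pi)$, so no \start{} element incident to $w$ with rank below $\pi(\ell)$ lies in $M$, which is exactly (a).

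For (b), I would unfold the definition of $\EOS$: the oracle returns \true{} iff every recursive call $\EOS(\ell_i, v_i, \pi)$ over the lower-rank \start{} elements $\ell_i = ((u, v_i), \start{})$ incident to $u$ returns \false{}. These recursive calls also take place during the execution of $\VO(v, \pi)$ and involve strictly smaller ranks, so by the inductive hypothesis (b) each satisfies $\EOS(\ell_i, v_i, \pi) = \false$ iff $\ell_i \notin \MS(G, \pi)$; hence $\EOS(\ell, u, \pi) = \true$ iff $u$ is unmatched at time $\pi(\ell)$. Combining with (a) yields: $\EOS(\ell, u, \pi) = \true$ iff both $u$ and $w$ are unmatched at time $\pi(\ell)$, which by direct inspection of the \start{}-branch of \cref{alg: algorithm} (using that $\deg_M$ is non-decreasing, so once an earlier copy of $((e, \start{}))$ has been processed, at least one endpoint of $e$ is thereafter matched) is equivalent to $\ell \in \MS(G, \pi)$. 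The main obstacle is the circular reliance between (a) and (b) across three different caller types; it is resolved by noting that $\VO$, $\EOS$, and $\EOE$ all enforce the same greedy-style gating on \start{} elements, so a single inductive argument for (a) applies uniformly regardless of which routine issued the call.
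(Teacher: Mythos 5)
Your proof is correct and takes essentially the same route as the paper: both observe that since $\EOS$ only inspects elements incident to $u$, correctness hinges on the fact that every possible caller ($\VO(w,\pi)$, $\EOS(\cdot,w,\pi)$, or $\EOE(\cdot,w,\pi,\cdot)$) iterates over $w$'s incident elements in increasing rank, so by the time $\EOS(\ell,u,\pi)$ is reached all lower-rank \start{} calls incident to $w$ have returned \false{}, which by the induction hypothesis means $w$ is unmatched at time $\pi(\ell)$. The only cosmetic difference is that you package this fact (a) as an explicit strengthened inductive invariant, whereas the paper derives it inline inside the induction step; note also that (a) is not genuinely mutually recursive with (b) — it follows from (b) applied at strictly lower ranks — so your ``strong induction with a companion condition'' is harmless but slightly heavier machinery than the argument requires.
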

\begin{myproof}
We prove \cref{cl: start-oracle-correctness} by induction on $\pi(\ell)$. Suppose that the statement holds for all \start{} elements with a ranking lower than $\pi(\ell)$. If $\VO(v, \pi)$ directly calls $\EOS(\ell, u, \pi)$, it had already called $\EOS(\ell'',u,\pi)$ for every \start{} elements $\ell''$ incident on $u$ with a lower ranking. Moreover, if $\EOS(\ell, u, \pi)$ is called by $\EOS(\ell', w, \pi)$ or $\EOE(\ell', w, \pi, \cdot)$, then all \start{} elements on edges $(w, u')$ with a  smaller rank must be queried before $\ell$ by the description of oracles. All these calls to the edge oracle, \EOS{}, must return \false{} since the edge oracle queries $\EOS(\ell, u, \pi)$. By the induction hypothesis, there is no \start{} element incident to $w$ with a smaller rank in $\MS(G, \pi)$. Also, $\EOS(\ell, u, \pi)$ queries all the \start{} elements incident to $u$ with lower rank and returns \true{} if none of them is in $\MS(G, \pi)$. By the induction hypothesis, these calls are answered correctly, completing the proof.
\end{myproof}

\begin{claim}\label{cl: extend-oracle-correctness}
Let $\ell = ((u, w), \extend{})$, and let $ST_w$ indicate if $w$ has a \start{} element incident to it in $\MS(G, \pi)$ with a rank smaller than $\pi(\ell)$. If $\EOE(\ell, u, \pi, ST_w)$ is called during computing $\VO(v, \pi)$, then $\EOE(\ell, u, \pi, ST_w) = \true$ iff $\ell \in \MS(G, \pi)$.
\end{claim}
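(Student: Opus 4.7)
My plan is to prove the claim by strong induction on $\pi(\ell)$, paralleling \Cref{cl: start-oracle-correctness} while accounting for the richer set of conditions governing $S$-membership. Unpacking \Cref{alg: algorithm}, $\ell = ((u, w), \extend{}) \in \MS(G, \pi)$ holds iff at the moment $\ell$ is processed all of the following are true: (i) $\deg_S(u) = \deg_S(w) = 0$; (ii) $\deg_M(u) + \deg_M(w) \leq 1$; (iii) $c_u \neq c_w$; and (iv) neither $u$ nor $w$ is frozen. The inductive step will match each of (i)--(iv) to either an explicit check inside EOE or an invariant supplied by the calling context.

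The conditions involving the pivot $u$ are handled locally inside EOE. Because the for-loop iterates over the incident elements of $u$ with smaller rank in increasing $\pi$-order, it simulates the state of $M$ and $S$ around $u$ exactly as \Cref{alg: algorithm} would. For each incident \start{}, EOS returns its $M$-membership correctly by \Cref{cl: start-oracle-correctness}, so $ST_u$ faithfully tracks $\deg_M(u)$ throughout the loop. When $ST_u$ first flips to \true{}, EOE verifies that the setting \start{} is not frozen (otherwise $u$ becomes frozen and (iv) fails) and that $ST_w$ is still \false{} (otherwise (ii) fails), returning \false{} on any violation. For each other-color incident \extend{}, the inductive hypothesis guarantees the recursive EOE call correctly decides its $S$-membership, so encountering any such element already in $S$ witnesses a violation of the $u$-side of (i) and triggers a \false{} return. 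Same-color \extend{}s are correctly skipped since they would fail (iii) at their own processing time and hence cannot enter $S$.

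The $w$-side of (i), together with (iii) and the $w$-contributions to (ii) and (iv), is supplied not by EOE itself but by the chain of calls leading to its invocation. $\EOE(\ell, u, \pi, ST_w)$ is reached either directly from $\VO(w, \pi)$ or recursively from an outer EOE whose pivot coincides with $w$; in either case the caller's own loop has already scanned every incident \extend{} of $w$ with smaller rank, filtering same-color ones and either returning \false{} or setting $EX$ before reaching $\ell$ if any such element were in $S$. Hence $\deg_S(w) = 0$ and (iii) are ensured to hold at the time EOE is called, while $ST_w$ encodes $w$'s contribution to (ii) and (iv) by assumption. Putting the two sides together, EOE returns \true{} exactly when none of (i)--(iv) is violated, which is exactly when $\ell \in \MS(G, \pi)$, closing the induction.

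The cleanest portion of the proof is the local analysis at $u$, which is essentially the EOS argument enriched with bookkeeping for $ST_w$ and the frozen check. The main obstacle I anticipate is formalizing the call-chain invariant on the $w$-side: it is not a statement about a single EOE call in isolation, and it requires either a joint induction over the oracle call stack or a strengthening of the inductive hypothesis asserting that every live $\EOE$ call is issued from a context that has already certified the $w$-side preconditions. I would package this as a helper claim proved by inspecting the only two places $\EOE$ is invoked (inside $\VO$ and inside $\EOE$ itself), and then cite it at the relevant point in the main induction.
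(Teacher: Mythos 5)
Your proposal mirrors the paper's own proof: both inductively argue on $\pi(\ell)$ and split the work between the $u$-side checks performed locally inside $\EOE$ and the $w$-side preconditions maintained by the calling context. You are actually somewhat more explicit than the paper in naming the four membership conditions and in isolating the call-chain invariant as a helper claim to be proved separately.

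One caveat to keep in mind when you formalize that invariant: $ST_w$ does \emph{not} encode $w$'s contribution to condition (iv). $ST_w$ is only a boolean recording $M$-membership of $w$; the freeze status of $w$'s matching \start{} element is invisible to $\EOE$. The paper handles this by asserting that $\EOE(\ell,u,\pi,ST_w)$ is simply never invoked when $w$'s matching \start{} element is frozen. That assertion is easy to verify when the caller is another $\EOE$ (the caller's own ``$\ell_i$ is frozen $\Rightarrow$ return \false{}'' check aborts before it would recurse into $\ell$), but it is \emph{not} obviously true when the caller is $\VO$, since $\VO$ sets $ST \gets \true{}$ upon seeing the \start{} element and proceeds to call $\EOE$ without any freeze check. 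This discrepancy is harmless for the final estimators, because \Cref{alg: adjacency-list} and \Cref{alg: adjacency-matrix} only consult the $EX$ output of $\VO$ when $ST = \false{}$, i.e., when $w$ cannot be frozen; nevertheless, a careful statement of your helper claim should split by caller and address the $\VO$ case explicitly rather than treating the two entry points as symmetric.
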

\begin{myproof}
We prove the statement by induction on $\pi(\ell)$. With a similar argument as in the proof of \Cref{cl: start-oracle-correctness}, we can show that \extend{} elements incident to $w$ with lower ranks are queried before $\ell$, and that the results of all these calls are \false{}. Also, if there exists a \start{} element incident to $w$ with a lower rank in $\MS(G, \pi)$, then by the description of \EOE, it is queried before and $ST_w$ is \true{}. Note that if such an edge exists, it must not be frozen, otherwise, the oracle does not query $\EOE(\ell, u, \pi, ST_w)$. Hence, if $\ell \notin \MS(G, \pi)$, there must be a frozen or an \extend{} element incident to $u$, or $ST_w = \true{}$ and there exists a \start{} element incident to $u$. Since $\EOE(\ell, u, \pi, ST_w)$ queries all the edges incident to $u$ with lower rank to determines if a \start{} or \extend{} element exists in $\MS(G, \pi)$, the proof is complete by induction hypothesis.
\end{myproof}

We are now ready to prove \cref{cl: vertex-oracle-correctness}.

\begin{proof}[Proof of \cref{cl: vertex-oracle-correctness}]
Since the vertex oracle queries the edge oracle \EOS{} for all incident \start{} elements in increasing order of their ranking until it finds a \start{} element in $\MS(G, \pi)$, by \Cref{cl: start-oracle-correctness}, the first property holds. Since \Cref{alg: vertex_oracle} and \Cref{alg: extend_oracle} query incident edges in increasing order, if $\EOE(\ell, u, \pi, ST_w)$ is queried for an \extend{} element $\ell = ((u, w), \extend{})$, $ST_w$ is computed correctly before calling $\EOE(\ell, u, \pi, ST_w)$ which implies that the condition in \Cref{cl: extend-oracle-correctness} holds. Therefore, with the similar argument for \extend{} elements and correctness of edge oracle \EOE{} by \Cref{cl: extend-oracle-correctness}, the second property holds.
\end{proof}

\subsection{Query Complexity of the Oracles}

In this section, we prove \cref{thm: query-complexity}. Consider $\ell = (e, X)$, an element in $T$. If $X = \start{}$, we write $Q(\ell, v, \pi)$ to denote the total number of recursive calls to $\EOS(\ell, \cdot, \pi, \cdot)$ during the execution of $\VO(v, \pi)$. If $X = \extend{}$, we write $Q(\ell, v, \pi)$ to denote the total number of recursive calls to $\EOE(\ell, \cdot, \pi, \cdot)$ during the execution of $\VO(v, \pi)$.

\begin{observation}\label{obs: one-start}
For a permutation $\pi$, at most one \start{} copy of every edge is queried in recursive calls of \EOS.
\end{observation}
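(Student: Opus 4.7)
I would prove the observation by combining the caching rule in \cref{line: caching} of $\EOS$ with the fact that all three oracles iterate over lower-rank incident $\start$ elements in strictly increasing order of $\pi$. Fix an edge $e$ and let $\ell^\star$ denote the lowest-ranked $\start$ copy of $e$ in $T$. The core claim is that $\ell^\star$ is the only copy of $e$ that ever triggers an actual (non-cache-hit) execution of $\EOS$; this is precisely the content of the observation.

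Every invocation $\EOS(\ell, u, \pi)$ with $\ell = (e, \start)$ originates from one of three call sites: the vertex oracle $\VO$ (\cref{alg: vertex_oracle}), another $\EOS$ call (\cref{alg: start_oracle}), or an $\EOE$ call (\cref{alg: extend_oracle}). By inspection, each of these call sites first assembles the relevant lower-rank $\start$ elements incident to a fixed endpoint and processes them in strictly increasing order of $\pi$. Hence, among the $\start$ copies of $e$ incident to that endpoint, $\ell^\star$ is always the first to be reached, so $\EOS(\ell^\star, \cdot, \pi)$ is invoked before any higher-ranked copy of $e$. Because the cache key in \cref{line: caching} is $(e, \start, u, \pi)$ and does not depend on the specific copy $\ell$, every subsequent invocation at the same endpoint $u$ that uses a different copy of $e$ returns the cached answer immediately, without re-executing. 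Applying this argument at both endpoints of $e$ shows that only the single copy $\ell^\star$ is ever really queried.

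\textbf{Main obstacle.} The only thing that needs careful verification is that the execution of $\EOS(\ell^\star, u, \pi)$ fully completes --- thereby populating the cache --- before the iteration at $u$ moves on to any higher-ranked copy of $e$. This follows immediately because the for-loops in $\VO$, $\EOS$, and $\EOE$ wait for each recursive sub-call to return before proceeding to the next iteration, and the recursion is well-founded: every recursive call strictly decreases the rank threshold $\pi(\ell)$, so no invocation can re-trigger itself before completing.
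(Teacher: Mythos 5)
Your proof is correct and relies on the same core mechanism as the paper's, namely the caching in \cref{line: caching} of \cref{alg: start_oracle}, which is keyed on the edge (and endpoint) rather than the specific \start{} copy. The paper's proof is a single sentence: once any \start{} copy of an edge triggers a fresh \EOS{} execution, the cache is populated and all other copies hit it, so only the first call generates new recursive calls. You reach the same conclusion but add two refinements. First, you pin down which copy wins the race to the cache: since $\VO$, \EOS{}, and \EOE{} all scan incident \start{} elements in strictly increasing order of $\pi$, and the lowest-ranked copy $\ell^\star$ lies below every other copy in every such scan window, it is $\ell^\star$ that is always processed first (from either endpoint). This is a strictly stronger statement than the observation requires, but it is a clean way to argue that the two per-endpoint cache entries both correspond to the same copy, so that ``at most one copy'' holds literally rather than ``at most one per direction.'' Second, your ``main obstacle'' paragraph — that the recursive call completes and populates the cache before the enclosing loop advances, because the rank threshold strictly decreases down the recursion — is a correct well-foundedness argument that the paper leaves implicit. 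Neither refinement is wrong; they are simply more detail than the paper felt it needed for a one-line observation, and they buy you a slightly sharper and more self-contained statement.
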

\begin{myproof}
For a fixed edge, the edge oracle on its \start{} copies only generates a new recursive call on the first call because of the caching in \cref{line: caching} of \Cref{alg: start_oracle}.
\end{myproof}

\begin{observation}\label{obs: q-upperbound}
For every element $\ell = (e, X)$ in $T$ and permutation $\pi$, $Q(\ell, \pi) \leq O(n^2)$.
\end{observation}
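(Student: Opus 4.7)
The plan is to bound $Q(\ell, \pi)$ by a direct counting argument that leverages the caching mechanism of \cref{alg: start_oracle,alg: extend_oracle}. Write $\ell = (e, X)$ with $e = (x, y)$. I would first inspect the three oracles to identify where a recursive call on $\ell$ can come from. A call of the form $\EOS(\ell, \cdot, \pi)$ or $\EOE(\ell, \cdot, \pi, \cdot)$ arises only in two ways: either directly from $\VO(v, \pi)$ (which contributes at most one such call, since the vertex oracle scans incident elements of $v$ a single time), or from inside a computation of an edge oracle on some other element $\ell'$ whose edge shares an endpoint with $e$ and whose rank strictly exceeds $\pi(\ell)$, because both edge oracles recurse only on strictly lower-rank incident elements.

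Next, I would apply the caching identities. The cache in \cref{line: caching} of \cref{alg: start_oracle} is keyed on $(\ell', u', \pi)$, which for a fixed $\ell'$ admits at most the two values of $u'$ corresponding to the endpoints of the edge of $\ell'$, so the body of \EOS{} is actually executed at most twice per element $\ell'$. The analogous cache in \cref{alg: extend_oracle} is keyed on $(\ell', u', \pi, ST_w)$; since $ST_w$ is a single boolean, at most four executions per element $\ell'$ occur. Crucially, each such execution runs its for-loop only once and therefore issues at most one recursive call whose first argument happens to be the particular element $\ell$ (the elements $\ell_i$ iterated in the loop are pairwise distinct, so at most one of them equals $\ell$).

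It remains to count the eligible parent elements $\ell'$. These are elements of $T$ whose edge is incident to one of $x, y$, and their number is at most $(\deg(x) + \deg(y))(K+1) \leq 2n(K+1)$. Combining with the previous step gives
\[
Q(\ell, \pi) \;\leq\; 1 \;+\; O(1)\cdot 2n(K+1) \;=\; O(nK).
\]
Since $K = 10D\log^2 n$ with $D = (c\Delta \log n)^{\epsilon}$, $\Delta \leq n$, and $\epsilon < 1/4$, we have $K = o(n)$, and hence $Q(\ell, \pi) = O(n^2)$ as claimed. There is no genuine obstacle here: the bound is deliberately loose, and the only verification needed is a direct inspection of the pseudocode to confirm the two points above; the sharper expectation-based analysis is the work of \cref{thm: query-complexity}.
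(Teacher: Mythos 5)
Your approach has a genuine gap in how it aggregates over starting vertices. Recall from the proof of \cref{thm: query-complexity} that $Q(\ell, \pi) = \sum_{v \in V} Q(\ell, v, \pi)$, where each $Q(\ell, v, \pi)$ counts calls during a \emph{fresh} execution of $\VO(v, \pi)$; the caches in \cref{alg: start_oracle,alg: extend_oracle} are reset between invocations of the vertex oracle. Your argument --- that each parent element $\ell'$ has its \EOS{} body executed at most twice and its \EOE{} body at most four times --- is therefore a per-execution bound. It shows $Q(\ell, v, \pi) \leq 1 + O(1)\cdot 2n(K+1) = O(nK)$ for a single $v$, not that $Q(\ell, \pi) = O(nK)$. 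Summing over the $n$ choices of $v$ yields $O(n^2 K)$, and since $K = \widetilde{\Theta}(\Delta^\epsilon)$ can be $n^{\Omega(\epsilon)}$, this is \emph{not} $O(n^2)$. The closing step, $K = o(n)$ hence $O(nK) = O(n^2)$, only salvages the argument if $O(nK)$ were already the aggregate bound, which your caching argument does not establish.

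The paper's proof avoids this by invoking \cref{obs: one-start}: within a single execution of $\VO(v, \pi)$, at most \emph{one} \start{} copy of any incident edge generates new recursive calls. This collapses the count of effective parents from $\Theta(\deg(x) + \deg(y)) \cdot (K+1)$ down to $O(\deg(x) + \deg(y)) = O(n)$ (one \start{} copy plus the single \extend{} copy per incident edge), giving $Q(\ell, u, \pi) = O(n)$ per vertex $u$ and hence $Q(\ell, \pi) = O(n^2)$ after summing. Your proposal enumerates all $K+1$ elements of each incident edge as distinct potential parents and then relies on the per-copy cache keyed on $(\ell', u', \pi)$, which is exactly the pessimistic count that \cref{obs: one-start} is designed to avoid. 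To repair the argument you would need to either incorporate \cref{obs: one-start} (or prove something equivalent) to tighten the per-vertex bound to $O(n)$, or explicitly change the model so the cache persists across all vertex-oracle calls --- but the latter contradicts the definition of $Q(\ell, v, \pi)$ as the count within a single $\VO(v, \pi)$ execution.
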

\begin{myproof}
Let $e = \{x,y\}$. For a fixed vertex $u$, either the vertex oracle $\VO(u, \pi)$ queries the edge oracle for $\ell$ directly, or through some incident elemenet $\ell'$. Note that by \Cref{obs: one-start}, for each edge $e'$ incident to $e$, at most one of its $\start{}$ copies generates a new recursive call. Hence, the edge oracle of $\ell$ is called through at most $2(\deg(x) - 1) + 2(\deg(y) - 1)$ of its incident edges (one of its \start{} copy and one \extend{} element) which implies that there are at most $4(n - 1) + 1$ calls during the course of $\VO(u, \pi)$. In other words, we have that $Q(\ell, u, \pi) \leq 4n - 3$. Therefore,
$$Q(\ell, \pi) \leq \sum_{u\in V} Q(\ell, u, \pi) \leq n(4n - 3) \leq O(n^2).\qedhere$$
\end{myproof}

The main contribution of this section is to show that the expected $Q(\ell, \pi)$ for a random permutation $\pi$ is $O(\log^4 n)$ which is formalized in the following lemma.

\begin{lemma}\label{lem: query-complexity}
For any element $\ell \in T$, we have $\E_\pi[Q(\ell, \pi)] = O(\log^4 n)$.
\end{lemma}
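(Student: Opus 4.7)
The plan is to express $Q(\ell, \pi)$ as a count of ``recursion witness chains'' reaching $\ell$ and then bound the expected number of such chains via a random-permutation argument in the style of \cite{behnezhad2021,YoshidaYISTOC09}. I would fix $\ell$ and observe that every invocation of $\ell$'s edge oracle during some call $\VO(v, \pi)$ can be uniquely traced back through a chain $\VO(v) \to \ell_1 \to \cdots \to \ell_k = \ell$, where each $\ell_{i+1}$'s oracle is recursively invoked by $\ell_i$'s oracle, $\ell_1$ is incident to $v$, consecutive elements $\ell_i, \ell_{i+1}$ share the endpoint ``passed'' in the recursive call, and the ranks strictly decrease: $\pi(\ell_1) > \pi(\ell_2) > \cdots > \pi(\ell_k)$. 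By \cref{obs: one-start} each edge contributes at most one \start{} copy (together with its single \extend{} copy) to such a chain. Hence
\[
\E_\pi[Q(\ell, \pi)] \;\leq\; \sum_{v \in V}\ \sum_{k \geq 1}\ \sum_{\text{candidate chains } C\, v \to \ell_1 \to \cdots \to \ell_k = \ell} \Pr_\pi[C \text{ is realized}].
\]

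For any fixed candidate chain of $k$ distinct elements satisfying the incidence constraints, the probability that a uniformly random $\pi$ ranks them in the prescribed strictly decreasing order is exactly $1/k!$. Realization additionally requires that each intermediate \EOS{} iteration not terminate early on some even-lower-rank confirmed match, and that each intermediate \EOE{} iteration satisfy the color, freezing, and $ST_w$ conditions built into \cref{alg: extend_oracle}. These extra logical requirements can only decrease the probability further, so $\Pr[C \text{ realized}] \leq 1/k!$, a bound that depends only on the chain length, not on its identity or the density of $G$.

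Counting candidate chains naively gives $n \cdot O(K \Delta)^{k-1}$ length-$k$ chains ending at $\ell$, and the sum $\sum_k n (K\Delta)^{k-1}/k!$ blows up exponentially for dense graphs, so this is too weak. The crucial refinement, following \cite{behnezhad2021}, is to replace the $\Delta$ branching by an effective polylogarithmic branching: at each chain step, the oracle $\ell_i$'s call propagates specifically to $\ell_{i+1}$ only if all elements incident to the passed vertex whose ranks lie strictly between $\pi(\ell_{i+1})$ and $\pi(\ell_i)$ have already been evaluated to \false{}. Conditioning on $\pi(\ell_i) = r$, the expected number of such ``pruner'' elements whose oracle returns \true{} is $O(\log n)$ rather than $\Theta(\Delta)$, because the matched-probability of a random incident \start{} element decays inversely with its rank. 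Coupled with the observation that chains longer than $O(\log n)$ occur only with negligible probability (the rank strictly decreases along a universe of size $|T| \leq \poly(n)$, giving a rank-halving argument), one obtains $\E_\pi[Q(\ell, \pi)] = O(\log^4 n)$, where the four logarithmic factors account for (i) the maximum chain length, (ii) the effective branching per step, (iii) the coupling between the \EOS{} and \EOE{} recursions through the $ST_w$ flag, and (iv) the $K$-copy multiplicity of \start{} elements in $T$. The main obstacle is formalizing step (ii) — the conditional pruning bound for dense graphs — which I would tackle by an induction on $\pi(\ell)$, showing that conditioned on $\pi(\ell) = r$ the expected number of ``active pruner'' edges of rank $\leq r$ around any fixed vertex is $O(\log n)$, and then propagating this bound through the interleaved \EOS{}/\EOE{} recursion via a careful union bound over the oracle types.
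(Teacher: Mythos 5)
Your proposal has a genuine gap at the heart of the argument. You correctly observe that the naive bound of $1/k!$ on the probability of realizing a fixed candidate chain, combined with the roughly $O(K\Delta)^{k-1}$ count of candidate chains of length $k$ ending at $\ell$, diverges for dense graphs. But the proposed fix --- that ``conditioning on $\pi(\ell_i)=r$, the expected number of pruner elements whose oracle returns \true{} is $O(\log n)$,'' so the effective branching drops to polylogarithmic --- is not actually a proof step. The set of candidate chains is a fixed combinatorial object; a bound on the expected number of ``true-returning'' neighbors of one element does not translate into a bound on $\sum_C \Pr[C\text{ realized}]$, because realization events across chains are highly correlated and the pruning condition involves the entire history of the recursion. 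There is no clean local conditioning or union bound that closes this, and you flag it yourself as ``the main obstacle.'' The paper does not attempt chain-counting at all. It defines the shift map $\phi(\pi,\vec{P})$ on permutations, builds a bipartite graph $H$ on two copies of the permutation set in which $\deg_H(\pi) = Q(\vec{\ell},\pi)$, and bounds the average degree by showing (\cref{lem: few-long-path}) that permutations with a query-path longer than $\beta = c\log^2 n$ are rare, and (\cref{lem: few-likely-neighbors}) that any image $\pi_B=\phi(\pi,\vec{P})$ has at most $O(\beta^2)$ preimages among ``likely'' permutations, via a careful classification of how two query-paths ending at $\vec{\ell}$ can branch (valid versus invalid branches). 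This bijective charging argument is precisely the substitute for the refinement you left as a black box.

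A secondary but concrete error: your ``rank-halving'' claim that chains longer than $O(\log n)$ are negligible is false. The paper's partitioning argument (\cref{lem: few-long-path}, adapting \cite{Blelloch12}) yields $O(\log n)$ prefix ranges, each contributing up to $O(\log n)$ \extend{}-elements to a single query-path, plus an $O(\log n)$-length tail of \start{}-elements from \cref{lem: longest-start-path}; the resulting high-probability bound on path length is $O(\log^2 n)$, not $O(\log n)$. Consequently, your proposed decomposition of the four logarithmic factors (chain length, effective branching, $ST_w$ coupling, $K$-copy multiplicity) does not match the actual structure of the bound: the $\log^4 n$ in the paper is $\beta^2 = (c\log^2 n)^2$, arising in \cref{lem: few-likely-neighbors} as (choice of path in $\hat{\mathcal{P}}$, at most $s+1 \leq \beta$) times (choice of sub-path length, at most $\beta$).
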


Assuming the correctness of \Cref{lem: query-complexity}, we can complete the proof of \Cref{thm: query-complexity}.

\begin{proof}[Proof of \Cref{thm: query-complexity}]
\begin{align*}
    \E_{v, \pi}[F(v, \pi)] = \frac{1}{n}\E_\pi\left[\sum_{v\in V}F(v,\pi)\right] & = \frac{1}{n}\E_\pi\left[\sum_{v\in V}\sum_{\ell \in T}Q(\ell,v,\pi)\right] \\
    & = \frac{1}{n}\E_\pi\left[\sum_{\ell \in T}\sum_{v\in V}Q(\ell,v,\pi)\right] = \frac{1}{n} \E_\pi\left[\sum_{\ell \in T} Q(\ell, \pi)\right] \\ 
    & = \frac{1}{n} \sum_{\ell \in T} \E_\pi[ Q(\ell, \pi)] = \frac{1}{n} \sum_{\ell \in T} O(\log^4 n) \\
    & = \frac{1}{n} O(Km\cdot\log^4 n) = O(K\bar{d}\cdot \log^4 n).\qedhere
\end{align*}
\end{proof}

During the recursive calls to the edge oracles that start from $\VO(v, \pi)$, edges corresponding to the elements in the stack of recursive calls create a path. Because, we query $\VO(v, \pi)$ at the beginning, one of the endpoints of this path is vertex $v$. We direct the  elements in the path away from $v$ towards the other endpoint. We call such a directed path  starting from the bottom of the stack all the way to the top, a $(v,\pi)$-query-path. For an edge $e = (x,y)$, let $\vec{e}$ denote the directed edge from $x$ to $y$ and $\cev{e}$ denote a directed edge from $y$ to $x$. Similarly, for an element $\ell = ((x, y), X)$, $\vec{\ell}$ and $\cev{\ell}$ represent the direction of $e$.

Let $Q(\vec{\ell}, \pi) \subseteq Q(\ell, \pi)$ be the set of all query paths that end at $\vec{\ell}$ (with the same direction). In the following lemma, we bound the query complexity based on the direction of $\ell$. We use this lemma to prove \Cref{lem: query-complexity}. 

\begin{lemma}\label{lem: query-complexity-directed}
For any element $\ell \in T$, we have $\E_\pi[Q(\vec{\ell}, \pi)] = O(\log^4 n)$.
\end{lemma}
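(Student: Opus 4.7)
The plan is to upper-bound $\E_\pi[Q(\vec{\ell}, \pi)]$ by summing, over all potential query paths ending at $\vec{\ell}$, the probability that the path is realized under the random permutation. A query path is a sequence $\VO(v_0, \pi) \to \textup{call}_1 \to \cdots \to \textup{call}_t$ whose final step is the oracle invocation at $\vec{\ell}$. Because both $\EOS$ and $\EOE$ only recurse on elements of strictly smaller rank (by \cref{alg: start_oracle,alg: extend_oracle}), the ranks strictly decrease along the path, so each path of length $t$ corresponds to a walk of length $t$ in $G$ ending at the from-endpoint of $\vec{\ell}$, together with a choice of element copy (one of the $K$ copies of $\start{}$ or the unique $\extend{}$) for each edge traversed.

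For a fixed path of length $t$ with specified elements $\ell_1, \ldots, \ell_t = \ell$, the probability that their ranks occur in the required strictly decreasing order is $1/t!$ by the uniformity of the random permutation. The naive count of potential walks of length $t$ ending at the from-endpoint of $\vec{\ell}$ is at most $((K+1)\Delta)^{t-1}$, which yields a bound exponential in $\Delta$ after summation, so I must exploit the oracle-decision structure to rule out most paths. I do so by decomposing each query path into maximal blocks of $\EOS$-only calls (which can only recurse on further $\EOS$ sub-calls) and $\EOE$-segments (which can recurse on both $\EOS$ and $\EOE$ sub-calls).

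Within an $\EOS$-block the analysis mirrors the greedy maximal matching oracle of \cite{behnezhad2021}: the call $\EOS(\ell_i, \cdot, \pi)$ proceeds to $\EOS(\ell_{i+1}, \cdot, \pi)$ inside its loop only if every lower-rank sibling processed before $\ell_{i+1}$ returned $\false{}$. Combined with the $1/t!$ rank-order probability, this reproduces (a suitable variant of) the $O(\log^2 n)$ expected-size bound of \cite{behnezhad2021} for the greedy matching query tree, applied here to each maximal $\EOS$-block. For an $\EOE$-segment, each $\EOE$-call branches on both $\EOS$ and $\EOE$ sub-calls subject to the extra conditions in \cref{alg: extend_oracle}: same-colored sub-elements are skipped, the oracle returns $\false{}$ on a frozen start sub-call or on $ST_w = \true{}$, and returns $\false{}$ as soon as an $\EOE$ sub-call returns $\true{}$. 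Even treating the colors and freezing bits adversarially (as the opening of \cref{sec: query-process} permits), these conditions only restrict the realized branching; feeding the $\EOS$-block bound into the $\EOE$-recurrence together with the rank-order factor shows that each $\EOE$-segment contributes an additional $O(\log^2 n)$ factor in expectation. Composing the two contributions along an alternation of $\EOS$- and $\EOE$-segments yields the claimed $O(\log^4 n)$ bound.

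The main obstacle I anticipate is the $\EOE$-segment analysis. Unlike $\EOS$, which terminates on finding a $\true{}$ sub-call, $\EOE$ iterates through all lower-rank extend sub-elements and feeds each one a value of $ST_u$ determined by earlier start sub-calls, so the branching decisions within a single $\EOE$-call are tightly coupled to the random permutation. I expect the proof to require a two-level recurrence, with the outer level bounding the number of $\EOE$-segments along a query path and the inner level bounding the $\EOS$-branching within each segment; separating these two levels cleanly while keeping the $1/t!$ rank-order factor intact will be the most delicate step.
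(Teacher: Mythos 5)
Your proposal and the paper's proof pursue genuinely different strategies, and yours has a gap that is not resolved in the sketch. You propose to bound $\E_\pi[Q(\vec{\ell},\pi)]$ by summing, over all potential paths ending at $\vec{\ell}$, the probability $1/t!$ that the path's elements appear in the required decreasing rank order, and then to control the (a priori $((K+1)\Delta)^{t-1}$) multiplicity of paths via a decomposition into $\EOS$- and $\EOE$-segments. The paper instead proves \cref{lem: query-complexity-directed} by a double-counting argument on a bipartite graph $H$ whose two sides are both the set of all $((K+1)m)!$ permutations: each query path $\vec{P}$ in $\pi_A$ is mapped via a cyclic shift $\phi(\pi_A,\vec{P})$ to a vertex $\pi_B$ on the other side, so $\deg_H(\pi_A)=Q(\vec{\ell},\pi_A)$. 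The bound then comes from showing (i) a high-probability $\beta=O(\log^2 n)$ bound on the \emph{length} of any query path (\cref{lem: few-long-path}, proved by partitioning $T$ into $O(\log n)$ prefixes and bounding the path length within each prefix by $O(\log n)$, and using \cref{obs: extend-then-start} to reduce to a single $\EOE$-block followed by a single $\EOS$-block), and (ii) that any $\pi_B$ has at most $O(\beta^2)$ preimages among ``likely'' permutations (\cref{lem: few-likely-neighbors}, proved via a careful case analysis of ``valid'' vs.\ ``invalid'' branches between two query paths mapping to the same $\pi_B$, \cref{def: valid-branch,lem:few-branch}). None of these ingredients — the $\phi$ map, the bipartite permutation graph, the likely/unlikely split, the branch classification — appears in your proposal.

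Beyond the difference in route, your sketch leaves the essential step unproved. You correctly note that the naive path count is too large and that the $\EOE$-segment is the crux, but you then state only that you ``expect the proof to require a two-level recurrence'' and that cleanly separating the levels ``will be the most delicate step.'' That is exactly the part that needs a concrete argument: nothing in the loop structure of \cref{alg: extend_oracle} directly pairs with the $1/t!$ factor to kill the branching, because an $\EOE$ call does \emph{not} terminate on the first positive sub-call the way $\EOS$ does, and the values of $ST_u$ it feeds into deeper $\EOE$ calls are correlated with the permutation in a way that a per-path $1/t!$ bound doesn't disentangle. Also, the per-element $O(\log^2 n)$ bound you attribute to \cite{behnezhad2021} for the greedy matching query tree is not something you can plug in as a black box here: the per-element bound in that work is itself established by the same permutation-mapping technique the present paper adapts. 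Finally, a small structural slip: by \cref{obs: extend-then-start}, a query path is a single $\EOE$-block followed by a single $\EOS$-block, not a general alternation of segments, so ``composing the two contributions along an alternation'' overcounts the structure you actually need to control.
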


\begin{proof}[Proof of \Cref{lem: query-complexity}]
Since $Q(\ell, \pi) = Q(\vec{\ell}, \pi) \cup Q(\cev{\ell}, \pi)$, 
by \Cref{lem: query-complexity-directed} we have 
$$
\E_\pi[Q(\ell, \pi)] \leq \E_\pi[Q(\vec{\ell}, \pi)] + \E_\pi[Q(\cev{\ell}, \pi)] = O(\log^4 n) + O(\log^4 n) = O(\log^4 n). \qedhere
$$
\end{proof}

In the rest of this section, we focus on proving \Cref{lem: query-complexity-directed}. The first helpful observation is that all the \extend{} elements in a $(v,\pi)$-query-path appear before the \start{} elements.

\begin{observation}\label{obs: extend-then-start}
Let $\vec{P} = (\vec{\ell_r}, \ldots, \vec{\ell_1})$ be a $(v,\pi)$-query-path and $\vec{\ell_i} = (\vec{e_i}, X_i)$ for all $i$. Then there exists a $j$ $(0 \leq j \leq r)$ such that $X_i = \start{}$ for $0 < i \leq j$, and $X_i = \extend{}$ for $j < i \leq r$.
\end{observation}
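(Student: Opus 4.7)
The plan is to read the statement directly off the structure of \cref{alg: start_oracle,alg: extend_oracle}. The key structural observation I would establish first is that $\EOS$ is ``absorbing'' in the recursion tree: inspecting \cref{alg: start_oracle}, every recursive call it makes is to $\EOS$ on an incident \start{} element, and it never invokes $\EOE$. By contrast, \cref{alg: extend_oracle} may call either $\EOS$ or $\EOE$. Consequently, once the chain of recursive calls enters an $\EOS$ invocation, every deeper call (every descendant in the recursion stack) must also be an $\EOS$ call on a \start{} element.

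To apply this to a query path, I would fix a $(v,\pi)$-query-path $\vec{P} = (\vec{\ell_r}, \ldots, \vec{\ell_1})$, read from the bottom of the stack to the top in the convention of the excerpt, so that $\ell_r$ corresponds to the outermost call made by $\VO(v,\pi)$ and $\ell_1$ to the deepest recursive call. If $X_i = \start$ for some index $i$, then the call at level $i$ is an $\EOS$ invocation, and by the absorbing property every deeper call, i.e.\ every $\ell_{i'}$ with $i' < i$, must also be an $\EOS$ call, yielding $X_{i'} = \start$. Thus $\{i : X_i = \start\}$ is a downward-closed prefix $\{1,\ldots,j\}$ of $\{1,\ldots,r\}$ for some $0 \leq j \leq r$, and the remaining indices $j < i \leq r$ automatically satisfy $X_i = \extend$, which is exactly the claim.

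There is no real mathematical difficulty here; the only care required is to pin down the indexing convention (which end of $(\vec{\ell_r}, \ldots, \vec{\ell_1})$ corresponds to the bottom versus the top of the stack) so that the ``\start{} first'' portion of the statement aligns with the deeper end of the stack. This is forced unambiguously by the fact that once $\EOS$ is entered, no $\EOE$ call can appear deeper in the recursion.
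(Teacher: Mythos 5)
Your proof is correct and matches the paper's argument: both rest on the single observation that $\EOS$ (\cref{alg: start_oracle}) only recurses into $\EOS$ on \start{} elements, so the set of \start{}-indices along the query-path is a downward-closed prefix, with $j$ taken to be the largest such index (or $0$ if none). Your framing via the ``absorbing'' nature of $\EOS$ is just a restatement of the paper's one-line justification, with the same handling of the indexing convention.
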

\begin{proof}
Let $j$ be the maximum index such that $X_j = \start{}$. If there is no such index, the proof is complete since we can assume that $j = 0$. Now we claim that for all $i < j$, we have $X_i = \start{}$. This can be verified by noting that \cref{alg: start_oracle} only queries the edge oracle for \start{} elements. 
\end{proof}

Given a permutation $\pi$ and a path of elements $\vec{P}=(\vec{\ell_r}, \ldots, \vec{\ell_1})$, we define $\phi(\pi, \vec{P})$ to be another permutation $\sigma$ over edges such that:
\begin{align*}
    (\sigma(\ell_1), \sigma(\ell_2), \ldots, \sigma(&\ell_{r-1}), \sigma(\ell_r)) \coloneqq (\pi(\ell_2), \pi(\ell_3), \ldots, \pi(\ell_{r}), \pi(\ell_{1}))\\
    &\pi(\ell') = \sigma(\ell') \quad\quad \forall \ell' \notin \vec{P}.
\end{align*}

Given an element $\vec{\ell}$, by using the above mapping function we can construct a bipartite graph $H$ with two parts $A$ and $B$ such that each part has $((K+1)m)!$ vertices showing different permutations of elements. For a permutation $\pi \in A$ and a $(v,\pi)$-query-path $\vec{P}$ that ends at $\ell$ for some arbitrary vertex $v$, we connect $\pi$ in $A$ to $\phi(\pi, \vec{P})$ in $B$. Note that by the construction of $H$, $\deg(\pi_A) = Q(\vec{\ell},\pi_A)$ for all $\pi_A \in A$, since we have a unique element for each query-path that ends at $\vec{\ell}$ with permutation $\pi_A$. Hence, in order to prove \Cref{lem: query-complexity}, it is sufficient to prove that $\E_{\pi_A \sim A}[\text{deg}_H(\pi_A)]=O(\log^4 n)$. 

Let $\mathcal{Q}(\vec{\ell}, \pi)$ be the set of all query-paths for permutation $\pi$ that ends at $\vec{\ell}$. Let $\beta = c \log^2 n$ for some large constant $c$ and $\Pi$ be the set of all possible $((K+1)m)!$ permutations. We partition $\Pi$ into two subsets of {\em likely} and {\em unlikely} permutations, denoted by $L$ and $U$, respectively, as follows:
\begin{align*}
    L \coloneqq \left\{\pi \in \Pi \,\,\Big\lvert\,\, \max_{\vec{P} \in \mathcal{Q}(\vec{\ell}, \pi)} |\vec{P}| \leq \beta \right\} \qquad\qquad U \coloneqq \Pi \setminus L.
\end{align*}

In words, likely permutations are those where the longest query-path ending at $\vec{\ell}$ has a length of at most $\beta$ and unlikely permutations are the remaining ones. Let $A_L$ be the set of vertices corresponding to the likely permutations in $A$ and $A_U$ be the set of vertices corresponding to unlikely permutations. The intuition behind this partitioning is that the set of unlikely permutations makes up a tiny fraction of all permutations which is formalized in \Cref{lem: few-long-path}.

\begin{lemma}\label{lem: few-long-path}
If $c$ is a large enough constant, then we have $|A_U| \leq ((K + 1)m)! / n^2$.
\end{lemma}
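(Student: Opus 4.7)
The plan is to show $\Pr[\pi \in A_U] \le 1/n^2$ for a uniformly random permutation $\pi$, which by definition is equivalent to the claimed bound on $|A_U|$. The starting structural fact is that along any query-path $\vec{P}=(\vec{\ell_r},\ldots,\vec{\ell_1})$ in $\pi$, the ranks strictly decrease: $\pi(\ell_1)<\pi(\ell_2)<\cdots<\pi(\ell_r)$, because each edge oracle only recurses into incident elements of strictly smaller rank. For a random $\pi$, this monotonicity condition on a fixed ordered tuple of $r$ distinct elements holds with probability exactly $1/r!$.

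A first-moment / union bound over all possible directed element-paths of length $r$ ending at $\vec{\ell}$ therefore gives
\[
\Pr[\exists\ \text{query-path of length } r]\ \le\ N_r/r!,
\]
where $N_r$ counts directed paths of length $r$ in the element-incidence graph ending at $\vec{\ell}$. The lemma reduces to showing $\sum_{r>\beta} N_r/r!\le 1/n^2$. A naive count of $N_r \le ((\Delta-1)(K+1))^{r-1}$ is too weak---it would only support $\beta = \Omega(\Delta K)$---so the main work lies in sharpening it.

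The refinement exploits additional structure of query paths. By \cref{obs: extend-then-start}, $\vec{P}$ consists of an $\extend$-prefix followed by a $\start$-suffix, and for the recursion in $\EOS$/$\EOE$ to actually reach $\vec{\ell_{i-1}}$ at each step, every incident element of the shared vertex with rank strictly less than $\pi(\ell_{i-1})$ must fail the corresponding winning condition (e.g., not be in $M$ for $\EOS$). Since each edge contributes $K$ start-copies to $T$, this forces $\pi(\ell_{i-1})$ to lie near the minimum of the $\Theta(\Delta K)$ incident ranks, giving a geometric rank-drop by a factor of roughly $\Theta(\Delta K)$ per step with constant probability. Combining this with a Chernoff-style bound (\cref{prop:chernoff}) on the $r$ ``good rank-drop'' indicators, together with \cref{clm: start-soon} to cap the starting rank $\pi(\ell_r)$ at $O(m\log n)$ with probability $\ge 1-1/n^2$, yields $\Pr[\pi\in A_U]\le 1/n^2$ for sufficiently large constant $c$. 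The main obstacle is handling the dependencies between rank-drop events at successive recursion levels, since the ranks across levels are not independent; the intended resolution is to reveal the elements' ranks sequentially (in the order they are discovered by the query process) so that at each step the conditional rank-drop is a bounded-probability event amenable to Chernoff.
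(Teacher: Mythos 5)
Your high-level framework---a first-moment bound over candidate paths using the $1/r!$ probability from rank monotonicity, refined by exploiting the structure of what makes a path an actual query path---is a reasonable starting point, and it is similar in spirit to the Blelloch-style counting that the paper ultimately invokes. However, the central refinement you propose does not hold, and the proof as sketched would not go through.

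The key claim that fails is the ``geometric rank-drop by a factor of roughly $\Theta(\Delta K)$ per step.'' You argue that because the oracle iterates over incident elements in increasing rank and because there are $\Theta(\Delta K)$ of them, the next element $\ell_{i-1}$ on the recursion stack is forced to have rank near the minimum of the incident ranks. This is not what the oracle does. In $\EOS(\ell_i, v_i)$, the loop simply iterates over \emph{all} lower-rank \start{} elements at $v_i$; the recursive call to $\ell_{i-1}$ occurs whenever we reach $\ell_{i-1}$ in the loop, regardless of whether the earlier elements were ``eliminated'' by some rare event. The query path captures whatever is currently on the stack, and $\ell_{i-1}$ can legitimately be the $j$-th smallest-rank incident element for any $j$ up to the degree. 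The only constraint is that the earlier iterations returned $\false$, and that constraint does not translate into $\pi(\ell_{i-1})$ being small---it just says those elements didn't land in the matching. So there is no mechanism producing a geometric rank drop with constant probability, and the Chernoff step built on it has no footing. (Your invocation of \cref{clm: start-soon} is also not the right tool; it bounds $\rankstart{e}$, the \emph{minimum} rank over all \start{} copies of a specific edge, which is about the unusual-edge analysis and not about general ranks of query-path elements.) Finally, the dependency between consecutive ``rank-drop'' events is genuinely problematic; ``reveal ranks sequentially'' does not by itself make these events behave like independent Bernoullis, since the sigma-algebra generated by earlier parts of the path heavily constrains later ranks.

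The paper's actual proof goes a completely different route, adapted from Blelloch, Fineman, and Shun's analysis of parallel greedy MIS. It partitions the randomly shuffled sequence $T$ into $d = O(\log n)$ contiguous prefixes $\mathcal{P}_1,\dots,\mathcal{P}_d$ with geometrically growing lengths $\theta_i = \Omega(2^i \log n / (2K\Delta))$, together with ``deleted'' sets $\mathcal{D}_i$. \cref{lem: reduce-degree} shows that after processing $\mathcal{P}_i$, every surviving \extend{} element has at most $2K\Delta/2^i$ incident elements, so $O(\log n)$ prefixes suffice. \cref{lem: longest-partition-path} shows that within a prefix, the longest all-\extend{} query path has length $O(\log n)$ with high probability. \cref{obs: partition-query-lower-rank2} ensures a query path visits each $\mathcal{D}_i$ at most once, so an all-\extend{} query path has total length $O(\log^2 n)$. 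The all-\start{} suffix is handled separately by \cref{lem: longest-start-path}, which is a black-box citation of the $O(\log n)$ bound on greedy-matching query paths from \cite{behnezhad2021}. Composing the two via \cref{obs: extend-then-start} yields the $O(\log^2 n)$ bound, and a union bound over the failure probabilities gives $|A_U|/((K+1)m)! \leq 1/n^2$. If you want to pursue a first-moment argument, the correct ``sharpening'' of the naive count is not a rank-drop heuristic but rather this structural partitioning of $T$; the two proofs are genuinely different, and yours, as written, has a gap that cannot be patched without replacing the rank-drop claim with a correct mechanism for controlling path length.
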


Furthermore, we show that each vertex $\pi_B \in B$ has at most $O(\beta^2)$ neighbors among the likely permutations in part $A$ of our bipartite graph $H$.

\begin{lemma}\label{lem: few-likely-neighbors}
Let $\pi_B$ be a vertex in $B$. Then $\pi_B$ has at most $\beta^2$ neighbors in $A_L$. 
\end{lemma}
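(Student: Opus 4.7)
The plan is to construct an injective map from the set of neighbors of $\pi_B$ in $A_L$ into $[\beta]\times[\beta]$, which immediately yields the desired bound. Each such neighbor corresponds to a pair $(\pi_A, \vec{P})$ with $\phi(\pi_A, \vec{P}) = \pi_B$, where $\vec{P} = (\vec{\ell_r}, \ldots, \vec{\ell_1} = \vec{\ell})$ is a $(v_0, \pi_A)$-query-path ending at $\vec{\ell}$ for some starting vertex $v_0$; since $\pi_A \in A_L$, we have $r := |\vec{P}| \leq \beta$, and this will be the first coordinate of the injection. Recall that given $\vec{P}$ and $\pi_B$, the permutation $\pi_A$ is determined (as the inverse rotation), so the bipartite degree equals the number of valid $\vec{P}$'s.

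The key structural observation is that the cyclic shift $\phi$ converts the strictly decreasing $\pi_A$-rank sequence $\pi_A(\ell_r) > \ldots > \pi_A(\ell_1)$ into $\pi_B(\ell_r) < \pi_B(\ell_1) < \pi_B(\ell_2) < \ldots < \pi_B(\ell_{r-1})$. Writing $v_0, v_1, \ldots, v_r$ for the vertex sequence of $\vec{P}$ (so that $\ell_j$ has edge $\{v_{r-j}, v_{r-j+1}\}$), I claim the reversed prefix $\vec{Q} := (\vec{\ell_{r-1}}, \ldots, \vec{\ell_1})$ is a valid $(v_1, \pi_B)$-query-path of length $r-1$ ending at $\vec{\ell}$. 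The verification is direct: the edges in $\vec{Q}$ share vertices consecutively (inherited from $\vec{P}$), their $\pi_B$-ranks are strictly decreasing, and $\pi_A$ and $\pi_B$ agree on every element outside $\vec{P}$, so the iteration inside $\EOS$/$\EOE$ and the resulting branching behave identically for non-path incident elements. This shows the recursion $\VO(v_1,\pi_B)\to\EOS(\ell_{r-1},v_2,\pi_B)\to\ldots\to\EOS(\ell,v_r,\pi_B)$ unfolds along $\vec{Q}$.

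Having extracted $\vec{Q}$ from $\vec{P}$, the only remaining ambiguity is the choice of $\ell_r$: an element incident to $v_1$ (as the tail of $\vec{\ell_r}$ must lie on the graph-path continuation) with $\pi_B(\ell_r) < \pi_B(\ell)$, and such that the resulting $\pi_A$ is in $A_L$. I take $k$ to be the index of $\ell_r$ in a canonical enumeration (e.g.\ ordered by $\pi_B$-rank) of such valid candidates. Given $(r,k,\pi_B)$, I can recover $\vec{P}$ by first extracting the unique $\pi_B$-query-path of length $r-1$ ending at $\vec{\ell}$ whose starting vertex is compatible (this is deterministic from $\pi_B$), then picking $\ell_r$ via $k$, which establishes injectivity.

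The main obstacle is showing $k \leq \beta$, i.e.\ that at most $\beta$ choices of $\ell_r$ can extend a fixed $\vec{Q}$ into a likely configuration. The plan is to argue that distinct valid $\ell_r$'s produce $\pi_A$'s in which the corresponding $(v_0,\pi_A)$-query-path starting at $v_0 = $ tail of $\vec{\ell_r}$ has length exactly $r$; since $\pi_A$ is likely, this length is at most $\beta$. Distinct $\ell_r$'s yield distinct starting vertices $v_0$, so bounding the count reduces to bounding how many $v_0$'s admit a likely query-path reaching $\vec{\ell}$ through this fixed prefix, which a direct counting argument analogous to that used in \Cref{lem: few-long-path} caps at $\beta$. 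A separate technical check is needed for \extend{} elements in $\vec{Q}$: the $ST_w$ short-circuit inside \EOE{} must still permit the reversed recursion to fully descend along $\vec{Q}$ in $\pi_B$, which follows because the elements triggering the short-circuit are also outside $\vec{P}$ and thus have matching ranks in both permutations.
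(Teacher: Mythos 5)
Your proposal rests on two claims that do not hold up. The first is that $\vec{Q}=(\vec{\ell}_{r-1},\ldots,\vec{\ell}_1)$ is a $(v_1,\pi_B)$-query-path. Consider the case where $\ell_r$ is a $\start{}$ element (hence, by \Cref{obs: extend-then-start}, so is $\ell_{r-1}$). The shift $\phi$ sends $\ell_r$ to the lowest $\pi_B$-rank among the path elements, namely $\pi_B(\ell_r)=\pi_A(\ell_1)$, and $\ell_r$ is incident to $v_1$, so $\VO(v_1,\pi_B)$ reaches $\ell_r$ before $\ell_{r-1}$. But $\EOS(\ell_r,v_0,\pi_B)$ necessarily returns $\true$: every $\start{}$ element incident to $v_0$ with $\pi_B$-rank below $\pi_A(\ell_1)$ lies outside $\vec{P}$ and has the same rank under $\pi_A$, and each of these returned $\false$ when $\VO(v_0,\pi_A)$ iterated past them to reach $\ell_r$; their membership in $M$ depends only on ranks below $\pi_A(\ell_1)$, where the two permutations agree. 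So $\VO(v_1,\pi_B)$ sets $ST\gets\true$ at $\ell_r$ and never recurses into $\ell_{r-1}$, and $\vec{Q}$ is not a query-path. Your technical check about the $ST_w$ short-circuit inside $\EOE$ does not touch this case.

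The second and more fundamental gap is the uniqueness claim. Even when $\vec{Q}$ happens to be a $\pi_B$-query-path, there is no reason it is \emph{the unique} $\pi_B$-query-path of length $r-1$ ending at $\vec{\ell}$, so the recovery step of your injection is ill-defined. Query-paths ending at a fixed element can branch (\Cref{def: valid-branch}); controlling how they branch is exactly what \Cref{lem:few-branch} does, and the paper's proof spends its entire effort showing that colliding query-paths branch at most once and only in the valid $(\start{},\{\start{},\extend{}\})$ configuration. This yields a tree-like family $\hat{\mathcal{P}}$ of at most $s+1$ maximal paths indexed by branch points on $\vec{P}_s$, and every query-path is a prefix of one of them, giving the $(s+1)\beta=O(\beta^2)$ count. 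Your plan has no substitute for the branching analysis, and the remaining bound $k\leq\beta$ is only gestured at: \Cref{lem: few-long-path} bounds a probability over a uniformly random $\pi$, not the number of admissible extensions $\ell_r$ for a single fixed $\pi_B$.
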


\begin{proof}[Proof of \Cref{lem: query-complexity-directed}]
We provide an upper bound on $|E(H)|$. First, note that by \Cref{lem: few-long-path}, we have $|A_U| \leq ((K + 1)m)! / n^2$. Furthermore, by \Cref{obs: q-upperbound}, degree of each vertex $\pi_A \in A$ is at most $O(n^2)$ which implies that the total number of edges incident to vertices of $A_U$ is at most 
$$E(A_U, B) \leq ((K + 1)m)! / n^2 \cdot O(n^2) \leq O\big(((K+1)m)!\big).$$

Moreover, by \Cref{lem: few-likely-neighbors}, each vertex $\pi_B \in B$ has at most $O(\beta^2)$ neighbors in $A_L$. Since $H$ is a bipartite graph, total number of incident edges to $A_L$ is at most $O(\beta^2)\cdot |A_L|$. Therefore, sum of degrees of all vertices in $A$ is at most 
\begin{align*}
    O(\beta^2)\cdot|A_L| + E(A_U, B) \leq O(\beta^2\cdot((K+1)m)!).
\end{align*}

For a random vertex in $A$, the expected degree is $\frac{O(\beta^2\cdot((K+1)m)!)}{((K+1)m)!} = O(\beta^2)$, which completes the proof since $\beta = c\log^2 n$ and $\deg(\pi_A) = Q(\vec{\ell},\pi_A)$.
\end{proof}

In the following two subsections, we prove \Cref{lem: few-likely-neighbors,lem: few-long-path}.

\subsubsection{Proof of \cref{lem: few-likely-neighbors}}
This proof is the most technical part of this part of the analysis. We show that for two query-paths $\vec{P}$ and $\vec{P'}$ that end at $\vec{\ell}$, and two permutations $\pi$ and $\pi'$, if $\phi(\pi, \vec{P}) = \phi(\pi', \vec{P'})$, then either one of the query paths is a subpath of the other one, or they ``branch'' through a specific configuration of \start{} and \extend{} elements illustrated in \cref{fig: valid-branch}.

To formalize this, let us formally define what a  ``branch'' is:

\begin{definition}\label{def: branch}
Let $\vec{P} = (\vec{\ell}_{r_1}, \ldots, \vec{\ell}_1)$ and $\vec{P'}=(\vec{\ell}_{r_2}', \ldots, \vec{\ell}_1')$ be two query-paths. Let $j$ be an such that $\vec{\ell}_i = \vec{\ell}_i'$ for all $i \leq j < \min(r_1, r_2)$, but $\vec{\ell}_{j+1} \neq \vec{\ell}_{j+1}'$. Then $\vec{P}$ and $\vec{P'}$ {\em branch} at element $\vec{\ell}_{j}$.
\end{definition}

And now let us define {\em valid} and {\em invalid} branches (see \cref{fig: valid-branch}).

\begin{definition}\label{def: valid-branch}
Let $\vec{P} = (\vec{\ell}_{r_1}, \ldots, \vec{\ell}_1)$ and $\vec{P'}=(\vec{\ell}_{r_2}', \ldots, \vec{\ell}_1')$ be two query-paths, where $\vec{\ell}_i = (\vec{e}_i, X_i)$, and $\vec{\ell}_i' = (\vec{e}_i', X_i')$. Assume that $\vec{P}$ and $\vec{P'}$ branch at element $\vec{\ell}_{j}$ for $j < \min(r_1, r_2)$. Moreover, let $u$ be the shared vertex between $\vec{e}_j$ and $\vec{e}_{j+1}$. We call this branch {\em valid} if $X_j = \start{}$ and $\{X_{j+1}, X_{j+1}'\} = \{\extend{}, \start{} \}$. For all other possible configurations of $X_j$, $X_{j+1}$, and $X_{j+1}'$, we say the branch is {\em invalid}.
\end{definition}

\begin{figure}[htbp]
\begin{center}
  \includegraphics[scale=0.8]{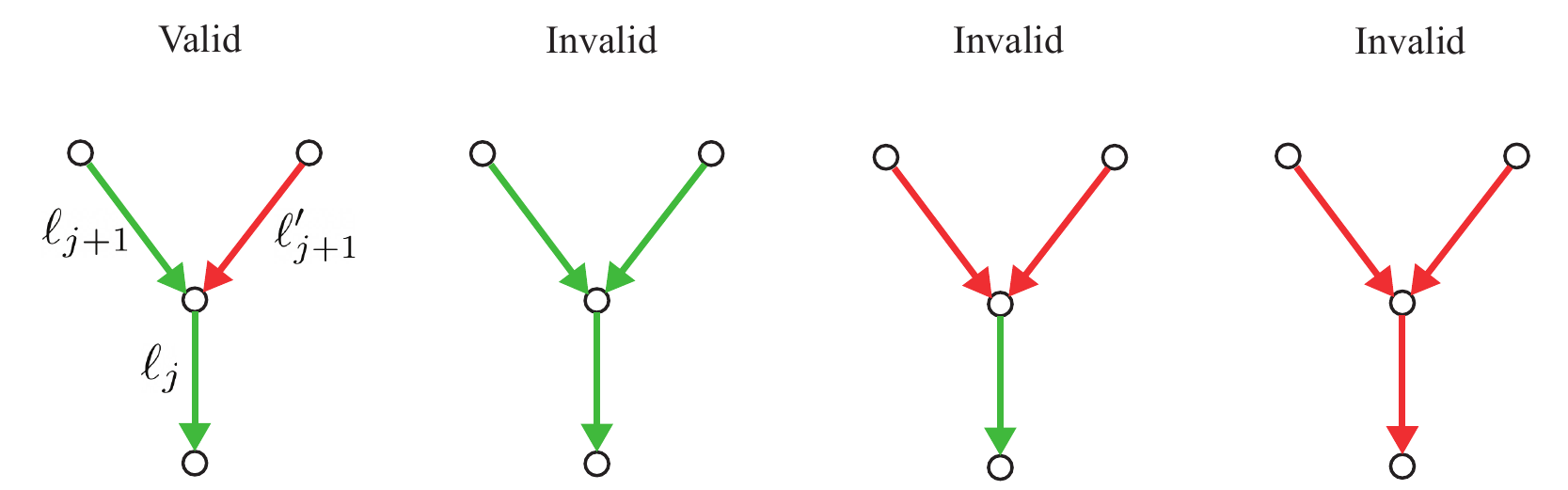}
  \caption{All valid and invalid branches. Green edges represent \start{} elements and red edges represent \extend{} elements.}\label{fig: valid-branch}
  \end{center}
\end{figure}

One key property of invalid branches is formalized in the following observation which is helpful in the rest of this section.

\begin{observation}
Let $\vec{P} = (\vec{\ell}_{r_1}, \ldots, \vec{\ell}_1)$ and $\vec{P'}=(\vec{\ell}_{r_2}', \ldots, \vec{\ell}_1')$ be two query-paths, $\vec{\ell}_i = (\vec{e}_i, X_i)$ and $\vec{\ell}_i' = (\vec{e'}_i, X_i')$ for all $i$. If there exists an invalid branch at $\vec{\ell}_{j}$, then none of the tuples $(X_j, X_{j+1})$, $(X_j, X_{j+1}')$, $(X_{j+1}, X_{j+1}')$, and $(X_{j+1}', X_{j+1})$ is $(\extend{}, \start{})$.
\end{observation}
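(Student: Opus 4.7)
The plan is a direct enumeration of the possible configurations of an invalid branch. Unpacking \cref{def: valid-branch}, an invalid branch at $\vec{\ell}_j$ must fall into exactly one of three cases:
(a) $X_j = \extend$;
(b) $X_j = \start$ and $X_{j+1} = X_{j+1}' = \start$; or
(c) $X_j = \start$ and $X_{j+1} = X_{j+1}' = \extend$.
In each case I only need to pin down the three values $X_j, X_{j+1}, X_{j+1}'$ and then read off the four listed ordered pairs.

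The only case that requires a genuine argument is (a). Here I would invoke \cref{obs: extend-then-start}, which says that along any query-path the sequence $X_1, X_2, \ldots$ consists of an initial run of \start{} elements followed by a run of \extend{} elements; in particular, once $X_i = \extend$ for some $i$, every $X_{i'}$ with $i' > i$ is also \extend. Applying this to $\vec{P}$ with $X_j = \extend$ immediately gives $X_{j+1} = \extend$. Since $\vec{\ell}_j' = \vec{\ell}_j$ we also have $X_j' = \extend$, and applying the same observation to $\vec{P'}$ yields $X_{j+1}' = \extend$. Hence in case (a) all of $X_j, X_{j+1}, X_{j+1}'$ equal \extend{}, so each of the four listed tuples is $(\extend, \extend)$, which is not $(\extend, \start)$.

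Cases (b) and (c) need no additional tool: in (b) all three of $X_j, X_{j+1}, X_{j+1}'$ equal \start{}, so all four tuples are $(\start, \start)$; in (c) the pairs $(X_j, X_{j+1})$ and $(X_j, X_{j+1}')$ are both $(\start, \extend)$, while $(X_{j+1}, X_{j+1}')$ and $(X_{j+1}', X_{j+1})$ are both $(\extend, \extend)$. In none of the three cases is any listed tuple equal to $(\extend, \start)$, completing the argument.

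There is no real obstacle; the only mild subtlety is remembering that \cref{obs: extend-then-start} must be applied separately to $\vec{P}$ and to $\vec{P'}$, which a priori can have different \start{}/\extend{} thresholds. The two applications are tied together by the shared branch element $\vec{\ell}_j = \vec{\ell}_j'$, from which one reads off $X_j = X_j'$. Beyond invoking the observation once per path, the proof is a mechanical case check.
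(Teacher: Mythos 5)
Your proof is correct and follows essentially the same three-way case analysis as the paper's: $X_j = \extend$ handled via \cref{obs: extend-then-start}, and the two $X_j = \start$ subcases (both successors \start{}, or both \extend{}) dispatched by direct inspection. If anything you are slightly more explicit than the paper in case (a), noting that the observation must be applied to $\vec{P}$ and $\vec{P'}$ separately and linked via $\vec{\ell}_j = \vec{\ell}_j'$, but the argument is the same.
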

\begin{proof}
Let $u$ be the shared endpoint between $\vec{e}_j$, $\vec{e}_{j+1}$, and $\vec{e'}_{j+1}$. If $X_j = \extend{}$, by \Cref{obs: extend-then-start}, we have $X_{j+1} = X_{j+1}' = \extend{}$. Therefore, all tuples are $(\extend{}, \extend{})$.

Now assume that $X_j = \start{}$. If one of $X_{j+1}$ or $X_{j+1}'$ is $\extend{}$ and the other one is $\start{}$, then the branch is valid by \Cref{def: valid-branch}. This leaves two remaining scenarios:
\begin{itemize}[leftmargin=10pt]
    \item $X_{j+1} = X_{j+1}' = \start{}$: In this case, all tuples are $(\start{}, \start{})$.
    \item $X_{j+1} = X_{j+1}' = \extend{}$: In this case, $(X_{j+1}, X_{j+1}') = (X_{j+1}', X_{j+1}) = (\extend{}, \extend{})$. Moreover, $(X_j, X_{j+1}) = (X_j, X_{j+1}') = (\start{},\extend{})$. Thus there is no $(\extend, \start)$ among the tuples considered in the statement.
\end{itemize}
The proof is thus complete.
\end{proof}

Next, we show that for two query-paths $\vec{P}$ and $\vec{P}'$ that end at $\vec{\ell}$, and two permutations $\pi$ and $\pi'$, if $\phi(\pi, \vec{P}) = \phi(\pi', \vec{P}')$, then either the shorter query-path is a subpath of the longer one or they have exactly one valid branch. In particular, it is not possible for $\vec{P}$ and $\vec{P}'$ to have an invalid branch or have multiple valid branches under this condition.

\begin{lemma}\label{lem:few-branch}
Let $\pi$ and $\pi'$ be two different permutations over $T$, and let $\vec{P}$ and $\vec{P}'$ be $(v,\pi)$- and $(v',\pi')$-query-path, respectively, that both end at element $\vec{\ell}$. If $\phi(\pi,\vec{P})=\phi(\pi',\vec{P}')$, then $\vec{P}$ and $\vec{P}'$ branch at most once and this branch is valid.
\end{lemma}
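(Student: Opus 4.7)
My plan is a proof by contradiction, extracting the consequences of $\phi(\pi,\vec{P})=\phi(\pi',\vec{P}')=\sigma$ at the level of ranks and then using the deterministic structure of the edge oracles. First, I would set up the following rank bookkeeping: outside $\vec{P}\cup\vec{P}'$, both $\pi$ and $\pi'$ agree with $\sigma$ and hence with each other. On the common prefix $\vec{\ell}_1=\vec{\ell}_1',\ldots,\vec{\ell}_j=\vec{\ell}_j'$ (with $\vec{\ell}_1=\vec{\ell}$ at the deep end of both query-paths), the shift identities $\sigma(\ell_i)=\pi(\ell_{i+1})$ and $\sigma(\ell_i')=\pi'(\ell_{i+1}')$ telescope to give $\pi(\ell_i)=\pi'(\ell_i)$ for all $2\le i\le j$. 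So the local rank environment at the shared vertex $u$ of $\vec{\ell}_j$ and $\vec{\ell}_{j+1}$, restricted to incident elements of rank below $\pi(\ell_j)$, is identical under $\pi$ and $\pi'$.

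Next, I would rule out invalid branches. At a branch $\vec{\ell}_j$, the edge oracle on $\ell_j$ is invoked in both recursions with the same active vertex, and the callers $\vec{\ell}_{j+1}$ (in $\pi$) and $\vec{\ell}_{j+1}'$ (in $\pi'$) are distinct elements incident at $u$ with rank exceeding $\pi(\ell_j)$. The key structural observation is that inside an oracle call the $\EOS$-cascade (triggered by $\start$-children) and the $\EOE$-cascade (triggered by $\extend$-children) are processed independently in rank order, and by caching each oracle call has a unique first caller in the recursion forest. By induction on $j$ that carries the common-rank-environment property, I would argue that whenever $X_{j+1}$ and $X_{j+1}'$ sit in the same cascade at $u$, the identity of the first caller of $\ell_j$'s oracle is pinned down by the rank and type pattern at $u$---which is shared between $\pi$ and $\pi'$---forcing $\vec{\ell}_{j+1}=\vec{\ell}_{j+1}'$ and contradicting the branch. \Cref{obs: extend-then-start} handles $X_j=\extend$ (both children must then be $\extend$, hence in the same cascade). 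The only configuration in which uniqueness genuinely breaks is $X_j=\start$ with $\{X_{j+1},X_{j+1}'\}=\{\start,\extend\}$, because then the two callers lie in decoupled cascades, and this is exactly the valid-branch case.

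For the at-most-one-branch statement, suppose for contradiction that $\vec{P}$ and $\vec{P}'$ reconverge at a second branch point $\vec{\ell}_{j_2}=\vec{\ell}_{j_2}'$ with $j_2>j_1+1$. Because $\vec{\ell}_{j_2}$ belongs to both paths, the shift identities give $\sigma(\ell_{j_2})=\pi(\ell_{j_2+1})=\pi'(\ell_{j_2+1}')$. Replaying the first-caller uniqueness argument at $\ell_{j_2}$ rules out the second branch being invalid, and if it were valid I would propagate the telescoping shift through the disjoint inter-branch segments $\vec{\ell}_{j_1+1},\ldots,\vec{\ell}_{j_2-1}$ of $\vec{P}\setminus\vec{P}'$ and $\vec{\ell}_{j_1+1}',\ldots,\vec{\ell}_{j_2-1}'$ of $\vec{P}'\setminus\vec{P}$ (whose lengths need not coincide) to pin the $\sigma$-rank of some common element to two distinct values, yielding a contradiction. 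The main obstacle will be keeping this bookkeeping clean enough across the asymmetric inter-branch windows to isolate a single conflicting rank equation, and carefully enumerating, with the help of \Cref{obs: extend-then-start} and \Cref{obs: one-start}, the handful of type patterns at the reconvergence point $\vec{\ell}_{j_2}$ that must be excluded.
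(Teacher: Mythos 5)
Your rank bookkeeping is on the right track: the shift identity does telescope to give $\pi(\ell_i)=\pi'(\ell_i)$ on the common prefix, and (with the paper's \Cref{clm:equal-ranks}, which also handles the non-prefix elements of $\vec{P}\cup\vec{P}'$) one can upgrade this to agreement on \emph{all} elements with rank below $\pi(\ell_j)$. The problem is the central step. ``First-caller uniqueness'' is not the right mechanism and you do not prove it. The first caller $\vec{\ell}_{j+1}$ of $\ell_j$'s oracle lives strictly \emph{above} $\pi(\ell_j)$ in rank and is selected by the recursion descending from $v$; nothing in the low-rank environment at $u$ pins it down, and indeed $\pi$ and $\pi'$ may be completely different above $\pi(\ell_j)$. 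What the shift identity actually gives (via $\pi(\ell_j)=\pi'(\ell_{j+1})$ and the WLOG $\pi(\ell_j)<\pi'(\ell_j)$) is that in $\pi'$ the oracle for $\ell_{j+1}'$ must query $\ell_{j+1}$ \emph{before} it reaches $\ell_j$. To turn that into a contradiction one needs the extra fact, which is the heart of the paper's proof (\Cref{clm: impossible-branch}), that $\ell_{j+1}\in\MS(G,\pi')$. That claim requires its own argument: assume not, transfer the blocking elements back from $\pi'$ to $\pi$ using the rank agreement and the fact that freezing is rank-determined (\Cref{obs: similar-freeze}), and conclude $\vec{P}$ would have terminated before reaching $\ell_j$. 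Once $\ell_{j+1}\in\MS(G,\pi')$ is in hand, the invalid-branch case forces $X_{j+1}=X_{j+1}'$, so the oracle for $\ell_{j+1}'$ returns \false{} on $\ell_{j+1}$ and never calls $\ell_j$, showing $\vec{P}'$ is not a valid query-path. Your proposal never surfaces this membership claim, so the contradiction is not actually derived.

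Your third paragraph on ``at most one branch'' solves a non-problem. By \Cref{def: branch} a branch is defined through agreement of the entire prefix $\vec{\ell}_1,\ldots,\vec{\ell}_j$ together with $\vec{\ell}_{j+1}\neq\vec{\ell}_{j+1}'$, so the branch index is automatically unique; there is no reconvergence scenario to exclude, and the inter-branch telescoping you sketch is not needed. The entire mathematical content of the lemma is in showing the one branch is valid.
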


Before proving \Cref{lem:few-branch}, we prove a sequence of auxiliary observations and claims. Let $\vec{P} = (\vec{\ell}_{r_1}, \ldots, \vec{\ell}_1)$ and $\vec{P'}=(\vec{\ell}_{r_2}', \ldots, \vec{\ell}_1')$ where $\vec{\ell}=\vec{\ell}_1=\vec{\ell}_1'$. Also, let $\vec{\ell}_i=(\vec{e}_i, X_i)$ and $\vec{\ell'}_i = (\vec{e_i}', X_i')$. For the sake of contradiction, suppose that $\vec{P}$ and $\vec{P}'$ branch at element $\vec{\ell}_{b}$ and the branch is invalid. This means that $\vec{\ell}_i=\vec{\ell}_i'$ for $i \leq b$ and $\vec{\ell}_{b+1} \neq \vec{\ell}_{b+1}'$. Note that $\vec{e}_{b+1}$ and $\vec{e}_{b+1}'$ can be the same edge. We do not need to separately investigate these two cases as our proof generally works for both cases.

\begin{observation}\label{obs:increasing_path}
We have that $\pi(\ell_1) < \pi(\ell_2) < \ldots < \pi(\ell_{r_1})$ and $\pi'(\ell_1') < \pi'(\ell_2') < \ldots < \pi'(\ell_{r_2}')$.
\end{observation}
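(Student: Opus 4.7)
The plan is to read off the claim directly from the structure of the three oracles. The key observation is that whenever an edge oracle invokes another edge oracle, the callee's element has a strictly smaller rank than the caller's element. Inspecting \cref{alg: start_oracle}, every recursive call $\EOS(\ell_i, v_i, \pi)$ made from $\EOS(\ell, u, \pi)$ is on an element $\ell_i$ chosen from a list that is explicitly restricted to $\pi(\ell_i) < \pi(\ell)$. Similarly, \cref{alg: extend_oracle} only recurses into calls $\EOS(\ell_i, v_i, \pi)$ or $\EOE(\ell_i, v_i, \pi, \cdot)$ with $\pi(\ell_i) < \pi(\ell)$.

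Now recall that a $(v,\pi)$-query-path is the directed path of elements sitting on the recursion stack, read from the bottom (the first edge-oracle call made by $\VO(v,\pi)$) to the top (the deepest recursion, which is $\ell_1 = \vec{\ell}$). By the observation above, each step from a parent call $\ell_{i+1}$ to its child call $\ell_i$ strictly decreases the $\pi$-rank, so $\pi(\ell_i) < \pi(\ell_{i+1})$ for every $1 \leq i < r_1$. Chaining these inequalities yields
\[
\pi(\ell_1) < \pi(\ell_2) < \cdots < \pi(\ell_{r_1}),
\]
which is the first half of the claim. The argument for $\vec{P}'$ is identical, swapping $\pi$ for $\pi'$ and the $\ell_i$ for $\ell_i'$. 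I do not anticipate any obstacle; the only thing to be slightly careful about is to check every branch of \cref{alg: vertex_oracle,alg: start_oracle,alg: extend_oracle} and confirm that the rank-decrease property holds for all calls that can land on the stack, which is immediate from the way the candidate elements $\ell_1, \ldots, \ell_r$ are enumerated in each algorithm.
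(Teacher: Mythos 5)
Your proposal is correct and takes essentially the same approach as the paper: both simply read off from \cref{alg: start_oracle,alg: extend_oracle} that every recursive edge-oracle call is made on an element of strictly smaller $\pi$-rank, so the recursion stack is decreasing in rank, which chains to the claimed ordering. Your version is slightly more careful in spelling out the indexing convention (index $1$ is the top of the stack), but the substance is identical.
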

\begin{myproof}
\cref{alg: start_oracle,alg: extend_oracle} recursively call on elements with $\pi$ values less than $\pi$ value of the current element. Therefore, the stack of recursive calls will be decreasing with respect to $\pi$ values. The same condition also holds for permutation $\pi'$.
\end{myproof}

\begin{observation}\label{obs:equal_element}
$\pi(\ell_b) = \pi'(\ell_{b+1})$.
\end{observation}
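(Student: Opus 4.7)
The plan is to prove this observation as a direct unpacking of the definition of $\phi$ evaluated at the branching element $\ell_b = \ell_b'$. Let me set $\sigma := \phi(\pi, \vec{P}) = \phi(\pi', \vec{P'})$, which is the common image by hypothesis. By construction, $\phi$ cyclically shifts the $\pi$-values along the path, so for any index $i$ strictly less than $r_1$ we have $\sigma(\ell_i) = \pi(\ell_{i+1})$, with the wrap-around rule $\sigma(\ell_{r_1}) = \pi(\ell_1)$; the analogous pair of equations holds for $\vec{P'}$ with $\pi'$ in place of $\pi$.

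The crucial point is that by the branching hypothesis both $\ell_{b+1}$ and $\ell_{b+1}'$ exist and are distinct, so in particular $b < r_1$ and $b < r_2$. This means the evaluation of $\sigma$ at position $b$ on either side does not invoke the wrap-around clause. Applying the $\vec{P}$-formula gives one expression for $\sigma(\ell_b)$ in terms of a $\pi$-value at a shifted index, while applying the $\vec{P'}$-formula (together with $\ell_b = \ell_b'$) gives a second expression for the same quantity in terms of a $\pi'$-value at the corresponding shifted index. Equating the two expressions and noting that the only elements on which $\pi$ and $\pi'$ can disagree are those lying on $\vec{P} \cup \vec{P'}$ yields the claimed identity in a single line.

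The key input used beyond the definition of $\phi$ is the fact that the two paths share the prefix $(\ell_1, \ldots, \ell_b) = (\ell_1', \ldots, \ell_b')$, which ensures that when the shift equations are lined up at position $b$ they are evaluated at the same element. I expect no substantive obstacle: the observation is really just a bookkeeping identity extracted from the cyclic-shift definition, and the only thing to monitor is that we stay clear of the wrap-around case, which is guaranteed by $b < \min(r_1, r_2)$. This identity will then be fed into the subsequent case analysis ruling out invalid branches in the proof of \Cref{lem:few-branch}, where it pins down the $\pi'$-ranks of certain elements of $\vec{P}$ (and vice versa) and ultimately forces the $\textsc{Start}/\textsc{Extend}$ pattern of a valid branch.
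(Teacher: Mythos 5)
Your approach evaluates the common permutation $\sigma := \phi(\pi,\vec{P}) = \phi(\pi',\vec{P}')$ at the shared element $\ell_b = \ell_b'$. This gives you two shifted expressions which, after equating, yield an identity of the form $\pi(\ell_{b+1}) = \pi'(\ell_{b+1}')$ (both sides at the ``shifted index'' on the respective path). That is a true statement, but it is \emph{not} the observation you are asked to prove, which is $\pi(\ell_b) = \pi'(\ell_{b+1})$: note the left-hand index does not shift, and the right-hand side involves $\ell_{b+1}$ (an element of $\vec{P}$), not $\ell_{b+1}'$. Your closing remark that $\pi$ and $\pi'$ agree off $\vec{P}\cup\vec{P}'$ is correct but does not bridge this gap, since both $\ell_b$ and $\ell_{b+1}$ lie on at least one of the paths.

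The paper instead evaluates $\sigma$ at $\ell_{b+1}$, which is the point where the two paths first diverge, and exploits an essential \emph{asymmetry}: $\ell_{b+1}$ lies on $\vec{P}$ but \emph{not} on $\vec{P}'$. On the $\vec{P}$ side the cyclic shift applies, giving $\sigma(\ell_{b+1}) = \pi(\ell_b)$; on the $\vec{P}'$ side $\phi(\pi',\vec{P}')$ acts as the identity on off-path elements, giving $\sigma(\ell_{b+1}) = \pi'(\ell_{b+1})$. Equating produces exactly $\pi(\ell_b) = \pi'(\ell_{b+1})$. The fact that $\ell_{b+1}\notin\vec{P}'$ is the crucial ingredient, and your proposal never invokes it; evaluating at the last shared element $\ell_b$ cannot produce a mixed identity relating a $\pi$-rank of one element to a $\pi'$-rank of a \emph{different} element. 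You would need to switch the evaluation point and bring in the off-path fact to close the argument.
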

\begin{proof}
Since $\vec{\ell}_{b+1}$ is not in $\vec{P}'$, we have that $\phi(\pi', \vec{P}')(\ell_{b+1})=\pi'(\ell_{b+1})$. Also, $\phi(\pi, \vec{P})(\ell_{b+1})=\pi(\ell_b)$ since $\phi(\pi, \vec{P})$ shifts the elements of path $\vec{P}$ by one. Given that $\phi(\pi, \vec{P})=\phi(\pi', \vec{P}')$, we get $\pi(\ell_b) = \pi'(\ell_{b+1})$.
\end{proof}

In the rest of the proof, we assume that $\pi(\ell_b) < \pi'(\ell_b)$. This is without loss of generality because we have not distinguished the two permutations $\pi$ and $\pi'$ in any other way.

\begin{observation}\label{obs:branch-begin}
$\pi'(\ell_{b+1}) < \pi'(\ell_b)$.
\end{observation}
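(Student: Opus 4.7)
The plan is to directly chain the identity established in Observation~\ref{obs:equal_element} with the without-loss-of-generality hypothesis $\pi(\ell_b) < \pi'(\ell_b)$ that was fixed in the paragraph immediately preceding the statement. Observation~\ref{obs:equal_element} already gives $\pi'(\ell_{b+1}) = \pi(\ell_b)$, so substituting this into the assumed strict inequality yields
\[
\pi'(\ell_{b+1}) \;=\; \pi(\ell_b) \;<\; \pi'(\ell_b),
\]
which is exactly what \cref{obs:branch-begin} asserts. No further structural information about the query paths, the branch, or the shift map $\phi$ is needed at this stage; all of the work is being done by the previous observation.

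Since the statement is a one-line consequence of an equality that has already been derived, there is no real obstacle. The only point worth signalling explicitly in the write-up is the appeal to the WLOG convention: had we instead set things up with $\pi(\ell_b) > \pi'(\ell_b)$, the symmetric conclusion $\pi(\ell_{b+1}) < \pi(\ell_b)$ would hold by the identical argument with the roles of $\pi$ and $\pi'$ swapped, so picking one direction for the analysis costs nothing and keeps the subsequent case analysis in Section~\ref{sec:proof-aug-path}'s combinatorial arguments unambiguous.
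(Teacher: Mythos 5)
Your proof is correct and follows essentially the same route as the paper: combine the equality $\pi'(\ell_{b+1}) = \pi(\ell_b)$ from \cref{obs:equal_element} with the WLOG inequality $\pi(\ell_b) < \pi'(\ell_b)$ to conclude. (The paper additionally mentions that $\pi'$ is a permutation, but the direct chain you wrote already suffices.)
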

\begin{myproof}
By combining \Cref{obs:equal_element}, our assumption that $\pi(\ell_b) < \pi'(\ell_b)$, and the fact that $\pi'$ is a permutation, we have that $\pi'(\ell_{b+1}) < \pi'(\ell_b)$.
\end{myproof}

\begin{claim}\label{clm:equal-ranks}
If $\pi(f) < \pi(\ell_b)$ or $\pi'(f) < \pi(\ell_b)$ for some element $f$, then $\pi(f) = \pi'(f)$.
\end{claim}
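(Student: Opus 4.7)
The plan is a case analysis on where $f$ sits relative to the paths $\vec{P}$ and $\vec{P}'$, combined with the rank-monotonicity along each path (\Cref{obs:increasing_path}). Set $\sigma := \phi(\pi,\vec{P}) = \phi(\pi',\vec{P}')$. By the definition of $\phi$, $\sigma$ agrees with $\pi$ outside $\vec{P}$ and cyclically shifts the ranks along $\vec{P}$: concretely, $\sigma(\ell_i) = \pi(\ell_{i-1})$ for $i \geq 2$ and $\sigma(\ell_1) = \pi(\ell_{r_1})$; the analogous formulas hold for $\pi'$ along $\vec{P}'$. Equivalently, $\pi(\ell_i) = \sigma(\ell_{i+1})$ for $i \leq r_1 - 1$, and $\pi(\ell_{r_1}) = \sigma(\ell_1)$. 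I split into three regimes.

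First, if $f \notin \vec{P} \cup \vec{P}'$ then $\pi(f) = \sigma(f) = \pi'(f)$ unconditionally, and the conclusion is immediate. Second, if $f$ lies in the shared prefix $\{\ell_1,\ldots,\ell_b\}$, write $f = \ell_i = \ell_i'$ with $i \leq b$. For $i \leq b-1$, the shift formulas applied from both sides give $\pi(\ell_i) = \sigma(\ell_{i+1})$ and $\pi'(\ell_i) = \sigma(\ell_{i+1}')$, and the identity $\ell_{i+1} = \ell_{i+1}'$ (still inside the prefix) forces $\pi(\ell_i) = \pi'(\ell_i)$. For $i = b$, the running assumption $\pi(\ell_b) < \pi'(\ell_b)$ immediately kills the hypothesis of the claim: $\pi(\ell_b)$ is not strictly less than itself, and $\pi'(\ell_b) > \pi(\ell_b)$.

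The remaining regime is that $f$ lies on the ``tail'' of one of the paths, i.e., $f = \ell_j$ or $f = \ell_j'$ with $j > b$. Here I will show that both $\pi(f)$ and $\pi'(f)$ are at least $\pi(\ell_b)$, so the hypothesis is vacuous. For $f = \ell_j$ with $j > b$, monotonicity on $\vec{P}$ already gives $\pi(\ell_j) > \pi(\ell_b)$. If additionally $\ell_j \notin \vec{P}'$, then $\pi'(\ell_j) = \sigma(\ell_j) = \pi(\ell_{j-1}) \geq \pi(\ell_b)$ because $j-1 \geq b$. If instead $\ell_j \in \vec{P}'$, then since prefixes match position-by-position and the paths have distinct elements, $\ell_j = \ell_k'$ forces $k > b$, and monotonicity on $\vec{P}'$ together with the WLOG assumption yields $\pi'(\ell_j) > \pi'(\ell_b) > \pi(\ell_b)$. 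The case $f = \ell_j'$ with $j > b$ is handled symmetrically, with the roles of $\pi,\vec{P}$ and $\pi',\vec{P}'$ swapped and the WLOG assumption used to push $\pi'(\ell_j') > \pi(\ell_b)$.

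The main technical point I expect to pin down is the last case when the two tails happen to share elements (so $f$ lies on both $\vec{P}$ and $\vec{P}'$ beyond the branch point): the shift formulas coming from either path must then describe the same $\sigma$, but this is automatic since $\sigma$ is a single well-defined permutation, and so the lower bounds on $\pi(f)$ and $\pi'(f)$ go through regardless of which path we argue from. Combining the three regimes yields the claim.
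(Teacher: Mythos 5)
Your proposal is correct and takes essentially the same route as the paper: a case analysis on where $f$ sits (outside both paths / in the shared prefix / at the branch point / in a tail), combined with rank-monotonicity along each path and the cyclic-shift identity for $\phi$. You group the paper's five cases into three regimes, but the underlying ingredients and conclusions match.

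One place you are slightly more careful than the paper's written argument: for $f = \ell_j$ with $j > b$, the paper asserts $\pi'(f) = \pi(\ell_{i-1})$, which is only literally true when $f \notin \vec{P}'$; you explicitly split on whether $\ell_j$ also lies on $\vec{P}'$ and use monotonicity along $\vec{P}'$ (plus the WLOG ordering $\pi(\ell_b) < \pi'(\ell_b)$) in the overlapping subcase. The paper reaches the same conclusion because that overlap is also covered by its fifth case, so there is no gap either way — but your version makes the disjointness issue explicit, which is a small improvement in rigor rather than a different proof.
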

\begin{proof}
There are five different possible cases for $f$:
\begin{itemize}
    \item $f \notin P \cup P'$: Since $\phi$ only changes the rank of elements on the query-path and $\phi(\pi, \vec{P})(f)=\phi(\pi', \vec{P}')(f)$, we have $\pi(f) = \pi'(f)$.
    \item $f \in \{\ell_1, \ldots, \ell_{b-1}\}$: Since $\phi(\pi, \vec{P})(\ell_{i+1}) = \phi(\pi', \vec{P}')(\ell_{i+1})$ for $1 \leq i < b$, we have $\pi(\ell_i) = \pi'(\ell_i)$. Hence, $\pi(f) = \pi'(f)$.
    \item $f=\ell_b$: In this case, condition $\pi(f) < \pi(\ell_b)$ does not hold since $\pi(f) = \pi(\ell_b)$. Also, $\pi'(f)=\pi'(\ell_b) > \pi(\ell_b)$ by our assumption. Therefore, condition $\pi'(f) < \pi(\ell_b)$ does not hold.
    \item $f\in \{\ell_{b+1}, \ldots, \ell_{r_1}\}$: By \Cref{obs:increasing_path}, we have $\pi(f) > \pi(\ell_b)$. Therefore, condition $\pi(f) < \pi(\ell_b)$ does not hold. Let $f=\ell_i$ for $i > b$. Since $\phi(\pi, \vec{P}) = \phi(\pi', \vec{P}')$, we have that $\pi'(f) = \pi(\ell_{i-1}) \geq \pi(\ell_b)$. Therefore, none of the conditions in the claim statement hold.
    \item $f\in \{\ell_{b+1}', \ldots, \ell_{r_2}'\}$: By \Cref{obs:increasing_path}, we have $\pi'(f) > \pi'(\ell_b)$. Combined with our assumption $\pi'(\ell_b) > \pi(\ell_b)$, this gives $\pi'(f) > \pi(\ell_b)$. Let $f=\ell_i'$ for $i > b$. Since $\phi(\pi, \vec{P}) = \phi(\pi', \vec{P}')$, we have  $\pi(f) = \pi'(\ell_{i-1}) \geq \pi'(\ell_b) > \pi(\ell_b)$. Therefore, none of the conditions in the claim statement hold.
\end{itemize}

Each of the cases above either contradicts a condition of the claim, or satisfies $\pi(f) = \pi'(f)$. The proof is thus complete.
\end{proof}

\begin{observation}\label{obs: similar-freeze}
Let $f$ be a \start{} element such that $\pi(f) = \pi'(f)$. Then $f$ is frozen in permutation $\pi$ iff it is frozen in permutation $\pi'$. 
\end{observation}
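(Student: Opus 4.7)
The plan is simply to unfold \Cref{def: freeze} and check that each of its three triggering conditions is invariant under any change of permutation that preserves the rank $\pi(f)$. Writing $f = ((u, v), \start{})$, I will handle the three bullets of \Cref{def: freeze} in order.

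The first bullet, $c_u = c_v$, depends only on the vertex coloring assigned in \cref{line:color} of \cref{alg: algorithm}. The opening of \cref{sec: query-process} explicitly stipulates that all randomness other than the shuffle of $T$ is regarded as fixed (indeed, it may be adversarial), so the color assignment is identical in the two executions and the truth of $c_u = c_v$ is unchanged.

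The second bullet, $\pi(f) \leq \alpha_{j^\star+1}|T|$ or $\pi(f) > \alpha_{j^\star}|T|$, depends only on the numerical value of $\pi(f)$. By hypothesis $\pi(f) = \pi'(f)$, so this condition has the same truth value under $\pi$ and $\pi'$.

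The third bullet, that the corresponding Bernoulli$(1 - p)$ coin in \cref{line:freeze} came up one, is the one place where I need to be careful about what the randomness is indexed by. The convention adopted at the start of \cref{sec: query-process} is that this coin is tied to the element $f$ itself (not to its rank position), so it belongs to the fixed external randomness $F$ and its value is the same under $\pi$ and $\pi'$. Combining the three bullets gives the biconditional.

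I do not anticipate any real obstacle: the statement is a direct case analysis of \Cref{def: freeze}, and the only conceptual point—that vertex colors and freezing coins are per-element/per-vertex randomness rather than per-rank randomness—is already isolated and justified in the preamble to \cref{sec: query-process}.
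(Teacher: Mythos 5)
Your proof is correct and follows essentially the same approach as the paper: the paper's argument is a one-sentence observation that the colors, Bernoulli coins, and $j^\star$ are all fixed across both executions, so freezing can only differ via $\pi(f)\neq\pi'(f)$, which is ruled out by hypothesis. You simply unpack the three bullets of \cref{def: freeze} one by one, which is the same reasoning made more explicit.
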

\begin{myproof}
Since we fix the color of vertices, Bernoulli random variables in \cref{line:freeze} of \cref{alg: algorithm}, and the index $j^*$ that is used in \cref{alg: algorithm} for both permutations $\pi$ and $\pi'$ over $T$, the only way that one of the $f$ and $f'$ is frozen and the other one is not, is when $\pi(f) \neq \pi(f')$.
\end{myproof}

\begin{claim}\label{clm: impossible-branch}
$\ell_{b+1} \in \MS(G, \pi')$.
\end{claim}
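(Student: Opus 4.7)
The plan is to show that $\ell_{b+1}$ is added to the matching $M$ or $S$ by \cref{alg: algorithm} at the very moment it is processed under $\pi'$, which by \Cref{obs:equal_element} happens at the same logical ``time'' $\pi(\ell_b)=\pi'(\ell_{b+1})$ at which $\ell_b$ is processed under $\pi$. Equivalently, by the correctness of the edge oracles (\Cref{cl: start-oracle-correctness,cl: extend-oracle-correctness}), the goal is to show that the appropriate oracle returns \true{} when invoked on $\ell_{b+1}$ under $\pi'$.

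The first step is to observe that by \Cref{clm:equal-ranks}, every element $f$ with $\pi(f)<\pi(\ell_b)$ satisfies $\pi(f)=\pi'(f)$, and by \Cref{obs: similar-freeze} its freezing decision is identical under $\pi$ and $\pi'$. Consequently, the configuration of $M$, $S$, and the set of frozen vertices right before processing the element at rank $\pi(\ell_b)$ is the same under both permutations. The second step is to compare the two processings occurring at this rank: under $\pi$ the algorithm considers $\ell_b$ while under $\pi'$ it considers $\ell_{b+1}$, and by \Cref{def: branch} these two elements share the vertex at which the paths branch, so all admission conditions in \cref{alg: algorithm} that depend on the shared endpoint coincide. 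The third step is to handle the non-shared endpoint: here one invokes the hypothesis that the branch at $\ell_b$ is invalid (one of the three configurations in \Cref{def: valid-branch}) to align the types $X_{b+1}$ and $X_b$ and to rule out obstructions, using \Cref{obs:branch-begin} ($\pi'(\ell_{b+1})<\pi'(\ell_b)$) to conclude that $\ell_b$ has not yet been processed under $\pi'$ when $\ell_{b+1}$ is examined.

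The main obstacle I anticipate is case (B) where $X_b=\start{}$ but $X_{b+1}=X_{b+1}'=\extend{}$: here $\ell_b$ would be added to $M$ under $\pi$ as a \start{} element, whereas $\ell_{b+1}$ must be added to $S$ under $\pi'$ as an \extend{} element, and the latter requires the more intricate \cref{alg: algorithm} conditions -- distinct colors, combined $M$-degree at most one, and neither endpoint being frozen. Translating ``$\ell_b$'s \EOS{} fires under $\pi$'' into these conditions for $\ell_{b+1}$ under $\pi'$ is the subtle part, and must exploit that $\ell_b$ itself is not yet processed at time $\pi'(\ell_{b+1})$ under $\pi'$, so neither of its endpoints has been frozen by the earlier execution. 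The other two invalid cases ($X_{b+1}$ matches $X_b$ in type) are structurally simpler, and reduce to a direct transfer of admission conditions from $\ell_b$ under $\pi$ to $\ell_{b+1}$ under $\pi'$ via the shared state established in the first step.
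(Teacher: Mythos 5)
The core mechanism you are leaning on does not actually hold. When you write in your anticipated obstacle that ``$\ell_b$ would be added to $M$ under $\pi$ as a \start{} element,'' you are conflating ``$\ell_b$ lies on the query-path $\vec{P}$'' with ``$\ell_b \in \MS(G, \pi)$.'' These are not the same: $\ell_b$ being on $\vec{P}$ only means $\EOS(\ell_b, \cdot, \pi)$ (or $\EOE$) was \emph{called}; whether it returns \true{} is decided by the deeper recursion $\ell_{b-1}, \ldots, \ell_1$ and is entirely open at this point. The same misconception infects your step 2: you speak of ``transferring admission conditions from $\ell_b$ under $\pi$ to $\ell_{b+1}$ under $\pi'$,'' but you have no admission of $\ell_b$ to transfer.

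The information you do get from $\vec{P}$ being a valid query-path is weaker but different: the edge oracle for $\ell_{b+1}$ under $\pi$ iterated over all incident elements (on the shared side) of rank $<\pi(\ell_b)$ and did not terminate before reaching $\ell_b$, and similarly the oracle for $\ell_{b+2}$ reached $\ell_{b+1}$ without terminating (this is what controls the non-shared endpoint via the $ST_w$ flag). This is exactly what the paper exploits, phrased as a contradiction: if $\ell_{b+1}$ had a blocker under $\pi'$, then by \Cref{clm:equal-ranks} the same blocker exists at the same rank under $\pi$, and would have forced $\EOE(\ell_{b+1}, \cdot, \pi, ST_w)$ to return \false{} before rank $\pi(\ell_b)$, so $\ell_b$ would never be queried — contradicting that $\vec{P}$ is a valid query-path. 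Your direct formulation could be made to work along these lines, but only once the source of the constraint is correctly identified as the oracle's traversal (not $\ell_b$'s membership). Relatedly, your step 3 appeals to the invalidity of the branch and to \Cref{obs:branch-begin}, but the paper's proof of this particular claim uses neither: the invalidity hypothesis is only consumed later, in the proof of \Cref{lem:few-branch}, after \Cref{clm: impossible-branch} is established. The case analysis the proof actually needs is not over $(X_b, X_{b+1})$ types but over the two ways $\ell_{b+1}$ could be blocked (two \start{} blockers versus a single blocker), because those correspond to the distinct termination conditions of $\EOE$.
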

\begin{proof}
We prove the claim by contradiction. Assume that $\ell_{b+1} \notin \MS(G, \pi')$. There are two possible scenarios for $\ell_{b+1}$ not to be in $\MS(G, \pi')$:
\begin{itemize}
    \item $X_{b+1} = \extend{}$ and $\ell_{b+1} \notin \MS(G, \pi')$ because of two \start{} elements $f, f' \in \MS(G, \pi')$:\\[0.3cm] Note that $\pi'(f) < \pi'(\ell_{b+1})$ and $\pi'(f') < \pi'(\ell_{b+1})$ since $f, f' \in \MS(G, \pi')$ and $\ell_{b+1} \notin \MS(G, \pi')$. On the other hand, by \Cref{obs:equal_element}, we have that $\pi(\ell_b) = \pi'(\ell_{b+1})$. Thus, $\pi'(f) < \pi(\ell_b)$ and $\pi'(f') < \pi(\ell_b)$. Hence, by \Cref{clm:equal-ranks}, $\pi(f) = \pi'(f)$ and $\pi(f') = \pi'(f')$, which implies that $f, f' \in \MS(G,\pi)$ since $\pi$ and $\pi'$ are identical for ranks smaller than $\pi(\ell_b)$. Moreover, by \Cref{obs: similar-freeze}, each of $f$ and $f'$ are either frozen in both $\pi$ and $\pi'$ or not. Also, both $f$ and $f'$ are not in path $P$ since $\pi(f) < \pi(\ell_b)$ and $\pi(f') < \pi(\ell_b)$. Let $e_{b+1} = (w, y)$ where $y$ is the shared endpoint with $e_{b}$. Without loss of generality, assume that $f$ is incident to $w$ and $f'$ is incident to $y$. Note that, both of $f$ and $f'$ cannot be incident to the same endpoint of $e_{b+1}$ since \start{} elements create a maximal matching. Since both edge oracle and vertex oracle queries edges in increasing order, when $\EOE(\ell_{b+1}, y, \pi, ST_w)$ was called, $ST_w$ must be \true{} since the edge oracle already queried element $f$ before. Furthermore, $\EOE(\ell_{b+1}, y, \pi, ST_w)$ calls edge oracle for $f'$ before $\ell_b$ since $\pi'(f) < \pi(\ell_b)$. This implies that $(v, \pi)$-query-path $\vec{P}$ is not a valid $(v, \pi)$-query-path since the $\EOE(\ell_{b+1}, y, \pi, ST_w)$ terminates after calling the edge oracle for element $f'$ (see \Cref{fig: invalid-path}).

    \item $\ell_{b+1} \notin \MS(G, \pi')$ because of a single element $f \in \MS(G, \pi')$:\\[0.3cm] Let $y$ be the shared endpoint of $e_b$ and $e_{b+1}$. With the same argument as in the previous case, we get that $f \in \MS(G, \pi)$ and $\pi(f) < \pi(\ell_b)$. Thus, by \Cref{obs: similar-freeze}, element $f$ is either frozen in both $\pi$ and $\pi'$ or in neither one. Note that either both of $f$ and $\ell_{b+1}$ are \start{} elements, or $f$ is a frozen element, or both of $f$ and $\ell_{b+1}$ are \extend{} elements. Hence, by \Cref{obs: extend-then-start}, $f$ must be queried before $\ell_{b+1}$ if it is not incident to $y$ in the edge oracle for permutation $\pi$ which implies that the edge oracle will not query $\ell_{b+1}$. Furthermore, if $f$ is incident to $y$, edge oracle queries $f$ before $\ell_b$ which implies that it terminates and $(v, \pi)$-query-path $\vec{P}$ is not a valid $(v, \pi)$-query-path. \qedhere
\end{itemize}
\end{proof}

\begin{figure}[htbp]
\begin{center}
  \includegraphics[scale=0.65]{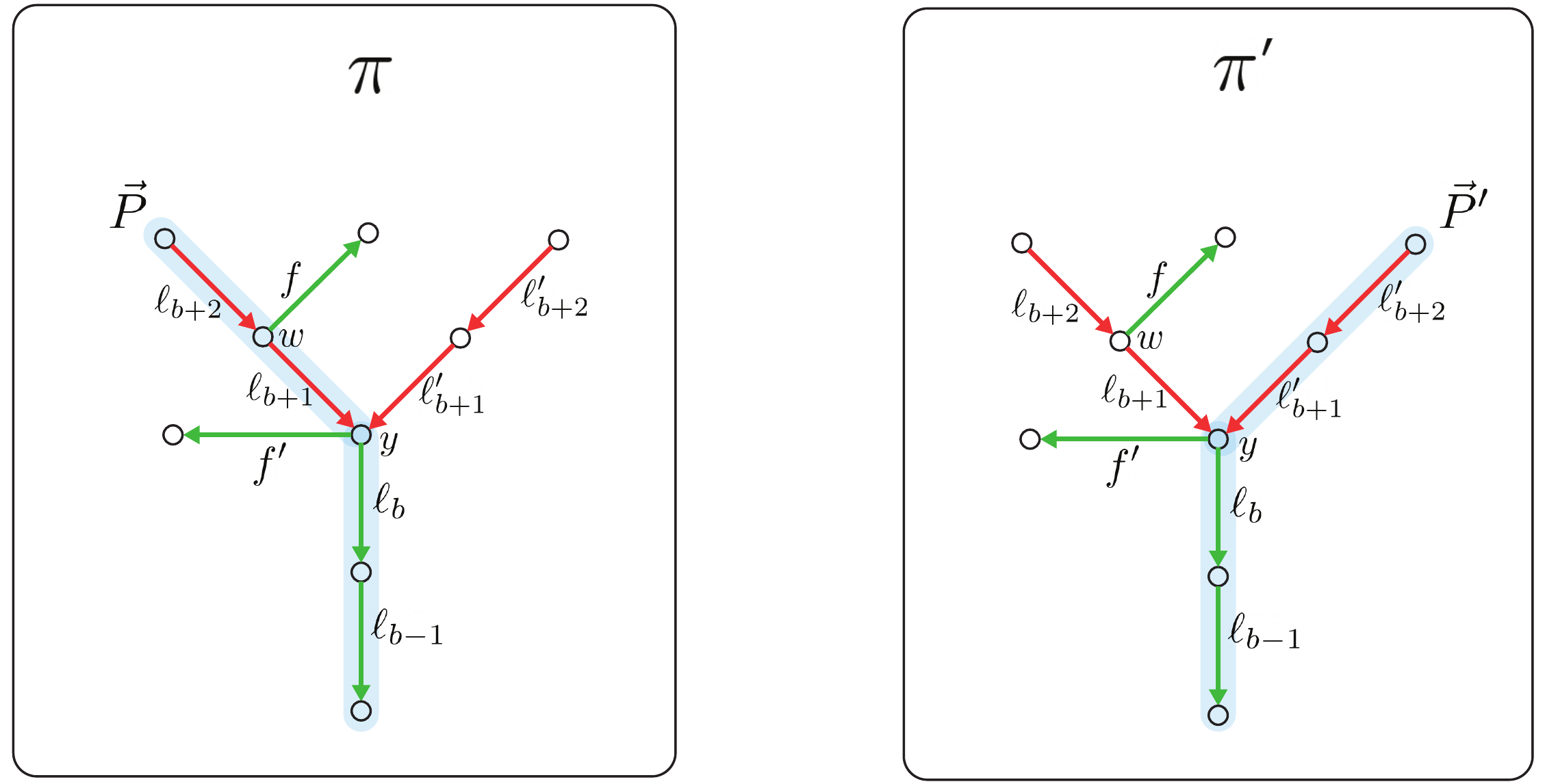}
  \caption{Illustration of a possible case of the first scenario in the proof of \Cref{clm: impossible-branch}. Green edges represent \start{} elements and red edges represent \extend{} elements. The left figure shows a query-path $\vec{P}$ in permutation $\pi$ which is highlighted in light blue. Similarly, the right figure shows a query-path $\vec{P}'$ in permutation $\pi'$. Query-path $\vec{P}$ is not a valid query-path since the $\EOE(\ell_{b+1}, y, \pi, ST_w)$ terminates after calling the edge oracle on element $f'$.}\label{fig: invalid-path}
  \end{center}
\end{figure}

\begin{proof}[Proof of \Cref{lem:few-branch}]
Assume for the sake of contradiction that the branch at $\vec{\ell}_{b}$ is an invalid branch. We prove that query-path $\vec{P}'$ is not a valid $(v, \pi')$-query-path. By \Cref{def: valid-branch}, we have that $X_{b+1} = X_{b+1}'$. Also, by \Cref{clm: impossible-branch}, edge oracle of $\ell_{b+1}'$ for permutation $\pi'$ queries $\ell_{b+1}$ before $\ell_b$ since $\pi'(\ell_{b+1}) < \pi'(\ell_b)$ by \Cref{obs:branch-begin}. Thus, the edge oracle for $\ell_{b+1}'$ terminates at this point. Therefore, $P$ and $P'$ contains no invalid branch.
\end{proof}

Now we are ready to complete the proof of \Cref{lem: few-likely-neighbors}.

\begin{proof}[Proof of \Cref{lem: few-likely-neighbors}]
As above, consider two query paths $\vec{P} = (\vec{\ell}_{r_1}, \ldots, \vec{\ell}_1)$ and $\vec{P'}=(\vec{\ell}_{r_2}', \ldots, \vec{\ell}_1')$ that end at $\vec{\ell} = \vec{\ell}_1 = \vec{\ell}_1'$ and suppose that $\phi(\pi,\vec{P})=\phi(\pi',\vec{P}')$. 

If $\ell$ is an \extend{} element, then all the elements in the two query paths $\vec{P}$ and $\vec{P}'$ must be \extend{} by \cref{obs: extend-then-start}. Therefore, $\vec{P}$ and $\vec{P'}$ cannot have a valid branch since a valid branch, by \cref{def: valid-branch}, requires two \start{} elements. Since \cref{lem:few-branch} asserts $\vec{P}$ and $\vec{P'}$ cannot have invalid branches either, one of $P$ and $P'$ must be a subpath of the other. Note that all paths that end at $\ell$ must be a subpath of the longest query path since we do not have a branch which implies that all paths must have different lengths. Therefore, there are at most $\beta$ query-paths that end at $\vec{\ell}$ since the length of the longest path is at most $\beta$.

Now suppose that $\ell$ is a \start{} element. Assume that $\vec{P}$ and $\vec{P}'$ branch at $\vec{\ell}_b$. Since this cannot be an invalid branch by \cref{lem:few-branch}, without a loss of generality, assume that $\ell_{b+1}'$ is an \extend{} element. By \Cref{obs: extend-then-start}, all $\ell_{b+1}', \ell_{b+2}', \ldots, \ell_{r_2}'$ must be \extend{} elements. Let $\mathcal{P}$ be the set of all query-paths that end at $\vec{\ell}$, and $\vec{P}_s \in \mathcal{P}$ be the path with maximum number of \start{} elements (break the tie with the longest length of \extend{} elements of the path). Note that there is no other \start{} element in the paths of $\mathcal{P}$ other than \start{} elements of $P_s$. Otherwise, we have an invalid branch since \start{} elements appear before \extend{} elements by \Cref{obs: extend-then-start}. 

Let $(\vec{\ell}_s, \ldots, \vec{\ell}_1)$ be the subpath of $\vec{P}_s$ consisting of \start{} elements. We claim that for each $\vec{\ell}_i$ ($i < s$), if there exist two query-paths in $\mathcal{P}$ that branch with $\vec{P}_s$ at $\vec{\ell}_i$, then one of them must be a subpath of the other. To see this, assume that there are two paths $\vec{P}$ and $\vec{P}'$ such that they both have a branch with $\vec{P}_s$ at $\vec{\ell}_i$. Let $\vec{f}$ and $\vec{f}'$ be the next elements on $\vec{P}$ and $\vec{P}'$. Note that it is possible that $\vec{f} = \vec{f}'$. By \Cref{def: valid-branch}, $f$ and $f'$ are \extend{} elements. Hence, the first elements that are not shared in both $\vec{P}$ and $\vec{P}'$ are \extend{} elements which means $\vec{P}$ and $\vec{P}'$ has an invalid branch.

\begin{figure}[htbp]
\begin{center}
  \includegraphics[scale=0.75]{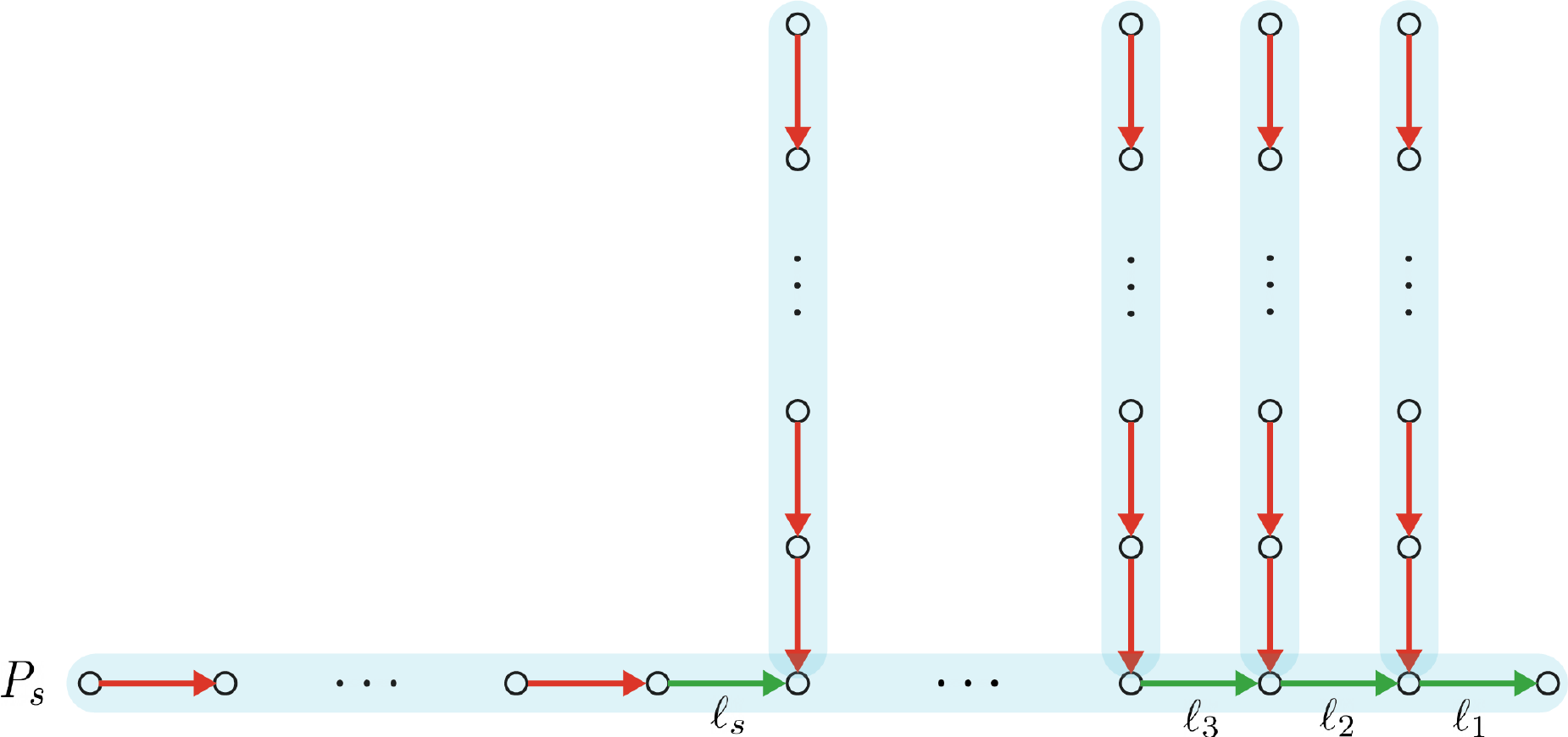}
  \caption{Illustration of $\hat{\mathcal{P}}$ and $P_s$. The green edges represent \start{} elements and the red edges represent \extend{} elements. Paths in $\hat{\mathcal{P}}$ are highlighted in light blue.}\label{fig: all-query-paths}
  \end{center}
\end{figure}

On the other hand, note that there is no path in $\mathcal{P}$ that has a branch with $\vec{P}_s$ in \extend{} elements of $\vec{P}_s$ since there is no invalid branch by \Cref{lem:few-branch}. Let $\hat{\mathcal{P}}$ be the union of $\vec{P}_s$ and the set of paths in $\mathcal{P}$ such that they have a valid branch with $\vec{P}_s$, and they are not a subgraph of other paths in $\mathcal{P}$ (see \Cref{fig: all-query-paths}).

By the above claim, $|\hat{\mathcal{P}}| \leq s + 1$. To complete the proof, assume that for each edge between $\pi_A \in A_L$ and $\pi_B \in B$ in graph $H$, we write a label $(\vec{P}', |\vec{P}|)$ where $\vec{P}$ is the query-path corresponding to the edge between $\pi_B$ and $\pi_A$, and $\vec{P}' \in \hat{\mathcal{P}}$ such that $\vec{P} \subseteq \vec{P}'$ (if there are multiple choices for $\vec{P}'$, choose the longest path). Since all labels must be different, and by definition of $A_L$ we have that all query-paths have a length of at most $\beta$, there are at most $(s+1)\beta \leq 2\beta^2$ different labels, where the last inequality followed by the fact that $|\vec{P}_s| \leq \beta$. 
\end{proof}

\subsubsection{Proof of \cref{lem: few-long-path}}
In this section, we prove that it is very unlikely to have long query-paths during the recursive calls of edge oracle. Our proof is inspired by \cite{Blelloch12}, who proved that the parallel round complexity of greedy maximal independent set is $O(\log^2 n)$. Our arguments are slightly different because our algorithm is not an instance of greedy MIS.

We define a $\theta$-prefix of permutation $\pi$ over elements $T$ to be the first $\theta |T|$ elements in permutation $\pi$. Suppose that instead of running the edge oracle on the whole permutation of elements, we choose a $\theta$-prefix of elements and run the edge oracle for elements of the prefix. The first useful observation is that if the algorithm calls edge oracle for one of the elements in a $\theta$-prefix of $T$, all recursive calls during this call will be on elements in the prefix.

\begin{observation}\label{obs: partition-query-lower-rank1}
Let $\ell$ be an element in $T$ and $\pi$ be a permutation over $T$. Then $\EOS(\ell, \cdot, \pi)$ or $\EOE(\ell, \cdot, \pi, \cdot)$ only recursively calls edge oracle for elements with a lower rank.
\end{observation}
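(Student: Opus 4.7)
The plan is to prove this observation by direct inspection of the code of \cref{alg: start_oracle} and \cref{alg: extend_oracle}, followed by a straightforward induction on the rank of the element being queried.

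First, I would note the base behavior of each edge oracle. In \cref{alg: start_oracle}, the sequence $\ell_1, \ldots, \ell_r$ that the oracle iterates over is by construction defined to satisfy $\pi(\ell_1) < \pi(\ell_2) < \ldots < \pi(\ell_r) < \pi(\ell)$. The only new recursive edge-oracle invocations made directly by $\EOS(\ell, \cdot, \pi)$ are calls of the form $\EOS(\ell_i, v_i, \pi)$, each on an element $\ell_i$ with strictly smaller rank than $\ell$. The same is true verbatim for \cref{alg: extend_oracle}: the elements $\ell_1, \ldots, \ell_r$ iterated over are again defined by $\pi(\ell_1) < \ldots < \pi(\ell_r) < \pi(\ell)$, and the only new recursive calls are to $\EOS(\ell_i, v_i, \pi)$ or $\EOE(\ell_i, v_i, \pi, ST_u)$ on such $\ell_i$.

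From here, I would finish by a simple induction on $\pi(\ell)$ (or equivalently on the depth of the recursion). The base case, when $\pi(\ell) = 1$, is trivial since the iterated lists are empty and no recursive calls are made. For the inductive step, suppose the statement holds for every element of rank strictly smaller than $\pi(\ell)$. Every direct recursive call made by $\EOS(\ell, \cdot, \pi)$ or $\EOE(\ell, \cdot, \pi, \cdot)$ is, by the observation above, on an element $\ell_i$ with $\pi(\ell_i) < \pi(\ell)$, and by the inductive hypothesis every further recursive call triggered inside $\EOS(\ell_i, \cdot, \pi)$ or $\EOE(\ell_i, \cdot, \pi, \cdot)$ is on an element of rank at most $\pi(\ell_i) - 1 < \pi(\ell)$. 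Hence every recursive call in the entire call tree rooted at $\EOS(\ell, \cdot, \pi)$ (or $\EOE(\ell, \cdot, \pi, \cdot)$) is on an element of rank strictly less than $\pi(\ell)$, completing the induction.

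There is no serious obstacle here: the observation is essentially a restatement of the fact that both edge oracles filter their candidate list to strictly lower-rank elements before recursing. The only thing to be careful about is to phrase the statement uniformly for both \EOS{} and \EOE{}, which the induction above does.
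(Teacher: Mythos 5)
Your proposal is correct and takes essentially the same approach as the paper: the paper's proof is a one-line appeal to the definitions of \cref{alg: start_oracle} and \cref{alg: extend_oracle}, noting that each edge oracle iterates only over elements of strictly lower rank before recursing. Your added induction on $\pi(\ell)$ to extend the claim to the entire call tree is sound but not strictly needed, since the observation as stated only concerns the direct recursive calls made by a single invocation.
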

\begin{proof}
Proof can be easily obtained by how we defined \Cref{alg: start_oracle} and \Cref{alg: extend_oracle} since each element only recursively calls elements with a lower rank.
\end{proof}

The high-level approach for this section is that we partition elements of $T$ to $O(\log n)$ continuous ranges and show that the length of the longest query-path inside each of these ranges is $O(\log n)$. Therefore, since for each element the edge oracle only calls elements with a lower rank, the length of the longest path in total should be at most $O(\log^2 n)$.

\noindent\textbf{Element Partitioning:} Let $\mathcal{P}_1$ be a $\theta_1$-prefix of $T$. Suppose that we run \Cref{alg: algorithm} on $\mathcal{P}_1$. Let $\mathcal{A}_1$ be the set of elements that are chosen by \Cref{alg: algorithm} for the selected subgraph $\MS(G, \pi)$. Also, let $\mathcal{D}_1$ be the set of elements outside the prefix $\mathcal{P}_1$ that cannot be in $\MS(G, \pi)$ due to the existence of some elements in $\mathcal{A}_1$. We delete $\mathcal{P}_1 \cup \mathcal{D}_1$ from $T$ and similarly choose $\theta_2$-prefix of the remaining elements. Let $\mathcal{P}_1, \mathcal{P}_2, \ldots, \mathcal{P}_d$ be the partitions with parameters $\theta_1, \theta_2, \ldots, \theta_d$, and $\mathcal{D}_1, \mathcal{D}_2, \ldots, \mathcal{D}_d$ be sets of deleted elements as defined above. We let $\theta_i = \Omega(2^i\log(n)/(2K\Delta))$ for the $i$-th partition and $d = O(\log n)$.

\begin{observation}\label{obs: partition-query-lower-rank2}
Let $\ell$ be an element in $\mathcal{P}_j \cup \mathcal{D}_j$ as defined above. Then $\EOS(\ell, \cdot, \pi)$ or $\EOE(\ell, \cdot, \pi, \cdot)$ only recursively calls edge oracle for elements in $\{\mathcal{P}_i\}_{i \leq j} \cup \{\mathcal{D}_i\}_{i < j}$.
\end{observation}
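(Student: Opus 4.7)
The plan is to combine \Cref{obs: partition-query-lower-rank1} with the structural property that $\mathcal{P}_j$ is a prefix (in the $\pi$-order) of $T\setminus R_{j-1}^*$, where $R_{j-1}^* := \bigcup_{i<j}\mathcal{P}_i \cup \bigcup_{i<j}\mathcal{D}_i$. Let $r_j^{\max}$ denote the maximum rank of any element in $\mathcal{P}_j$. The first thing I would establish is the simple structural fact that every element $f\in T$ with $\pi(f)\le r_j^{\max}$ lies in $R_{j-1}^*\cup\mathcal{P}_j = \{\mathcal{P}_i\}_{i\le j}\cup\{\mathcal{D}_i\}_{i<j}$: any such $f$ is either in $R_{j-1}^*$, or it is not, in which case it sits inside the $\pi$-prefix of $T\setminus R_{j-1}^*$ that defines $\mathcal{P}_j$. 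It therefore suffices to show that for every $\ell\in\mathcal{P}_j\cup\mathcal{D}_j$, the oracle rooted at $\ell$ only makes recursive calls to elements of rank at most $r_j^{\max}$.

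For $\ell\in\mathcal{P}_j$ this is immediate: since $\pi(\ell)\le r_j^{\max}$, \Cref{obs: partition-query-lower-rank1} guarantees that the entire sub-tree of recursive calls issued by $\EOS(\ell,\cdot,\pi)$ or $\EOE(\ell,\cdot,\pi,\cdot)$ is confined to elements of rank strictly less than $\pi(\ell)\le r_j^{\max}$.

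The substantive case is $\ell\in\mathcal{D}_j$, where $\pi(\ell)$ may exceed $r_j^{\max}$. Here the plan is to show that the top-level oracle call on $\ell$ terminates upon processing an incident element of rank at most $r_j^{\max}$. Once this is done, each direct recursive call from the top call is to an element of rank at most $r_j^{\max}$, and applying \Cref{obs: partition-query-lower-rank1} to each one confines the whole sub-tree to rank at most $r_j^{\max}$; the structural fact above then places everything in $\{\mathcal{P}_i\}_{i\le j}\cup\{\mathcal{D}_i\}_{i<j}$. To bound the terminating rank I would exploit the definition of $\mathcal{D}_j$: $\ell$ is killed by $\mathcal{A}_j\subseteq \bigcup_{i\le j}\mathcal{P}_i$ (so some killer has rank $\le r_j^{\max}$), yet $\ell\notin\mathcal{D}_i$ for any $i<j$. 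Because \EOS{} and \EOE{} iterate incident elements in increasing rank and return \false{} at the first event witnessing $\ell\notin\MS(G,\pi)$, this first event is triggered by an element of $\bigcup_{i\le j}\mathcal{P}_i$ of rank at most $r_j^{\max}$: for $\ell=(e,\start{})$ it is simply the first lower-rank incident \start{} that lies in $M$, while for $\ell=(e,\extend{})$ it is whichever of an incident \extend{} at $u$ in $S$, a frozen incident \start{} at $u$ in $M$, or an incident \start{} at $u$ in $M$ combined with $ST_w=\true{}$ is encountered first.

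The main obstacle is the \extend{} sub-case, because the $\EOE$ termination condition couples $u$-side information (explored by $\EOE$ itself) with $w$-side information (encoded in $ST_w$ by the caller). I would need to argue carefully that, under $\ell\in\mathcal{D}_j\setminus\bigcup_{i<j}\mathcal{D}_i$, the first $u$-side event that forces $\EOE$ to return \false{} is witnessed by an element from $\bigcup_{i\le j}\mathcal{P}_i$ rather than from some $\mathcal{P}_{i>j}$. The assumption $\ell\notin\mathcal{D}_{<j}$ is what rules this out: any hypothetical late witness of rank exceeding $r_j^{\max}$ would have needed a cooperating lower-rank element (either at $u$ via the $\EOS$ chain, or at $w$ via $ST_w$) that would already have placed $\ell$ into some $\mathcal{D}_{i'}$ with $i'\le j$, and that cooperating element itself lies in $\bigcup_{i'\le j}\mathcal{P}_{i'}$. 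With this verified the rank bound follows and the observation is proved.
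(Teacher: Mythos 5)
Your approach matches the paper's own argument in both structure and substance: the $\ell\in\mathcal{P}_j$ case follows directly from \Cref{obs: partition-query-lower-rank1} plus the prefix structure, and the $\ell\in\mathcal{D}_j$ case is handled by observing that the oracle walks incident elements in increasing rank and terminates at a witness coming from $\mathcal{A}_j\subseteq\mathcal{P}_j$, hence at rank at most $r_j^{\max}$. Your ``structural fact'' that every element of rank at most $r_j^{\max}$ lies in $\bigcup_{i\le j}\mathcal{P}_i\cup\bigcup_{i<j}\mathcal{D}_i$ is correct and makes explicit something the paper leaves implicit.

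One comment worth making: you are right to flag the $\EOE$ sub-case as the delicate point, and the paper's own one-sentence treatment of the $\mathcal{D}_j$ case elides it entirely. Your proposed resolution (``any late witness at $u$ would need a cooperating lower-rank element that already placed $\ell$ in $\mathcal{D}_{i'}$'') does not quite close the gap, however. Consider $\ell=((u,w),\extend)$ with $w$ matched in $M$ by a \emph{frozen} element $g\in\mathcal{A}_j$ (so $\ell\in\mathcal{D}_j$) and $u$ matched in $M$ by an element $f$ of rank exceeding $r_j^{\max}$. If $\EOE(\ell,u,\pi,ST_w=\true)$ is invoked from $\VO(w,\pi)$ --- which does not short-circuit on frozen $M$-edges the way a nested $\EOE$ call would --- then the walk along $u$ proceeds until it reaches $f$, at rank greater than $r_j^{\max}$. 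Here $f$ itself requires no ``cooperating lower-rank element at $u$''; the cooperation happens entirely on the $w$-side via $ST_w$, and the assumption $\ell\notin\mathcal{D}_{<j}$ does not prevent the walk at $u$ from going long. The paper's proof has exactly the same loose end, so this does not distinguish your proposal from the published argument, but it is worth being aware that the clean ``first witness has rank at most $r_j^{\max}$'' claim needs the additional observation that whenever the $w$-side alone accounts for the kill, the \emph{caller} of $\EOE$ would already have refrained from recursing on $\ell$; that observation holds for nested $\EOE$/$\EOS$ callers but is not checked by $\VO$, and resolving this cleanly would require either a refinement of the statement or a more careful accounting of which calls actually occur.
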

\begin{proof}
If $\ell \in \mathcal{P}_j$, since all the elements with lower ranks are in $\{\mathcal{P}_i\}_{i \leq j} \cup \{\mathcal{D}_i\}_{i < j}$, by \Cref{obs: partition-query-lower-rank1}, the proof is complete. If $\ell \in \mathcal{D}_j$, there are some elements in $\mathcal{P}_j$ that their existence in $\MS(G, \pi)$ caused $\ell$ not to be in $\MS(G, \pi)$. Hence, since edge oracle recursively calls incident elements in their increasing ranks, edge oracles for $\ell$ will only query incident elements in $\{\mathcal{P}_i\}_{i \leq j} \cup \{\mathcal{D}_i\}_{i < j}$.
\end{proof}

Note that the \start{} elements that appear in matching $M$ of \Cref{alg: algorithm} form a randomized greedy maximal matching, and our query process for \start{} elements also coincides with the query process for greedy matchings. Therefore, we can use the following result of \cite{behnezhad2021} which itself builds on the bound on \cite{FischerTALG} as black-box.

\begin{lemma}[{\cite[Lemma~3.13]{behnezhad2021}}]\label{lem: longest-start-path}
Let $\pi$ be a permutation over elements of $T$. With high probability, the maximum length of a query-path consisting of \start{} elements is $O(\log n)$. 
\end{lemma}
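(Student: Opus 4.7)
The plan is to invoke Lemma~3.13 of \cite{behnezhad2021}, which bounds the maximum recursion depth of the standard local query oracle for random greedy maximal matching. I would set up a reduction showing that \start{}-only query paths in our algorithm correspond exactly to recursion paths in a standard greedy-matching oracle on $G$.

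The first step is to identify $M$ with a random greedy maximal matching of $G$. For each edge $e \in E$, let $\sigma(e)$ denote the minimum $\pi$-value over the $K$ copies of $(e, \start)$ in $T$. Since the $K$ copies per edge are exchangeable, $\sigma$ is a uniformly random permutation of $E$. Moreover, $M = \greedyMM{G, \sigma}$: an edge $e$ is added to $M$ exactly at its minimum-rank \start{} copy provided both endpoints are then unmatched, while later copies of $e$ (or of incident edges) cannot add $e$ because its endpoints are by then already matched.

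The second step is to show that any \start{}-only query path $\vec{P} = (\vec{\ell}_r, \ldots, \vec{\ell}_1)$ under $\pi$ projects to a rank-decreasing walk of $r$ distinct edges under $\sigma$. By \cref{obs: one-start}, each edge appears at most once in $\vec{P}$, so the underlying edges $e_1, \ldots, e_r$ are distinct. The crucial observation is that the \start{} copy of $e_i$ traversed on this path is always the minimum-rank copy, which follows from the increasing enumeration order in \cref{alg: start_oracle}: any higher-rank copy of an edge would be preempted by its minimum-rank copy, which is examined first at the common endpoint. Consequently $\pi(\ell_i) = \sigma(e_i)$, and the monotone decrease $\pi(\ell_{i+1}) < \pi(\ell_i)$ translates to $\sigma(e_{i+1}) < \sigma(e_i)$ along $e_1, \ldots, e_r$. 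The third step is simply to invoke \cite[Lemma~3.13]{behnezhad2021}, which bounds the length of the longest such rank-decreasing walk in the standard greedy-matching oracle by $O(\log n)$ with high probability under a uniformly random edge permutation.

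The main obstacle lies in the reduction at the second step: formally verifying that each step of the recursion descends via the minimum-rank \start{} copy of the underlying edge. This requires reading the caching behavior in \cref{alg: start_oracle} together with the increasing enumeration order carefully, ruling out recursion-stack configurations in which two different copies of the same edge appear. Once this is established, the structure of the \start{}-only query-path matches exactly the recursion structure analyzed in \cite{behnezhad2021}, and the cited bound transfers immediately.
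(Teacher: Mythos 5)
Your proposal is correct and matches the paper's approach: the paper likewise treats $M$ as the random greedy maximal matching induced by the minimum-rank \start{} copies, asserts in one sentence that the \start{}-only query process coincides with the standard greedy-matching oracle, and invokes \cite[Lemma~3.13]{behnezhad2021} as a black box. Your second step simply makes that correspondence explicit, and the min-copy/caching point you flag as the main obstacle is exactly what \cref{obs: one-start} supplies.
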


In the rest of this section, we show that it is unlikely to have a query-path consisting of \extend{} elements with length larger than $O(\log^2 n)$. Since by \Cref{obs: extend-then-start} all \start{} elements appear after \extend{} elements in a query-path, this is enough to show that the length of a query-path is bounded by $O(\log^2 n)$ with high probability.

The following two lemmas are similar to \cite[Lemma~3.1 and Lemma~3.3]{Blelloch12}, however, both are adapted to our setting.

\begin{lemma}\label{lem: reduce-degree}
Suppose that we choose a $\theta$-prefix of a uniformly at random permutation $\pi$ of elements of $T$, where $\theta = a/b$ for positive numbers $a \leq b \leq |T|$ such that $a/b \geq 2/|T|$. Let $\MS_{\theta}(G, \pi)$ be the subgraph produced by \Cref{alg: algorithm} on this prefix. Assume that we remove all other elements outside the prefix that cannot join subgraph $\MS(G, \pi)$ based on the current elements in $\MS_{\theta}(G, \pi)$. All remaining \extend{} elements have at most $b$ incident elements with probability of at least $1 - \frac{2|T|^2}{e^{a/2}}$.
\end{lemma}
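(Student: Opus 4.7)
The plan is to fix an arbitrary \extend{} element $\ell = ((u,v), \extend{})$ and bound the probability of the ``bad event'' that $\ell$ is remaining (its rank exceeds $\theta|T|$ and it was not deleted while processing the prefix) yet has more than $b$ incident elements still remaining. A union bound over the at most $|T|$ choices of $\ell$, with an extra factor-$|T|$ of slack absorbed by either union-bounding over a ``witness'' incident element or by a conservative counting of bad configurations, will deliver the stated $2|T|^2/e^{a/2}$ tail.

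For a fixed $\ell$, let $N(\ell)$ collect the elements of $T$ whose edge shares an endpoint with $(u,v)$; note that any ``remaining incident'' element is in particular outside the prefix. If $|N(\ell)| \leq b$ there is nothing to prove, so assume $|N(\ell)| > b$. The number $X$ of $N(\ell)$-elements falling in the $\theta$-prefix is hypergeometric with mean $\theta|N(\ell)| \geq \theta b = a$, so a standard Chernoff bound for the hypergeometric distribution gives $\Pr[X < a/2] \leq e^{-a/8}$. Conditioned on $X \geq a/2$, I then aim to show that $\ell$ is very likely either outright deleted or already has $X$ of its incident elements eliminated, leaving fewer than $b$ remaining.

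For the latter step, partition $N(\ell) = N_u \cup N_v$ by incident endpoint. Since each edge contributes $K$ \start{} copies and one \extend{} copy, a $K/(K+1)$ fraction of $N_u$ and $N_v$ consists of \start{} copies, and the hypergeometric Chernoff bound can be applied to each side separately using $\theta |N_u|, \theta |N_v| \geq a/2$ whenever $|N_u|, |N_v| \geq b/2$. The key structural observation is that for every \start{} copy $((u,w), \start{})$ lying in the prefix, the maximality of $M_\theta$ forces at least one of $\{u, w\}$ to be matched in $M_\theta$. If $u$ ends up matched in $M_\theta$ and symmetrically $v$ does too (via the $N_v$-side), then $\ell$ violates the $\deg_M(u)+\deg_M(v) \leq 1$ condition and is deleted. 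If instead $u$ is unmatched in $M_\theta$, then for every edge $(u,w)$ whose \start{} copy fell in the prefix, the other endpoint $w$ must be matched in $M_\theta$, and so \emph{every} remaining \start{} copy of those edges outside the prefix is automatically deleted (its edge now has an incident edge in $M_\theta$). Either way, the Chernoff-controlled $X$ bounds the number of eliminated incident elements, pushing the remaining count below $b$.

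The main obstacle will be precisely this dichotomy: the argument that $u$ or $v$ is matched in $M_\theta$ is not immediate, because a lowest-rank incident \start{} copy can be ``blocked'' by its other endpoint being matched first. The saving observation, sketched above, is that blocking is not wasted --- it causes the relevant incident copies to be deleted from the remaining graph, compensating for the lost elimination of $\ell$ itself. Combining the two sides with a short union bound and the Chernoff estimate yields the single-element failure probability $2/e^{a/2}$, and a final union bound over $\ell$ (and the additional $|T|$ slack) gives the claimed $1 - 2|T|^2/e^{a/2}$ success probability.
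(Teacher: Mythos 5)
Your proposal takes a genuinely different route from the paper's, and I believe it has a gap that you would run into when trying to make the ``dichotomy'' step rigorous.

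The paper does not reason about elements falling into the prefix at all. It runs the prefix \emph{sequentially}: at each of the $\theta|T|=a|T|/b$ reveal steps, it declares an element ``live'' if it could still join $\MS_\theta(G,\pi)$, and notes the clean structural fact that once \emph{two} live elements incident to $\ell$ are actually added to $\MS_\theta(G,\pi)$, the element $\ell$ itself is dead (two added start copies at opposite endpoints violate $\deg_M(u)+\deg_M(v)\le 1$; one start and one extend, or two extends, violate the $\deg_S$ constraint). Conditioning on the bad event that $\ell$ keeps $>b$ live incident elements throughout, each reveal step hits one of them with probability $>b/|T|$ and immediately adds it, so at most one hit over $a|T|/b$ steps has probability at most $(1-b/|T|)^{a|T|/b}+\tfrac{a|T|}{b}(1-b/|T|)^{a|T|/b-1}\le 2|T|/e^{a/2}$; a union bound over $|T|$ elements gives the claim. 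The essential quantity tracked is the \emph{number of incident additions to $\MS_\theta$}, and two suffice.

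Your argument instead tracks the number of incident elements that \emph{fall into the prefix} (the hypergeometric $X$), and then tries to deduce that the remaining incident count drops below $b$. This is where the gap lies. First, $X$ has mean only $\Theta(a)\le b$; putting $\approx a$ elements into the prefix is not by itself enough to bring $|N(\ell)|-(\text{prefix})-(\text{dead})$ down below $b$ when $|N(\ell)|$ is much larger than $a+b$. You attempt to amplify $X$ via the blocking mechanism: if $u$ stays unmatched and $(u,w)$ has a start copy in the prefix, then $w$ is matched and the remaining start copies of $(u,w)$ die. But this only touches edges at $u$ that actually landed a start copy in the prefix, which is a $\approx\theta K$ fraction of them; when $\theta K<1$ (a regime the lemma must cover, since in its application $b=2K\Delta/2^i$ can be much larger than $aK$), most edges at $u$ have no start copy in the prefix and all of their start copies remain live, so the blocking argument does not amplify. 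Second, even for the edges $(u,w)$ that are blocked (with $w$ matched and $u$ unmatched), the \emph{extend} copy of $(u,w)$ is not dead: $\deg_M(u)+\deg_M(w)=1\le 1$ still holds and the extend element remains eligible to join $S$, so the ``every remaining copy is deleted'' claim is incorrect for extend copies. There can therefore be up to $\deg(u)+\deg(v)$ live extend copies that your argument never eliminates, and this alone can exceed $b$. The paper sidesteps both issues because it never needs to kill a large fraction of $N(\ell)$; it only needs to add two of them to $\MS_\theta$, which automatically eliminates $\ell$ (and hence the bad event), regardless of how many other incident copies remain live.
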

\begin{proof}
We say an element is \textit{live} if we can add it to subgraph $\MS_{\theta}(G, \pi)$ and   \textit{dead} otherwise. Assume that we sequentially pick $(a|T|/b)$ elements. If the chosen element is live, we add the element to $\MS_{\theta}(G, \pi)$, and mark all incident elements that cannot be in $\MS_{\theta}(G, \pi)$ after adding this element, as dead. If the chosen element is dead, we do nothing. This sequential algorithm is equivalent to choosing a prefix of a permutation and then processing the permutation.

Let $\ell$ be an \extend{} element that is live and has more than $b$ incident live elements after processing the prefix. We show that this event is unlikely. Since at the end of $(a|T|/b)$ rounds, $\ell$ has more than $b$ incident live elements, before all of $(a|T|/b)$ rounds, it has more than $b$ incident live elements. Hence, in round $i$ of the sequential process, with probability of at least $b/(|T| - i) > b/|T|$, one of the incident live elements of $\ell$ will be selected. Note that if more than one incident element of $\ell$ is added to $\MS_{\theta}(G, \pi)$, then $\ell \not\in \MS(G, \pi)$. (It is possible that $\ell$ is not in $\MS(G, \pi)$ because of one incident element, however, we provide a looser bound by only considering the existence of two incident elements.) Thus, the probability that at most one of its incident elements is selected is at most
\begin{align}\label{eq: failure-probability}
    \left(1 - \frac{b}{|T|}\right)^{\frac{a|T|}{b}} + \left(\frac{a|T|}{b} \right) \left(1 - \frac{b}{|T|}\right)^{\frac{a|T|}{b}-1} < \frac{2a|T|}{b} \left(1 - \frac{b}{|T|}\right)^{\frac{a|T|}{2b}} = \frac{2a|T|}{b} \left(1 - \frac{b}{|T|}\right)^{\frac{|T|}{b} \cdot \frac{a}{2}},
\end{align}
where the first term of left-hand side is the probability that none of the incident elements is chosen and the second term is an upper bound for the probability that exactly one of the incident elements is selected. Combining equations (\ref{eq: failure-probability}), $(1 - \frac{b}{|T|})^{\frac{|T|}{b}} < (1/e)$, and $\frac{a}{b} < 1$, the probability of this event is at most $\frac{2|T|}{e^{a/2}}$. Taking a union bound over all elements completes the proof. 
\end{proof}

\begin{corollary}
If $\theta_i = \Omega(2^i \log n / (2K\Delta))$, then all remaining elements have at most $2K\Delta / 2^i$ incident elements after round $i$. 
\end{corollary}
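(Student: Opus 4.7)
The plan is to prove the corollary by induction on $i$, applying \Cref{lem: reduce-degree} round by round. For the base case $i = 0$, every element in $T$ has at most $2K\Delta$ incident elements: each endpoint of an edge $e$ sees at most $\Delta - 1$ other edges, each contributing $K + 1$ copies (one \extend{} and $K$ \start{}) to $T$, together with at most $K$ other copies of $e$ itself.

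For the inductive step, I would apply \Cref{lem: reduce-degree} to the set $T^{(i-1)}$ of elements still alive after round $i-1$. A key observation to establish first is that, conditioning on the previous partitions and their deletions $\mathcal{P}_1, \mathcal{D}_1, \ldots, \mathcal{P}_{i-1}, \mathcal{D}_{i-1}$, the restriction of $\pi$ to $T^{(i-1)}$ remains a uniformly random permutation of $T^{(i-1)}$. This follows from the standard fact that conditioning a uniformly random permutation on the identity and order of a prefix leaves the order of the complement uniform, combined with the fact that each $\mathcal{D}_j$ is determined by $\mathcal{P}_j$ together with the earlier sets, so it carries no information about the relative order of the surviving elements.

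Given this, I set $b := 2K\Delta/2^i$ and $a := C \log n$ for a sufficiently large constant $C$, so that $a/b = \Theta(2^i \log n / (2K\Delta))$. Since the hypothesis gives $\theta_i = \Omega(2^i \log n / (2K\Delta))$, we may assume $\theta_i \geq a/b$, so the $\theta_i$-prefix of $T^{(i-1)}$ contains at least the prefix of length $a|T^{(i-1)}|/b$ to which \Cref{lem: reduce-degree} applies. That lemma then yields that, after round $i$, every remaining live \extend{} element has at most $b = 2K\Delta/2^i$ incident elements, with failure probability at most $2|T^{(i-1)}|^2/e^{a/2} \leq 2|T|^2/n^{C/2}$, which is $1/\poly(n)$ for $C$ chosen large enough. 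A union bound over the $d = O(\log n)$ rounds then gives that the degree bound holds at every round simultaneously with high probability.

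The main obstacle I expect is the conditional-uniformity step: the partitioning is sequentially adaptive because each $\mathcal{P}_j$ is chosen after the earlier $\mathcal{D}_{j-1}$ has been removed, and one must carefully verify that this adaptivity does not break the applicability of \Cref{lem: reduce-degree} at later rounds. Beyond this, the argument reduces to parameter bookkeeping and a routine union bound.
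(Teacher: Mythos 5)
Your proposal is correct and follows the same strategy as the paper's proof: establish the $2K\Delta$ bound on initial degree and then apply \Cref{lem: reduce-degree} once per round with $a = \Theta(\log n)$ and $b = 2K\Delta/2^i$. The paper's proof is a two-sentence sketch that leaves the induction and the conditional-uniformity point implicit; you correctly identify and resolve the conditional-uniformity concern (removing $\mathcal{P}_j \cup \mathcal{D}_j$ is a function of the prefix alone, so the surviving elements remain uniformly ordered), and your parameter bookkeeping and union bound over $d = O(\log n)$ rounds are exactly what the paper has in mind.
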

\begin{proof}
At the beginning, each element is incident to at most $2K\Delta$ elements since the degree of each vertex is at most $\Delta$ and we create $K$ copies of each edge. Since $\log |T| = O(\log n)$, the proof can be obtained by \Cref{lem: reduce-degree} with $a = \Omega(\log n)$ and $b = 2K\Delta/2^i$ for partition $i$.
\end{proof}

In the previous lemma, by choosing the appropriate $\theta_i$, the number of incident elements for each \extend{} element reduce by half after removing partition $i$. Hence, there will be at most $O(\log n)$ partitions. In the next lemma, we will show that the length of the longest query-path for \extend{} elements in each of $\mathcal{P}_i$ is bounded by $O(\log n)$.

\begin{lemma}\label{lem: longest-partition-path}
Suppose that all \extend{} elements have at most $x$ incident elements. Let $a$ and $b$ be two positive integers such that $a \geq b$ and consider a randomly ordered $\theta$-prefix from $T$ with $\theta < b/x$. Then the longest query-path consist of \extend{} elements has length $O(a)$ with probability of at least $1 - |T|(b/a)^a$.
\end{lemma}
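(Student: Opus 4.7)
The plan is to bound the probability of a long query-path of \extend{} elements via a union bound over candidate paths. The first step is to identify the combinatorial object. A query-path of length $a$ consisting of \extend{} elements corresponds to a sequence $\ell_1, \ell_2, \ldots, \ell_a$ of \extend{} elements in $T$ satisfying three conditions: (i) consecutive elements $\ell_{i-1}, \ell_i$ are incident (the underlying edges share a vertex), (ii) all $a$ elements lie in the $\theta$-prefix, and (iii) $\pi(\ell_1) > \pi(\ell_2) > \cdots > \pi(\ell_a)$. Condition (iii) follows because, by the description of \EOE{}, every recursive call is made on an incident element of strictly lower rank, so the stack of recursive calls is always decreasing in rank.

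Next, I will union-bound over all incident walks of length $a$ in the ``incidence graph'' on \extend{} elements. Since by assumption each \extend{} element has at most $x$ incident elements (at the point we are examining this partition), the number of walks of length $a$ starting from any fixed element is at most $x^{a-1}$. Multiplying by the number of starting elements, the total count is at most $|T| \cdot x^{a-1}$.

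For a single fixed sequence of $a$ distinct elements, the probability that all $a$ lie in the $\theta$-prefix is at most $\theta^a$ (a straightforward calculation using the hypergeometric bound $\binom{\theta|T|}{a}/\binom{|T|}{a}\leq \theta^a$), and given that they do, the probability that they appear in the specific decreasing rank order required is exactly $1/a!$ by exchangeability of the random permutation restricted to those $a$ elements. Thus the probability that a fixed walk realizes a query-path is at most $\theta^a/a!$, and plugging in $\theta \leq b/x$, the overall union bound evaluates to
\[
|T| \cdot x^{a-1} \cdot \frac{\theta^a}{a!} \;\leq\; \frac{|T| \cdot b^a}{x \cdot a!} \;\leq\; \frac{|T|}{x} \left(\frac{eb}{a}\right)^a,
\]
using $a! \geq (a/e)^a$ (Stirling). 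Since $a \geq b$, the base $eb/a$ is bounded by $e$, and replacing $a$ by a suitable constant multiple (which is absorbed in the $O(a)$ in the statement) eliminates the factor $e^a$, yielding the stated bound of $|T|(b/a)^a$.

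The main subtlety, rather than an obstacle, is to argue cleanly that every query-path of \extend{} elements must satisfy the monotonicity condition (iii); this uses Observation \ref{obs: partition-query-lower-rank1} together with the fact that, by Observation \ref{obs: extend-then-start}, the \extend{} portion of any query-path is contiguous, so restricting attention to walks of \extend{} elements does not lose generality. Once the monotonicity is in hand, the remaining counting argument is routine, and the factor loss from Stirling's approximation is harmless because the lemma only asks for $O(a)$ rather than $a$.
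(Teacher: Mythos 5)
Your proof is correct, and while it is in the same spirit as the paper's (a union bound followed by Stirling's approximation), it takes a genuinely dual route. The paper fixes a choice of $k+1$ \emph{positions} in the prefix (contributing a $\binom{\theta|T|}{k+1}$ factor) and bounds the probability that the elements landing at those positions form an incident chain, via a sequential-reveal argument that yields $(x/(|T|-k))^k$; this forces them to separately handle the $k \geq |T|/2$ edge case. You instead fix the chain of \emph{elements} (at most $|T|x^{a-1}$ incident walks in the incidence graph) and bound the probability that a fixed chain lands in the prefix in the required decreasing order, directly via the hypergeometric inequality $\binom{\theta|T|}{a}/\binom{|T|}{a}\leq\theta^a$ and exchangeability of the $a!$ relative orderings. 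This is a cleaner accounting that avoids both the sequential process and the $|T|-k$ denominator. Both arrive at the same place after absorbing the $e^a$ from Stirling into the constant $c$ in the path length $ca$, using $a\geq b$ (so $b/a\leq 1$) and $x\geq 1$. Two small things worth noting: the walks you union-bound over may repeat elements, but for such walks the monotonicity event has probability $0$ so the bound still holds; and the passage from ``no query-path of length exactly $\lceil ca\rceil$'' to ``longest query-path has length $O(a)$'' is immediate, since any longer all-\extend{} query-path contains a prefix-subsequence of $\lceil ca\rceil$ incident \extend{} elements in the prefix with strictly decreasing rank.
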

\begin{proof}
Let $(i_1, i_2, \ldots, i_{k+1})$ be $k+1$ different indices in the $\theta$-prefix. Choosing the prefix elements sequentially is equivalent to choosing a randomly ordered prefix. Let $(\ell_{i_1}, \ell_{i_2}, \ldots, \ell_{i_{k+1}})$ be elements in these indices that create a query-path. Hence, the probability that $\ell_{i_1}$ and $\ell_{i_2}$ are incident is at most $x/(|T|-1)$ since when we select $\ell_{i_2}$, element $\ell_{i_1}$ is already chosen and there are $|T| - 1$ choices remaining and $\ell_1$ has at most $x$ incident elements. With the same argument, the probability that $\ell_2$ and $\ell_3$ be incident is at most $x / (|T| - 2)$. Hence, the probability of having such a path is at most $(x / (|T| - k))^k$. Taking a union bound over all possible $k+1$ indices of the prefix, we get the following bound for having a query-path of length $k+1$. By assuming that $k < |T|/2$, we have
\begin{align*}
    {\theta|T| \choose k+1}\cdot (x / (|T| - k))^k &\leq \left(\frac{e \theta |T|}{k+1}\right)^{k+1}\cdot \left(\frac{x}{|T| - k}\right)^k \\
    & = \frac{e\theta|T|}{k+1} \cdot \left(\frac{ex\theta|T|}{(k+1)(|T| - k)}\right)^k \\
    & \leq \frac{e\theta|T|}{k+1} \cdot \left(\frac{2ex\theta}{k+1}\right)^k & (k \leq |T|/2) \\
    & \leq |T|\left(\frac{2ex\theta}{k+1}\right)^{k+1}.
\end{align*}
By setting $k = 2ea - 1$ and $\theta < b/x$ the above bound will be at most $|T|(b/a)^a$. Therefore, with probability $1 - |T|(b/a)^a$, the longest query-path has length $O(a)$. Note that if $k \geq |T|/2$, it implies that $2ea \geq |T|/2$ which the lemma clearly holds since $a = O(|T|)$.
\end{proof}

\begin{corollary}
Suppose that all \extend{} elements have at most $x$ incident elements. The longest query-path consisting of \extend{} elements, has length of at most $O(\log n)$ with probability $1 - \frac{1}{n^4}$ for a $O(\log(n)/x)$-prefix of $T$. 
\end{corollary}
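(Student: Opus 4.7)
This is essentially a direct specialization of the preceding \cref{lem: longest-partition-path} with a convenient choice of parameters. Recall that the lemma guarantees: for a $\theta$-prefix of $T$ with $\theta < b/x$, the longest query-path of \extend{} elements has length $O(a)$ except with probability $|T|(b/a)^a$. So the plan is simply to pick $a$ and $b$ so that (i) the prefix size becomes $\Theta(\log n / x)$, (ii) the length bound becomes $O(\log n)$, and (iii) the error probability is driven below $1/n^4$.

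The natural choice is $b = c_1 \log n$ and $a = c_2 \log n$, where $c_1, c_2$ are positive constants and $c_2$ is taken sufficiently large compared to $c_1$. This immediately gives $\theta = b/x = \Theta(\log n / x)$ and length $O(a) = O(\log n)$. Plugging into the error bound,
$$
|T|(b/a)^a \;=\; |T|\cdot (c_1/c_2)^{c_2\log n} \;=\; |T| \cdot n^{-c_2 \log(c_2/c_1)}.
$$

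Since $|T| = m(K+1) = \poly(n)$, there is a constant $C_0$ with $|T| \le n^{C_0}$. Choosing $c_2$ (relative to $c_1$) so that $c_2 \log(c_2/c_1) \ge C_0 + 4$ yields $|T|(b/a)^a \le n^{-4}$, which is exactly the desired high-probability bound.

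There is no real obstacle here beyond bookkeeping of the constants, since all the work was already done in \cref{lem: longest-partition-path}; the corollary is just the lemma instantiated with $a,b$ of logarithmic order so that the $(b/a)^a$ factor becomes inverse polynomial and overwhelms the polynomial factor $|T|$.
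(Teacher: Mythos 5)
Your proof is correct and takes essentially the same route as the paper: the paper simply sets $a = 5b = O(\log n)$ in \cref{lem: longest-partition-path}, while you keep $a = c_2\log n$ and $b = c_1\log n$ general and then note $c_2/c_1$ can be tuned to beat the polynomial factor $|T|$. Same instantiation, same bookkeeping.
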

\begin{proof}
Setting $a=5b=O(\log |T|) = O(\log n)$ in \Cref{lem: longest-partition-path} completes the proof.
\end{proof}

Now we are ready to complete the proof of \Cref{lem: few-long-path}.

\begin{proof}[Proof of \Cref{lem: few-long-path}]
First, if the edge oracle recursively calls a \start{} element, then we get from \Cref{lem: longest-start-path} and \Cref{obs: extend-then-start} that the remaining length of the query-path will not exceed $O(\log n)$ with high probability. Therefore, we only need to show that the length of the longest query-path involving only \extend{} elements is bounded by $O(\log^2 n)$ with high probability. Note that according to the partitioning, if we choose $\theta_i = \Omega(2^i \log(n)/(2K\Delta))$, by \Cref{lem: reduce-degree}, after round $i$ each \extend{} element has at most $(2K\Delta)/2^i$ incident elements. This implies that the number of parts is $O(\log n)$ (i.e. $d = O(\log n)$). Furthermore, by \Cref{lem: longest-partition-path}, the longest path consisting of \extend{} elements in each part has length at most $O(\log n)$.

\begin{figure}[htbp]
\begin{center}
  \includegraphics[scale=0.55]{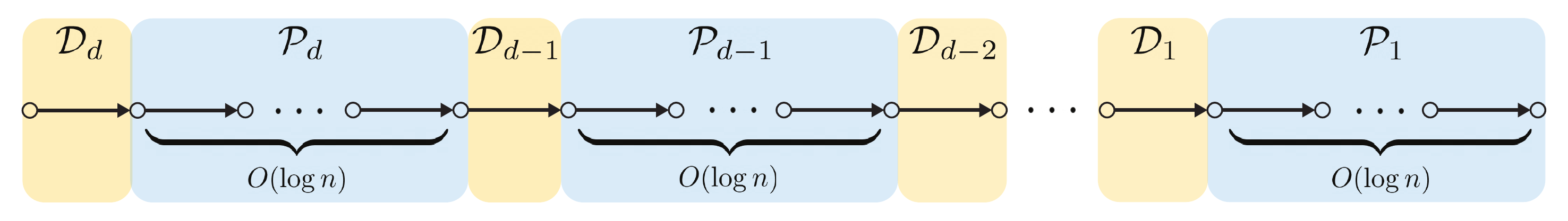}
  \caption{A possible query-path according to the partitioning. Blue boxes represent $\mathcal{P}_i$ and yellow boxes represent $\mathcal{D}_i$.}\label{fig:partitioning}
  \end{center}
\end{figure}

Now consider a query-path $P$. By \Cref{obs: partition-query-lower-rank2}, at most one element of $P$ is in each $D_i$ for $i \leq d$. Moreover, by the above argument, there are at most $O(\log n)$ elements of $P$ in each of $\mathcal{P}_i$ (see \Cref{fig:partitioning}). Therefore, the longest length of a query-path is bounded by $O(\log^2 n)$ with high probability (i.e. with probability of at least $1-1/n^2$).

For each unlikely permutation $\pi \in U$, there exists a query-path of length larger than $\beta$. Since $\beta = c\log^2 n$, by choosing $c$ large enough we have that $|U|/|\Pi| \leq 1/n^2$. Therefore, $|U| \leq ((K+1)m)!/n^2$ which implies that $|A_U| \leq ((K + 1)m)!/n^2$ since $A_U$ represents vertices that correspond to unlikely permutations.
\end{proof}

\section{Our Estimator for the Adjacency List Model}
In this section, we use the oracles introduced in the previous section to estimate the size of the matching in \Cref{alg: algorithm}. We assume that without loss of generality, the algorithm knows $\Delta$, $\bar{d}$, and there is no singleton vertex in the graph (note that the algorithm can simply query degree of each vertex and compute $\Delta$ and $\bar{d}$).

Note that our upper bound of \cref{sec: query-process} is on $F(v, \pi)$ which recall is the number of recursive calls to the oracles, but not necessarily the running time needed to implement it. Generating the whole permutation $\pi$ requires $|T| = \Theta(K m)$ time, which is too large for our purpose. Therefore, we have to generate $\pi$ on the fly during the recursive calls whenever needed. Using the techniques developed first by \cite{OnakSODA12} and further used by \cite{behnezhad2021}, we show in \cref{sec: implementation} that indeed it is possible to get an $\widetilde{O}(F(v, \pi))$ time implementation. In particular, we show that:

\begin{restatable}{lemma}{oracleImplementation}\label{lem: oracle-implementation}
In the adjacency list model, there is a data structure that given a graph $G$, (implicitly) fixes a random permutation $\pi$ over its edge set. Then for any vertex $v$, the data structure returns whether $v$ has any edge in outputs $M$ and $S$ of \cref{alg: algorithm} according to a random permutation $\pi$. Each query $v$ to the data structure is answered in $\Tilde{O}(F(v, \pi))$ time w.h.p. where $F(v, \pi)$ is as defined in \cref{sec: query-process}. Additionally, the vertices we feed into the oracle can be adaptively chosen depending on the responses to the previous calls.
\end{restatable}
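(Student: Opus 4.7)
My plan is to avoid ever instantiating the full permutation over $T$, which has size $\Theta(Km)$. Instead, I attach to each element $\ell \in T$ an independent uniform rank $r(\ell) \in [0,1]$ via a hashing scheme, generated only on first access and cached thereafter; the ordering these ranks induce on $T$ is distributed as a uniformly random permutation, so correctness of the oracles from \cref{cl: vertex-oracle-correctness} is preserved. Because the hashes are fixed independently of the query sequence, the permutation is consistent across adaptively chosen vertex queries. Inspecting \cref{alg: vertex_oracle,alg: start_oracle,alg: extend_oracle}, the oracles only ever touch an element $\ell$ either as the current recursion frame or as a strictly lower-rank element incident to the current frame's vertex. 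Thus it suffices to efficiently support one primitive: given a vertex $u$ and a threshold $t \in [0,1]$, enumerate in increasing rank order every $\ell \in T$ incident to $u$ with $r(\ell) < t$, stopping whenever the caller terminates.

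For this primitive I adapt the lazy sampling data structure of \cite{OnakSODA12,behnezhad2021}. Since the $K+1$ copies of each edge $e=(u,w)$ have i.i.d. uniform ranks, for any edge already exposed I can sample the smallest unexposed rank from the appropriate order statistic on demand. Rather than touching all $\deg(u)$ neighbors up front (which would blow up the running time), I lazily materialize $u$'s neighbor list: the number $N$ of neighbors that contribute at least one copy with rank below the current threshold is Binomial with known parameters, so I can sample $N$ and then reveal $N$ distinct neighbors one at a time by drawing uniform indices into the $\deg(u)$-long adjacency list (with rejection for previously drawn indices, which succeeds in expected $\widetilde O(1)$ time using standard hashing of exposed indices). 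A min-heap, keyed over currently exposed edges of $u$ by the next unprocessed rank of their $K+1$ copies, then yields the requested elements in rank order at $O(\log n)$ per pop. All sampled quantities are stored in a per-vertex cache so that repeated queries to the same vertex remain consistent.

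For the cost accounting, each element extracted from the primitive corresponds either to a fresh recursive call to \EOS{} or \EOE{} (counted by $F(v,\pi)$), or is discarded because a neighbor extension below the threshold was already enough to decide the answer; the latter type is bounded by $O(1)$ per recursive call because the oracles stop at the first lower-rank incident element witnessing the answer. Hashing, heap operations, rejection sampling, and adjacency list probes all contribute $\widetilde O(1)$ per exposed element, yielding total running time $\widetilde O(F(v,\pi))$ in expectation, and a high-probability bound follows by applying \cref{prop:chernoff} to the Binomial counts and the rejection-sampling retries, together with a union bound over the polynomially many samples drawn across a single query. The main obstacle I anticipate is verifying the high-probability bound uniformly over the adaptively generated sequence of threshold queries to the primitive: the thresholds depend on the random ranks already revealed, so one has to argue that the cumulative distribution of samples exposed remains tightly concentrated despite this adaptivity, which I plan to handle by conditioning on the hash evaluations already made and applying martingale-style concentration one query at a time.
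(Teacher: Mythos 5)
Your approach is essentially the same as the paper's: both implement the oracles by lazily materializing the random permutation on demand (assigning independent uniform ranks to elements of $T$ as they are first touched and caching them), which is exactly the $\lowest$ data structure of Onak et al.\ and Behnezhad that the paper cites as a black box from \cite[Appendix~A]{behnezhad2021}. The only difference is that you re-derive the guts of $\lowest$ (Binomial count, rejection sampling, per-vertex heap) rather than citing it; also, the adaptivity concern you flag at the end is already taken care of in the paper's formulation of the corresponding claim, which gives the $\widetilde O(j)$ whp bound simultaneously for all $u\in V$, so no per-query martingale argument is needed.
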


In order to estimate the output of our algorithm, we sample $r$ random vertices, and for each vertex, we check if it is the endpoint of a \start{} element. Moreover, for each sampled vertex, we check if the vertex is an endpoint of a length three augmenting path that is created by a \start{} element as a middle edge and two other \extend{} elements.

\subsection{Multiplicative Approximation}

We use the following well-known claim about the size of maximum matching, a similar bound to which was also used in \cite{behnezhad2021}.

\begin{claim}\label{clm: matching-size}
Let $G$ be a graph with maximum degree $\Delta$ and average degree $\bar{d}$. Then $\mu(G) \geq \frac{n\bar{d}}{4\Delta}$.
\end{claim}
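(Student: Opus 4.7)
The plan is to prove this via a standard greedy edge-coloring argument, which is exactly the approach hinted at in the footnote earlier in the paper (in the proof of \cref{cl:first-significant-Mi}). The key observation is that any proper edge coloring partitions the edges into matchings, so the largest color class is a matching of size at least $m/(\text{number of colors})$.

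Concretely, I would proceed as follows. First, process the edges of $G$ one by one in any order, and assign each edge a color from $\{1, 2, \ldots, 2\Delta-1\}$ that is not used by any of its currently colored neighboring edges. This is always possible because each edge has at most $2(\Delta-1) = 2\Delta - 2$ neighboring edges (at most $\Delta - 1$ at each endpoint), so at most $2\Delta-2$ colors are forbidden, leaving at least one of the $2\Delta-1$ colors available. The resulting coloring is proper, meaning each color class is a matching. By averaging, the largest color class contains at least $m/(2\Delta - 1) \geq m/(2\Delta)$ edges.

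Finally, since the sum of degrees is $n \bar{d}$, the number of edges is $m = n\bar{d}/2$. Combining, the largest color class is a matching of size at least
\[
\frac{m}{2\Delta} = \frac{n\bar{d}}{4\Delta},
\]
so $\mu(G) \geq \frac{n\bar{d}}{4\Delta}$, as desired.

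There is no real obstacle here; the bound is folklore and the only substantive ingredient is the greedy $(2\Delta-1)$-edge-coloring, which is immediate from the degree bound. Using the slightly weaker $2\Delta$ in the denominator (matching the footnote phrasing earlier in the paper) avoids even having to be careful about the $-1$.
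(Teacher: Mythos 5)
Your proof is correct and matches the paper's own proof essentially verbatim: both greedily edge-color $G$ with $2\Delta$ colors (you sharpen this to $2\Delta-1$, which is fine but unnecessary), take the largest color class as a matching of size at least $m/2\Delta$, and substitute $m = n\bar{d}/2$.
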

\begin{myproof}
    Greedily edge color the graph using $2\Delta$ colors. The color with the largest size is a matching of size at least $m/2\Delta = n \bar{d}/4\Delta$.
\end{myproof}

We use this claim to show that the number of samples that is needed to estimate the output of \Cref{alg: algorithm} is $r=\Tilde{\Theta}(\Delta/\bar{d})$.

\begin{figure}[h]
\begin{algenv}{Final algorithm for adjacency list.}{alg: adjacency-list}
    $r \gets (384
\cdot \Delta \log n) / (\delta^{2}\bar{d})$.
    
    Sample $r$ vertices $u_1, u_2, \ldots, u_r$ uniformly at random from $V$ with replacement.
    
    Run vertex oracle for each $u_i$ and let $S_i$ and $E_i$ be the indicator that $u_i$ has an incident \start{} and \extend{} elements that appear in the output of \cref{alg: algorithm}.
    
    \For{$i$ in $1 \ldots r$}{ \label{line: forloop}
    
    $X_i \gets 0$
    
    \lIf{$S_i = 1$}{$X_i \gets 1$}
    
    \If{$S_i = 0$ and $E_i = 1$}{
        Let $w_i$ be the other endpoint of \extend{} element incident to $u$.
        
        Run vertex oracle for $w_i$ (with the same permutation) and let $S'_i$ be the indicator that $w_i$ has an incident \start{} element.
        
        \lIf{$S'_i = 0$}{\Continue}
        
        Let $x_i$ be other endpoint of \start{} element incident to $w_i$.
        
        Run vertex oracle for $x_i$ (with the same permutation) and let $E'_i$ be the indicator that $x_i$ has an incident \extend{} element.
        
        \lIf{$E'_i = 0$}{\Continue}
        
        Let $y_i$ be other endpoint of \extend{} element incident to $x_i$.
        
        Run vertex oracle for $y_i$ (with the same permutation) and let $S''_i$ be the indicator that $y_i$ has an incident \start{} element.
        
        \lIf{$S''_i = 0$}{$X_i \gets 1$}
    }
    }

Let $X = \sum_{i \in [r]}X_i$ and $f = X/r$.

Let $\tilde{\mu} = (1 - \frac{\delta}{2})fn/2$.
    
\Return $\tilde{\mu}$

\end{algenv}
\end{figure}

\begin{remark}\label{rem: three-augmenting}
In \Cref{alg: adjacency-list}, we do not estimate $\mu(M \cup S)$. Instead, we estimate the size of the maximal matching $M$ and augment length-three augmenting paths in $M \cup S$. Since the bound in \Cref{thm:apx} is based on augmenting length three augmenting paths of $M \cup S$, we get the same approximation guarantee.
\end{remark}

We can change the vertex oracle and edge oracle to return the incident elements that appear in subgraph $M \cup S$. However, for simplicity of oracles, we return the indicators for having \start{} and \extend{} elements but we assume that we have access to these elements.

\begin{lemma}\label{lem: multi-error-bound-adj}
Let $\tilde{\mu}$ be the output of \cref{alg: adjacency-list}. With high probability,
$$\left(\frac{1}{2} + \frac{\delta}{4}\right)\mu(G) \leq \tilde{\mu} \leq \mu(G),$$
where $\delta$ is as in the statement of \cref{thm:apx}. 
\end{lemma}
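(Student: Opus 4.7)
The plan is to identify what quantity $\tilde{\mu}$ actually estimates, concentrate it via a Chernoff bound, and chain with \cref{thm:apx}. For the first step I would determine which $u \in V$ set the indicator $X_i$ to one when $u_i = u$: these are (a) the $2|M|$ endpoints of $M$, detected via $S_i = 1$, and (b) the two endpoints of any length-three augmenting path for $M$ in $M \cup S$, i.e., any $P_4$-component $u - w - x - y$ of $M \cup S$ with $(u, w), (x, y) \in S$ and $(w, x) \in M$. Because $M$ and $S$ are matchings, the algorithm's walk from an $M$-unmatched vertex with an incident $S$-edge uniquely traces such a $P_4$ if it exists, and both of its $M$-unmatched endpoints trigger $X_i = 1$. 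So letting $Z'$ denote the number of such $P_4$-components, $\sum_{u \in V} \mathbf{1}[X_i = 1 \mid u_i = u] = 2|M| + 2Z'$, and thus $\E[X_i] = (2|M| + 2Z')/n$ for $u_i$ drawn uniformly from $V$.

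Next I would tie $|M| + Z'$ to $\mu(G)$. Because the $P_4$-components are vertex-disjoint and each augments $M$ by one, $|M| + Z' \leq \mu(M \cup S) \leq \mu(G)$ deterministically. In the other direction, the quantity $Y$ constructed in \cref{lem:num-aug-paths-conditioned} counts a \emph{sub}-collection of these $P_4$-components (restricted to middle edge in $M_{j^\star}$ with the required colouring and freezing), hence $Z' \geq Y$ pointwise. The proof of \cref{thm:apx} shows $\E[|M| + Y] \geq (\tfrac{1}{2}+\delta)\mu(G)$ by combining $|M| \geq \mu(G)/2$ (maximality) with $\E[Y] \geq \delta\mu(G)$, so the same bound holds with $Z'$ in place of $Y$: $\E[|M| + Z'] \geq (\tfrac{1}{2}+\delta)\mu(G)$.

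Third I would apply Chernoff. Drawing a fresh permutation $\pi$ of $T$ per sample makes $X_1, \ldots, X_r$ iid Bernoulli with mean $p_0 := 2\E[|M|+Z']/n$ over the combined randomness, so $\E[X] = rp_0$. By \cref{clm: matching-size} and $|M| \geq \mu(G)/2$, we have $p_0 \geq \bar{d}/(4\Delta)$, hence $\E[X] \geq r\bar{d}/(4\Delta) = 96\log n/\delta^2$ for our choice of $r$. \cref{prop:chernoff} with deviation $(\delta/4)\E[X]$ then gives
\[
    \Pr\!\left[\,|X - \E[X]| > (\delta/4)\E[X]\,\right] \;\leq\; 2\exp\!\bigl(-\delta^2 \E[X]/48\bigr) \;\leq\; 2/n^2,
\]
so on this high-probability event $\tilde{\mu} = (1-\delta/2)Xn/(2r) \in (1-\delta/2)(1\pm\delta/4)\cdot \E[|M|+Z']$. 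The upper bound $\tilde{\mu} \leq \mu(G)$ then follows from $(1-\delta/2)(1+\delta/4) < 1$ and $\E[|M|+Z'] \leq \mu(G)$, while the lower bound reduces to the elementary inequality $(1-\delta/2)(1-\delta/4)(\tfrac{1}{2}+\delta) \geq \tfrac{1}{2}+\tfrac{5\delta}{8} - O(\delta^2) \geq \tfrac{1}{2}+\tfrac{\delta}{4}$, which holds since $\delta > 2^{-O(1/\epsilon)}$ is very small.

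The main subtlety I expect is decoupling the randomness of the internal permutation $\pi$ (which determines $|M|$ and $Z'$) from the vertex sampling, so that a single Chernoff bound applies to the combined randomness. Either one re-draws $\pi$ per sample as above so that the $X_i$'s are truly iid, or one conditions on $\pi$, uses Chernoff on $X\mid \pi$, and additionally argues concentration of the complicated random variable $|M|+Z'$ around its mean. The former is cleaner and does not affect the query complexity promised by \cref{lem: oracle-implementation}, since the oracle data structure can be re-instantiated with a fresh $\pi$ per query at only polylogarithmic overhead.
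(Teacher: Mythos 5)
Your proof follows essentially the same route as the paper's: identify that the per-sample indicator $X_i$ estimates $\frac{2}{n}$ times the quantity $|M|$ plus the number of length-three augmenting paths detected by the algorithm's walk (the paper calls this $|\hat{M}(G,\pi)|$, you call it $|M|+Z'$ — these are the same object), pull in the bound $(\frac{1}{2}+\delta)\mu(G) \leq \E|\hat M| \leq \mu(G)$ from \cref{thm:apx} via \cref{rem: three-augmenting}, apply Chernoff, and do the arithmetic with the deflation factor $(1-\delta/2)$. Your use of a relative deviation $(\delta/4)\E[X]$ instead of the paper's absolute $\sqrt{6\E[X]\log n}$ is cosmetic; both land in the same place.

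The one place you genuinely add something is the last paragraph about decoupling the permutation randomness from the vertex sampling. The paper's proof writes ``Since $X$ is the sum of $r$ independent Bernoulli random variables'' and plugs in $\E[X_i]=\frac{2\E_\pi|\hat M(G,\pi)|}{n}$, but never states that each of the $r$ samples uses a fresh $\pi$. If a single permutation is shared across all samples (as the oracle data structure of \cref{lem: oracle-implementation} naturally suggests), then the $X_i$ are only conditionally iid given $\pi$, and Chernoff would concentrate $X$ around the $\pi$-dependent quantity $r p_\pi = \frac{2r|\hat M(G,\pi)|}{n}$ rather than around $\E[X]$. Since $|\hat M(G,\pi)|$ a priori ranges over $[\mu(G)/2,\mu(G)]$, one would then need a separate concentration argument for $|\hat M(G,\pi)|$ to recover the claimed high-probability guarantee. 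Your observation that re-drawing $\pi$ per sample makes the $X_i$ truly iid — and that this does not affect the query-complexity bound, which is already stated in expectation over both $v$ and $\pi$ — is the correct way to close this; the paper is implicitly relying on exactly this but does not say so. Good catch.
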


\begin{proof}
Let $X_i$ be the indicator for each vertex $i$ which shows in output of \Cref{alg: algorithm}, either $i$ has a \start{} incident element, or it is an endpoint of a length three augmenting path created by a \start{} element in the middle along with two \extend{} elements. Note that the way $X_i$ is computed in \cref{alg: adjacency-list} is the same as the definition given earlier.

Let $\hat{M}(G, \pi)$ be the set of edges of the matching  created by augmenting the length-three augmenting paths of $M \cup S$. By \Cref{rem: three-augmenting} and \Cref{thm:apx},
\begin{align}\label{eq: alg-bound-3aug}
    \left(\frac{1}{2} + \delta\right)\mu(G) \leq \E_{\pi}|\hat{M}(G, \pi)| \leq \mu(G).
\end{align}
Since the number of matched vertices is twice the number of edges in the matching,  
\begin{align*}
    \E[X_i] = \Pr[X_i = 1] = \frac{2 \E_\pi|\hat{M}(G, \pi)|}{n}.
\end{align*}
Similarly, we have 
\begin{align}\label{eq: X-expected}
    \E[X] = \frac{2r \E_\pi|\hat{M}(G, \pi)|}{n}.
\end{align}
Since $X$ is the sum of $r$ independent Bernoulli random variables, the Chernoff bound (\Cref{prop:chernoff}) implies
\begin{align}\label{eq: with-high-prob}
    \Pr[|X - \E[X]| \leq \sqrt{6 \E[X] \log n}] \leq 2 \exp \left(-\frac{6 \E[X] \log n}{3 \E[X]} \right) = \frac{2}{n^2}.
\end{align}
Now, $fn = Xn / r$. Therefore, using the above equation, $fn$ is in the following range with probability $1 - 2/n^2$.
\begin{align*}
    fn \in \frac{n(\E[X] \pm \sqrt{6 \E[X] \log n)}}{r} & = \frac{n\E[X]}{r} \pm \frac{\sqrt{6 n^2\E[X] \log n}}{r}\\
    & = 2\E|\hat{M}(G, \pi)| \pm \sqrt{\frac{12n\E|\hat{M}(G, \pi)| \log n}{r}} & \tag{By (\ref{eq: X-expected})}\\
    & = 2\E|\hat{M}(G, \pi)| \pm \sqrt{\frac{n\E|\hat{M}(G, \pi)|\bar{d}}{32  \cdot \delta^{-2} \cdot \Delta}} \tag{Since $r=\frac{ 384
\cdot \Delta \log n}{\delta^2 \bar{d}}$} \\
    & \geq 2\E|\hat{M}(G, \pi)| \pm \sqrt{\frac{\mu(G)\E|\hat{M}(G, \pi)|}{8 \cdot \delta^{-2}}} & \tag{Since $\mu(G) \geq \frac{n\bar{d}}{4\Delta}$}
\end{align*}
By (\ref{eq: alg-bound-3aug}), we have $2\E|\hat{M}(G, \pi)| \geq \mu(G)$. Combining with \Cref{clm: matching-size}, we get
\begin{align*}
    fn \in 2\E|\hat{M}(G, \pi)| \pm \sqrt{\frac{\E|\hat{M}(G, \pi)|^2}{4\delta^{-2}} } = (2 \pm \frac{\delta}{2}) \E|\hat{M}(G, \pi)|.
\end{align*}
Since $\tilde{\mu} = (1 - \frac{\delta}{2})fn/2$, we have that
\begin{align*}
    (1-\delta)\E|\hat{M}(G, \pi)| \leq \tilde{\mu} \leq \E|\hat{M}(G, \pi)|
\end{align*}
Next, note that by (\ref{eq: alg-bound-3aug}), $(\frac{1}{2} + \delta)\mu(G) \leq \E|\hat{M}(G, \pi)| \leq \mu(G)$. Hence,
\begin{align*}
    (1-\delta)\left(\frac{1}{2} + \delta\right)\mu(G) \leq \tilde{\mu} \leq \mu(G),
\end{align*}
and thus
$$
    \left(\frac{1}{2 } + \frac{\delta}{4}\right)\mu(G) \leq \tilde{\mu} \leq \mu(G). \qedhere
$$
\end{proof}

\begin{lemma}\label{lem: adjcacency-list-time}
\Cref{alg: adjacency-list} runs in $\Tilde{O}(n + \Delta^{1+\epsilon})$ with high probability.
\end{lemma}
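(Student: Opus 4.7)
The plan is to bound the running time of \cref{alg: adjacency-list} by decomposing it into (i) an $\widetilde{O}(n)$ preprocessing cost to compute $\Delta$ and $\bar{d}$ and to sample $r$ vertices uniformly at random from $V$, and (ii) the cost of the at most $4r$ vertex oracle invocations made inside the for loop on \cref{line: forloop}. I would control (ii) via \cref{lem: oracle-implementation}, which asserts that each call on vertex $v$ takes time $\widetilde{O}(F(v,\pi))$ w.h.p. Combined with \cref{thm: query-complexity}, which gives $\E_{v,\pi}[F(v,\pi)] = \widetilde{O}(K\bar{d})$ for $v$ uniformly random and independent of $\pi$, the first oracle call per iteration (on $u_i$) will immediately contribute $\widetilde{O}(K\bar{d})$ in expectation.

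The main obstacle will be that the other three calls per iteration are on $w_i$, $x_i$, $y_i$, which are not uniformly random but adaptively determined by the outputs of earlier queries, so \cref{thm: query-complexity} does not apply directly. I would resolve this via an injectivity argument: for every fixed permutation $\pi$, the map $u \mapsto w_i(u,\pi)$ is injective on its domain, because $w_i$ must be the $S$-partner of $u$ and $S$ is a matching. This would give
\begin{align*}
\E_{u_i,\pi}\!\left[F(w_i,\pi)\,\mathbf{1}[w_i\text{ defined}]\right]
&= \E_\pi\!\left[\tfrac{1}{n}\sum_{u\in V}F(w_i(u,\pi),\pi)\,\mathbf{1}[w_i\text{ defined}]\right]\\
&\le \tfrac{1}{n}\sum_{v\in V}\E_\pi[F(v,\pi)]\\
&= \E_{v,\pi}[F(v,\pi)] \;=\; \widetilde{O}(K\bar{d}).
\end{align*}
Composing this with the fact that $x_i$ is the $M$-partner of $w_i$ and $y_i$ is the $S$-partner of $x_i$ (and that both $M$ and $S$ are matchings), the same argument yields the same expected bound on $F(x_i,\pi)$ and $F(y_i,\pi)$.

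Putting everything together, the expected total oracle cost will be $4r\cdot \widetilde{O}(K\bar{d}) = \widetilde{O}(\Delta K)$ using $r = \widetilde{\Theta}(\Delta/\bar{d})$; and substituting $K = 10D\log^2 n$ with $D=(c\Delta\log n)^\epsilon$ from \cref{sec:algorithm}, we have $K=\widetilde{O}(\Delta^\epsilon)$, yielding an expected total runtime of $\widetilde{O}(n+\Delta^{1+\epsilon})$. Finally, I would upgrade the expected bound to a high-probability bound by applying Markov's inequality to obtain a constant-probability guarantee on the running time, and then using the standard abort-and-restart trick: whenever the running time exceeds a suitable threshold, restart the sampling loop with fresh independent randomness for $\pi$ and the $u_i$'s. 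A logarithmic number of such restarts suffices to drive the failure probability down to $1/\poly(n)$ at only a $\poly\log n$ multiplicative overhead, and does not affect correctness because \cref{lem: multi-error-bound-adj} already gives a w.h.p.\ correctness guarantee from a single successful execution.
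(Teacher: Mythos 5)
Your proof is correct and follows the same overall route as the paper — bound the cost of the uniformly random first call per iteration by \cref{thm: query-complexity}, handle the adaptive follow-up calls on $w_i, x_i, y_i$, multiply by $r=\widetilde{\Theta}(\Delta/\bar d)$, add the $\widetilde O(n)$ preprocessing, and convert the expectation bound to a w.h.p.\ bound by parallel repetition / abort-and-restart. The one place where you are materially more careful than the paper is the adaptive calls. The paper's own proof handles $w_i, x_i, y_i$ with the terse remark that ``for a random permutation $\pi$, the vertex oracle for a vertex $u$ can be called a constant number of times'' and then invokes \cref{thm: query-complexity}, leaving implicit why charging a non-uniform query to the $\E_{v,\pi}[F(v,\pi)]$ bound is legitimate. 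Your injectivity argument makes this precise: because $S$ and $M$ are matchings, the maps $u_i\mapsto w_i$, $w_i\mapsto x_i$, $x_i\mapsto y_i$ are injective on their domains for each fixed $\pi$, so $\sum_{u\in V} F(w_i(u,\pi),\pi)\,\mathbf 1[w_i\text{ defined}]\le \sum_{v\in V}F(v,\pi)$, and averaging over $u$ and $\pi$ recovers the $\widetilde O(K\bar d)$ bound per call. This is exactly the right way to fill the gap, and the same bookkeeping composes through the chain of three partner maps. One small point of hygiene worth keeping in mind: the caching in the edge oracles (\cref{line: caching}) means repeated recursive calls to the same element across iterations are not double-counted, which is implicitly what lets you simply sum the four terms per iteration; mentioning this would tighten the account, but it does not affect correctness.
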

\begin{proof}
We show that for each of $r$ sampled vertices in \Cref{alg: adjacency-list}, the iteration corresponding to the vertex in Line \ref{line: forloop} takes $\Tilde{O}(K\bar{d})$ time. First, note that for a random permutation $\pi$, the vertex oracle for a vertex $u$ can be called a constant number of times. The reason is that the algorithm only queries vertex oracle for vertices of length three augmenting paths. Also, by \Cref{thm: query-complexity}, the vertex oracle takes $\Tilde{O}(K\bar{d})$ time for random vertex and permutation. Therefore, if we run the vertex oracle a constant time for a fixed vertex, still the expected running time will be $\Tilde{O}(K\bar{d})$. Since the number of samples is $O(\Delta \log n / \bar{d})$, the total running time for all samples will be $\Tilde{O}(K\Delta)$. Furthermore, we spent $O(n)$ time for computing $\Delta$, $\bar{d}$, and finding the isolated vertices. 

In order to achieve a high probability bound on running time, we run $\Theta(\log n)$ instances of the algorithm simultaneously and return the estimation of the first instance that terminates. Since the expected running time is $\Tilde{O}(n + K\Delta)$, the first instance terminates with probability $1 - 1/\text{poly}(n)$ in $\Tilde{O}(n + K\Delta)$.

Plugging $K = \tilde{O}(\Delta^{\epsilon})$ completes the proof.
\end{proof}

\subsection{Multiplicative-Additive Approximation}
We use the same algorithm as \Cref{alg: adjacency-list}, however, the $o(n)$ additive error allows us to sample $\Tilde{\Theta}(1)$ vertices instead of $\Tilde{\Theta}(\Delta/\bar{d})$ vertices. Moreover, we no longer need to estimate $\bar{d}$ and $\Delta$ since the number of samples is independent of these parameters.

\begin{lemma}\label{lem: multi-add-error-bound-adj}
Let $\tilde{\mu}$ be the output of \cref{alg: adjacency-list} with parameter $r = 12 \log^3 n$ and estimation $\Tilde{\mu} = fn/2 - \frac{n}{2 \log n}$. With high probability,
$$ \left(\frac{1}{2} + \delta \right)\mu(G) - \frac{n}{\log n} \leq \tilde{\mu} \leq \mu(G).$$
\end{lemma}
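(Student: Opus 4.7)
The plan is to mimic the proof of Lemma~\ref{lem: multi-error-bound-adj} almost verbatim, but to replace the multiplicative Chernoff argument there (which crucially used $r=\widetilde\Theta(\Delta/\bar d)$ together with $\mu(G)\ge n\bar d/(4\Delta)$) by a purely additive Chernoff argument whose error is absorbed by the $n/(2\log n)$ correction. Concretely, I would let $X_i\in\{0,1\}$ be the indicator that the $i$-th sampled vertex $u_i$ is an endpoint of an edge of the augmented matching $\hat M(G,\pi)$, so that $X:=\sum_i X_i$ satisfies $\E[X]=2r\,\E_\pi|\hat M(G,\pi)|/n$ and $fn/2=Xn/(2r)$ is an unbiased estimator of $\E_\pi|\hat M(G,\pi)|$. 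By Remark~\ref{rem: three-augmenting} combined with Theorem~\ref{thm:apx}, this expectation lies in $[(\tfrac12+\delta)\mu(G),\,\mu(G)]$.

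Next I would apply Proposition~\ref{prop:chernoff} with deviation $t:=r/\log n=12\log^2 n$, chosen so that rescaling by $n/(2r)$ turns it into an additive error of exactly $n/(2\log n)$ in the estimator. Since $X$ is a sum of $r$ Bernoullis we have $\E[X]\le r$, and hence
$$
\frac{t^2}{3\,\E[X]}\;\ge\;\frac{t^2}{3r}\;=\;\frac{144\log^4 n}{36\log^3 n}\;=\;4\log n,
$$
so Chernoff yields $\Pr[|X-\E[X]|\ge t]\le 2/n^4$. (In the degenerate regime $\E[X]<t$, the Chernoff form stated in the paper is a loose surrogate, but the true tail $\exp(-\Omega(t))$ is much stronger so the bound still holds.) Conditioning on $|X-\E[X]|\le t$ and rescaling gives $|fn/2-\E_\pi|\hat M(G,\pi)||\le n/(2\log n)$, whence
$$
\E_\pi|\hat M(G,\pi)|\;-\;\tfrac{n}{\log n}\;\le\;\tilde\mu\;\le\;\E_\pi|\hat M(G,\pi)|.
$$
The upper bound combined with $\E_\pi|\hat M(G,\pi)|\le\mu(G)$ yields $\tilde\mu\le\mu(G)$, and the lower bound combined with $\E_\pi|\hat M(G,\pi)|\ge (\tfrac12+\delta)\mu(G)$ yields $\tilde\mu\ge(\tfrac12+\delta)\mu(G)-n/\log n$.

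There is no real conceptual obstacle; the care lies entirely in matching constants. The choice $r=12\log^3 n$ is essentially the smallest polylog sample size for which the Chernoff tail at deviation $r/\log n$ still beats $1/n^{\Omega(1)}$, which is why a cubic (rather than quadratic) polylog in $r$ is needed. Moreover, the fact that the correction $n/(2\log n)$ is \emph{subtracted} in the definition of $\tilde\mu$ (rather than added) is exactly what forces the one-sided inequality $\tilde\mu\le\mu(G)$ to hold strictly, as required by the definition of a multiplicative-additive approximation; the two-sided Chernoff error $n/(2\log n)$ on one side cancels against this correction, and on the other side combines with it to produce the claimed $n/\log n$ additive loss.
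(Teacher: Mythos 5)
Your proposal is correct and follows essentially the same route as the paper: define the $X_i$'s, apply Chernoff to $X=\sum_i X_i$, rescale by $n/(2r)$, and combine with the bound $\left(\tfrac12+\delta\right)\mu(G)\le\E_\pi|\hat M(G,\pi)|\le\mu(G)$. The only (immaterial) difference is that you fix the Chernoff deviation at $t=r/\log n$ and bound $\E[X]\le r$, whereas the paper uses the deviation $\sqrt{6\E[X]\log n}$ and then bounds $\E|\hat M(G,\pi)|\le n$; both yield the same $\pm n/(2\log n)$ additive error on $fn/2$, and you correctly flag that the paper's form of Chernoff technically needs care when $t>\E[X]$, a regime where the true tail is even smaller.
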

\begin{proof}
Let $X_i$ be defined the same as \Cref{lem: multi-error-bound-adj}. With the exact same argument, inequalities \Cref{eq: alg-bound-3aug}, \Cref{eq: X-expected}, and \Cref{eq: with-high-prob} hold with new parameter and estimation. Hence, with probability of at least $1 - 2/n^2$,

\begin{align*}
    fn \in \frac{n(\E[X] \pm \sqrt{6 \E[X] \log n)}}{r} & = \frac{n\E[X]}{r} \pm \sqrt{\frac{6 n^2\E[X] \log n}{r^2}}\\
    & = 2\E|\hat{M}(G, \pi)| \pm \sqrt{\frac{12n\E|\hat{M}(G, \pi)| \log n}{r}} & (\text{By }(\ref{eq: X-expected}))\\
    & = 2\E|\hat{M}(G, \pi)| \pm \sqrt{\frac{n\E|\hat{M}(G, \pi)|}{\log^2 n}} & (\text{Since } r = 12 \cdot \log^3 n) \\
    & \in 2\E|\hat{M}(G, \pi)| \pm \frac{n}{\log n} & (\text{Since } \E|\hat{M}(G, \pi)| \leq n).
\end{align*}
Since $\tilde{\mu} = fn/2 - \frac{n}{2 \log n}$, we have that
\begin{align*}
    \E|\hat{M}(G, \pi)| - \frac{n}{\log n} \leq \tilde{\mu} \leq \E|\hat{M}(G, \pi)|.
\end{align*}
By plugging \Cref{eq: alg-bound-3aug}, we get 
$$
\left(\frac{1}{2} + \delta \right)\mu(G) - \frac{ n}{\log n} \leq \tilde{\mu} \leq \mu(G). \qedhere
$$
\end{proof}

\begin{lemma}\label{lem: adjcacency-list-additive-time}
\Cref{alg: adjacency-list} with parameter $r = 12 \log^3 n$, runs in $\Tilde{O}(\bar{d} \cdot \Delta^\epsilon)$ with high probability.
\end{lemma}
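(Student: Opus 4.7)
The plan is to essentially mirror the proof of \Cref{lem: adjcacency-list-time}, with two simplifications afforded by the additive-error setting: the sample count $r$ is now merely $\poly\log n$ rather than $\tilde{\Theta}(\Delta/\bar{d})$, and the algorithm no longer depends on $\Delta$ or $\bar{d}$ so the $O(n)$ preprocessing cost for computing those quantities (and identifying isolated vertices) disappears.

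First I would bound the cost of a single iteration of the for loop in \Cref{line: forloop}. For each sampled vertex $u_i$, the body calls the vertex oracle at most a constant number of times (on $u_i$ and, along at most one chain $w_i, x_i, y_i$ of length-three augmenting path endpoints). By \Cref{lem: oracle-implementation}, each such call under a random permutation $\pi$ runs in $\widetilde{O}(F(v,\pi))$ time, and by \Cref{thm: query-complexity}, $\E_{v,\pi}[F(v,\pi)] = O(K \bar{d} \log^4 n) = \widetilde{O}(K \bar{d})$. Because the chain's vertices $w_i, x_i, y_i$ are not uniformly distributed, a small care must be taken: one should observe that each of these vertices is still obtained as a deterministic function of responses to constantly many oracle calls, and each such call can be charged to at most one uniformly random starting query via the same recursion-tree accounting as in \cref{lem: adjcacency-list-time}. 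Thus the expected cost per sampled vertex is $\widetilde{O}(K\bar{d})$.

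Summing over the $r = 12\log^3 n$ samples, the expected total running time is
\[
O(r) \cdot \widetilde{O}(K\bar{d}) \;=\; \widetilde{O}(K \bar{d}) \;=\; \widetilde{O}(\bar{d} \cdot \Delta^\epsilon),
\]
using $K = \widetilde{O}(D) = \widetilde{O}(\Delta^\epsilon)$ from the parameter setting in \cref{sec:algorithm}. Unlike the multiplicative version, we incur no additive $O(n)$ term because the algorithm uses only $r = \poly\log n$ and does not need to estimate $\Delta$ or $\bar{d}$.

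To upgrade the expectation bound to a high-probability bound, I would use the same trick as in \cref{lem: adjcacency-list-time}: run $\Theta(\log n)$ independent copies of the estimator in parallel (each with its own random permutation and sampled vertices), abort them once the first completes within $C \cdot \widetilde{O}(\bar{d}\Delta^\epsilon)$ time for a sufficiently large constant $C$, and return that copy's estimate. By Markov each individual copy finishes in this time with constant probability, so with $\Theta(\log n)$ copies the first completion occurs within the stated time bound with probability $1 - 1/\poly(n)$, while \Cref{lem: multi-add-error-bound-adj} already gives the correctness of each copy with high probability. I do not anticipate a serious obstacle here — the only subtlety is the non-uniform distribution of the follow-up vertices $w_i, x_i, y_i$ in the augmenting-path chain, which is handled by the same constant-overhead argument used in the multiplicative proof.
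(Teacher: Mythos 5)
Your proposal is correct and follows essentially the same approach as the paper's proof: both reduce to the per-sample $\widetilde{O}(K\bar{d})$ expected-time bound from \Cref{lem: adjcacency-list-time} (via \Cref{thm: query-complexity} and \Cref{lem: oracle-implementation}), observe that the $O(n)$ preprocessing disappears, multiply by $r = \poly\log n$, plug in $K = \widetilde{O}(\Delta^\epsilon)$, and obtain the high-probability bound by running $\Theta(\log n)$ parallel instances and returning the first to finish. Your explicit note about the non-uniform follow-up vertices $w_i, x_i, y_i$ being charged by the same constant-overhead accounting is precisely the implicit content of the paper's phrase ``by the exact same argument as the proof of \Cref{lem: adjcacency-list-time}.''
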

\begin{proof}
First, note that we do not need to spend $\Tilde{O}(n)$ time to estimate $\Delta$ and $\bar{d}$, since the number of samples is independent of these parameters. By the exact same argument as the proof of \Cref{lem: adjcacency-list-time}, we can show that the expected running time for each sampled vertex is $\Tilde{O}(K\bar{d})$. Since $r = \Tilde(O)(1)$, the total running time will be $\Tilde{O}(K\bar{d})$.

To get a high probability bound on the running time, similar to \Cref{lem: adjcacency-list-time}, we run $\Theta(\log n)$ instances of the algorithm simultaneously. Using the same argument, the first instance terminates with probability $1 - 1/\text{poly}(n)$ in $\Tilde{O}(K\bar{d})$.

Plugging $K = \tilde{O}(\Delta^{\epsilon})$ completes the proof.
\end{proof}

\section{Our Estimator for the Adjacency Matrix Model}
In this section, we give a reduction from the adjacency matrix model to the adjacency list model such that each query in the adjacency list can be implemented with a constant number of queries in the adjacency matrix model. Such a reduction appears in \cite[Section~5]{behnezhad2021}. We use a similar idea with a minor modification in the parameters of the construction.

Let $\gamma = (4\log n)\cdot n$. We construct a graph $H = (V_H, E_H)$ as follows:

\begin{itemize}
    \item $V_H$ is the union of $V_1, V_2$ and $U_1, U_2, \ldots, U_n$ such that:
    \begin{itemize}
        \item $V_1$ and $V_2$ are two copies of the vertex set of the original graph $G$.
        \item $U_i$ is a vertex set of size $\gamma$ for each $i \in [n]$.
    \end{itemize}
    \item We define the edge set such that degree of each vertex is in $\{1, n, n + \gamma \}$:
    \begin{itemize}
        \item Degree of each vertex $v \in V_1$ is $n$. The $i$-th neighbor of $v$ is the $i$-th vertex in $V_1$ if $(v, i) \in E$, otherwise, its $i$-th neighbor is the $i$-th vertex in $V_2$ for $i \leq n$. Note that graph $(V_1, E_H \cap (V_1 \times V_1))$ is isomorphic to $G$.
        \item Degree of each vertex $v \in V_2$ is $n + \gamma$. The $i$-th neighbor of $v$ is the $i$-th vertex in $V_2$ if $(v, i) \in E$, otherwise, its $i$-th neighbor is the $i$-th vertex in $V_1$ for $i \leq n$. For all $n < i \leq n + \gamma$, the $i$-th neighbor of $v$ is $i$-th vertex in $U_v$.
        \item Degree of each vertex $u \in U_i$ is one for $i \in [n]$. The only neighbor of $u$ is the $i$-th vertex of $V_2$.
    \end{itemize}
\end{itemize}

\begin{observation}\label{obs: finding-neighbor}
For each vertex $v \in V_H$ and $i \in [\deg_H(v)]$, the $i$-th neighbor of vertex $v$ can be determined using at most one query to the adjacency matrix.
\end{observation}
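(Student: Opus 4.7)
The plan is to split into cases based on which of the three vertex groups $V_1$, $V_2$, or $\bigcup_j U_j$ the queried vertex $v$ belongs to, and in each case exhibit a simple rule that produces the $i$-th neighbor in $H$ using at most one adjacency-matrix query to the underlying graph $G$. The construction of $H$ has been tailor-made so that the only nontrivial information one needs to recover the neighbor list is a single bit of $G$'s adjacency matrix; the rest is determined by indices and the known structure.

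First I would handle $v \in V_1$, where $\deg_H(v) = n$ so $i \leq n$. By construction, the $i$-th neighbor is the $i$-th vertex of $V_1$ if $(v, i) \in E$, and otherwise the $i$-th vertex of $V_2$. Determining which case we are in requires exactly one adjacency-matrix query to $G$ on the pair $(v, i)$; in either branch we then output the correct neighbor with no further queries. The case $v \in V_2$ with $i \leq n$ is completely symmetric: one adjacency-matrix query on $(v, i)$ tells us whether to return the $i$-th vertex of $V_2$ or of $V_1$.

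Next I would handle the remaining indices for $v \in V_2$, namely $n < i \leq n + \gamma$. By definition, the $i$-th neighbor is simply the $(i - n)$-th vertex of the pre-assigned set $U_v$, so no adjacency-matrix query is needed at all. Finally, for $v \in U_j$ we have $\deg_H(v) = 1$, so $i = 1$, and the unique neighbor is the $j$-th vertex of $V_2$ by construction; again no query is needed. Since every case uses at most one adjacency-matrix query to $G$, the observation follows.

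There is no real obstacle here: the statement is a verification that the reduction graph was designed so that adjacency-list access in $H$ can be emulated by adjacency-matrix access in $G$. The only thing to be careful about is to confirm that the index bookkeeping is consistent across the three groups (in particular that the $U_v$ blocks sit in the index range $(n, n+\gamma]$ of $v \in V_2$, and that vertices of $U_j$ truly have a single neighbor in $V_2$), all of which follows directly from the definition of $V_H$ and $E_H$.
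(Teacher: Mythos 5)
Your proof is correct and takes essentially the same approach as the paper: case analysis on whether $v \in V_1$, $v \in V_2$, or $v \in \bigcup_j U_j$, observing that the degree sequence of $H$ is fixed independently of $G$ so only the $V_1/V_2$ branching requires a single adjacency-matrix probe. Your write-up is if anything slightly more careful than the paper's, which lumps together the $V_1 \cup V_2$ cases and does not explicitly separate the index range $n < i \leq n + \gamma$ for $v \in V_2$ (where the neighbor lies in $U_v$ and no query is needed).
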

\begin{proof}
First, note that the degree of each vertex is not dependent on the original graph $G$. Hence, we do not need any queries to find the degree of each vertex. For each vertex $v \in U_1 \cup U_2 \cup \ldots \cup U_n$, one can find its only neighbor without any queries by the construction of $H$. For each vertex $v \in V_1 \cup V_2$, the $i$-th neighbor is either the $i$-th vertex of $V_1$ or the $i$-vertex of $V_2$. Therefore, with at most one query, one can determine the $i$-th neighbor of vertex $v$.
\end{proof}

Consider a random permutation $\pi$ over the list of elements $T$, consisting of \start{} and \extend{} copies of $E_H$. Intuitively, for almost all vertices $v \in V_2$, the first incident \start{} element to $v$ in $T$ is an edge between $v$ and $U_v$. Similarly, the first two incident \extend{} elements to $v$ are between $v$ and $U_v$. We say $v$ is a \textit{bad} vertex, if it violates the mentioned conditions. Let $R \subseteq V_2$ be the set of bad vertices. Since the permutation over edges in $H[V_1 \cup R]$ is uniformly at random, using \Cref{thm:apx}, we can provide a bound on the size of the matching produced by the algorithm.

\begin{observation}\label{obs: good-vertices}
For each $v \in V_2 \setminus R$, the incident \start{} and \extend{} elements in $\MS(H, \pi)$ are between $v$ and $U_v$. Moreover, both incident \start{} and \extend{} elements in $\MS(H, \pi)$ appears before all edges between $v$ and $V_1 \cup V_2$.
\end{observation}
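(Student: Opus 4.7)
The plan is to verify both the endpoint claim and the rank claim separately for the \start{} and \extend{} elements of $\MS(H, \pi)$ incident to $v$, leveraging two structural facts about $H$: every $u \in U_v$ has degree $1$, so its only possible match is the edge $(v, u)$; and the good condition pins down exactly the types of the lowest-rank \start{} and \extend{} copies at $v$.

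For the \start{} side, let $\ell_1 = ((v, u_1), \start{})$ be the smallest-rank \start{} copy at $v$, which by goodness satisfies $u_1 \in U_v$. When \cref{alg: algorithm} processes $\ell_1$, both $v$ (no earlier \start{} at $v$ has been processed) and $u_1$ ($(v, u_1)$ is its only incident edge in $H$) are unmatched in $M$, so the edge joins $M$. Thereafter $\deg_M(v) = 1$ blocks every later \start{} at $v$, so $\ell_1$ is the unique \start{} element of $\MS(H, \pi)$ incident to $v$; and $\pi(\ell_1)$ is by construction smaller than every other \start{} rank at $v$, in particular smaller than that of every \start{} copy of an edge in $(v, V_1 \cup V_2)$.

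For the \extend{} side, let $\xi_1, \xi_2$ be the two smallest-rank \extend{} copies at $v$; by goodness $\xi_i = ((v, u'_i), \extend{})$ with $u'_i \in U_v$, so every \extend{} copy at $v$ for an edge to $V_1 \cup V_2$ has rank strictly greater than both $\pi(\xi_1)$ and $\pi(\xi_2)$. Since the main loop of \cref{alg: algorithm} processes elements in rank order, both \extend{} conclusions would follow once I show that the first \extend{} copy at $v$ added to $S$, if any, is $\xi_1$ or $\xi_2$: both go to $U_v$, and both precede every \extend{} copy of $(v, V_1 \cup V_2)$.

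The main obstacle is precisely this last reduction, i.e.\ ruling out the scenario where $\xi_1$ and $\xi_2$ both fail the conditions in \cref{line:condition-add-to-S} (say through a color collision $c_v = c_{u'_i}$ or through freezing induced by $\ell_1$) while some later $\xi_k$ at $v$ going to $V_1 \cup V_2$ succeeds. I plan to handle this with a case analysis on the relative ranks of $\ell_1$ and $\xi_k$. If $\pi(\ell_1) < \pi(\xi_k)$, then by the time $\xi_k$ is processed $v$ is already matched in $M$ and has been frozen---deterministically in \cref{line:det-freeze} when $c_v = c_{u_1}$ or $e_1 \notin M_{j^\star}$, and otherwise via the Bernoulli step in \cref{line:freeze}---so condition \ref{prop-papx:4} or \ref{prop-papx:5} fails for $\xi_k$. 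In the complementary case $\pi(\xi_k) < \pi(\ell_1)$, both $\xi_1$ and $\xi_2$ are processed before $\xi_k$ while $v$ is still unmatched in $M$ and unfrozen; applying \ref{prop-papx:2} and \ref{prop-papx:3} to $\xi_1$ and $\xi_2$ then shows that one of them must already have entered $S$, contradicting $\deg_S(v) = 0$ at time $\xi_k$. Once this case analysis is carried through, the observation follows.
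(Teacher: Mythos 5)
Your treatment of the \start{} side is fine and matches the paper. On the \extend{} side you correctly identify the real obstacle -- that $\xi_1, \xi_2$ might both fail \cref{line:condition-add-to-S} while a later $\xi_k$ to $V_1 \cup V_2$ succeeds -- but the case analysis you offer to rule this out does not actually close it. In the case $\pi(\xi_k) < \pi(\ell_1)$ you claim that applying \ref{prop-papx:2} and \ref{prop-papx:3} to $\xi_1$ and $\xi_2$ forces one of them to already be in $S$; but \ref{prop-papx:3} is the color test $c_v \neq c_{u_i'}$, and the colors $c_{u_1'}, c_{u_2'}$ are independent uniform bits, so both can equal $c_v$ (this happens with probability $1/4$), in which case neither $\xi_1$ nor $\xi_2$ enters $S$ and $\deg_S(v)$ is still $0$ at time $\xi_k$. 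In the complementary case $\pi(\ell_1) < \pi(\xi_k)$ you assert that $v$ is frozen, deterministically or ``via the Bernoulli step in \cref{line:freeze}''; but that step freezes only with probability $1-p$, so with probability $p$ the vertex $v$ is matched in $M$ and \emph{not} frozen, and $\xi_k$ then satisfies \ref{prop-papx:4} and \ref{prop-papx:5}. Both of these gaps are genuine and the scenario you were trying to exclude can actually occur.

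It is worth noting that the paper's own proof of this observation has exactly the same hole: it asserts that ``one of $(v, u_1)$ and $(v, u_2)$ must be added to $\MS(H, \pi)$'' on the grounds that ``\extend{} elements create a maximal matching,'' but $S$ is not a maximal matching over all \extend{} edges -- an \extend{} element is dropped when \ref{prop-papx:3} (color) or \ref{prop-papx:4}/\ref{prop-papx:5} (freeze) fail, and nothing in the definition of $R$ (which depends only on the permutation $\pi$, not on the colors or freeze coins) prevents this from happening to both $\xi_1$ and $\xi_2$. So your attempted fix is more explicit than the paper's argument, but neither version actually establishes the observation as stated; a correct proof would have to either enlarge $R$ (or add a further bad event) to absorb the vertices for which $\xi_1, \xi_2$ are both blocked, or weaken the observation accordingly.
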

\begin{proof}
By the definition of $R$, the first \start{} element incident to $v$ in the permutation is between $v$ and $U_v$. Let this edge be $(v, w)$. Hence, the algorithm adds $(v, w)$ to $\MS(H, \pi)$. Note that all \start{} elements incident to $v$ after $(v, w)$ in the permutation cannot be in $\MS(H, \pi)$ since \start{} elements create a maximal matching. By definition of $R$, the first two \extend{} elements incident to $v$ are between $v$ and $U_v$. Let $(v, u_1)$ and $(v, u_2)$ be these two edges ($u_1 \neq u_2$ since there is one \extend{} copy). Therefore, one of $(v, u_1)$ and $(v, u_2)$ must be added to $\MS(H, \pi)$ and no other \extend{} element incident to $v$ can be added to $\MS(H, \pi)$ since \extend{} elements create a maximal matching.
\end{proof}

\begin{observation}\label{obs: large-matching-R}
It holds that $$\left(\frac{1}{2} + \delta \right)\mu(H[V_1 \cup R]) \leq \E_\pi|\MS(H, \pi) \cap \left((V_1 \cup R) \times (V_1 \cup R)\right)|  \leq \mu(H[V_1 \cup R]).$$
\end{observation}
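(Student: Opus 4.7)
The upper bound is immediate: $\MS(H,\pi) \cap ((V_1 \cup R) \times (V_1 \cup R))$ is a matching supported on $V_1 \cup R$, so its cardinality is at most $\mu(H[V_1 \cup R])$.

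For the lower bound, my plan is to argue that, conditional on $R$, the restricted matching $\MS(H,\pi) \cap ((V_1 \cup R) \times (V_1 \cup R))$ has the same distribution as the output of \cref{alg: algorithm} run directly on $H[V_1 \cup R]$, so that \cref{thm:apx} applies. First I would condition on the set $R$ and on the positions in $\pi$ of all elements whose edge is \emph{not} contained in $V_1 \cup R$ (that is, edges incident to $V_2 \setminus R$ or to the $U_i$'s). Under this conditioning, the relative order of the remaining elements -- those coming from edges of $H[V_1 \cup R]$ -- is still uniformly random, since this is a property of uniform permutations restricted to an arbitrary subset.

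Next I would invoke \cref{obs: good-vertices} to eliminate the influence of $V_2 \setminus R$. Each $v \in V_2 \setminus R$ is matched in $M$ and in $S$ via edges to $U_v$, and both of those elements are processed before any element between $v$ and $V_1 \cup V_2$. Consequently, when \cref{alg: algorithm} later visits any element $(e, X)$ with $e = (u,v)$, $u \in V_1 \cup R$, the vertex $v$ is already matched (hence such elements can never contribute to $V_1 \cup R$), and the matching state seen by the algorithm on any element entirely inside $V_1 \cup R$ depends only on previously processed elements inside $V_1 \cup R$. Similarly, elements on $U_i$-edges never touch $V_1 \cup R$. Hence the restricted matching is a deterministic function of the restricted (uniform) permutation, and it equals what \cref{alg: algorithm} would produce on $H[V_1 \cup R]$ given that permutation.

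Finally, taking expectations and applying \cref{thm:apx} to the graph $H[V_1 \cup R]$ yields $\E|\MS(H,\pi) \cap ((V_1 \cup R) \times (V_1 \cup R))| \geq (\tfrac12+\delta)\mu(H[V_1 \cup R])$. The main technical nuisance I anticipate is that the positional thresholds $\alpha_i$ (and the parameters $D,K$) in \cref{alg: algorithm} are defined in terms of $|T|$ and $\Delta$ for the full graph $H$, not for $H[V_1 \cup R]$; one must check that \cref{thm:apx} still goes through with these slightly ``off'' parameters. This should be harmless because the proof of \cref{thm:apx} only uses that the $\alpha_i$'s induce a geometric partition of ranks by a factor of $D$, and the quantitative bounds on unusual edges (\cref{lem:few-unusuals}) continue to hold when $\Delta$ is replaced by an upper bound; a small re-tuning of $\delta$ absorbs the discrepancy.
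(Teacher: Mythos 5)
Your proof takes essentially the same route as the paper's: use \cref{obs: good-vertices} to argue that elements touching $V_2 \setminus R$ or the $U_i$'s never influence the state of the process inside $V_1 \cup R$, identify the restriction of the output to $V_1 \cup R$ with what \cref{alg: algorithm} produces on $H[V_1 \cup R]$ under a fresh uniform permutation, and then invoke \cref{thm:apx}. The paper's own proof is in fact terser than yours and does not even flag the parameter-mismatch concern you raise; you are right to raise it. That said, your closing claim --- that the discrepancy is ``harmless because the proof of \cref{thm:apx} only uses that the $\alpha_i$'s induce a geometric partition'' --- is too quick: the proof of \cref{cl:first-significant-Mi} also rules out $j = 2/\epsilon$ via the quantitative comparison $\alpha_{2/\epsilon}|T| < q(1)|M|$, whose left side is set by the parameters ($|T|$, $\Delta$, $D$, $K$) of the \emph{full} graph $H$ while $|M|$ on the right refers to the maximal matching inside $H[V_1 \cup R]$. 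Re-tuning $\delta$ does not obviously repair this when $\mu(H[V_1 \cup R])$ is small compared to $n$. The real reason the construction is sound is that the downstream use in \cref{lem: multi-add-error-bound-matrix} carries an additive $\Theta(n/\log n)$ slack, which trivially absorbs the small-$\mu$ regime; so the observation is being used in a setting where the edge case you worry about cannot hurt, even though the observation as literally stated (with no additive slack) is not obviously airtight. In short: same approach, and your concern is legitimate and not fully discharged either by you or by the paper's terse proof, but it is immaterial for the theorem that consumes this observation.
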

\begin{proof}
Note that by \Cref{obs: good-vertices}, all vertices in $V_2 \setminus R$ have incident \start{} and \extend{} elements in $\MS(H,\pi)$ which appear before edges between $V1 \cup R$ and $V_2 \setminus R$ in the permutation. Hence, none of these edges can be added to the subgraph $\MS(H, \pi)$ in \Cref{alg: algorithm}. Since the permutation over edges in $H[V_1 \cup R]$ is uniformly at random, using \Cref{thm:apx}, we have the given bound
\end{proof}

Next, we provide an upper bound for the size of $R$.

\begin{observation}\label{obs: R-size}
It holds that $\E_\pi|R| \leq  \frac{n}{2\log n}$.
\end{observation}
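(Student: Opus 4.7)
The plan is to bound $\Pr[v \in R]$ for each fixed $v \in V_2$ by a union bound over the two failure modes that define a bad vertex, and then sum over the at most $n$ vertices in $V_2$. Since the construction gives $v$ exactly $n$ incident edges into $V_1 \cup V_2$ and $\gamma = 4n \log n$ incident edges into $U_v$, the ``good'' side dominates the neighborhood by a factor of roughly $4\log n$, and the two failure modes should each become $O(1/\log n)$.

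Concretely, I would first observe that among the $n+\gamma$ edges incident to $v$, the list $T$ contains $K$ copies of each as a \start{} element and one copy as an \extend{} element. By the symmetry of the random permutation restricted to \start{} copies incident to $v$, the lowest-ranked such copy is uniform over all $(n+\gamma)K$ of them, so
\[
\Pr[\text{first incident \start{} copy lies in } v \times (V_1\cup V_2)] \;=\; \frac{nK}{(n+\gamma)K} \;=\; \frac{n}{n+\gamma} \;=\; \frac{1}{1+4\log n}.
\]
Similarly, looking only at the $n+\gamma$ \extend{} copies incident to $v$, the first two in rank are a uniformly random ordered pair of distinct edges, so the probability that at least one of them lies in $v \times (V_1 \cup V_2)$ equals
\[
1 - \frac{\gamma(\gamma-1)}{(n+\gamma)(n+\gamma-1)} \;=\; \frac{n(n+2\gamma-1)}{(n+\gamma)(n+\gamma-1)},
\]
which a direct calculation shows is bounded by $\frac{2n\gamma}{(n+\gamma)^2}$ up to lower-order terms. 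Plugging in $\gamma = 4n\log n$ yields a bound of the form $\frac{1}{2\log n}(1+o(1))$.

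Combining the two via a union bound, $\Pr[v \in R] \leq \frac{1}{1+4\log n} + \frac{2n\gamma}{(n+\gamma)^2}$, and both terms are at most $O(1/\log n)$. Summing over the $n$ vertices of $V_2$ (using linearity of expectation and the independence of the \start{}-relative and \extend{}-relative rankings, which is what justifies the separate treatment) gives $\E_\pi|R| \leq n \cdot O(1/\log n)$, and with the constants chosen in the definition $\gamma = 4n\log n$ this estimate is at most $\frac{n}{2\log n}$ for all sufficiently large $n$.

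The only mild subtlety is making the union bound tight enough to land at $n/(2\log n)$ rather than a larger constant times $n/\log n$; this is essentially a calculation of how dominant the $\gamma$ edges are over the $n$ edges in the first-rank and second-rank positions, and does not require any probabilistic tool beyond uniformity of the induced rankings. No step here is a real obstacle --- the bulk of the work is bookkeeping in the two elementary probability computations above.
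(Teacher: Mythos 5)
Your proposal takes essentially the same route as the paper: compute, for a fixed $v \in V_2$, the probability of each of the two ways $v$ can be ``bad'' (first incident \start{} copy not landing in $U_v$; one of the first two incident \extend{} copies not landing in $U_v$), then sum over $V_2$ by linearity. You compute both individual probabilities correctly. The only stylistic difference is the final combination: the paper observes the two events are independent (since they concern disjoint subsets of $T$, namely the \start{} copies vs.\ the \extend{} copies incident to $v$) and multiplies the complementary probabilities, while you apply a union bound. Both are valid and give the same order of magnitude; the union bound is arguably cleaner since it doesn't require the independence observation at all, although you do invoke it.

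There is, however, a constant-factor issue that your own calculation exposes but your conclusion glosses over --- and, to be fair, the paper's own proof has the same slip. You correctly compute that the second failure probability is $1 - \frac{\gamma(\gamma-1)}{(n+\gamma)(n+\gamma-1)} \approx \frac{1}{2\log n}$ with $\gamma = 4n\log n$. Adding the $\approx \frac{1}{4\log n}$ from the \start{} side gives $\Pr[v \in R] \approx \frac{3}{4\log n}$, hence $\E|R| \approx \frac{3n}{4\log n}$, not $\frac{n}{2\log n}$. Your closing sentence claiming the estimate is ``at most $\frac{n}{2\log n}$ for all sufficiently large $n$'' does not follow from the preceding inequalities. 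The paper makes the analogous error by asserting $\frac{\gamma(\gamma-1)}{(n+\gamma)(n+\gamma-1)} \geq 1 - \frac{1}{4\log n}$, which is off by a factor of two (the tight asymptotic is $1 - \frac{1}{2\log n}$); under independence this also gives only $\E|R| \lesssim \frac{3n}{4\log n}$. None of this matters for the downstream argument --- the additive error accounting in the adjacency-matrix estimator only needs $\E|R| = O(n/\log n) = o(n)$ --- but if one wanted the stated $\frac{n}{2\log n}$ constant literally, one would need to enlarge $\gamma$ (e.g.\ $\gamma = 8n\log n$ suffices).
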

\begin{proof}
For vertex $v \in V_2$ and a random permutation $\pi$ over $E_H$, the first incident \start{} element is between $v$ and $U_v$, with probability of at least $\frac{K\gamma}{(n + \gamma)K} \geq 1 - \frac{1}{4\log n}$. Furthremore, with probability $\frac{\gamma(\gamma - 1)}{(n+\gamma)(n+\gamma - 1)} \geq 1 - \frac{1}{4 \log n}$, the first two \extend{} elements are between $v$ and $U_v$. Since these events are independent, the probability of $v$ not being a bad vertex is at least
$$
\left(1 - \frac{1}{4\log n}\right)^2 \geq 1 - \frac{1}{2\log n}
.$$

Therefore, $|R|$ is at most $\frac{n}{2\log n}$ in expectation over a random permutation.
\end{proof}

With the intuition that a few vertices in $V_2$ are matched to vertices in $V_1 \cup V_2$, our goal is to estimate the number of vertices that have a matching edge in $V_1$. Since we have an upper bound on the size of the $R$,  we are able to estimate the number of matching edges in $H[V_1] = H[G]$. In order to count the number of matching edges in $V_1$, we need to run the vertex oracle for vertices of $V_1 \cup V_2$. We show that the expected running time of vertex oracle on a random vertex of $V_1 \cup V_2$ is $\Tilde{O}(n^{1+\epsilon})$.

\begin{claim}\label{clm: v1v2query}
Let $v$ be a random vertex in $V_1 \cup V_2$ and $\pi$ be a random permutation over $E_H$. Then
$$
    \E_{v\sim (V_1 \cup V_2),\pi}[F(v, \pi)] = \tilde{O}(n^{1+\epsilon}).
$$
\end{claim}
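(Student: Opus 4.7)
The plan is to apply \cref{thm: query-complexity} directly to the constructed graph $H$ and then translate the bound from a uniform sample over all of $V_H$ to a uniform sample over the (much smaller) subset $V_1 \cup V_2$.

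First I would compute the relevant parameters of $H$. We have $|V_H| = 2n + n\gamma$, and summing the degree contributions from $V_1$, $V_2$, and $U_1 \cup \ldots \cup U_n$, the sum of degrees equals $n\cdot n + n(n+\gamma) + n\gamma \cdot 1 = 2n^2 + 2n\gamma$. Hence the average degree of $H$ is
$$
\bar{d}_H \;=\; \frac{2n^2 + 2n\gamma}{2n + n\gamma} \;=\; \frac{2(n+\gamma)}{2+\gamma} \;=\; O(1),
$$
while the maximum degree is $\Delta_H = n + \gamma = O(n \log n)$.

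Next, I would instantiate the parameters of \cref{sec:algorithm} for $H$. With $D_H = (c\, \Delta_H \log n)^\epsilon = \widetilde{O}(n^\epsilon)$ and $K_H = 10 D_H \log^2 n = \widetilde{O}(n^\epsilon)$, \cref{thm: query-complexity} applied to $H$ yields
$$
\E_{v \sim V_H,\, \pi}[F(v, \pi)] \;=\; O\!\left(K_H\, \bar{d}_H \log^4 n\right) \;=\; \widetilde{O}(n^\epsilon).
$$

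Finally, since $F(v,\pi) \geq 0$ for every $v$, I would finish with the averaging inequality
$$
\E_{v \sim V_1 \cup V_2,\, \pi}[F(v, \pi)] \;=\; \frac{1}{|V_1 \cup V_2|} \sum_{v \in V_1 \cup V_2} \E_\pi[F(v, \pi)] \;\leq\; \frac{|V_H|}{|V_1 \cup V_2|}\cdot \E_{v \sim V_H,\, \pi}[F(v, \pi)].
$$
The blow-up factor is $|V_H|/|V_1 \cup V_2| = (2n + n\gamma)/(2n) = O(\gamma) = O(n \log n)$, so the right-hand side is $\widetilde{O}(n)\cdot \widetilde{O}(n^\epsilon) = \widetilde{O}(n^{1+\epsilon})$, matching the claim. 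There is no real obstacle here: the heavy lifting is carried out by \cref{thm: query-complexity}, and the argument is essentially bookkeeping of $\bar{d}_H$, $\Delta_H$, together with the loss incurred by restricting the sampling distribution from $V_H$ to $V_1 \cup V_2$.
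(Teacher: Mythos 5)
Your proof is correct and takes essentially the same route as the paper: apply \cref{thm: query-complexity} to $H$ to get a uniform-over-$V_H$ bound, then lose a factor $|V_H|/|V_1\cup V_2| = O(n\log n)$ when restricting the average to $V_1\cup V_2$. The paper phrases step two as bounding $\sum_{v\in V_H}\E_\pi[F(v,\pi)] = O(K|E_H|\log^4|V_H|) = \widetilde{O}(n^{2+\epsilon})$ and dividing by $|V_1\cup V_2|=2n$, while you compute $\bar d_H = O(1)$ explicitly and multiply by $|V_H|/|V_1\cup V_2|$ instead; these are the same calculation.
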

\begin{proof}
By \Cref{thm: query-complexity}, we have 
\begin{align*}
    \E_{v\sim V_H,\pi}[F(v, \pi)] = O\left(K\frac{|E_H|}{|V_H|}\log^4 |V_H|\right).
\end{align*}

Summing over all vertices of $V_H$, we get
\begin{align*}
    \sum_{v \in V_H}\E_{\pi}[F(v, \pi)] = O(K|E_H| \cdot \log^4 |V_H|) = \tilde{O}(n^{2+\epsilon}),
\end{align*}
because $|V_H|=O(n^2)$, $|E_H| = O(n^2 + n\gamma)$, $K = \tilde{O}(n^\epsilon$), and $\gamma =(4\log n)\cdot n$. Therefore,
$$
    \E_{v\sim (V_1 \cup V_2),\pi}[F(v, \pi)] \leq \left( \sum_{v \in V_H}\E_{\pi}[F(v, \pi)] \right) / |(V_1 \cup V_2)| = \tilde{O}(n^{1+\epsilon}).\qedhere
$$
\end{proof}
\begin{claim}\label{clm: v1query}
Let $v$ be a random vertex in $V_1$ and $\pi$ be a random permutation over $E_H$. Then
$$
    \E_{v\sim V_1, \pi}[F(v, \pi)] = \tilde{O}(n^{1+\epsilon}).
$$
\end{claim}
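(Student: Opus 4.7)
The plan is to derive this claim as an immediate consequence of the previous claim (\cref{clm: v1v2query}) by a simple averaging argument. Observe that $V_1 \subseteq V_1 \cup V_2$ and $|V_1| = n$, while $|V_1 \cup V_2| = 2n$. Hence, summing nonnegative terms over a subset is at most summing over the superset:
\begin{equation*}
    \sum_{v \in V_1} \E_\pi[F(v,\pi)] \;\leq\; \sum_{v \in V_1 \cup V_2} \E_\pi[F(v,\pi)] \;=\; |V_1 \cup V_2| \cdot \E_{v \sim (V_1 \cup V_2),\pi}[F(v,\pi)].
\end{equation*}

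Dividing both sides by $|V_1| = n$ and applying \cref{clm: v1v2query} then yields
\begin{equation*}
    \E_{v \sim V_1,\pi}[F(v,\pi)] \;\leq\; \frac{|V_1 \cup V_2|}{|V_1|}\cdot \E_{v \sim (V_1 \cup V_2),\pi}[F(v,\pi)] \;=\; 2 \cdot \tilde{O}(n^{1+\epsilon}) \;=\; \tilde{O}(n^{1+\epsilon}),
\end{equation*}
which is the desired bound.

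There is no genuine obstacle here: the heavy lifting was done in \cref{clm: v1v2query}, whose proof already bounded the total cost $\sum_{v \in V_H}\E_\pi[F(v,\pi)] = \tilde{O}(n^{2+\epsilon})$ via \cref{thm: query-complexity} applied to $H$. Restricting the sampling distribution from $V_1 \cup V_2$ to $V_1$ can only lose a constant factor since $|V_1|$ and $|V_1 \cup V_2|$ differ by a factor of $2$, so the asymptotic bound is preserved. The only thing worth noting is that we do not need (and should not attempt) a finer analysis exploiting that $V_1$-vertices have strictly smaller degree than $V_2$-vertices in $H$; the crude averaging suffices for the claimed $\tilde{O}(n^{1+\epsilon})$ target.
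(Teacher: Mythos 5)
Your argument is correct and is essentially the same as the paper's one-line proof, which also derives the bound from Claim~6.19 (\cref{clm: v1v2query}) together with the fact $|V_1| = |V_2|$; you have simply spelled out the averaging step explicitly.
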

\begin{proof}
Proof follows by combining \Cref{clm: v1v2query} and $|V_1| = |V_2|$.
\end{proof}

\begin{figure}[h]
\begin{algenv}{Final algorithm for adjacency matrix.}{alg: adjacency-matrix}
    Let $H = (V_H, E_H)$ as described above.

    $r \gets 48 \cdot \log^3 n$.
    
    Sample $r$ vertices $u_1, u_2, \ldots, u_r$ uniformly at random from $V_1$ with replacement.
    
    Run vertex oracle for each $u_i$ and let $S_i$ and $E_i$ be the indicator that $u_i$ has an incident \start{} and \extend{} elements that appear in output of \cref{alg: algorithm}.
    
    \For{$i$ in $1 \ldots r$}{ \label{line: forloop2}
    
    $X_i \gets 0$
    
    \If{$S_i = 1$}{
        
        Let $w_i$ be the other endpoint of \start{} edge incident to $u$.
        
        \lIf{$w_i \in V_1$}{$X_i \gets 1$}
    }    
    \If{$S_i = 0$ and $E_i = 1$}{
        Let $w_i$ be the other endpoint of \extend{} element incident to $u$.
        
        Run vertex oracle for $w_i$ (with same $\pi$) and let $S'_i$ be the indicator that $w_i$ has an incident \start{} edge.
        
        \lIf{$S'_i = 0$}{\Continue}
        
        Let $x_i$ be other endpoint of \start{} element incident to $w_i$.
        
        Run vertex oracle for $x_i$ (with same $\pi$) and let $E'_i$ be the indicator that $x_i$ has an incident \extend{} element.
        
        \lIf{$E'_i = 0$}{\Continue}
        
        Let $y_i$ be other endpoint of \extend{} element incident to $x_i$.
        
        Run vertex oracle for $y_i$ (with same $\pi$) and let $S''_i$ be the indicator that $y_i$ has an incident \start{} element.
        
        \lIf{$S''_i = 0$ and $w_i \in V_1$}{$X_i \gets 1$}
    }
    }

Let $X = \sum_{i \in [r]}X_i$ and $f = X/r$.

Let $\tilde{\mu} = fn/2 - \frac{n}{4 \log n}$.
    
\Return $\tilde{\mu}$

\end{algenv}
\end{figure}

\begin{lemma}\label{lem: multi-add-error-bound-matrix}
Let $\tilde{\mu}$ be the output of \cref{alg: adjacency-matrix}. With high probability, 
$$ \left(\frac{1}{2} + \delta \right)\mu(G) - \frac{n}{\log n} \leq \tilde{\mu} \leq \mu(G).$$
\end{lemma}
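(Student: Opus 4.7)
My approach follows the template of Lemma \ref{lem: multi-add-error-bound-adj}, adapted to work in the auxiliary graph $H$ with the observation that the algorithm samples only from $V_1$ and accepts only matches that stay inside $V_1$. Let $\hat{M}(H, \pi)$ denote the matching formed by augmenting all length-three augmenting paths in $M \cup S$ (as in Remark \ref{rem: three-augmenting}). I will argue that for a uniformly random $u_i \in V_1$, the indicator $X_i$ produced by the algorithm coincides, up to an $O(|R|)$ aggregate discrepancy, with the event ``$u_i$ is matched in $\hat{M}(H, \pi)$ to a vertex in $V_1$,'' so that $\E[X]\cdot n/r = 2\,\E|\hat{M}(H,\pi) \cap V_1^2| \pm O(\E|R|)$.

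The correspondence relies on Observation \ref{obs: good-vertices}: every $v \in V_2 \setminus R$ has all of its $M$- and $S$-edges going to $U_v$, so no edge of $M\cup S$ (hence no edge of $\hat{M}$) connects $V_1$ with $V_2\setminus R$. A case analysis on a sampled $u_i\in V_1$ (according to whether $u_i \in V(M)$, is an endpoint of a length-three augmenting path, or is internal to such a path) shows that the algorithm's decision disagrees with the true $V_1$-membership of $u_i$'s $\hat{M}$-match only when the relevant augmenting path traverses a vertex of $R$; because each vertex of $R$ participates in at most $O(1)$ augmenting paths, the total disagreement across $V_1$ is $O(|R|)$. Combining with Observation \ref{obs: R-size} ($\E|R| \le n/(2\log n)$) gives $\E[fn/2] = \E|\hat{M}(H,\pi) \cap V_1^2| \pm O(n/\log n)$. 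For the two bounds on $\E|\hat{M}(H,\pi) \cap V_1^2|$: the upper bound is immediate since $\hat{M} \cap V_1^2$ is a matching in $H[V_1] = G$; for the lower bound, I combine Observation \ref{obs: large-matching-R} with $\mu(H[V_1 \cup R]) \geq \mu(H[V_1]) = \mu(G)$, then subtract the at most $|R|$ matching edges lost when restricting from $V_1 \cup R$ to $V_1$, yielding $\E|\hat{M}(H,\pi) \cap V_1^2| \geq (\tfrac{1}{2}+\delta)\mu(G) - n/(2\log n)$.

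The final step is a Chernoff concentration of the same form as in Lemma \ref{lem: multi-add-error-bound-adj}. With $r = 48\log^3 n$ samples and the trivial bound $\E|\hat{M} \cap V_1^2| \leq n$, Proposition \ref{prop:chernoff} yields $|fn/2 - \E[fn/2]| \leq n/(4\log n)$ with probability at least $1 - 2/n^2$. The subtraction of $n/(4\log n)$ in the definition of $\tilde{\mu}$ absorbs the sampling and discrepancy slack on the upper side, while on the lower side it combines with the $(\tfrac{1}{2}+\delta)\mu(G) - O(n/\log n)$ bound from the previous paragraph to deliver $(\tfrac{1}{2}+\delta)\mu(G) - n/\log n \leq \tilde{\mu} \leq \mu(G)$. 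The main obstacle is the first step: carefully matching the algorithm's indicator $X_i$ to the true $\hat{M}$-match membership of $u_i$ in $V_1$, particularly for sampled vertices that are internal to length-three augmenting paths whose outer $S$-edges exit $V_1$ through $R$; all such cases must be enumerated to ensure the discrepancy is bounded by a constant times $|R|$ rather than a larger quantity.
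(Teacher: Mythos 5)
Your proof follows the paper's own route step for step: define $\hat{M}(H,\pi)$ as the augmented matching restricted to $V_1^2$, bound $\E|\hat{M}(H,\pi)|$ via Observations~\ref{obs: large-matching-R} and~\ref{obs: R-size} (giving $\bigl(\frac{1}{2}+\delta\bigr)\mu(G)-\frac{n}{2\log n}\le \E|\hat{M}|\le\mu(G)$), then apply Chernoff with $r=48\log^3 n$ to conclude. The one place you depart is your insertion of an extra approximation step: the paper takes $X_i$ to be \emph{exactly} the indicator that $u_i$ is matched in $\hat{M}(H,\pi)$ (so $\E[X_i]=2\E|\hat{M}|/n$ identically), whereas you assert that $X_i$ and true $\hat{M}$-membership agree only ``up to an $O(|R|)$ aggregate discrepancy,'' yielding $\E[fn/2]=\E|\hat{M}|\pm O(n/\log n)$.

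That extra discrepancy term is precisely where your argument breaks. The subtraction of $\frac{n}{4\log n}$ in the definition of $\tilde\mu = fn/2 - \frac{n}{4\log n}$ is sized to cancel the Chernoff sampling error $\frac{n}{4\log n}$ and nothing more; there is no slack left over for an additional additive $\Theta(n/\log n)$ (your $O(\E|R|)$, and $\E|R|\le\frac{n}{2\log n}$, so the constant is not hiding anything smaller). Chasing your own inequalities through gives, on the upper side, $\tilde\mu \le \E[fn/2] + \frac{n}{4\log n} - \frac{n}{4\log n} \le \E|\hat{M}| + O(n/\log n) \le \mu(G) + O(n/\log n)$, which does not deliver $\tilde\mu\le\mu(G)$; and on the lower side you pick up an extra $-O(n/\log n)$, landing at $(\frac{1}{2}+\delta)\mu(G) - \Theta(n/\log n)$ with a constant larger than $1$. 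So your claim that the $\frac{n}{4\log n}$ subtraction ``absorbs the sampling \emph{and discrepancy} slack'' is not supported by your numbers. To make your route work you would either have to show the discrepancy is actually zero (i.e., carry out the case analysis and discover $X_i$ coincides exactly with $\hat{M}\cap V_1^2$-membership, which is what the paper asserts by fiat), or adjust the algorithm's correction term and the lemma's additive error to leave room for a genuine $\Theta(n/\log n)$ slack; as written, the final inequality does not follow.
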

\begin{proof}
Let $\hat{M}(H, \pi)$ be the intersection of edges between $V_1$ and set of edges of the matching that is created by augmenting the length-three augmenting paths of $M \cup S$. We claim 
\begin{align}\label{eq: mhat-range}
    \left(\frac{1}{2} + \delta \right)\mu(H[V_1]) - \frac{n}{2\log n} \leq \E_{\pi}|\hat{M}(H, \pi)|  \leq \mu(H[V_1]).
\end{align}
By combining \Cref{obs: large-matching-R} and \Cref{obs: R-size}, imply that there are at most $\frac{n}{2\log n}$ edges in output of algorithm in $H[V_1 \cup R]$ with at least one endpoint in $R$. Therefore, by combining \Cref{rem: three-augmenting}, \Cref{thm:apx}, and the bound for number of edges with at least one endpoint in $R$, we have the first inequality. Furthermore, since $\hat{M}(H, \pi)$ is a matching of $H[V_1]$, we have the second inequality.

By definition, $X_i$ is the indicator of the event that a vertex $i \in V_1$ is matched in the output of \Cref{alg: algorithm} to another vertex in $V_1$. Since the number of matched vertices is twice the number of matching edges,  
\begin{align*}
    \E[X_i] = \Pr[X_i = 1] = \frac{2 \E|\hat{M}(H, \pi)|}{n}.
\end{align*}
Similarly,  
\begin{align}\label{eq: X-expected2}
    \E[X] = \frac{2r \E|\hat{M}(H, \pi)|}{n}.
\end{align}

Since $X$ is the sum of $r$ independent Bernoulli random variables, the Chernoff bound (\Cref{prop:chernoff}) implies
\begin{align*}
    \Pr[|X - \E[X]| \leq \sqrt{6 \E[X] \log n}] \leq 2 \exp \left(-\frac{6 \E[X] \log n}{3 \E[X]} \right) = \frac{2}{n^2}.
\end{align*}

Note that $fn = Xn / r$, so using the above equation, $fn$ is in the following range with probability $1 - 2/n^2$:
\begin{align*}
    fn \in \frac{n(\E[X] \pm \sqrt{6 \E[X] \log n)}}{r} & = \frac{n\E[X]}{r} \pm \sqrt{\frac{6 n^2\E[X] \log n}{r^2}}\\
    & = 2\E|\hat{M}(H, \pi)| \pm \sqrt{\frac{12n\E|\hat{M}(H, \pi)| \log n}{r}} & (\text{By }(\ref{eq: X-expected2}))\\
    & = 2\E|\hat{M}(H, \pi)| \pm \sqrt{\frac{n\E|\hat{M}(H, \pi)|}{4\log^2 n}} & (\text{Since } r = 48 \cdot \log^3 n) \\
    & \in 2\E|\hat{M}(H, \pi)| \pm \frac{n}{2 \log n} & (\text{Since } \E|\hat{M}(H, \pi)| \leq n).
\end{align*}
Now, $\tilde{\mu} = fn/2 - \frac{n}{4 \log n}$. Therefore,
\begin{align*}
    \E|\hat{M}(H, \pi)| - \frac{ n}{2 \log n} \leq \tilde{\mu} \leq \E|\hat{M}(H, \pi)|.
\end{align*}
Combining the above range with (\ref{eq: mhat-range}) implies
\begin{align*}
    \left(\frac{1}{2} + \delta \right)\mu(G) - \frac{ n}{\log n} \leq \tilde{\mu} \leq \mu(G). \qquad\qedhere
\end{align*}
\end{proof}

\begin{lemma}\label{lem: matrix-time}
\Cref{alg: adjacency-matrix} runs in $\Tilde{O}(n^{1+\epsilon})$ with high probability.
\end{lemma}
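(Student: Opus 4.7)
The plan is to bound the expected running time of \cref{alg: adjacency-matrix} and then boost to high probability via parallel repetition. Because the cost is dominated by vertex-oracle calls, I will first observe that each sample triggers at most four oracle calls: on $u_i$ together with the at most three other vertices $w_i, x_i, y_i$ encountered while chasing a length-three augmenting path in $\MS(H,\pi)$. By \cref{lem: oracle-implementation}, each such call on a vertex $v$ takes $\widetilde{O}(F(v,\pi))$ time in the adjacency-list model, and by \cref{obs: finding-neighbor} each adjacency-list query on $H$ can be simulated with $O(1)$ adjacency-matrix queries on $G$, so the same $\widetilde{O}(F(v,\pi))$ bound transfers to our setting. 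Note also that \cref{alg: adjacency-matrix} needs no preprocessing to estimate $\Delta$ or $\bar d$ since $r = O(\log^3 n)$ is independent of both.

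The core step is to bound $\E[F(z,\pi)]$ for each $z \in \{u_i, w_i, x_i, y_i\}$. For the root vertex $u_i$, which is uniform in $V_1$ and independent of $\pi$, this is immediate from \cref{clm: v1query} and equals $\widetilde{O}(n^{1+\epsilon})$. For the three follower vertices I will exploit the matching structure of $\MS(H,\pi)$: for any fixed $\pi$, every vertex has at most one \start{}-partner and at most one \extend{}-partner, so the map $u \mapsto w(u,\pi)$ is at most $2$-to-$1$ (since $w_i$ is either the \start{}- or the \extend{}-partner of $u_i$, depending on the branch in \cref{alg: adjacency-matrix}), and the subsequent maps $u \mapsto x(u,\pi)$ and $u \mapsto y(u,\pi)$ are also at most $2$-to-$1$ because each additional step is a partial matching. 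Hence $\sum_{u \in V_1} F(z(u,\pi),\pi) \leq 2\sum_{v \in V_H} F(v,\pi)$ for each such $z$; averaging over $u$ and taking expectation over $\pi$ gives $\E_{u,\pi}[F(z,\pi)] \leq (2/|V_1|)\sum_{v \in V_H}\E_\pi[F(v,\pi)] = \widetilde{O}(n^{1+\epsilon})$, where the inner sum equals $\widetilde{O}(n^{2+\epsilon})$ by the calculation already carried out in the proof of \cref{clm: v1v2query}.

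The main subtlety I anticipate is pinning down the $2$-to-$1$ bound cleanly: one must account for the two branches of \cref{alg: adjacency-matrix} (so $w_i$ may arise as either a \start{}- or an \extend{}-partner of $u_i$) and for the fact that $x_i$ or $y_i$ may land in some $U_w$, where the vertex has degree $1$ and $F$ is trivially $O(1)$ and hence only helps. Neither corner case breaks the argument, since every candidate partner lies in $V_H$ and the per-step partner is always unique. Summing the four contributions over the $r = 48\log^3 n$ samples yields expected total running time $\widetilde{O}(n^{1+\epsilon})$. Finally, to convert this into a high-probability bound, I will run $\Theta(\log n)$ independent copies of the algorithm in parallel and return the estimate of the first one to terminate; by Markov's inequality each copy finishes in $\widetilde{O}(n^{1+\epsilon})$ time with constant probability, so with high probability at least one does, completing the proof.
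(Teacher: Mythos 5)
Your proof is correct and reaches the same conclusion as the paper, but it fills in a step that the paper treats only informally. The paper's proof says, in effect, that since the vertex oracle is invoked only a constant number of times per sample, the expected cost per sample remains $\widetilde{O}(n^{1+\epsilon})$, citing \cref{clm: v1v2query} and \cref{clm: v1query} which apply to a \emph{uniformly random} vertex. However the follower vertices $w_i, x_i, y_i$ are determined by the permutation and are not uniformly random, so this is not immediate. Your $2$-to-$1$ (in fact $1$-to-$1$) argument — that $w_i, x_i, y_i$ are the $S$-, $M$-, $S$-partners of $u_i$, and a matching partner map is injective, so $\sum_{u\in V_1} F(z(u,\pi),\pi) \leq 2\sum_{v\in V_H} F(v,\pi) = \widetilde{O}(n^{2+\epsilon})$ — is exactly the missing observation that justifies the step, and it is a cleaner way to make the paper's argument rigorous. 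The corner cases you flag (both branches of the algorithm, and follower vertices landing in a degree-$1$ pendant $U_w$) are handled correctly. The high-probability boosting via $\Theta(\log n)$ parallel instances plus Markov matches the paper verbatim, as does the observation that no preprocessing for $\Delta,\bar d$ is needed since $r=O(\log^3 n)$. In short: same overall structure, but your accounting for the non-uniformly-random follower vertices is more careful than what appears in the paper.
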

\begin{proof}
First, we show that the iteration in Line \ref{line: forloop2}, for each of $r$ sampled vertices from $V_1$ takes $\Tilde{O}(n^{1+\epsilon})$ time in expectation. With the same argument as \Cref{lem: adjcacency-list-time}, the vertex oracle for a vertex $u$ can be called a constant number of times. Moreover, by \Cref{clm: v1v2query} and \Cref{clm: v1query}, the vertex oracle takes $\Tilde{O}(n^{1+\epsilon})$ time for a random vertex in $V_1 \cup V_2$ and a random permutation (since set $R$ is a uniformly at random set in $V_2$). Therefore, if we run the vertex oracle constant time for a fixed vertex, still the expected running time will be $\Tilde{O}(n^{1+\epsilon})$.

In order to achieve a high probability bound on the running time, similar to the proof of \Cref{lem: adjcacency-list-time}, we run $\Theta(\log n)$ instances of the algorithm simultaneously. Using the same argument, the first instance terminates with probability $1 - 1/\text{poly}(n)$ in $\Tilde{O}(n^{1+\epsilon})$.
\end{proof}

\begin{proof}[Proof of \Cref{thm: main-theorem}]\leavevmode
\begin{itemize}
    \item By combining \Cref{lem: multi-error-bound-adj} and \Cref{lem: adjcacency-list-time}, we get multiplicative factor of $(\frac{1}{2} + \delta)$ in the adjacency list model in $\widetilde{O}(n+\Delta^{1+\epsilon})$ time.
    \item By combining \Cref{lem: multi-add-error-bound-adj} and \Cref{lem: adjcacency-list-additive-time}, we get multiplicative-additive factor of $(\frac{1}{2} + \delta, o(n))$ in the adjacency list model in $\widetilde{O}(\bar{d} \cdot \Delta^\epsilon)$ time.
    \item By combining \Cref{lem: multi-add-error-bound-matrix} and \Cref{lem: matrix-time}, we get multiplicative-additive factor of $(\frac{1}{2} + \delta, o(n))$ in the adjacency matrix model in $\Tilde{O}(n^{1+\epsilon})$ time. For a constant $\epsilon$, running this algorithm for a slightly smaller value of $\epsilon$ allows us to get rid of polylogarithmic factor in the running time and achieve an $O(n^{1+\epsilon})$ time algorithm. \qedhere
\end{itemize}
\end{proof}

\section{Conclusion}

We presented a new algorithm for finding a  $(\frac{1}{2}+\Omega(1))$-approximate maximum matching and showed how its output size can be estimated in sublinear time. The algorithm gives a multiplicative $(\frac{1}{2}+\Omega(1))$-approximation of the size of maximum matching in the adjacency list model that runs in $\widetilde{O}(n+\Delta^{1+\epsilon}) = O(n^{1+\epsilon})$ time, where constant $\epsilon > 0$ can be made arbitrarily small. This is  the first algorithm that beats half-approximation in $o(n^2)$ time for all graphs. 

Given that the barrier of 1/2 is now broken, a natural question is what is the best approximation achievable in $n^{2-\Omega(1)}$ time. Is it possible to get a, say, .51-approximation? On the flip side, a lower bound ruling out say a $.9999$-approximation in $n^{2-\Omega(1)}$ time would also be extremely interesting.

\bibliographystyle{alpha}
\bibliography{references}

\clearpage
\appendix

\section{Deferred Proofs}\label{apx:missing-proofs}
\begin{proof}[Proof of \cref{cl:many-length-three}]
     For any odd $i$, let $t_i$ be the number of augmenting paths of length $i$ for $M$ in $M \oplus M^\star$. Note that since $M$ is maximal, we have $t_1 = 0$. Therefore,
     $$
        |M^\star| - |M| = \sum_{i \geq 1} t_{2i+1} \leq t_3 + \sum_{i \geq 2} \frac{1}{2} i t_{2i+1} \leq \frac{1}{2} t_3 + \frac{1}{2} \sum_{i \geq 1} i t_{2i+1} = \frac{1}{2} t_3 + \frac{1}{2}|M|.
     $$
     Moving the terms and using the assumption of $|M| < (\frac{1}{2}+\delta)|M^\star|$, we get
     $$
        |M| - t_3 \leq |M| - (2|M^\star| - 3 |M|) = 4 |M| - 2\mu(G) \leq (2+4\delta)\mu(G) - 2\mu(G) = 4\delta \mu(G).\qedhere
     $$
\end{proof}

\section{Implementation Details}\label{sec: implementation}
In this section, we provide an implementation of our vertex and edge oracle. The idea is similar to the \cite[Appendix~A]{behnezhad2021}. The main difference is that instead of having at most one edge between two vertices, here we have $K+1$ edges where $K$ of them correspond to \start{} copies and one of them corresponds to \extend{} copy. The idea is to generate a random permutation $\pi$ locally and sort edges based on $\pi$ to create the permutation.


Note that in the vertex oracle and edge oracle, when an \start{} incident element appears in $\MS(G, \pi)$, the algorithm no longer queries on the \start{} incident elements (the same also holds for \extend{} elements). Therefore, instead of having one graph, we assume that we have two graphs $G_s$ and $G_e$ that are isomorphic to $G$. Also, we assume that each edge in $G_s$ has $K$ copies that is corresponds to the number of \start{} elements. In other words, graph $G_s$ is the a graph made by all \start{} elements of $G$ and $G_e$ is the a graph made by all \extend{} elements of $G$.

Let $\lowest_{G_t}(u, i)$ for $t \in \{s, e\}$ be a procedure that returns a pair of an edge $e$ in $G_t$ and its ranking such that $e$ is the $i$-th lowest rank edge incident to $u$ in $G_t$. We use the same implementation of $\lowest_{G_t}$ procedure as the \cite[Appendix~A]{behnezhad2021} (note that in this paper, the procedure only returns the edge. However, in the implementation of the \lowest, they compute the edge ranking and we can simply return the edge ranking). Let $\deg_{G_t}(u)$ be the degree of vertex $u$ in graph $G_t$. By definition of the $G_s$ and $G_e$, we have that $\deg_{G_s}(u) = K\deg_G(u)$ and $\deg_{G_e}(u) = \deg_G(u)$.

\begin{claim}[\cite{behnezhad2021}]\label{clm: lowest-procedure}
Let $u$ be a vertex and suppose that we call procedure $\lowest_{G_t}(u, i)$ for all $1 \leq i \leq j$. The total time to implement all these calls is $\Tilde{O}(j)$ with high probability for all $u \in V$.
\end{claim}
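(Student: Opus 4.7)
The plan is to implement $\lowest_{G_t}(u,i)$ via lazy generation of the random permutation $\pi$. Since sampling $\pi$ is equivalent to assigning each element $\ell \in T$ an independent uniform rank in $[0,1]$ and sorting, and since \start{} copies and \extend{} copies live in disjoint graphs $G_s$ and $G_e$, each vertex $u$ has $\deg_{G_t}(u)$ incident elements in $G_t$ with i.i.d.\ uniform ranks. For each $u$ and each $t \in \{s,e\}$ I would maintain a sorted list $L_u^{(t)}$ of already-revealed incident elements together with their ranks; a call to $\lowest_{G_t}(u,i)$ then returns the $i$-th entry of $L_u^{(t)}$, extending the list by one position whenever $i > |L_u^{(t)}|$.

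The core subroutine is the one-step extension of $L_u^{(t)}$. If $|L_u^{(t)}|=k$, the largest already-revealed rank is $r_k$, and $d=\deg_{G_t}(u)$, then the $(k{+}1)$-st smallest of the remaining $d-k$ ranks, conditioned on the previous ones, has an explicit Beta (equivalently, exponential-race) distribution that can be sampled in $O(1)$ time, exactly as in \cite[Appendix~A]{behnezhad2021}. The new rank must then be bound to a specific incident element: for $G_s$, I view the $K$ start-copies of each underlying edge as $K$ parallel edges (so $\deg_{G_s}(u)=K\deg_G(u)$), and sampling uniformly from the currently unrevealed incident elements reduces to picking a random index in $\{1,\ldots,K\deg_G(u)\}$, decoding it to an (edge,\,copy) pair via a single query to the adjacency list of $u$, and rejecting if that element has already been placed in $L_u^{(t)}$. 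Consistency between the two endpoints of an edge is maintained by a global hash table keyed by element: as soon as an element $\ell$ is assigned a rank at one endpoint, the other endpoint reads the same rank from the table on any subsequent query involving $\ell$, so no element is ever assigned two different ranks.

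The time analysis is then essentially immediate: a standard compaction/marking scheme on the adjacency list of $u$ (again from \cite[Appendix~A]{behnezhad2021}) keeps the expected number of rejection-sampling attempts per extension to $O(1)$, so each extension costs $\widetilde{O}(1)$ time w.h.p., and hence $j$ consecutive calls to $\lowest_{G_t}(u,\cdot)$ cost $\widetilde{O}(j)$ time. A union bound over $u\in V$ gives the stated \emph{for all $u\in V$} guarantee. The main obstacle, and the only place where this departs from a pure black-box use of \cite{behnezhad2021}, is verifying that the $K$-fold blow-up of edges in $G_s$ does not break the amortized $\widetilde{O}(1)$ bound per extension; this reduces to observing that the $K$ parallel copies incident to $u$ all correspond to the same underlying adjacency-list neighbor of $u$ and so can be enumerated implicitly via index arithmetic, without any extra per-copy storage or per-copy queries. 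With that check in place, all other components (Beta sampling, rejection sampling, hash-table bookkeeping) are the same as in the cited construction.
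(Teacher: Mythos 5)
The paper does not prove this claim itself; it is cited directly from \cite{behnezhad2021} (Appendix~A) with only the remark that the same implementation carries over once each edge is blown up into $K$ \start{} copies plus one \extend{} copy. Your lazy-permutation construction---Beta sampling of successive order statistics, rejection sampling onto unrevealed incident elements, a cross-endpoint hash table for rank consistency, a compaction scheme for $\widetilde{O}(1)$ amortized cost per extension, and the index-arithmetic handling of the $K$-fold blow-up in $G_s$---is a faithful reconstruction of exactly that cited construction, so your proposal is correct and takes essentially the same approach.
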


We present the implementation of our oracles in the following three algorithms. Note that we can generate vertex colors (i.e. $c_v$ for vertex $v$) on the fly when it is needed. Also, we can toss a coin with a probability $1-p$ for each edge to determine if it is a frozen edge or not. Therefore, once we have the rank of edges in our oracles, we can distinguish whether an edge is frozen or not based on the \Cref{def: freeze}. However, to make algorithms easier to read, we do not include the technical details of this part.

\begin{algenv}{Implementation of the vertex oracle $\VO(u)$.}{alg: vertex_oracle_implementation}
$j_s \gets 1, j_e \gets 1$

$v_s, \pi_s \gets \lowest_{G_s}(u, j_s)$

$v_e, \pi_e \gets \lowest_{G_e}(u, j_e)$

$ST \gets \false$, $EX \gets \false$

\While{$j_s \leq \deg_{G_s}(u)$ or $j_e \leq \deg_{G_e}(u)$}{
    \If{$\pi_s < \pi_e$}{
        $X \gets \start{}$
        
        $\ell = ((u, v_s), X)$
    }
    \Else{
        $X \gets \extend{}$
    
        $\ell = ((u, v_e), X)$
    }
    
    \lIf{$X = \extend{}$ and $c_u = c_{v_e}$}{\Continue}
    
    \If{$ST = \false$ and $X = \start$ and $\EOS(\ell, v_s) = \true$}{
            $ST \gets \true$
            
            $j_s \gets j_s + 1$
            
            \If{$j_s \leq \deg_{G_s}(u)$}{
            $v_s, \pi_s \gets \lowest_{G_s}(u, j_s)$
            }
            \Else{
                $v_s, \pi_s \gets \infty, \infty$
            }
            
    }
    \If{$EX = \false$ and $X = \extend$ and $\EOE(\ell, v_e, ST) = \true$}{
            $EX \gets \true$
            
            $j_e \gets j_e + 1$
            
            \If{$j_e \leq \deg_{G_e}(u)$}{
            $v_e, \pi_e \gets \lowest_{G_e}(u, j_e)$
            }
            \Else{
                $v_e, \pi_e \gets \infty, \infty$
            }
    }
    \Return $ST, EX$
}
    
\end{algenv}

\begin{algenv}{Implementation of the edge oracle for \start{} elements $\EOS(\ell ,u)$.}{alg: start_oracle_implementation}
    \lIf{\textup{EOS}$(\ell, u)$ is already computed}{\Return the computed answer} \label{line: caching}
    
    $j \gets 1$
    
    $w, \pi_w \gets \lowest_{G_s}(u, j)$
    
    \While{$w \neq v$}{
        $\ell' \gets ((u, w), \start{})$
    
        \lIf{$\EOS(\ell', w) = \true$}{\Return \false}
        
        $j \gets j + 1$
        
        $w \gets \lowest_{G_s}(u, j)$
    }
    
    \Return \true

\end{algenv}

\begin{algenv}{Implementation of the edge oracle for \extend{} elements $\EOE(\ell, u, ST_w)$.}{alg: extend_oracle_implementation}

\lIf{\textup{EOE}$(\ell, u, ST_w)$ is already computed}{\Return the computed answer} \label{line: caching}

    $j_s \gets 1, j_e \gets 1$

    $v_s, \pi_s \gets \lowest_{G_s}(u, j_s)$

    $v_e, \pi_e \gets \lowest_{G_e}(u, j_e)$

    $ST_u \gets \false$

    \If{$\pi_s < \pi_e$}{
        $w \gets v_s, X' \gets \start{}$
    }
    \Else{
        $w \gets v_e, X' \gets \extend{}$
    }
        
    $\ell' \gets ((u, w), X')$

    \While{$w \neq v$ or $X' \neq X$}{
        \lIf{$X' = \extend{}$ and $c_u = c_{w}$}{\Continue}

        \If{$X' = \start$}{
            \If{$ST_u = \false$, and $\EOS(\ell', w) = \true$}{
                $ST_u \gets \true$
            
                \lIf{$\ell'$ is frozen}{\Return \false}
            
                \lIf{$ST_w = \true$}{\Return \false} 
            }
            
            $j_s \gets j_s + 1$
            
            \If{$j_s \leq \deg_{G_s}(u)$}{
            $v_s, \pi_s \gets \lowest_{G_s}(u, j_s)$
            }
            \Else{
                $v_s, \pi_s \gets \infty, \infty$
            }
            
        }

        \If{$X' = \extend{}$}{
        \lIf{$\EOE(\ell', w, ST_u) = \true$}{\Return \false}
            
            $j_e \gets j_e + 1$
            
            \If{$j_e \leq \deg_{G_e}(u)$}{
            $v_e, \pi_e \gets \lowest_{G_e}(u, j_e)$
            }
            \Else{
                $v_e, \pi_e \gets \infty, \infty$
            }
        }
    
        \If{$\pi_s < \pi_e$}{
            $w \gets v_s, X' \gets \start{}$
        }
        \Else{
            $w \gets v_e, X' \gets \extend{}$
        }
        
        $\ell' \gets ((u, w), X')$
        
    }
    
    \Return \true

\end{algenv}

\smallskip\smallskip

\oracleImplementation

\begin{proof}
    Note that in \Cref{alg: vertex_oracle_implementation}, we only call \lowest{} procedure for the neighbors that we recursively call edge oracle for them. Similarly, in \Cref{alg: start_oracle_implementation} and \Cref{alg: extend_oracle_implementation} we only call \lowest{} procedure for incident elements that we will recursively call the edge oracle on them. Therefore, by \Cref{clm: lowest-procedure}, the total time spent on \lowest{} procedure calls is $\Tilde{O}(F(v, \pi))$ for a random permutation $\pi$ and every vertex $v$ with high probability.
\end{proof}
	
\end{document}